\newcommand{\mR}{\mathbb{R}}
\newcommand{\mS}{\mathbb{S}}
\newcommand{\mN}{\mathbb{N}}
\newcommand{\mZ}{\mathbb{Z}}
\newcommand{\1}{\mathbbm{1}}
\newcommand{\cA}{\mathcal{A}}
\newcommand{\cB}{\mathcal{B}}
\newcommand{\cC}{\mathcal{C}}
\newcommand{\cF}{\mathcal{F}}
\newcommand{\cL}{\mathcal{L}}
\newcommand{\cM}{\mathcal{M}}
\newcommand{\cT}{\mathcal{T}}
\newcommand{\cU}{\mathcal{U}}
\newcommand{\cV}{\mathcal{V}}
\newcommand{\cW}{\mathcal{W}}
\newcommand{\fc}{\mathfrak{c}}
\newcommand{\tr}{{\rm tr}}
\newcommand{\Var}{{\rm Var}}
\renewcommand{\P}{{\rm P}}
\newcommand{\vol}{{\rm vol}}
\newtheorem{theorem}{Theorem}
\newtheorem{lemma}{Lemma}
\newtheorem{proposition}{Proposition}
\theoremstyle{definition}
\newtheorem{definition}{Definition}
\theoremstyle{plain}
\theoremstyle{remark}
\newtheorem{remark}{Remark}
\newcommand{\supp}{ {\rm supp} }
\newcommand{\ska}[2]{ \langle{#1},\, {#2}\rangle }
\newcommand{\Cov}[2]{ {\rm Cov}\left[ {#1},\, {#2}\right] }
\newcommand{\figurepath}{}
\newcommand{\figurepathh}{}
\newcommand{\figurepathfwer}{}
\begin{document}

\title{ Precise FWER Control for Gaussian Related Fields: Riding the SuRF to continuous land - Part 1 }
\author{  Fabian J.E. Telschow$^1$, Samuel Davenport$^2$ \\[5mm]
	$^1$Department of Mathematics, Humboldt Universit\"at zu Berlin \\
	$^2$Division of Biostatistics, University of California, San Diego \\
}

\date{\today}
\maketitle

\begin{abstract} 
	The Gaussian Kinematic Formula (GKF) is a powerful and computationally efficient tool to perform statistical inference on random fields and became a well-established tool in the analysis of neuroimaging data.
Using realistic error models, \cite{Eklund2016} recently showed that GKF based methods for \emph{voxelwise inference} lead to conservative control of the familywise error rate (FWER) and to inflated false positive rates for cluster-size inference. In this series of articles we identify and resolve the main causes of these shortcomings in the traditional usage of the GKF for voxelwise inference. In particular applications of Random Field Theory have depended on the \textit{good lattice assumption}, which is only reasonable when the data is sufficiently smooth. We address this here, allowing for valid inference under arbitrary applied smoothness and non-stationarity, for Gaussian and Gaussian related fields. We address the assumption of Gaussianity in Part 2, where we also demonstrate that
our GKF based methodology is non-conservative under realistic error models.
\end{abstract}

\section{Introduction}
In  experiments in neuroscience using functional Magnetic Resonance Imaging (fMRI) of the brain,
data consists of 3D-images representing the time dynamics of the blood-oxygen-level dependence (BOLD).  After an extensive pre-processing pipeline, including corrections for head motion and respiratory effects
the time-series for each subject is combined into subject level maps (3D-images, typically consisting of coefficients of a linear model) via a first-level analysis step, see e.g., \cite{Poldrack2011} and the references therein.
These high resolution images contain hundreds of thousands of voxels and allow for the detection of the locations of differences in \%BOLD in the brain across different tasks and subjects. 
A major practical challenge, however, is the low signal-to-noise ratio which is often
increased by convolving each subject level map with a
smoothing kernel, typically an isotropic 3D-Gaussian kernel with Full Width at Half Maximum (FWHM)
$\approx 4$-$8\,$mm, i.e., $ \approx 2$-$4$ voxels, \cite[Supplementary material]{Eklund2016}.
A standard statistical analysis to identify areas of activation in the brain
is the mass-univariate approach which combines hypotheses tests at each voxel of a
test statistic image $\tilde T$ obtained from
the smoothed subject level maps with a multiple testing procedure.
A practical solution to the multiple testing of the thousands of voxels in the brain
in terms of controlling the \textit{family-wise error rate} (FWER) was
pioneered in \cite{Worsley1992, Friston1994, Worsley1996} and
is known as \textit{Random Field Theory} (RFT) in the neuroimaging community. 
The key innovation was to use the \textit{Gaussian Kinematic Formula} (GKF)
(e.g.,\cite{Taylor2006}) to approximate the probability that
the maximum of a Gaussian random field with  $C^3$ sample paths over a compact Whitney stratified (WS) manifold
$\mathcal{M} \subset \mathbb{R}^D$, $D\in\mathbb{N}$, exceeds a given threshold. 
The accuracy of this approximation has been theoretically studied in \citep{Taylor2005}.
We refer to the setting from these papers, where $\cM$ can be either the brain or its surface, as
\textit{traditional RFT}.
Originally assuming stationarity of the data, traditional RFT became a standard tool in the
analysis of fMRI data, \cite{Worsley1996},
because of its low computational costs. In the late 1990s it was
the core inference methodology in standard software such as
\textit{Statistical Parametric Mapping}, \citep{SPM} and it
is still the backbone of cluster-size and peak inference \cite{Friston1994,Chumbley2009,Schwartzman:2019Peak}.
However, in the early 2000s it was shown in Monte-Carlo-simulations that RFT
for voxel-wise inference is very conservative \citep{Nichols2003}, especially at the levels of applied
smoothing used in applications.
Therefore it has been superseded by time costly, yet accurate
permutation tests, among others, \citep{Nichols2001,Winkler2016,Winkler2016a}.
Using resting state data combined with fake task designs the seminal article \cite{Eklund2016} showed
that voxelwise RFT inference is conservative (see their Figure 1) which confirms the
findings of \citep{Nichols2003} on a realistic data model. They also
showed that cluster-size inference based on RFT can have inflated false positive rates.
They argued that -- in addition to the lack of smoothness-- a lack of stationarity in real data is another
major cause of their findings.
Although these factors play a role in the conservativeness of voxel-wise inference using RFT
we identify in this article and in Part 2 -  \cite{Davenport2023}  - the true causes
and present an elegant solution.
Here we assume that the data is Gaussian, while Part 2 generalizes our approach
to non-Gaussian data and validates its performance using
a large resting-state dataset from the UK Biobank \citep{Alfaro2018}.

Traditional RFT assumes that an underlying continuous random field is
approximated well by the random field observed on the voxel lattice, compare \cite[p.430]{Nichols2003}.
More precisely, it assumes
$
	\max_{v\in\cV} \tilde T(v) \approx \max_{s\in \cM} \tilde T(s)
$.
We call this the \textit{good lattice assumption} which was expressed, albeit less
clearly, earlier in \citep{Worsley1996}:
\textit{
	``[...] the search region
was regarded as a region with a smooth boundary
defined at every point in 3-D. In practice only voxel
data are available, and this will be regarded as a
continuous image sampled on a lattice of equally
spaced points. Thus a voxel is treated as a point in 3-D
with zero volume, although it is often displayed on
computer screens and in publications as a volume
centered at that point."}
As we shall argue, this view does not carefully distinguish between the \textit{data}
and the \textit{atoms of a probabilistic model}.

(Raw) data is the information about nature which is extracted from measurement
devices and is without exception currently discrete. In contrast,
the atoms (or data objects\footnote{We favor the term atoms over data objects
	because this nomenclature better separates the two concepts.}) of
probabilistic models used to describe data and perform statistical
inference are functions, curves, images, shapes, trees or other
mathematical objects \citep{Wang:2007, Marron:2014}.
In our neuroimaging example, the data consists of the subject level maps obtained
from the first level model observed on a discrete set $\cV$ of voxels which ``belong'' to
the brain.\footnote{This is not ``raw"  data as it is the result of a pre-processing pipeline
and some statistical modeling. It is another topic that complexity of today's data
often makes statistical modeling starting with raw data infeasible. But consistency along the data analysis  pipeline
should be guaranteed as much as feasible.} In traditional RFT the test statistic $\tilde T$ used for inference is
a random field over $\cV$ obtained from the smoothed data, while the atoms of RFT
are random fields with twice differentiable sample paths over the brain, i.e.,
a Whitney stratified (WS) manifold $\cM$.
Historically, the good lattice assumption hides this mismatch between the data and the atoms of
RFT.
To be independent of the good lattice assumption, \cite{Worsley2005} and \cite{Taylor2007b}
propose voxel based methods under the (unrealistic) model that the data consists
of a signal plus stationary Gaussian noise which has a separable covariance structure.

In this article we resolve the inconsistency between the data and the atoms of traditional RFT differently.
Our key observation is that the smoothing operation which is used to increase
the signal-to-noise ratio in the preprocessing naturally transforms the subject
level maps (the data), into functions over a WS manifold $\cM \supset \cV$ (atoms of RFT).
Thus, the test statistic $\tilde T$ can be interpreted as a function over $\cM$, too.
This allows us to solve the two main problems of the good
lattice assumption, i.e.,
\begin{itemize}
	\item[(i)]
	$\max_{v \in \cV} \tilde T(v) \leq \max_{s \in \cM} \tilde T(s)$ for all $\cV\subset\cM$
	\item[(ii)] continuous quantities, e.g., derivatives, are replaced by voxel-based,
				discrete counterparts.
\end{itemize}
In particular, $(i)$ is at the core of the conservativeness in traditional RFT as it implies
	\begin{equation}\label{eq:conserve}
	\P\Big( \max_{v \in \mathcal{V}} \tilde T(v) > u_\alpha \Big)
	\leq
	\P\Big( \max_{x \in \cM} \tilde T(x) > u_\alpha \Big) \approx \alpha
	\end{equation}
	for $u_\alpha \in \mathbb{R}$ being the $ \alpha $ level threshold obtained
	from the GKF.

In order to resolve $(i)$ and $(ii)$ we interpret the smoothing step from pre-processing
as transforming the data into functions over a WS manifold $\cM$, i.e., atoms of RFT.
More precisely, given an observed Gaussian random field $X$ (``a subject level map" aka data)
over a finite set $ \cV \subset \mathbb{R}^D $, $D\in\mathbb{N}$, a compact,
$D$-dimensional WS manifold $\cM$  and a smoothing kernel function $K: \mathcal{V} \times \mathcal{M} \rightarrow \mathbb{R}$, we define  the random field $\tilde X$ (``a smoothed subject level map" aka atom of RFT) given by
$
\tilde X(s) = \sum_{v\in\mathcal{V}} X(v) K(v,s)
$, $s\in\cM$.
We call $\tilde X$ a \emph{SUper Resolution Field} (SuRF) since -- given data $X$ -- it can be evaluated at any $ s \in \mathcal{M}$.
As many path properties of SuRFs are inherited from $K$, we shall show in
Theorem \ref{thm:GKF} that the GKF holds for $\tilde T$ (``a statistic depending on smoothed subject level maps"
aka built from atoms of RFT) under mild conditions on $K$ and $X$.

In order to perform inference using the GKF we must specify the domain $\cM$. The latter has not been properly
defined in the neuroimaging literature so far because of the reliance
on the good lattice assumption. This leads to a further mismatch between theory and practice in traditional RFT as the GKF is not valid for random fields over discrete sets.
We choose $\cM$ to be the \textit{voxel manifold} $\cM_\cV$ which we define to be a disjoint
union of $D$-dimensional hyperrectangles aligned with the coordinate axes and centered at the
elements of $\mathcal{V}$. This is a natural way to define a $D$-dimensional WS manifold from $ \mathcal{V} $,
see Figure \ref{fig:voxmf} in the supplementary material.
The main benefit of using $\cM_\cV$ is that it simplifies the estimation of the Lipschitz Killing Curvatures (LKCs)
-- the unknown parameters in the GKF. Moreover, the statistic $\max_{s\in \cM_\cV} \tilde T(s)$ can be well approximated using numerical optimizers, which allows us to resolve problem (i), the dominant factor in the observed conservativeness.

SuRFs also allow us to address problem $(ii)$.
In current software packages the LKCs are estimated from the random
fields evaluated on $\cV$ assuming the good lattice
assumption and stationarity (\cite{Forman1995}, \citep{Kiebel1999}, \cite{Worsley1996}).
These approaches use discrete approximations of the derivatives of $\tilde X$
instead of the available true derivatives. Less widely used estimators
drop the restrictive assumption of stationarity but still rely on discrete
approximation of derivatives, these are the warping estimator \citep{Taylor2007},
and the Hermite projection estimator (HPE) \citep{Telschow2020}.\footnote{In theory the HPE does not depend on discrete derivatives. However the HPE does requires the critical values of residual fields
and the current implementation of it uses discrete derivatives to do so.}
In Section \ref{scn:SuRFLKCestim} we propose a consistent LKC estimator based on the explicitly
available derivatives of a SuRF, which is computational fast as it requires only
a minimal resolution increase to obtain accurate estimates of the LKCs in practice.

Inherent in any method that applies smoothing is that the resulting inference will only provide
strong control of the FWER with respect to the smoothed signal. Since smoothing is regularly
used in neuroimaging this affects a wide range of methods\footnote{Compare for example \cite[e.g., pp.48-49]{Lazar:2008fMRI} for detailed explanations why data analysis in neuroimage
requires smoothing.}.  Widely cited articles such as \cite{Nichols2003} talk about FWER control in the strong sense but this is implicitly with respect to the smooth signal. In order to clarify this issue,
we show in Section \ref{scn:FWER} that kernel smoothing results in a FWER control which lies in between weak and strong control. In particular localization of a significant signal on the level of data is limited by the support of the kernel.

The article is structured as follows. In Section \ref{scn:notations} we define the notation.
Section \ref{scn:theory} studies theoretical properties of SuRFs.
In particular, Section \ref{scn:FWER} explains in detail how voxelwise inference based on the GKF using SuRFs is carried out and provides an explanation of the type of the resulting FWER control.
In Section \ref{Scn:Sims} we verify our findings using a simulation study. In particular,
Section \ref{scn:simLKC} reports the results of the SuRF LKC estimates under different models
and compares them to other published LKC estimators.
Simulations which demonstrate that the SuRF framework avoids the conservativeness of traditional voxelwise RFT inference can be found in Section \ref{scn:simFWER}. In Section \ref{scn:Discussion} we discuss our findings and
review other potential applications of SuRFs.
A Matlab implementation of the SuRF methodology is available in the RFTtoolbox \citep{RFTtoolbox}. Scripts reproducing our simulation results are available at \url{https://github.com/ftelschow/ConvolutionFieldsTheory}.

\section{Notation and Definitions }\label{scn:notations}
In this section, we establish notation used in the article. We assume that $v \in \mR^D$ is a column vector, i.e., we identify $\mR^D$ with $\mR^{D\times 1} $. 
With $\cM$ we denote a $ D $-dimensional, compact $ \cC^2 $-\textit{Whitney-stratified} (WS) manifold
isometrically embedded into a $D$-dimensional manifold $\overline{\cM}$ without boundary.
Recall that a WS manifold of dimension $D$ is a space $\cM = \bigcup_{d=1}^D \partial_d\cM$ decomposed into strata where the stratum $ \partial_d\cM \subset
\overline{\cM}$ is a manifold of dimension $d$ and all the strata are disjoint, i.e.,
$\partial_d\cM\cap \partial_{d'}\cM = \emptyset$ for all $d=d'$,
compare \citep[Chapter 8]{Adler2007} for more details.
A $\cC^2$-chart (which gives local $\cC^2$-coordinates) around
$ s \in \overline{\cM} $ is given
by a tuple $ \big(\, \overline{U}, \overline{\phi} \,\big) $ where
$\overline{U}\subset \overline{\cM}$ open,
$ s \in \overline{U} $ and $ \overline{\phi} \in \mathcal{C}^2(\,\overline{U}, \overline{V}\,) $
is a diffeomorphism onto an open set $ \overline{V} \subset \mathbb{R}^D $. By the compactness of $\cM$ there exists a set of finitely many charts
$(\overline{U}_\alpha,\overline{\phi}_\alpha)_{\alpha \in \{1,\ldots ,P\}}$ of
$ \mathcal{M} $, $ P \in \mathbb{N} $, such that $\cM\subset\bigcup_{\alpha=1}^P\overline{U}_\alpha$. This union is the only relevant part of $\overline{\cM}$ since we are
only interested in properties of $\cM$. The surrounding manifold $\overline{\cM}$ is only introduced for
technical requirements in the formulation of the GKF.

We denote with $f_\alpha = f\circ\phi_\alpha^{-1}$ the coordinate representation
of $f$ in the chart $U_\alpha$ and with
$
\nabla f_\alpha$ the gradient of $ f_\alpha $, i.e.,
$\nabla f_\alpha(x)
	= \Big( \frac{\partial f_\alpha}{\partial x_1}(x),\ldots,
	  \frac{\partial f_\alpha}{\partial x_D}(x) \Big) \in \mathbb{R}^{1\times D}$
for $x \in V_\alpha$,
and with $ \nabla^2{f_\alpha} $ the Hessian of $ f_\alpha $, i.e.,
$\nabla^2{f_\alpha}(x) \in \mathbb{R}^{D \times D}$ is the matrix with $d$-$d'$th entry
$\frac{\partial^2 f_\alpha}{\partial x_d\partial x_{d'}}(x)$ for $x \in V_\alpha$.
If the gradient $\nabla$ or the Hessian $\nabla^2$ is applied to a function with
two arguments, then it is always assumed to be with respect to the first argument. 
 We will use $s, s'$ for
points in $\cM$ or $\overline{\cM}$ and $x,y$ for points in local coordinates. For simplicity in notation, given $h\in C^1(\mathbb{R}^D)$ and a multi-index
$\beta\in \mathbb{N}^{d}$, $d\leq D$, we write
\begin{equation}
	\partial_{\beta}h(x) = \frac{\partial^d h(x)}{\partial x_{\beta_1},\ldots, \partial x_{\beta_d}}(x)\,,
	~~~~x\in \mathbb{R}^D\,.
\end{equation}
If $h\in C^1(\mathbb{R}^D\times \mathbb{R}^D)$, we sometimes write $\big(\partial_{\beta}^x h\big)(x,y)$
for $(\partial_{\beta}h(\cdot, y))(x)$ evaluated at $(x,y)\in \mathbb{R}^D\times \mathbb{R}^D$.
Note that "$\partial\,$" also appears in our notation for the strata of a WS manifold.
For a symmetric matrix
$A \in \mathbb{R}^{D \times D}$, $D \in \mathbb{N}$, we define its
\emph{half-vectorization}  as
$
\mathbb{V}( A ) = (\, A_{11}, \ldots, A_{D1}, A_{22}, \ldots, A_{D2}, \ldots, A_{D-1D-1}, A_{DD-1}, A_{DD}\,)
$
and the set
$
a \cdot \mZ^D = \big\{ x\in \mR^D ~\vert~ x = (a_1z_1,\ldots, a_Dz_N),\,z\in\mZ^D \big\}
$
for $ D\in \mN$ and any vector $a\in\mR^D$ which has positive entries.
We also assume throughout the article that $ \mathcal{V} $ is a finite, discrete set.

\section{ Theory }\label{scn:theory}
In this section we define  Super-Resolution Fields (SuRFs),
introduce some of their basic properties and
show that they satisfy the Gaussian kinematic formula (GKF) under mild conditions.
Moreover, we derive computable estimators of the Lipschitz Killing curvatures (LKCs) of a SuRF defined
over a voxel manifolds even if the data is non-stationary.

\subsection{ Super-resolution Fields }\label{scn:SuRF}

\begin{definition}\label{def:kernel}
    We call a map $K: \overline{\cM} \times \cV \rightarrow \mR$ a \emph{kernel}. We say it is continuous/differentiable, if $s \mapsto K(s,v)$ is continuous/differentiable
    for all $v \in \cV$.
\end{definition}

\begin{remark}\label{rmk:conv_kernel}
This kernel definition is broader than the standard one in statistics, where it is typically a function $k:\overline{\cM} \rightarrow \mR$ with a normalization property. The function $ k $ can be viewed as a kernel from $ \overline{\cM} \times \overline{\cM} \rightarrow \mR$ by setting $K(s,s')=k(s-s')$.
\end{remark}

\begin{definition}\label{def:SuRF}
    Let $X$ be a $\mR$-valued random field
    on $\cV$ with covariance function $\fc(u,v) = \Cov{X(u)}{ X(v) }$, $u,v\in \cV$.
    For a kernel $K:\overline{\cM}\times\cV\rightarrow \mR$
    the random field over $\overline{\cM}$,
    \begin{equation}
        \tilde X(s) = \sum_{v\in\cV} K(s,v)X(v)\,,~ ~ ~s\in \overline{\cM}\,,
    \end{equation}
    is termed a \emph{Super-Resolution Field} (SuRF) linked to $K$ and $\cV$. For brevity, we will refer to it as $(\tilde X, X, K, \cV)$ or simply $\tilde X$ when the context is evident regarding $X$, $K$, and $\cV$.
    
    Given a SuRF $\big( \tilde X, X, K, \cV \big)$ we call $\big( \tilde X, X, \tilde K, \cV \big)$ a
    \emph{normalized SuRF} if
    $$
    	\tilde K:~\overline{\cM}\times \cV \rightarrow \mR\,,~~~
    	(s,v) \mapsto \frac{ K(s,v) }{ \sqrt{ \sum_{u\in\cV} \sum_{ v \in \cV } K( s, u )K( s, v ) \fc( u, v ) } }\,.
    $$
\end{definition}
\begin{remark}
    Let $\tilde X$ be a SuRF linked to $K$ and
    $\mathbb{E}[ X(v) ] < \infty$ for all $ v \in \cV $.
    Then by linearity
	\begin{equation}
        \mathbb{E}\big[ \tilde X( s ) \big] = 
            \sum_{ v \in \cV } K( s, v ) \mathbb{E}[ X( v ) ]\,,~~~s \in \overline{\cM}\,,
    \end{equation}
   	and if additionally $\mathbb{E}[ X( v )^2 ] < \infty$ for all
   	$v\in\cV$ then
    \begin{equation}\label{eq:CovConvFieldI}
        \Cov{ \tilde X(s)}{ \tilde X(s') } = 
    			  \sum_{u\in\cV} \sum_{ v \in \cV } K( s, u )K( s', v ) \fc( u, v ) < \infty\,,~~~s,s' \in \overline{\cM}\,.
    \end{equation}
    The last formula explains the definition of the normalized SuRF, since it shows
    that any normalized SuRF satisfies $\Var\big[ \tilde X(s) \big] = 1$ for all $ s \in \overline{\cM} $.
\end{remark}
\begin{remark}
	Equation \eqref{eq:CovConvFieldI} is similar to an inner product,
	where $(\fc(u,v))_{u,v\in\cV}$ is the representing ``matrix''.
	As such, we introduce, for all $s,s' \in \overline{\cM}$, the simplifying abbreviations
	\begin{equation}
 	\begin{split}\label{eq:InnerProdRepr}
     	\ska{ K_s}{K_{s'} } = \sum_{ u,v \in \cV } K(s,u) K(s',v) \fc(u,v)\,,\quad
    	\Vert K_s \Vert^2   = \sum_{ u,v \in \cV } K(s,u) K(s,v) \fc(u,v).
    \end{split}
\end{equation}

\end{remark}
\begin{remark}
	If $\cV \subseteq \overline{\cM} $ and $k:\overline{\cM} \rightarrow \mR$, then 
	we call the SuRF obtained from the kernel given in Remark \ref{rmk:conv_kernel} a \emph{convolution field}.
	Convolution fields appear
	naturally in many applications since convolving the observed data with a smoothing kernel is
	often a preprocessing step in signal processing or neuroimaging to improve the signal to noise ratio
	\citep{Turin1960matchedFilter, Worsley2002}.
\end{remark}

SuRFs are random fields with nice properties as most path properties are directly inherited
from the kernel $K$.
This is theoretically advantageous, as one can often establish the assumptions of results
like the GKF by imposing conditions like differentiability on the kernel $K$. The following proposition is self-evident but provided here for convenience.
\begin{proposition}\label{prop:KernelPropConvFields}
    Let $(\tilde X, X, K, \cV)$ be a SuRF and $k\geq 0$.
    If $K(\cdot, v) \in \mathcal{C}^k(\overline{\mathcal{M}})$ for all $ v \in \mathcal{V} $,
    then $\tilde X$ has sample paths of class $\mathcal{C}^k$.
\end{proposition}

\subsection{ Gaussian Kinematic Formula for SuRFs }\label{scn:GKF}
To demonstrate the benefit of thinking in terms of SuRFs we pose assumptions
on a kernel $K$ and the discrete field $X$ such that the corresponding
normalized SuRF $\tilde X$ satisfies the assumptions of the GKF \citep[Theorem 12.4.1, 12.4.2]{Adler2007}.
Thus, we first state the assumptions on a random field $f$ defined over $\overline{\cM}$
such that the GKF over $\cM$ holds. Recall that $f_\alpha$ is
the representation of $f$ in the chart $ \big( \overline{U}_\alpha, \overline{\phi}_\alpha\big)$
as introduced in Section \ref{scn:notations}. In this notation the GKF holds, if for all $\alpha \in \{1,\ldots, P\}$ and $(\overline{U}_\alpha,\overline{\phi}_\alpha)$ from the atlas of $\overline{\cM}$ we have that
\begin{enumerate}[leftmargin=1.4cm]
    \item[\textbf{(G1)}]  $f$ is a zero-mean, unit-variance and Gaussian on $\overline{\cM}$ with a.s. $\mathcal{C}^2$-sample paths.
    \item[\textbf{(G2)}] 
    $\big(\, {\nabla} f_\alpha(x),
		\mathbb{V}\big( {\nabla}^2{f_\alpha}(x) \big) \,\big)$
     is non-degenerate for all $x\in \overline{\phi}_\alpha\big( \overline{U}_\alpha ) \cap \cM$.
    \item[\textbf{(G3)}] There exist constants
    $\kappa,\gamma,\epsilon>0$ such that for each $d,d'\in \{1,\ldots,D\}$,
     $$\mathbb{E}\!\left[ \Big( \partial_{dd'}f_\alpha(x) - \partial_{dd'}f_\alpha(y) \Big)^2 \right] \leq \kappa \Big\vert \log \Vert x - y \Vert  \,\Big\vert^{-1-\gamma}, $$
     for all $x,y \in \overline{\phi}_\alpha\big( \overline{U}_\alpha \big) \cap \cM$ for which $\vert x - y \vert < \epsilon$.
\end{enumerate}
\begin{remark}
	Conditions \textbf{(G1)}-\textbf{(G3)} imply that the sample paths of
	$f$ are almost surely Morse functions, compare \cite[Corollary 11.3.2.]{Adler2007}.
	Moreover, by Lemma 1 from \cite{Davenport2022} these conditions do not depend
	on the particular choice of the $\mathcal{C}^3$ charts $(U_\alpha,\phi_\alpha)$,
	 $\alpha\in\{1,\ldots, P\}$, but rather hold for all $\mathcal{C}^3$ charts
	$(V,\varphi)$ of $\overline{\mathcal{M}}$.
\end{remark}
Applying Proposition \ref{prop:KernelPropConvFields}, \textbf{(G1)} holds for any normalized SuRF derived from a zero-mean Gaussian random field $ X $ on $ \mathcal{V} $ and a twice continuously differentiable kernel $ K $. The next two propositions establish that Condition \textbf{(G3)} is satisfied for a SuRF with
a $\mathcal{C}^3$-kernel.
\begin{proposition}\label{prop:HoelderCont}
	Let $ \gamma \in (0,1] $ and for all $ v \in \cV $ let $ K_\alpha(\cdot, v) $
	be $ \gamma $-H\"older continuous for all $\alpha\in\{1, \ldots, P\}$ 	with H\"older constants
	bounded above by $A > 0$ and $\mathbb{E}\left[ X(v)^p \right] < \infty$,
	$ p \in [1,\infty) $.
	Then $\tilde X$ has almost surely $\cL^p$-H\"older continuous paths, i.e.,
	\begin{equation}
    	\vert \tilde X_\alpha(x) - \tilde X_\alpha(y) \vert
    	\leq L \Vert x - y \Vert^\gamma
	\end{equation}
	for the charts $(\overline{U}_\alpha,\overline{\phi}_\alpha)$, $\alpha \in \{1,\ldots,P\}$,
	in the atlas of $\overline{\cM}$ covering $\cM$,
	all $x,y \in \overline{\phi}\big( \overline{U}_\alpha \big) \cap \cM$ and some
	random variable $L$ with finite $p$-th moment.
\end{proposition}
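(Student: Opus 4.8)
The plan is to reduce the statement to a pointwise triangle-inequality estimate in local coordinates and then bound the resulting random constant via Minkowski's inequality, using that $\cV$ is finite. First I would fix a chart $(\overline{U}_\alpha, \overline{\phi}_\alpha)$ and write the SuRF in its coordinate representation, $\tilde X_\alpha(x) = \sum_{v \in \cV} K_\alpha(x,v) X(v)$ for $x \in \overline{\phi}_\alpha(\overline{U}_\alpha) \cap \cM$, which is legitimate because $\cV$ is finite so that no convergence issue arises.

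The key estimate is then immediate. For any $x,y$ in the coordinate patch, linearity and the triangle inequality give
\[
	\big| \tilde X_\alpha(x) - \tilde X_\alpha(y) \big|
	= \Big| \sum_{v \in \cV} \big( K_\alpha(x,v) - K_\alpha(y,v) \big) X(v) \Big|
	\leq \sum_{v \in \cV} \big| K_\alpha(x,v) - K_\alpha(y,v) \big| \, | X(v) |.
\]
Invoking the $\gamma$-H\"older continuity of each $K_\alpha(\cdot,v)$ with uniform H\"older constant $A$ bounds each difference by $A \Vert x - y \Vert^\gamma$, so the right-hand side is at most $A \Vert x - y \Vert^\gamma \sum_{v \in \cV} |X(v)|$. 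This suggests taking
\[
	L := A \sum_{v \in \cV} | X(v) |,
\]
which does not depend on $x$, $y$, or $\alpha$ (the last because $A$ is assumed uniform across the finitely many charts), and which delivers the claimed pathwise H\"older bound simultaneously for all $\alpha \in \{1,\ldots,P\}$.

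It then remains only to check that $L$ has finite $p$-th moment. Since $\cV$ is finite, $L$ is a finite nonnegative combination of the $|X(v)|$, and Minkowski's inequality --- valid as $p \geq 1$ --- yields $\Vert L \Vert_p \leq A \sum_{v \in \cV} \big( \mE[ |X(v)|^p ] \big)^{1/p} < \infty$ by the moment hypothesis on $X$. I do not expect any genuine obstacle: the proposition is essentially a consequence of the linearity of the SuRF in the data together with the finiteness of $\cV$. The only points meriting care are that the H\"older-constant bound $A$ be uniform over the $P$ charts, so that a single chart-independent $L$ suffices, and that the moment condition be read as a bound on the absolute moment $\mE[|X(v)|^p]$ so that Minkowski applies directly to $|X(v)|$.
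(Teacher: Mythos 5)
Your proof is correct, and it follows the same basic skeleton as the paper's: express $\tilde X_\alpha$ as the finite sum $\sum_{v\in\cV} K_\alpha(\cdot,v)X(v)$, apply the triangle inequality together with the uniform H\"older bound $A$, and exhibit an explicit random constant $L$ whose $p$-th moment is finite. The difference lies in how the data term is separated from the kernel term. The paper applies H\"older's inequality with conjugate exponents $(q,p)$, $q = p/(p-1)$, obtaining
\[
	\big| \tilde X_\alpha(x) - \tilde X_\alpha(y) \big|
	\leq \Big( \sum_{v\in\cV} A^q \Big)^{1/q} \Big( \sum_{v\in\cV} X(v)^p \Big)^{1/p} \Vert x-y \Vert^\gamma,
\]
so that $L = |\cV|\, A\, \big(\sum_{v\in\cV} X(v)^p\big)^{1/p}$ and $\mathbb{E}[L^p]<\infty$ follows by linearity of expectation; this forces a case distinction at $p=1$ (where the $q$-norm degenerates to a maximum). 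You instead bound the sum crudely in $\ell^1$, take $L = A\sum_{v\in\cV}|X(v)|$, and verify $\mathbb{E}[L^p]<\infty$ via Minkowski's inequality. Your route is marginally more elementary: it handles $p=1$ and $p>1$ uniformly, produces a single chart-independent $L$, and --- a point worth noting --- correctly insists on absolute values, i.e., reads the hypothesis as $\mathbb{E}[|X(v)|^p]<\infty$, whereas the paper's displayed expressions $\sqrt[p]{\sum_v X(v)^p}$ are only meaningful with $|X(v)|$ in place of $X(v)$ (harmless for even integer $p$ or after the obvious correction, but your version is the careful one). The paper's H\"older split buys nothing extra here since $\cV$ is finite and both constants are of the same order; the two proofs are interchangeable.
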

\begin{proposition}\label{prop:GaussG3}
	Let $ K(\cdot, v) \in \cC^1\big( \overline{\cM} \big) $ and
	$ \mathbb{E}\left[ X( v )^2 \right] < \infty $ for all
	$\nu\in\cV$. Then there exists a constant $ \kappa > 0 $ for
	$\alpha \in \{1,\ldots,P\}$ such that
	\begin{equation}
    	\mathbb{E}\left[ \left( \tilde X_\alpha(x) - \tilde X_\alpha(y) \right)^2 \right]
    						\leq \kappa \Big\vert \log \Vert x - y \Vert  \,\Big\vert^{-2}
	\end{equation}
	for all $x,y \in \overline{\phi}\big( \overline{U}_\alpha \big) \cap \cM$ such that
	$ 0 < \Vert x - y \Vert < 1$.
\end{proposition}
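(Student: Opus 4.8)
The plan is to reduce the whole estimate to a Lipschitz bound on the kernel and then to absorb the resulting power bound $\|x-y\|^2$ into the logarithmic bound $|\log\|x-y\||^{-2}$ by an elementary one-variable argument. Write $K_\alpha(\cdot,v) = K(\overline{\phi}_\alpha^{-1}(\cdot),v)$, so that in the chart the SuRF reads $\tilde X_\alpha(x) = \sum_{v\in\cV} K_\alpha(x,v)X(v)$ and therefore
\begin{equation}
	\tilde X_\alpha(x) - \tilde X_\alpha(y) = \sum_{v\in\cV}\big( K_\alpha(x,v) - K_\alpha(y,v) \big)\, X(v)\,.
\end{equation}
Since $\cV$ is finite and $\mathbb{E}[X(v)^2]<\infty$ for every $v$, Minkowski's inequality in $L^2(\Omega)$ gives
\begin{equation}
	\Big( \mathbb{E}\big[ ( \tilde X_\alpha(x) - \tilde X_\alpha(y) )^2 \big] \Big)^{1/2}
	\leq \sum_{v\in\cV} \big| K_\alpha(x,v) - K_\alpha(y,v) \big| \, \big( \mathbb{E}[ X(v)^2 ] \big)^{1/2}\,.
\end{equation}
(Equivalently one expands the square into $\sum_{u,v}(\cdots)\fc(u,v)$ and applies Cauchy--Schwarz; I prefer the Minkowski route as it never touches the covariance matrix directly.)

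Next I would turn the $\cC^1$ hypothesis into a Lipschitz estimate on the kernel. Because $\overline{\phi}_\alpha$ is a $\cC^2$ diffeomorphism and $K(\cdot,v)\in\cC^1(\overline{\cM})$, each map $K_\alpha(\cdot,v)$ is $\cC^1$ on $\overline{V}_\alpha = \overline{\phi}_\alpha(\overline{U}_\alpha)$, and its gradient is continuous, hence bounded on the compact image of $\cM\cap\overline{U}_\alpha$. By the mean value inequality there is a constant $L_v>0$ with $|K_\alpha(x,v)-K_\alpha(y,v)|\leq L_v\|x-y\|$ for the admissible $x,y$. Setting $C = \sum_{v\in\cV} L_v\,(\mathbb{E}[X(v)^2])^{1/2}$, which is finite because $\cV$ is finite, the previous display yields the clean power bound
\begin{equation}
	\mathbb{E}\big[ ( \tilde X_\alpha(x) - \tilde X_\alpha(y) )^2 \big] \leq C^2\,\|x-y\|^2\,.
\end{equation}

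Finally I would compare the power bound with the logarithmic one. The function $r\mapsto r^2\,|\log r|^2$ is continuous on $(0,1)$ and tends to $0$ both as $r\to 0^+$ and as $r\to 1^-$, so it is bounded on $(0,1)$ by some $M<\infty$. Consequently $\|x-y\|^2 \leq M\,|\log\|x-y\||^{-2}$ whenever $0<\|x-y\|<1$, and the claim follows with $\kappa = C^2 M$; taking the maximum of the finitely many chart constants produces a single $\kappa$ valid for all $\alpha\in\{1,\ldots,P\}$.

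The one genuine technical point is the passage from $\cC^1$-smoothness to a \emph{uniform} Lipschitz constant over the possibly non-convex coordinate domain. The robust way to handle this is not to insist on a global Lipschitz bound but to split on the scale of $\|x-y\|$: for $\|x-y\|$ smaller than the radius of a convex ball contained in the chart, the local mean value inequality gives the Lipschitz estimate above, while for $\|x-y\|$ bounded away from $0$ the left-hand side is controlled by a crude uniform $L^2$-bound (the coefficients $K_\alpha(\cdot,v)$ are continuous on a compact set, hence bounded) and on that range $|\log\|x-y\||^{-2}$ stays bounded below, so any sufficiently large $\kappa$ works. Everything else is bookkeeping of finitely many $L^2$-norms together with the elementary boundedness of $r^2|\log r|^2$.
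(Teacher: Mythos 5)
Your proof is correct and takes essentially the same route as the paper's: a Lipschitz bound on the kernel coefficients from $\cC^1$-smoothness plus compactness (the paper channels this through Proposition \ref{prop:HoelderCont} with $p=2$, $\gamma=1$, whereas you apply Minkowski's inequality in $L^2$ directly), followed by the elementary inequality $r^2 \leq \vert \log r \vert^{-2}$ for $0 < r < 1$ (indeed $r\vert\log r\vert \leq e^{-1}$ there). Your closing observation about non-convex chart domains flags a point the paper's proof silently glosses over, and your splitting on the scale of $\Vert x - y \Vert$ is a sound way to close it.
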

\begin{remark}
	Proposition \ref{prop:GaussG3} implies condition \textbf{(G3)}
	for a $\mathcal{C}^3$-kernel $K$ because each second order
	partial derivative of a normalized SuRF is a SuRF with continuous sample paths.
\end{remark}

\begin{definition}
	Given a set $ \cW \subset \mR^D $, we say that functions
	$ f_1, \dots, f_J: \cW \rightarrow \mR $ are $ \cW $-linearly
	independent if given constants $ a_1, \dots, a_{J} \in \mR $, the relation 
	\begin{equation*}
		\sum_{ j = 1 }^J a_j f_j(w) = 0
	\end{equation*}
	holding for all $ w \in \cW$ implies that $ a_j = 0 $ for all $ j \in \{ 1, \dots, J \}$.
\end{definition}
The following propositions link the linear independence of functions $K_\alpha(x,\cdot)$ for $x \in \overline{\phi}_\alpha(\overline{U}_\alpha\cap\cM)$ with Condition \textbf{(G2)} for SuRFs. We demonstrate that the needed linear independence condition for achieving \textbf{(G2)} is satisfied, for example, by Gaussian kernels.
\begin{proposition}\label{prop:nondegen}
	Let $ (\tilde X, X, K, \cV) $ be a SuRF with $K(\cdot,v) \in C^2\big(\overline{\cM}\big)$
	for all $v \in \mathcal{V}$  and $ \tilde Z$ be the corresponding normalized SuRF.
	For $\alpha \in \{ 1, \ldots, P \}$ and
	$x\in \overline{\phi}_\alpha\big( \overline{U}_\alpha \big) \cap \cM$ define
	$K_\alpha( x, \cdot ):\cV \rightarrow \mathbb{R}$
	by $v \mapsto K\big( \overline{\phi}_\alpha^{-1}(x), v \big)$ and
	set $ \cV_x = \left\lbrace v \in \cV: K_\alpha(x,v) \neq 0 \right\rbrace$.
	Assume that
	$ K_\alpha(x,\cdot)$, $\partial_d^x K_\alpha(x,\cdot)$
	and $\partial_{d'd''}^xK_\alpha( x,\cdot )$ for $d,d',d'' \in\{1, \ldots,D\} $ with
	$1  \leq d'\leq d''\leq D $ are $ \cV_x $-linearly
	independent and the random vector $ (X(v): v \in \cV_x) $ is non-degenerate. Then
	\begin{equation*}
	\Big( \tilde X_\alpha(x), \nabla \tilde X_\alpha(x), \mathbb{V}\big( \nabla^2 \tilde X_\alpha(x) \big) \Big)
	\text{ and }
	\Big(\tilde{Z}_\alpha(x), \nabla \tilde{Z}_\alpha(x), \mathbb{V}\big(\nabla^2 \tilde{Z}_\alpha(x)\big) \Big)
	\end{equation*}
	are non-degenerate Gaussian random vectors.
\end{proposition}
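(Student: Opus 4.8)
The plan is to realise the vector as a fixed (deterministic) linear image of the data vector $(X(v))_{v\in\cV}$ and then to show that the resulting covariance matrix is strictly positive definite. Since $K(\cdot,v)\in\cC^2(\overline{\cM})$ for every $v$ and $\cV$ is finite, I may differentiate the defining sum term by term (this is also what underlies Proposition~\ref{prop:KernelPropConvFields}), giving $\tilde X_\alpha(x)=\sum_{v\in\cV}K_\alpha(x,v)X(v)$, $\partial_d\tilde X_\alpha(x)=\sum_{v\in\cV}\partial_d^xK_\alpha(x,v)X(v)$ and $\partial_{d'd''}\tilde X_\alpha(x)=\sum_{v\in\cV}\partial_{d'd''}^xK_\alpha(x,v)X(v)$. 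Stacking the value, the $D$ gradient entries and the $D(D+1)/2$ distinct Hessian entries shows that
\[ Y_X:=\big(\tilde X_\alpha(x),\nabla\tilde X_\alpha(x),\mathbb{V}(\nabla^2\tilde X_\alpha(x))\big)=A_x\,\big(X(v)\big)_{v\in\cV} \]
for the deterministic matrix $A_x$ whose rows are the functions $K_\alpha(x,\cdot)$, $\partial_d^xK_\alpha(x,\cdot)$ and $\partial_{d'd''}^xK_\alpha(x,\cdot)$. As $X$ is Gaussian, $Y_X$ is Gaussian as a linear image, so only non-degeneracy remains.

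A Gaussian vector is non-degenerate exactly when no nontrivial linear combination of its entries has vanishing variance. Given $c=(c_0,(c_d)_d,(c_{d'd''})_{d'\le d''})\neq 0$, put
\[ g_c(v)=c_0K_\alpha(x,v)+\sum_{d}c_d\,\partial_d^xK_\alpha(x,v)+\sum_{d'\le d''}c_{d'd''}\,\partial_{d'd''}^xK_\alpha(x,v), \]
so that $c^\top Y_X=\sum_{v\in\cV}g_c(v)X(v)$ and, by bilinearity of the covariance together with $\fc(u,v)=\Cov{X(u)}{X(v)}$, one gets $\Var(c^\top Y_X)=\sum_{u,v\in\cV}g_c(u)g_c(v)\fc(u,v)$. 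The $\cV_x$-linear independence of the rows of $A_x$ is precisely the statement that $g_c$ vanishes identically on $\cV_x$ only when $c=0$; hence for $c\neq 0$ the vector $(g_c(v))_{v\in\cV_x}$ is nonzero. Because $K_\alpha(x,v)=0$ for $v\notin\cV_x$ the coefficient function is carried by $\cV_x$ (for the Gaussian kernel singled out in the statement one simply has $\cV_x=\cV$), so the variance reduces to $\sum_{u,v\in\cV_x}g_c(u)g_c(v)\fc(u,v)$. The matrix $(\fc(u,v))_{u,v\in\cV_x}$ is the covariance of $(X(v))_{v\in\cV_x}$, which is positive definite by the non-degeneracy hypothesis; evaluated on the nonzero vector $(g_c(v))_{v\in\cV_x}$ it is strictly positive. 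Thus $\Var(c^\top Y_X)>0$ for every $c\neq0$ and $Y_X$ is a non-degenerate Gaussian vector.

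For the normalized field I would write $\tilde Z_\alpha=\tilde X_\alpha/\sigma$ with $\sigma(x)=\Vert K_{\overline{\phi}_\alpha^{-1}(x)}\Vert$. By \eqref{eq:InnerProdRepr} and the $\cC^2$-regularity of $K$ this $\sigma$ is $\cC^2$, and it is strictly positive since $\sigma(x)^2=\sum_{u,v\in\cV_x}K_\alpha(x,u)K_\alpha(x,v)\fc(u,v)$ is the positive-definite form $(\fc(u,v))_{u,v\in\cV_x}$ evaluated at the nonzero vector $(K_\alpha(x,v))_{v\in\cV_x}$. Differentiating $\tilde X_\alpha=\sigma\,\tilde Z_\alpha$ with the product rule expresses $\big(\tilde Z_\alpha(x),\nabla\tilde Z_\alpha(x),\mathbb{V}(\nabla^2\tilde Z_\alpha(x))\big)$ as $B_x\,Y_X$ for a deterministic matrix $B_x$ that, in the value/gradient/Hessian grading, is block lower triangular with diagonal blocks $\sigma(x)^{-1}I$; hence $B_x$ is invertible, and an invertible linear image of a non-degenerate Gaussian vector is again non-degenerate Gaussian, giving the claim for $\tilde Z$.

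The step I expect to require the most care is the reduction of $\Var(c^\top Y_X)$ from a sum over $\cV$ to the positive-definite block indexed by $\cV_x$. The value contribution is supported on $\cV_x$ by the very definition of $\cV_x$, but the gradient and Hessian coefficient functions need not vanish at indices $v$ where $K_\alpha(x,\cdot)$ happens to vanish; the clean reduction therefore holds automatically only when $\cV_x=\cV$ (as for everywhere-positive kernels such as the Gaussian) or when an additional regularity argument forces the derivatives of the kernel to vanish together with its value off $\cV_x$ (as for smooth compactly supported kernels). Pinning down this support bookkeeping — rather than the essentially formal Gaussianity and invertibility arguments — is the crux.
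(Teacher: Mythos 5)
Your proposal is correct and takes essentially the same approach as the paper's proof: both reduce non-degeneracy to the observation that a degenerate linear combination of value, gradient and Hessian entries forces the coefficient function $g_c$ to vanish on $\cV_x$ (via non-degeneracy of $(X(v):v\in\cV_x)$), which the $\cV_x$-linear independence then rules out, and both handle $\tilde Z$ by writing the value/gradient/Hessian vector of $\tilde X/\sigma$ as an invertible block-triangular linear image of that of $\tilde X$. The support-bookkeeping caveat you flag at the end (derivatives of $K_\alpha(x,\cdot)$ need not vanish off $\cV_x$) is a fair observation, but it is not a gap relative to the published argument, whose own proof makes the identical implicit reduction by writing all three sums over $\cV_x$ from the outset.
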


\begin{proposition}\label{prop:G2_2}
	Let $\tilde D = D+1+D(D+1)/2$ and $\alpha \in \{1\ldots, P\}$. Taking the
	gradient and Hessian w.r.t.
	$x\in \overline{\phi}_\alpha(\overline{U}_\alpha\cap\cM)$
	we define the vector valued functions
	\begin{equation}
		\mathbf{K}_{\alpha,x}(v) = \Big( K_\alpha(x,v), \nabla K_\alpha(x,v), \mathbb{V}\big( \nabla^2 K_\alpha(x,v) \big) \Big)
	\end{equation}
	indexed by $x \in \overline{\phi}_\alpha(\overline{U}_\alpha\cap\cM)$.
	If for each $x\in \overline{\phi}_\alpha(\overline{U}_\alpha\cap\cM)$
	there exist $v_1,\ldots, v_{\tilde D} \in \cV_x$
	such that $\mathbf{K}_{\alpha,x}(v_1),\ldots, \mathbf{K}_{\alpha,x}(v_{\tilde D})$
	are linearly independent,
	then $ K_\alpha(x,\cdot)$, $\partial_d^x K_\alpha(x,\cdot)$
	and $\partial_{d'd''}^xK_\alpha(x,\cdot)$ for $d,d',d'' \in\{1,\ldots, D\} $ with
	$1  \leq d'\leq d''\leq D $ are $\cV_x$ linearly independent.
\end{proposition}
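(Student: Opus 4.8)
The plan is to recognize the claim as the elementary identity \emph{row rank equals column rank}, applied pointwise in $x$. Fix $\alpha \in \{1,\ldots,P\}$ and a point $x \in \overline{\phi}_\alpha(\overline{U}_\alpha\cap\cM)$. I would assemble the matrix $M$ with columns indexed by the elements $v \in \cV_x$, the column attached to $v$ being exactly $\mathbf{K}_{\alpha,x}(v) \in \mathbb{R}^{\tilde D}$. The point is that the $\tilde D$ coordinates of $\mathbf{K}_{\alpha,x}(v)$ are precisely the numbers $K_\alpha(x,v)$, the $\partial_d^x K_\alpha(x,v)$ for $d \in \{1,\ldots,D\}$, and the $\partial_{d'd''}^x K_\alpha(x,v)$ for $1 \le d' \le d'' \le D$; hence the $\tilde D$ rows of $M$ are nothing but the evaluation vectors on $\cV_x$ of the $\tilde D$ functions $K_\alpha(x,\cdot)$, $\partial_d^x K_\alpha(x,\cdot)$ and $\partial_{d'd''}^x K_\alpha(x,\cdot)$. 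Thus $M$ is at once the ``column picture'' featuring in the hypothesis and the ``row picture'' featuring in the conclusion.

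With $M$ fixed, the two steps are short. First, the hypothesis supplies $v_1,\ldots,v_{\tilde D} \in \cV_x$ for which $\mathbf{K}_{\alpha,x}(v_1),\ldots,\mathbf{K}_{\alpha,x}(v_{\tilde D})$ are linearly independent; these are $\tilde D$ columns of $M$, so the column rank of $M$ is at least $\tilde D$. Since $M$ has only $\tilde D$ rows, its rank is also at most $\tilde D$, and therefore equals $\tilde D$. Second, because column rank equals row rank, the $\tilde D$ rows of $M$ are linearly independent.

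It then remains to translate row independence into the stated $\cV_x$-linear independence. Labelling the $\tilde D$ functions $f_1,\ldots,f_{\tilde D}$ in the order above, the relation $\sum_{j} a_j f_j(v) = 0$ holding for all $v \in \cV_x$ says exactly that the coefficient vector $(a_1,\ldots,a_{\tilde D})$ annihilates every column of $M$, i.e. $(a_1,\ldots,a_{\tilde D})\,M = 0$, which is a linear dependence among the rows of $M$. As the rows are independent, this forces $a_j = 0$ for all $j$, which is precisely $\cV_x$-linear independence of $K_\alpha(x,\cdot)$, $\partial_d^x K_\alpha(x,\cdot)$ and $\partial_{d'd''}^x K_\alpha(x,\cdot)$.

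I do not expect a genuine obstacle: the mathematical content is rank duality, and the only thing demanding care is bookkeeping, namely checking that the indexing of the $\tilde D$ coordinates of $\mathbf{K}_{\alpha,x}(v)$ lines up with the indexing of the $\tilde D$ functions, so that the column and row pictures are honest transposes of one another. I would also note explicitly that the argument is carried out separately for each fixed $x$, using the hypothesis (one set of $\tilde D$ witnessing columns) at that $x$ only, so no uniformity in $x$ is needed.
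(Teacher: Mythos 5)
Your proof is correct. The paper in fact states Proposition \ref{prop:G2_2} without any written proof (its proof appendix covers Propositions \ref{prop:HoelderCont}, \ref{prop:GaussG3}, \ref{prop:nondegen} and \ref{prop:GaussKernelLinearIndependence} but skips this one, evidently regarding it as elementary linear algebra), so there is no published argument to diverge from; your rank-duality argument fills that gap cleanly. The bookkeeping you flag is handled properly: the $\tilde D = 1 + D + D(D+1)/2$ coordinates of $\mathbf{K}_{\alpha,x}(v)$ match the $\tilde D$ functions in the conclusion, and working pointwise in $x$ is exactly what the statement requires. One remark: you can shortcut the appeal to ``row rank equals column rank'' entirely --- restrict $M$ to the $\tilde D$ witnessing columns $v_1,\ldots,v_{\tilde D}$ to get an invertible $\tilde D\times\tilde D$ matrix $M'$; a relation $\sum_j a_j f_j(v)=0$ holding for all $v\in\cV_x$ in particular gives $a^T M' = 0$, whence $a=0$ by invertibility. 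This avoids even the transpose discussion, since vanishing on all of $\cV_x$ trivially implies vanishing on the witnessing subset.
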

\begin{remark}
	The results in Propositions \ref{prop:nondegen} and \ref{prop:G2_2}
	are stronger than \textbf{(G2)} as we want to emphasize that SuRFs often satisfy the assumptions
	of the expectation Metatheorem 11.2.1 from \cite{Adler2007}, see Corollary 11.2.2
	for the Gaussian version.
	However, our proofs show that \textbf{(G2)} follows already from  the assumption that
	$\partial_d^x K(x,\cdot)$ and $\partial_{d'd''}^xK(x,\cdot)$
	for $d\in\{1,\ldots, D\}$ and $1\leq d'\leq d'' \leq D $
	is $ \cV_x $-linearly independent for all
	$x \in \overline{\phi}\big( \overline{U}_\alpha \big) \cap \cM$ and $\alpha\in\{1,\ldots,P\}$.
\end{remark}

\begin{proposition}\label{prop:GaussKernelLinearIndependence}
	Let $ K(s,v) = e^{-(s-v)^T\Omega(s-v) / 2}$, for some positive definite matrix $ \Omega \in \mathbb{R}^{D\times D} $ be the $ D $-dimensional  Gaussian kernel. Assume that $ \cV $ is a $ D $-dimensional lattice which contains
	an element $ v $ such that
	$$ \left\lbrace v+ \sum_{d = 1}^D \lambda_d \gamma_d e_d: \gamma_d \in \lbrace -1,0,1 \rbrace\right\rbrace \subset \mathcal{V}
	$$
	where $ (e_d)_{1\leq d \leq D} $ is the standard
	basis and $ \lambda \in \mathbb{R}^D_{>0}$.
	Then $ K(s,\cdot)$, $\partial_d^s K(s,\cdot)$
	and $\partial_{d'd''}^sK(s,\cdot)$ for $d\in\{1,\ldots, D\}$ and $1\leq d'\leq d'' \leq D $
	are $ \cV_s $-linearly
	independent for each $s\in\mathbb{R}^D$.
\end{proposition}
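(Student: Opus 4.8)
The plan is to reduce the assertion to two elementary observations: each listed function is an explicit polynomial of total degree at most two in $s-v$ times the strictly positive factor $K(s,v)$, and a polynomial of total degree at most two that vanishes on the assumed $3^D$-point neighbourhood must vanish identically. Since $K(s,v)=e^{-(s-v)^T\Omega(s-v)/2}>0$ for all $s,v$, we have $\cV_s=\cV$ for every $s$, so it suffices to prove linear independence on the finite subset $G=\{v_0+\sum_{d=1}^D\lambda_d\gamma_d e_d:\gamma\in\{-1,0,1\}^D\}\subset\cV$, where $v_0$ denotes the assumed centre element; indeed any nonzero linear combination vanishing on $\cV_s$ would in particular vanish on $G\subseteq\cV_s$, so independence on $G$ is stronger and suffices.

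Writing $z=s-v$, a direct differentiation gives $\partial_d^sK(s,v)=-(\Omega z)_dK(s,v)$ and $\partial_{d'd''}^sK(s,v)=\big((\Omega z)_{d'}(\Omega z)_{d''}-\Omega_{d'd''}\big)K(s,v)$. Hence any linear combination of $K(s,\cdot)$, the $\partial_d^sK(s,\cdot)$ and the $\partial_{d'd''}^sK(s,\cdot)$ with real coefficients $a_0,a_d,a_{d'd''}$ equals $Q(s-v)K(s,v)$ for a polynomial $Q$ of total degree at most two depending linearly on these coefficients, and because $K>0$ this combination vanishes at $v\in G$ if and only if $Q(s-v)=0$ for all $v\in G$.

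It then remains to verify two points. For injectivity of the coefficient-to-polynomial map, I would substitute $u=\Omega z$, which is a linear bijection because $\Omega$ is positive definite, and thereby express $Q$ in the monomials $1,u_1,\dots,u_D,\{u_{d'}u_{d''}\}_{d'\le d''}$; these are linearly independent, so $Q\equiv 0$ forces $a_{d'd''}=0$, then $a_d=0$, and finally $a_0=0$. For vanishing on $G$, note that as $v$ ranges over $G$ the point $z=s-v$ ranges over the tensor-product grid whose $d$-th coordinate takes the three distinct values $(s-v_0)_d-\lambda_d,(s-v_0)_d,(s-v_0)_d+\lambda_d$ (distinct since $\lambda_d>0$); such a grid is unisolvent for the tensor-product space of polynomials of coordinatewise degree at most two, which contains every total-degree-$\le 2$ polynomial, so $Q(s-v)=0$ on $G$ forces $Q\equiv 0$.

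Combining the two points finishes the argument: a linear combination vanishing on $\cV_s$ vanishes on $G$, hence the associated $Q$ is identically zero, hence all coefficients vanish, which is precisely $\cV_s$-linear independence. The only genuinely nontrivial step is the unisolvence claim, which I would justify by the one-dimensional fact that a univariate polynomial of degree at most two with three distinct roots is zero, iterated coordinatewise over the product structure of $G$. Positive-definiteness of $\Omega$ enters only to make $u=\Omega z$ a bijection, while the assumed $\{-1,0,1\}$ neighbourhood with $\lambda\in\mR^D_{>0}$ supplies exactly the three distinct nodes per coordinate that the interpolation argument requires.
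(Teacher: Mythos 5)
Your proof is correct, and it takes a genuinely more direct route than the paper's. Both arguments share the same core mechanism: dividing out the strictly positive exponential so that every function in the family becomes a degree-$\le 2$ polynomial in $s-v$ times $K(s,v)$, and then showing that vanishing on the assumed $3^D$-point neighbourhood forces the zero polynomial. The paper, however, first reduces the anisotropic kernel to the isotropic one via $K(s,v)=K^*\big(\Sigma^{-1/2}s,\Sigma^{-1/2}v\big)$, invoking the elimination and duplication matrices of \cite{Magnus1980} (and Lemma 1 of \cite{Davenport2022}) to show that $\big(K,\nabla K,\mathbb{V}(\nabla^2 K)\big)$ is an invertible linear image of its isotropic counterpart, and then proves the isotropic case (Lemma \ref{lem:nondegeniso}) by viewing the polynomial identity as a quadratic in one lattice coordinate $v_i$ at a time and extracting coefficients to deduce $\Omega' A' \Omega' = 0$, hence $A'=0$, then $a=0$, then $c=0$. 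You bypass the reduction entirely: the substitution $u=\Omega z$ absorbs the anisotropy in one line (linear independence of the monomials $1, u_d, u_{d'}u_{d''}$ yields the triangular elimination $a_{d'd''}=0$, then $a_d=0$, then $a_0=0$), and your tensor-product unisolvence lemma --- a polynomial of coordinatewise degree $\le 2$ vanishing on a $3\times\cdots\times 3$ grid is identically zero, by iterating the three-distinct-roots argument over coordinates --- replaces the paper's coefficient-by-coefficient extraction. Your version buys self-containedness and a cleaner separation of the two logical steps (injectivity of the coefficient-to-polynomial map versus vanishing on the grid); the paper's version isolates a transformation lemma that is reusable for more general changes of variables. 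Two minor points worth making explicit if you write this up: your formulas $\partial_d^s K = -(\Omega z)_d K$ assume $\Omega$ symmetric, which is implicit in positive definiteness (or replace $\Omega$ by its symmetric part without changing $K$), and your opening observation that $\cV_s=\cV$ by strict positivity of the Gaussian kernel is a point the paper uses only implicitly.
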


The above conditions allow us to consider the GKF for Gaussian related fields
obtained from Gaussian SuRFs as a corollary to Theorem 12.4.2 from \cite{Adler2007}.
We define $\cA_u(f) = \{ s \in \cM:~f(s) \geq u  \}$ to be the excursion set
of a random field $f$ above the threshold $u$ on $\cM$ and write $\chi_f(u)$ to denote the
Euler characteristic (EC) of the excursion set $\cA_u(f)$.
\begin{theorem}\label{thm:GKF}
	Let $(\tilde X_1, X_1, K, \cV),\ldots, (\tilde X_N, X_N, K, \cV) \sim (\tilde X, X, K, \cV)$
	be i.i.d. SuRFs and $F\in \cC^2\big( \overline{\cM} \big)$.
	Assume that $X$ is a Gaussian field on $\cV$ with covariance
	function $\mathfrak{c}$ and that for all $v\in \cV$ it holds that
	$K(\cdot, v) \in C^3\big( \overline{\cM} \big)$
	and $\mathfrak{c}(v,v) > 0$.
	Furthermore, for all $\alpha\in\{1,\ldots, P\}$ and
	$x\in \overline{\phi}\big( \overline{U}_\alpha \big) \cap \cM$
	assume that the random vector $(X(v): v\in \cV_x)$ is non-degenerate
	for the $\cV_x$ defined in Proposition \ref{prop:nondegen}
	and that
	$\partial_d^x K_\alpha(x,\cdot)$
	and $\partial_{d'd''}^xK_\alpha(x,\cdot)$ for $d\in\{1,\ldots, D\}$ and $1\leq d'\leq d'' \leq D $
	are $ \cV_x $-linearly independent.
	Define a random field $T$ such that
	$T(s) = F\big( \tilde X_1(s)/\Vert K_s \Vert, \ldots, \tilde X_N(s)/\Vert K_s \Vert\big)$
	for all $s\in\overline{\cM}$. Then
\begin{equation}
	\mathbb{E}\left[ \chi_T(u) \right] = \sum_{d = 0}^D \cL_d \rho_d^T(u)\,,~ ~ ~u \in \mR\,,
\end{equation}
where $\cL_0,\ldots, \cL_D\in\mR$ are the LKCs of $\overline{\cM}$ endowed with the induced Riemannian metric from $\tilde X(s)/\Vert K_s \Vert$ and $\rho_d^T$'s are functions depending solely on the marginal distribution of $T$.
\end{theorem}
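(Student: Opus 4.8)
The plan is to present the statement as a direct corollary of the Gaussian Kinematic Formula for Gaussian-related fields, Theorem 12.4.2 of \cite{Adler2007}, by checking that the vector-valued Gaussian field $(\tilde Z_1,\ldots,\tilde Z_N)$ formed from the normalized SuRFs $\tilde Z_i(s)=\tilde X_i(s)/\Vert K_s\Vert$ satisfies conditions \textbf{(G1)}--\textbf{(G3)} on $\overline{\cM}$. Since $T(s)=F\bigl(\tilde Z_1(s),\ldots,\tilde Z_N(s)\bigr)$ with $F\in\cC^2$, the field $T$ is exactly a Gaussian-related field in the sense of that theorem; once the underlying Gaussian jet is shown to be suitably regular and non-degenerate, the expansion $\mathbb{E}[\chi_T(u)]=\sum_{d=0}^D\cL_d\rho_d^T(u)$ follows, with $\cL_d$ the LKCs of $\overline{\cM}$ under the metric induced by $\tilde Z$ and $\rho_d^T$ the Euler-characteristic densities of the marginal law of $T$.

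First I would note that the normalization is legitimate: because $\mathfrak{c}(v,v)>0$ and $(X(v):v\in\cV_x)$ is non-degenerate, the quadratic form $\Vert K_s\Vert^2=\sum_{u,v\in\cV}K(s,u)K(s,v)\mathfrak{c}(u,v)$ is strictly positive at every $s$, and it is a $\cC^3$ function of $s$, so $\Vert K_\cdot\Vert$ is a strictly positive $\cC^3$ function and each $\tilde Z_i$ has unit variance. Condition \textbf{(G1)} is then immediate from Proposition \ref{prop:KernelPropConvFields}: the $\cC^3$-kernel hypothesis makes $\tilde X_i$ a Gaussian field with a.s. $\cC^3\subset\cC^2$ sample paths, and dividing by the non-vanishing $\cC^3$ function $\Vert K_\cdot\Vert$ preserves this; the field is centered because $X$ is. Condition \textbf{(G3)} follows from Proposition \ref{prop:GaussG3} and the remark after it, since every second-order partial derivative of a normalized SuRF is itself a SuRF arising from a $\cC^1$ kernel, so the required logarithmic modulus-of-continuity bound on the Hessian entries holds.

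The substance of the argument is condition \textbf{(G2)}, and here I would invoke Proposition \ref{prop:nondegen}: its hypotheses are exactly the assumed non-degeneracy of $(X(v):v\in\cV_x)$ together with the $\cV_x$-linear independence of $K_\alpha(x,\cdot)$ and of its first and second derivatives $\partial_d^x K_\alpha(x,\cdot)$, $\partial_{d'd''}^x K_\alpha(x,\cdot)$. The proposition therefore gives, for a single copy and every chart index $\alpha$ and every $x\in\overline{\phi}_\alpha(\overline{U}_\alpha)\cap\cM$, that the jet $\bigl(\tilde Z_\alpha(x),\nabla\tilde Z_\alpha(x),\mathbb{V}(\nabla^2\tilde Z_\alpha(x))\bigr)$ is a non-degenerate Gaussian vector. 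Because the $N$ copies are independent, the stacked jet of $(\tilde Z_1,\ldots,\tilde Z_N)$ has block-diagonal covariance with $N$ non-degenerate blocks and is hence non-degenerate; this is precisely the non-degeneracy required of the ambient Gaussian field in the Metatheorem 11.2.1 / Corollary 11.2.2 that underlies Theorem 12.4.2.

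With \textbf{(G1)}--\textbf{(G3)} verified for $(\tilde Z_1,\ldots,\tilde Z_N)$ and $F\in\cC^2$, Theorem 12.4.2 of \cite{Adler2007} applies directly and yields the claimed formula. I expect the only genuine obstacle to be the bookkeeping buried in Proposition \ref{prop:nondegen}: the quotient $\tilde X_i/\Vert K_s\Vert$ couples the value, gradient and Hessian of $\tilde X_i$ through the quotient rule, so one must verify that non-degeneracy of the unnormalized $(K_\alpha,\partial_d^x K_\alpha,\partial_{d'd''}^x K_\alpha)$ jet transfers to the normalized field. This is exactly where the three linear-independence families are used, since a rank deficiency in any one of them would collapse a coordinate direction of the jet and break \textbf{(G2)}; everything else in the reduction is routine.
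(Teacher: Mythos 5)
Your proposal is correct and takes essentially the same route as the paper's proof, which likewise verifies \textbf{(G1)} from Gaussianity of $X$ and the $\cC^3$-kernel hypothesis, \textbf{(G2)} via Proposition \ref{prop:nondegen}, and \textbf{(G3)} via Proposition \ref{prop:GaussG3}, before invoking Theorem 12.4.2 of \cite{Adler2007}; your extra details (strict positivity and smoothness of $\Vert K_\cdot \Vert$, block-diagonal non-degeneracy of the stacked jet across the $N$ i.i.d.\ copies) are correct points the paper leaves implicit. One small imprecision: the theorem's hypotheses are not ``exactly'' those of Proposition \ref{prop:nondegen}, since the theorem omits $K_\alpha(x,\cdot)$ itself from the $\cV_x$-linearly independent family, so strictly one must appeal to the remark following Proposition \ref{prop:G2_2} that \textbf{(G2)} --- which concerns only the gradient and Hessian --- already follows from the weaker derivative-only independence, the same shortcut the paper's one-line citation implicitly takes.
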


\subsection{Voxel Manifolds}\label{scn:voxelmanifolds}
Until now we did not discuss what domain $\cM\subset \overline{\cM}$
we should choose for a SuRF derived from a kernel $K$ and a
discrete random field $X$ on a finite grid $\cV$. 
In the case that $\overline{\cM}$ is a $D$-dimensional submanifold of $\mathbb{R}^D$ and $\cV \subset \mR^D$ and $K(\cdot, v): \mR^D \rightarrow \mathbb{R}$, for each
$v\in\cV$, we propose to use a practical domain which we call the \textit{voxel manifold associated with} $\cV$.
\begin{definition}
    Suppose that $ \cV\subset\mR^D$ is a discrete set and define
    $\delta\in \mR^D$ such that its $d$-th component is
    $
    	\delta_d = \min\big\{\, \vert  v_d - w_d \vert: v,w \in \cV\,,~v_d \neq w_d  \,\big\}
    $. Moreover, let
    \begin{equation}
         \cB_v( \delta )
         	=  \Big\{\, x \in \mathbb{R}^D ~\Big\vert~
         			\max_{ d \in \{1, \ldots, D\} } \vert x_d - v_d \vert - \delta_d/2 \leq  0
         			\,\Big\}.
    \end{equation}
    Then the \emph{voxel manifold associated with $\cV$} is the set
    $ \cM_\cV = \bigcup_{v\in\cV} \cB_v( \delta )$.
\end{definition}
A voxel manifold is a stratified space, for example, for
$D = 3$ by  the three dimensional stratum is the union of all the open
cubes $\text{int}(\cB_v(\delta))$ with
$v\in\cV$, while the two dimensional, the one dimensional and the zero dimensional strata
are the unions over all $v\in V$ of all faces, edges and corners of the cubes $\cB_v(\delta)$ respectively.
In fact, a voxel manifold as the union of polyhedra is even a WS
manifold \citep[p.187]{Adler2007}.

\begin{remark}
Figure I in \cite{Worsley1996} is somewhat suggestive of a voxel manifold. However they used in a different choice of domain, instead defining a cube by $8$ neighbouring voxels while in $\cM_\cV$ each voxel $v$ defines a cube. Crucially, they only used this choice of domain in order to estimate the intrinsic
 volumes while their test statistic is still $\max_{v \in \mathcal{V}} T(v)$, i.e. the maximum over the values that the test-statistic takes at the voxels.
\end{remark}

Throughout the rest of this article we assume that $\cB_v(\delta) \subset {\rm supp} \big( K(\cdot, v) \big)$ for all $v \in \cV$ because otherwise \textbf{(G2)} cannot be true as the corresponding SuRF is zero on parts of
$\cM_\cV$. This condition on the support, however, is usually satisfied
as the kernel $K$ is typically used to increase the signal-to-noise ratio through
averaging observations at different $v \in \cV$.

In Riemannian geometry most geometric quantities can be derived from
the Riemmanian metric and the Christoffel symbols.
Using Theorem \ref{thm:RiemannianMetric} and \ref{thm:Christoffel}
from Appendix \ref{app:InducedRiemann} these quantities  for the
Riemannian metric induced by a normalized SuRF on $\overline{\cM}$
can be written in terms of the inner products introduced in \eqref{eq:InnerProdRepr}.
 \begin{proposition}\label{cor:geometry-prop}
    For $\overline{\cM}\subset\mathbb{R}^D$ the Riemannian metric
    $\boldsymbol{\Lambda}$ induced by a normalized SuRF expressed
    in the local coordinates induced by $\iota:~\overline{\cM} \hookrightarrow \mR^D$ is given by
    \begin{equation}\label{eq:riem_SuRF}
        \Lambda_{dd'}(x) = \frac{ \ska{\partial_{d}K_x}{ \partial_{{d'}}K_x} }{ \Vert K_x \Vert^2 }
       - \frac{ \ska{ \partial_{d} K_x }{ K_x } \ska{ K_x }{ \partial_{{d'}}K_x } }{\Vert K_x \Vert^4}
    \end{equation}
     and the Christoffel symbols of the first kind are
    \begin{equation}\label{eq:Christoffel_lin_smoothers}
        \begin{aligned}
                \Gamma_{kdd'}(x)
                &= \frac{ \ska{\partial_{k}\partial_{d}K_x}{ \partial_{{d'}}K_x}  }{ \Vert K_x \Vert^2}
                 - \frac{ \ska{ \partial_{k}\partial_{d} K_x }{ K_x } \ska{ K_x }{ \partial_{{d'}}K_x } }{\Vert K_x \Vert^4} \\
              &\quad - \frac{\ska{ \partial_{k} K_x}{ K_x }\ska{\partial_{d}K_x}{ \partial_{{d'}}K_x}}{\Vert K_x \Vert^4}
              - \frac{\ska{ \partial_d K_x }{ K_x } \ska{ \partial_{k} K_x }{ \partial_{d'} K_x } }{\Vert K_x \Vert^4}\\
              &\quad+ 2\frac{ \ska{\partial_{k}K_x}{K_x} \ska{ \partial_{d}K_x}{K_x} \ska{K_x}{\partial_{d'}K_x}}{\Vert K_x \Vert^6}
              \end{aligned}
    \end{equation}
 \end{proposition}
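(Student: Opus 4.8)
The plan is to apply the general expressions for the induced metric and its Christoffel symbols from Theorems \ref{thm:RiemannianMetric} and \ref{thm:Christoffel} and substitute the specific covariance structure of a SuRF. The key observation is that the normalized SuRF $\tilde Z(s) = \tilde X(s)/\Vert K_s\Vert$ is a zero-mean, unit-variance Gaussian field whose covariance is completely determined by the bilinear form in \eqref{eq:InnerProdRepr}: writing $C(s,s') = \mathbb{E}[\tilde X(s)\tilde X(s')] = \ska{K_s}{K_{s'}}$, linearity of the covariance in each kernel argument gives $\partial_{s_d}C(s,s') = \ska{\partial_d K_s}{K_{s'}}$ and $\partial_{s_d}\partial_{s'_{d'}}C(s,s') = \ska{\partial_d K_s}{\partial_{d'}K_{s'}}$, and analogously for the higher mixed partials. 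Thus every derivative of the covariance demanded by the appendix formulas is obtained simply by pushing the coordinate derivatives onto the kernel inside $\ska{\cdot}{\cdot}$.

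For the metric I would start from the identity $\Lambda_{dd'}(x) = \mathbb{E}[\partial_d \tilde Z_\alpha(x)\,\partial_{d'}\tilde Z_\alpha(x)]$. Writing $\tilde Z = \tilde X/\sigma$ with $\sigma(x) = \Vert K_x\Vert$ and using $\partial_d \sigma = \ska{\partial_d K_x}{K_x}/\sigma$, I would expand $\partial_d\tilde Z = \partial_d\tilde X/\sigma - \tilde X\,\ska{\partial_d K_x}{K_x}/\sigma^3$ and take expectations, using $\mathbb{E}[\partial_d\tilde X\,\partial_{d'}\tilde X] = \ska{\partial_d K_x}{\partial_{d'}K_x}$, $\mathbb{E}[\partial_d\tilde X\,\tilde X] = \ska{\partial_d K_x}{K_x}$ and $\mathbb{E}[\tilde X^2] = \Vert K_x\Vert^2$. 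The four resulting cross terms collapse to exactly the two terms of \eqref{eq:riem_SuRF}.

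For the Christoffel symbols I would invoke the first-kind formula supplied by Theorem \ref{thm:Christoffel}, which in these coordinates reads $\Gamma_{kdd'} = \tfrac12(\partial_k\Lambda_{dd'} + \partial_d\Lambda_{kd'} - \partial_{d'}\Lambda_{kd})$, and differentiate the metric expression \eqref{eq:riem_SuRF} term by term, using $\partial_k\Vert K_x\Vert^2 = 2\ska{\partial_k K_x}{K_x}$ together with the product and quotient rules. Since each inner product in \eqref{eq:riem_SuRF} carries at most one kernel derivative in each slot, differentiating raises this to at most second kernel derivatives, so no third derivatives ever appear; this is why the final formula is expressible purely through quantities such as $\ska{\partial_k\partial_d K_x}{\partial_{d'}K_x}$ and $\ska{\partial_d K_x}{K_x}$. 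The symmetry $\partial_k\partial_d K_x = \partial_d\partial_k K_x$ is what makes the leading second-derivative contributions from the three differentiated metric terms collapse to the single term $\ska{\partial_k\partial_d K_x}{\partial_{d'}K_x}/\Vert K_x\Vert^2$.

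The main obstacle is purely organizational: the Christoffel computation produces a large number of rational terms in $\Vert K_x\Vert^{-2}$, $\Vert K_x\Vert^{-4}$ and $\Vert K_x\Vert^{-6}$, and one must carefully track which terms cancel under the symmetric combination and the interchangeability of mixed partials in order to arrive at the five-term expression \eqref{eq:Christoffel_lin_smoothers}. The metric computation and the reduction of covariance derivatives to kernel derivatives are routine; the verification that the spurious higher-order contributions cancel and that the survivors assemble into exactly \eqref{eq:Christoffel_lin_smoothers} is the step that requires care.
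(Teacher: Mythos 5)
Your proposal is correct, and its opening observation --- that $\ska{\partial_d K_s}{K_{s'}} = \Cov{\partial_d^s \tilde X(s)}{\tilde X(s')}$, so every covariance derivative passes onto the kernel --- is in fact the paper's \emph{entire} proof: the paper records exactly this identity and then obtains both \eqref{eq:riem_SuRF} and \eqref{eq:Christoffel_lin_smoothers} as an immediate corollary of Theorems \ref{thm:RiemannianMetric} and \ref{thm:Christoffel}. Where you genuinely diverge is the Christoffel computation, and there you also misattribute a formula: Theorem \ref{thm:Christoffel} does not ``read'' as the Levi--Civita expression $\Gamma_{kdd'} = \tfrac{1}{2}\left(\partial_k\Lambda_{dd'} + \partial_d\Lambda_{kd'} - \partial_{d'}\Lambda_{kd}\right)$; it already states the final five-term formula, proved in Appendix \ref{app:proofsApp} via the identity $\Gamma_{kdd'}(z) = \partial^x_k\partial^x_d\partial^y_{d'}\tilde{\fc}(x,y)\vert_{(x,y)=(z,z)}$ from \cite[eq.~12.2.17]{Adler2007}, i.e., by differentiating the normalized correlation function a third time rather than by differentiating the metric. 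So either you invoke Theorem \ref{thm:Christoffel} as stated, in which case your term-by-term differentiation is redundant and your proof collapses to the paper's one-liner, or you carry out the Levi--Civita computation from scratch, which is a legitimate alternative route: it lands on \eqref{eq:Christoffel_lin_smoothers} because, writing $A_{kd,d'} = \partial^x_k\partial^x_d\partial^y_{d'}\tilde{\fc}\vert_{\rm diag}$ and using $\tilde{\fc}(x,y)=\tilde{\fc}(y,x)$, one has $\partial_k\Lambda_{dd'} = A_{kd,d'} + A_{kd',d}$, and the symmetric combination collapses to $A_{kd,d'}$. Your route buys self-containedness at the cost of heavier bookkeeping; the paper's buys brevity by exploiting the probabilistic third-derivative identity. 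Two small precisions: the cancellation you describe relies not only on $\partial_k\partial_d K_x = \partial_d\partial_k K_x$ but equally on the symmetry of the bilinear form $\ska{\cdot}{\cdot}$ (i.e., $\fc(u,v)=\fc(v,u)$), since terms such as $\ska{\partial_d K_x}{\partial_k\partial_{d'}K_x}$ must cancel against $-\ska{\partial_{d'}\partial_k K_x}{\partial_d K_x}$; and since the induced metric \eqref{eq:InducedMetricI} is defined through covariances, replacing your expectations of products by covariances in the metric computation removes any need to assume the SuRF is zero-mean.
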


The advantage of using the voxel manifold domain for a SuRF is that the numerical implementation of geometric quantities is feasible and therefore estimators of the LKCs can be calculated efficiently,
because all $d$-dimensional boundaries lie in hyperplanes parallel to the coordinate axes.
It is helpful here that we can construct for any $x\in \mathcal{U}$ and a small open neighborhood
$\mathcal{U}\ni x$ an orthonormal basis of $\cT_x\cM_\cV$, $x \in \cM_\cV\cap \mathcal{U}$,
such that a subset of this basis is an orthonormal frame of $\cT_x\partial_d\cM_\cV$, $d\in \{1,\ldots, D-1\}$, if
$x \in \partial_d\cM_\cV\cap \mathcal{U}$.
More concretely, for $D=3$ the Gram-Schmidt procedure on the Euclidean basis $e_1,e_2,e_3$ yields the following orthonormal vector fields with respect to the Riemannian metric induced by a SuRF at $x \in \cM_\cV$ and
$ k, l, m \in \{1,2,3\} $ such that $k < l$ and $ \{ k, l, m \} = \{ 1, 2, 3 \} $:
\begin{equation}\label{prop:ONF}
\begin{split}
        U_x = \Lambda_{kk}^{-1/2}(x) &e_k\,,~~
        V_x = \frac{\Lambda_{kl}(x)}{\sqrt{ c(x) \Lambda_{kk}(x) }}
        		e_k - \sqrt{ \frac{\Lambda_{kk}(x)}{c(x)}  } e_l\,,\\
        &N_x = \frac{\boldsymbol{\Lambda}^{-1}(x)}{\sqrt{ e_m^T \boldsymbol{\Lambda}^{-1}(x) e_m }} e_m\,.
\end{split}
\end{equation}
Here $c(x) = \det\big( \boldsymbol{\Lambda}^I(x)\big)$ for $ I = (k, l) $ and $U_x$, $V_x$ are a basis of $\cT_x\cF_I$,
where $\cF_{I}$ is the subset of $ \partial_{2} \cM_\cV $
such that the coordinates with indices not contained in $I$ are constant, and $N_x$
is in the one dimensional vector space orthogonal to $\cT_x\cF_I$ with respect to $\boldsymbol{\Lambda}$.

\subsection{ LKCs of Voxel Manifolds }\label{sec:LKC_VM}
General formulas for the LKCs of WS manifolds can be
found in \cite[Theorem 12.4.2]{Adler2007} and formulas for
WS manifolds of dimension $D \leq 3$ are given
in Appendix \ref{A:LKCdefn}.
Since voxel manifolds are embedded into $\mathbb{R}^D$
and have a simple geometric structure, the highest two LKCs
can be expressed as integrals of (sub-)determinants of the Riemannian
metric $\boldsymbol{\Lambda}$, i.e.,
\begin{align*}
	\cL_{D} =
		\sum_{ \nu\in\cV } \int_{ \cB_\nu( \delta ) }
			\sqrt{ \det\big( \boldsymbol{\Lambda}(x) \big) }\,dx\,,
	~ ~ ~
	\cL_{D-1} =
		\sum_{I:\vert I \vert = D-1}
					\int_{\cF_I}
						\sqrt{ \det\big( \boldsymbol{\Lambda}^{I}(x) \big) }\,dx^I\,.
\end{align*}
Here $ I $ is any ordered subset of $ \lbrace 1, \dots, D\rbrace $,
$dx^I = dx_{I_1}\ldots dx_{I_{\vert I \vert}}$ and
$\boldsymbol{\Lambda}^{I}(x)$ is the submatrix of $\boldsymbol{\Lambda}(x)$
consisting of the columns and rows given by the entries of $I$.
Finally, $\cF_{I}$ is the subset of $ \cM_\cV \setminus \partial_{D} \cM_\cV $
such that the coordinates with indices not contained in $I$ are constant.

The LKCs $\cL_{1},\ldots,\cL_{D-2}$ of a $D$-dimensional voxel manifold with $D>2$
are substantially harder to express explicitly, compare \cite[Theorem 12.4.2]{Adler2007} and
Theorem \ref{thm:L1VM} in Appendix \ref{A:LKCdefn} for $3$-dimensional voxel manifolds.

\subsection{SuRF Estimator for LKCs}\label{scn:SuRFLKCestim}
Given a kernel
$
K:\cM_\cV \times \cV \rightarrow \mathbb{R}
$ and an i.i.d. sample $X_1, \ldots, X_N $ of a Gaussian random field $X$
over $ \cV $, we obtain from formula \eqref{eq:riem_SuRF}
for the Riemannian metric $\boldsymbol{\Lambda}(x)$ 
the SuRF-Riemannian metric estimator
\begin{equation}\label{eq:riem_SuRFest}
\begin{split}
        \hat{\Lambda}_{dd'}(x) =
        	&\frac{ \Cov{ \partial_{d}\tilde{\mathbb{X}}(x)}{ \partial_{d'}\tilde{\mathbb{X}}(x) } }
        		 { \Var\left[ \tilde{\mathbb{X}}(x) \right] }\\
       	  &\quad- \frac{ \Cov{ \partial_{d}\tilde{\mathbb{X}}(x)}{ \tilde{\mathbb{X}}(x) }
       	  		   \Cov{ \tilde{\mathbb{X}}(x)}{ \partial_{d'}\tilde{\mathbb{X}}(x) } }
       	  		 { \Var\left[ \tilde{\mathbb{X}}(x) \right]^2 }\,.
\end{split}
\end{equation}
Here the variances and covariances are sample variances and covariances of the sample
$\tilde{\mathbb{X}} = ( \tilde X_1, \ldots, \tilde X_N )$ and its derivatives
$\partial_{d}\tilde{\mathbb{X}}(x)$. The latter can
be computed from the derivatives of the kernel $K$, compare Proposition \ref{prop:KernelPropConvFields}.
Hence, \eqref{eq:riem_SuRFest} uses the exact derivatives of the sample fields. We denote with
$\hat{\boldsymbol{\Lambda}}(x) \in \mathbb{R}^{D \times D}$, $x\in \cM_\cV$,
the matrix with $(d,d')$-th entry $\hat{\Lambda}_{dd'}(x)$.
To estimate the LKCs we evaluate $\hat{\boldsymbol{\Lambda}}(x)$ on a grid
$\cM_\cV^{(r)} \subset \cM_\cV$ given by
\begin{equation}\label{eq:grid}
	\cM_\cV^{(r)}
		= \bigcup_{v\in\cV} \cB_v( \delta )
						\cap \Big( v + \tfrac{ \delta }{ r + 1 } \cdot \mathbb{Z}^D \Big)\,.
\end{equation}
Here $r \in \big\{\, 2r'+1 ~\vert~ r'\in\mathbb{N} \,\big\}$
is called the \emph{added resolution}. The restriction to odd numbers greater zero is necessary to ensure
that the boundary of $\cM_\cV$ is sampled.
Consequentially, the \emph{SuRF-LKC estimators with added resolution $r$} for $\mathcal{L}_D$ and $\mathcal{L}_{D-1}$  are given by
\begin{equation}\label{eq:LKCestim_VM}
\begin{split}
	\hat{\cL}_D^{(r)}
		&= \sum_{ x \in \cM_\cV^{(r)} }
				\sqrt{ \det\left( \hat{\mathbf{\Lambda}}(x) \right) }
									\prod_{d=1}^D \frac{\delta_d}{r+1}\,.\\
	\hat{\cL}_{D-1}^{(r)}
		&= \sum_{ \vert I \vert = D-1 }\sum_{ x \in \cF_{I}^{(r)} }
						\sqrt{ \det\left( \hat{\boldsymbol{\Lambda}}\vphantom{c}^{I}(x) \right) }
							\prod_{i \in I} \frac{\delta_i}{r+1}\,.
\end{split}
\end{equation}
Here $\cF_{I}^{(r)} = \cF_{I} \cap \cM_\cV^{(r)}$.
These formulas are easy to implement and
as $r$ goes to infinity the numerical error in approximating the
integral by a Riemann sum becomes arbitrary small. In the
same fashion it is possible to obtain estimators $\hat{\cL}_{D-d}^{(r)}$, $d \in \{ 1,\ldots D-2 \}$,
from Theorem 12.4.2 in \cite{Adler2007}
and our formulas of the geometric quantities induced by a
SuRF on $\cM_\cV$,
compare Corollary \ref{cor:geometry-prop} and Appendix \ref{A:LKCdefn}.
In practice these estimators are tedious to implement and
they are computationally costly,
because even for a $3$-dimensional voxel manifold
we need for the Christoffel symbols $27$ convolutions on top of the $9$ convolutions required to estimate the Riemannian metric. Computing the Riemannian curvature tensor needs another $36$
convolutions.
A solution is to approximate the
lower LKCs by their locally stationary counterparts, i.e., for $D=3$
only the first integral in Theorem \ref{thm:L1VM} remains:
\begin{equation}
\begin{aligned}
	\hat{\cL}_1^{(r)}
		= \sum_{ \vert I \vert = 1 }\sum_{ v \in \cF_{I}^{(r)} }
			\hat\Theta(v)\sqrt{ \hat{\boldsymbol{\Lambda}}\vphantom{v}^{I}(v) }
				\frac{ \delta_I }{ r + 1 }\,.
\end{aligned}
\end{equation}
Here $\hat\Theta$ is the plug-in estimate of $\Theta$ defined in Theorem \ref{thm:L1VM}. 

\begin{theorem}\label{thm:unbiased}
	Let $K$ be the kernel of the SuRF and $d\in \{D-1,D\}$. Assume that $K(\cdot, v)\in  \mathcal{C}^3$ for all $v\in\mathcal{V}$ and that
	$\mathbb{E}\left[ X(v) \right] < \infty$  for all $ v \in \cV $.
	Then
	\begin{equation*}
		\lim_{r\rightarrow \infty} \mathbb{E}\left[ \hat{\cL}_d^{(r)} \right]
		= \mathbb{E}\left[  \lim_{r\rightarrow \infty} \hat{\cL}_d^{(r)} \right]
		= {\cL}_d\,.
	\end{equation*}
\end{theorem}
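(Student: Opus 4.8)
The plan is to split the statement into two essentially independent pieces: an \emph{exact pointwise unbiasedness} of the integrand, and the interchange of the Riemann-sum limit with the expectation. I would first note that for fixed $r$ the estimator $\hat\cL_d^{(r)}$ is a finite weighted sum over the grid $\cM_\cV^{(r)}$, so linearity of expectation gives $\mathbb{E}[\hat\cL_D^{(r)}] = \sum_{x\in\cM_\cV^{(r)}}\mathbb{E}\big[\sqrt{\det\hat{\boldsymbol\Lambda}(x)}\big]\prod_{d=1}^D\frac{\delta_d}{r+1}$, with the analogous identity for $d=D-1$ in terms of the face submatrices $\hat{\boldsymbol\Lambda}^I$. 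Thus the theorem reduces to (i) the pointwise identities $\mathbb{E}[\sqrt{\det\hat{\boldsymbol\Lambda}(x)}]=\sqrt{\det\boldsymbol\Lambda(x)}$ and $\mathbb{E}[\sqrt{\det\hat{\boldsymbol\Lambda}^I(x)}]=\sqrt{\det\boldsymbol\Lambda^I(x)}$ for every $x$, and (ii) convergence of the Riemann sums of the resulting continuous deterministic integrands to the integrals defining $\cL_D$ and $\cL_{D-1}$.

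The crux is (i). Fix $x$. Under the standing Gaussianity and non-degeneracy assumptions (Proposition \ref{prop:nondegen}), the vector $(\tilde X(x),\nabla\tilde X(x))$ is non-degenerate Gaussian with covariance $\boldsymbol\Sigma$ whose $(0,0)$ entry is $\sigma_0^2:=\Vert K_x\Vert^2$; by \eqref{eq:riem_SuRF} its Schur complement with respect to the field-value block is $\boldsymbol\Sigma_{\nabla\nabla\cdot 0}=\sigma_0^2\boldsymbol\Lambda(x)$. I would then observe that the scaling constant in the sample covariances cancels, so that $\hat{\boldsymbol\Lambda}(x)=W_{\nabla\nabla\cdot0}/W_{00}$, where $W$ is the $(D+1)\times(D+1)$ sum-of-products matrix of the $N$ demeaned samples, hence Wishart $W_{D+1}(N-1,\boldsymbol\Sigma)$. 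The classical Wishart facts now apply: $W_{\nabla\nabla\cdot0}\sim W_D(N-2,\sigma_0^2\boldsymbol\Lambda(x))$ is independent of $W_{00}\sim\sigma_0^2\chi^2_{N-1}$. Using the Bartlett decomposition $\det W_{\nabla\nabla\cdot0}\stackrel{d}{=}(\sigma_0^2)^D\det\boldsymbol\Lambda(x)\prod_{j=1}^D\chi^2_{N-1-j}$ together with independence, I get
\[
\mathbb{E}\big[\sqrt{\det\hat{\boldsymbol\Lambda}(x)}\big]=\sqrt{\det\boldsymbol\Lambda(x)}\,\Big(\textstyle\prod_{j=1}^D\mathbb{E}[\chi_{N-1-j}]\Big)\,\mathbb{E}\big[\chi_{N-1}^{-D}\big].
\]
Since $\mathbb{E}[\chi_m]=\sqrt2\,\Gamma(\tfrac{m+1}2)/\Gamma(\tfrac m2)$, the product of $\Gamma$-ratios telescopes to $2^{D/2}\Gamma(\tfrac{N-1}2)/\Gamma(\tfrac{N-1-D}2)$, which cancels exactly against $\mathbb{E}[\chi_{N-1}^{-D}]=2^{-D/2}\Gamma(\tfrac{N-1-D}2)/\Gamma(\tfrac{N-1}2)$, leaving the claimed identity; this requires $N-1>D$ for the negative moment to be finite. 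Because a principal submatrix of a Wishart matrix is again Wishart, the identical computation with $D$ replaced by $|I|=D-1$ yields the face version.

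With (i) in hand, $\mathbb{E}[\hat\cL_d^{(r)}]$ is exactly the Riemann sum of the deterministic function $\sqrt{\det\boldsymbol\Lambda}$ (resp. $\sqrt{\det\boldsymbol\Lambda^I}$). Since $K(\cdot,v)\in\cC^3$ and the support condition $\cB_v(\delta)\subset\supp K(\cdot,v)$ together with $\mathfrak{c}(v,v)>0$ keep $\Vert K_x\Vert>0$, these integrands are continuous on the compact $\cM_\cV$ and on each face $\cF_I$; as the equispaced mesh $\cM_\cV^{(r)}$ (whose odd added resolution samples the boundary) refines, the sums converge to $\cL_D$ and $\cL_{D-1}$, giving $\lim_{r\to\infty}\mathbb{E}[\hat\cL_d^{(r)}]=\cL_d$. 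For the remaining equality I would fix a sample: $x\mapsto\sqrt{\det\hat{\boldsymbol\Lambda}(x)}\ge 0$ is almost surely continuous, so its Riemann sums converge almost surely, i.e. $\lim_{r\to\infty}\hat\cL_D^{(r)}=\int_{\cM_\cV}\sqrt{\det\hat{\boldsymbol\Lambda}(x)}\,dx$ a.s. Tonelli's theorem (nonnegative integrand) then permits exchanging expectation and integral, and applying (i) again gives $\mathbb{E}[\lim_r\hat\cL_D^{(r)}]=\int_{\cM_\cV}\mathbb{E}[\sqrt{\det\hat{\boldsymbol\Lambda}(x)}]\,dx=\int_{\cM_\cV}\sqrt{\det\boldsymbol\Lambda(x)}\,dx=\cL_D$, and likewise for $d=D-1$.

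The only genuinely nontrivial step is the exact pointwise unbiasedness; the remainder is bookkeeping (linearity over a finite grid, Riemann convergence of continuous integrands, Tonelli). The delicacy is that $\sqrt{\det\hat{\boldsymbol\Lambda}}$ is a strongly nonlinear functional of the sample covariances, so unbiasedness does \emph{not} follow from unbiasedness of the plug-in covariances; it hinges on the precise cancellation of the Wishart normalising constants through the telescoping $\Gamma$-ratios, which relies both on the particular normalisation built into $\hat{\boldsymbol\Lambda}$ in \eqref{eq:riem_SuRFest} and on the implicit sample-size condition $N>D+1$ (needed, alongside non-degeneracy, merely for the estimator to be well defined).
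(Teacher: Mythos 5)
Your proof is correct, and its skeleton coincides with the paper's: reduce by linearity to exact pointwise unbiasedness of $\sqrt{\det\hat{\boldsymbol{\Lambda}}(x)}$ (and of the face versions $\sqrt{\det\hat{\boldsymbol{\Lambda}}^I(x)}$), then obtain $\lim_r \mathbb{E}[\hat{\cL}_d^{(r)}]=\cL_d$ from Riemann-sum convergence of the continuous deterministic integrand, and $\mathbb{E}[\lim_r \hat{\cL}_d^{(r)}]=\cL_d$ by interchanging expectation and integral. Where you genuinely diverge is in how the unbiasedness identity is established. The paper (following Taylor and Worsley) works with the normalized residual field $\mathbf{R}=\boldsymbol{\rm H}\boldsymbol{Y}/\Vert \boldsymbol{\rm H}\boldsymbol{Y}\Vert$, writes $\hat{\boldsymbol{\Lambda}}=(\nabla\mathbf{R})^T\nabla\mathbf{R}$, conditions on the field values, and invokes Cochran's theorem to get a conditional ${\rm Wish}_D\big(\boldsymbol{\Lambda}\Vert\boldsymbol{\rm H}\tilde{\boldsymbol{Y}}\Vert^{-2},\,N-2\big)$ law together with independence from $V=\Vert\boldsymbol{\rm H}\tilde{\boldsymbol{Y}}\Vert^2\sim\chi^2_{N-1}$, delegating the final cancellation to Mardia's formula for $\mathbb{E}\big[\sqrt{\det}\,\big]$ of a Wishart and the negative $\chi^2$ moments. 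You instead exploit the clean algebraic identity $\hat{\boldsymbol{\Lambda}}(x)=W_{\nabla\nabla\cdot 0}/W_{00}$ for the $(D+1)$-variate demeaned sums-of-products matrix $W\sim W_{D+1}(N-1,\boldsymbol{\Sigma})$, and then import the classical partitioned-Wishart facts ($W_{\nabla\nabla\cdot 0}\sim W_D(N-2,\sigma_0^2\boldsymbol{\Lambda}(x))$ independent of $W_{00}\sim \sigma_0^2\chi^2_{N-1}$) plus the Bartlett decomposition, making the telescoping of the $\Gamma$-ratios explicit; note these degrees of freedom match the paper's $\tr(\boldsymbol{\rm H})-1=N-2$ and $\tr(\boldsymbol{\rm H})=N-1$, so the two arguments encode identical distribution theory, the paper's conditioning-plus-Cochran step being in effect a proof of the partitioned-Wishart independence you cite. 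Your route buys three small things: the $d=D-1$ case follows by literally replacing $D$ with $\vert I\vert$ via the $(0,I)$-subvector, where the paper only remarks that $\boldsymbol{\Lambda}^I$ is a submatrix; you surface the implicit sample-size requirement $N>D+1$ for finiteness of $\mathbb{E}[\chi_{N-1}^{-D}]$, which the paper leaves tacit; and your appeal to Tonelli via nonnegativity is a tighter justification of the expectation--integral interchange than the paper's bare invocation of Fubini on a compact domain. What the paper's formulation buys in exchange is generality: its residual-based lemma extends verbatim to linear-model residuals $\mathbf{R}=\mathbf{P}\boldsymbol{Y}$ with idempotent $\mathbf{P}$, whereas your Wishart-partition argument as written is tied to one-sample demeaning, though this costs nothing for the theorem at hand.
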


\begin{theorem}\label{thm:consistency}
	Let $d\in\! \{D-1,D\}$, $ K(\cdot, v) \in \mathcal{C}^3\big(\overline{\cM}\big) $
	and $\mathbb{E}\left[ X(v)^2 \right] < \infty$ for $ v \in \cV $. 
	Then
	\begin{equation}
		\lim_{N\rightarrow \infty} \lim_{r\rightarrow \infty}\hat{\cL}_d^{(r)}
		= \lim_{r\rightarrow \infty} \lim_{N\rightarrow \infty} \hat{\cL}_d^{(r)} = \cL_d\,.
	\end{equation}
\end{theorem}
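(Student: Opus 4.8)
The plan is to decouple the two sources of error in $\hat{\cL}_d^{(r)}$: the Monte-Carlo error from replacing the population inner products of \eqref{eq:InnerProdRepr} by the empirical covariances in \eqref{eq:riem_SuRFest} (governed by $N$), and the quadrature error from approximating the LKC integrals of Section \ref{sec:LKC_VM} by Riemann sums over the grid $\cM_\cV^{(r)}$ (governed by $r$). I would treat $d=D$ in detail; the case $d=D-1$ is identical after restricting to each face $\cF_I$ and replacing $\boldsymbol{\Lambda}$ by its submatrix $\boldsymbol{\Lambda}^I$. Write $\hat{\boldsymbol{\Lambda}}_N(x)$ for the estimator to make the dependence on $N$ explicit, and set $g_N(x) = \sqrt{\det \hat{\boldsymbol{\Lambda}}_N(x)}$ and $g(x) = \sqrt{\det \boldsymbol{\Lambda}(x)}$, so that $\hat{\cL}_D^{(r)} = \sum_{x\in\cM_\cV^{(r)}} g_N(x)\prod_{d=1}^D\frac{\delta_d}{r+1}$ and $\cL_D = \int_{\cM_\cV} g\,dx$.

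First I would record two elementary facts. Since $\cV$ is finite and $K(\cdot, v)\in\mathcal{C}^3(\overline{\cM})$, each $\partial_d\tilde X_i(x)=\sum_{v\in\cV}\partial_d K(x,v)X_i(v)$ is continuous in $x$, so $\hat{\boldsymbol{\Lambda}}_N(\cdot)$ and hence $g_N$ are continuous on the compact set $\cM_\cV$; moreover $\hat C_0(x)\hat\Lambda_{dd'}(x)$ is the Schur complement of the block $\Var[\tilde{\mathbb{X}}(x)]$ in the empirical covariance matrix of $\big(\tilde{\mathbb{X}}(x),\partial_1\tilde{\mathbb{X}}(x),\ldots,\partial_D\tilde{\mathbb{X}}(x)\big)$, whence $\hat{\boldsymbol{\Lambda}}_N(x)$ is positive semi-definite and $g_N\geq 0$ is well defined; the same holds for $g$. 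The denominators $\Vert K_x\Vert^2=\Var[\tilde X(x)]$ are continuous and, by the standing assumption $\cB_v(\delta)\subset\supp(K(\cdot,v))$ together with non-degeneracy of $X$, bounded below by a positive constant on the compact $\cM_\cV$, so the rational map from the collection of inner products to $g$ is continuous on the relevant region. The key estimate is then a uniform strong law: expanding a sample covariance gives $\frac1N\sum_{i=1}^N \partial_d\tilde X_i(x)\,\partial_{d'}\tilde X_i(x)=\sum_{u,v\in\cV}\partial_d K(x,u)\,\partial_{d'}K(x,v)\,\frac1N\sum_{i=1}^N X_i(u)X_i(v)$, and since $\mathbb{E}[X(u)X(v)]$ is finite (Cauchy--Schwarz from $\mathbb{E}[X(v)^2]<\infty$) the inner averages converge almost surely by the SLLN; as $\cV$ is finite and each $\partial_d K(\cdot,u)$ is bounded on $\cM_\cV$, taking $\sup_x$ shows each empirical inner product converges to its population value uniformly in $x$, almost surely. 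Composing with the continuous map above yields $\sup_{x\in\cM_\cV}|g_N(x)-g(x)|\to 0$ almost surely.

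With these in hand both iterated limits follow. For $\lim_N\lim_r$: for fixed $N$, continuity of $g_N$ makes $\hat{\cL}_D^{(r)}$ a Riemann sum converging as $r\to\infty$ to $\int_{\cM_\cV} g_N\,dx$ (this is the pathwise content already established for Theorem \ref{thm:unbiased}); the uniform law then gives $\int_{\cM_\cV}|g_N-g|\,dx\leq \vol(\cM_\cV)\sup_x|g_N-g|\to 0$, so the iterated limit equals $\int_{\cM_\cV} g\,dx=\cL_D$. For $\lim_r\lim_N$: for fixed $r$ the grid $\cM_\cV^{(r)}$ is a finite set, so the pointwise strong law gives $\hat{\cL}_D^{(r)}\to \sum_{x\in\cM_\cV^{(r)}} g(x)\prod_{d=1}^D\frac{\delta_d}{r+1}$ almost surely as $N\to\infty$, and this is a Riemann sum of the continuous $g$, converging to $\cL_D$ as $r\to\infty$. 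Both orders therefore agree with $\cL_D$.

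I expect the only genuine subtlety to be the interchange of the $N\to\infty$ limit with the integral in the ordering $\lim_N\lim_r$, which is exactly why a uniform rather than merely pointwise strong law is needed. The finiteness of $\cV$ is what makes this uniform control essentially free, reducing it to the ordinary SLLN for the finitely many products $X(u)X(v)$ scaled by uniformly bounded kernel derivatives; the $r\to\infty$ steps are then routine Riemann-sum convergence, available once continuity of the integrand has been secured from the $\mathcal{C}^3$ assumption on $K$.
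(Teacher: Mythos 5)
Your proof is correct, and it reaches the conclusion by a genuinely more self-contained route than the paper. Both arguments pivot on the same two ingredients, namely almost sure uniform convergence of $\hat{\boldsymbol{\Lambda}}$ to $\boldsymbol{\Lambda}$ over $\cM_\cV$ followed by routine Riemann-sum convergence, but they secure the key uniform convergence differently. The paper invokes Proposition \ref{prop:HoelderCont} to get a.s. $\cL^2$-H\"older paths and then cites Lemma 11 of \cite{Telschow:2022SCB} for the uniform law, and for the $\lim_{N}\lim_{r}$ ordering it outsources the interchange entirely to Theorem 3 of \cite{Telschow2020}, whose condition \textbf{(R)} is exactly that uniform convergence. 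You instead exploit the defining structure of a SuRF: every empirical inner product is a finite linear combination, with coefficients $\partial_d K(\cdot,u)\,\partial_{d'}K(\cdot,v)$ bounded on the compact $\cM_\cV$, of the finitely many empirical moments $\tfrac1N\sum_i X_i(u)X_i(v)$, so the ordinary SLLN already yields uniformity in $x$; the $\lim_N\lim_r$ interchange then follows from the elementary bound $\int_{\cM_\cV}\vert g_N-g\vert\,dx\le \vol(\cM_\cV)\sup_x\vert g_N-g\vert$. What your route buys is a first-principles proof that makes transparent why finiteness of $\cV$ makes the theorem cheap; what the paper's route buys is brevity and a template that extends beyond SuRFs (e.g., to residual fields of linear models or asymptotically Gaussian errors, where no finite-sum representation is available). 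Two caveats, which you handle at least as carefully as the paper does: positivity of $\Vert K_x\Vert^2$ (non-degeneracy of $X$ together with the standing support assumption $\cB_v(\delta)\subset\supp(K(\cdot,v))$) is needed for $\boldsymbol{\Lambda}$, $g$ and indeed $\cL_d$ to be well defined, and is not stated in the theorem's hypotheses; and for fixed small $N$ (roughly $N-1\le D$) the sample variance can vanish at some $x\in\cM_\cV$, so the inner limit in $\lim_N\lim_r$ is only guaranteed for $N$ beyond a random index --- your uniform control of the denominators is precisely what makes this ``eventually well defined'' reading of the iterated limit rigorous.
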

\begin{remark}
	We expect that a similar result can be derived for the plugin estimator  $\cL_1$
	resulting from Theorem \ref{thm:L1VM} where
	also the $\Gamma_{kdd'}$'s from \eqref{eq:Christoffel_lin_smoothers} and the Riemannian
	curvature, Appendix \ref{app:InducedRiemann},
	are estimated using the corresponding sample covariances.
	This result could be established along the same lines as the consistency
	in \cite[Section 3]{Telschow2020}, but we leave this for future work since,
	currently, implementing this estimator seems infeasible.
\end{remark}

\subsection{FWER Control Using SuRFs}\label{scn:FWER}
In this section we illustrate how the GKF is used to construct a test controlling the family-wise error rate (FWER) which dates back to \cite{Worsley1992} and combine this approach with SuRFs in order to
improve the power of RFT based voxelwise inference.
We illustrate this on the problem of detecting areas of non-zero signal $\mu$, given
an $ i.i.d. $ sample $X_1,\ldots,X_N \sim X$, in a signal plus noise model
$X(v) = \mu(v) + \epsilon(v)$ for $v\in\cV$
and $\epsilon$ a zero-mean random field. The same approach, however, can be applied to
the linear model, e.g., Supplementary of \cite{Telschow2020}, and other probabilistic models for which there is a GKF, e.g., \cite{Worsley1994, Taylor2007, Taylor2008}.

Let $\tilde X_1, \ldots, \tilde X_N$ denote a sample of SuRFs over $\cM_\cV$
derived from random fields $X_1,\ldots, X_N$ over $\cV$.\footnote{Technically, stating the GKF requires the SuRFs to be defined on a $D$-dimensional, compact manifold without boundary $\overline{\cM_\cV} \supset \cM_\cV$}
Assume the setting of Theorem \ref{thm:GKF} and define $ F:\mathbb{R}^N \rightarrow \mathbb{R} $
 by
\begin{equation*}
F(a_1, \dots, a_N) = \frac{1}{\sqrt{N}}\sum_{i = 1}^N a_i\left( \frac{1}{N-1}\sum_{i = 1}^N \left( a_i - \frac{1}{N}\sum_{i = 1}^N a_i \right)^2 \right)^{-1/2}
\end{equation*}
for $ (a_1, \dots, a_N) \in \mathbb{R}^N $. Let $ \mu(v) =\mathbb{E}\left[  X(v)  \right] $ for $ v \in \cV $
and for all $ x \in \cM_\cV $, let
\begin{equation*}
	\tilde\mu(x) = \sum_{v \in \cV} K(x,v) \mu(v).
\end{equation*}
 Due to the invariance of $ F $ to scaling, for $ x \in \cM_\cV$, we can write
\begin{equation}\label{eq:tfield}
	\tilde T(x) := F\big( \tilde X_1(x)/\Vert K_x \Vert, \ldots, \tilde X_N(x)/\Vert K_x \Vert\big) = \frac{\sqrt{N}\hat{\mu}_N(x)}{\hat{\sigma}_N(x)}
\end{equation}
where $ \hat{\mu}_N(x) = \frac{1}{N}\sum_{i = 1}^N \tilde{X}_i(x) $ and $ \hat{\sigma}_N(x) = \left(\frac{1}{N-1}\sum_{i = 1}^N (\tilde{X}_i(x) - \hat{\mu}_N(x))^2\right)^{1/2} $.

Our goal is to construct based on $\tilde T$ a multiple hypothesis test for
the hypotheses
\begin{equation}\label{eq:testetss}
\mathbf{H}_{0}^x:~\tilde\mu(x) \leq 0
	\quad vs. \quad 
\mathbf{H}_{1}^x:~\tilde\mu(x) > 0\,\quad\quad x\in \cM_\cV
\end{equation}
which controls the FWER in the strong sense at a significance level
$\alpha\in (0,1)$.\footnote{The two-sided hypothesis can be treated similarly.} 
Denote with $\mathcal{H}_0 = \big\{ x \in \cM_\cV ~\vert~ \tilde\mu(x) \leq 0 \big\} \subseteq \cM_\cV$ the set of true null hypotheses and consider the test that rejects $\mathbf{H}_{0}^x$, $x\in\cM_\cV$,
whenever $ \tilde T(x)  > u_\alpha$.
Then its FWER at $ u \in \mathbb{R} $ is
\begin{equation*}
	\text{FWER}_{\tilde T}(u) = \mathbb{P}\left( \sup_{x \in \mathcal{H}_0} \tilde T(x)  > u \right).
\end{equation*}
Defining $\tilde T_0 = \sqrt{N}\big( \hat\mu_N(x) - \tilde\mu(x) \big)/\hat{\sigma}_N(x)$, then any $u_\alpha$ satisfying
$$
\mathbb{P}\Big( \max_{x \in \cM_\cV} \tilde T_0(x)  > u_\alpha \Big) \leq \alpha
$$
for $ \alpha \in (0,1) $, controls the FWER in the strong sense at the level $\alpha$ because
$$
	\text{FWER}_{\tilde T}(u_\alpha) \leq \mathbb{P}\left( \max_{x \in \cM_\cV} \tilde T_0(x)  > u_\alpha \right)  \leq \alpha
$$
for all $\mathcal{H}_0\subseteq \cM_\cV$ and $\tilde T_0(x) = \tilde T(x)$ for all $x\in \mathcal{H}_0$.
To find such a threshold $u_\alpha$ we use the EEC heuristic \cite{Taylor2005} to the excursion probability.
In particular, letting $ M_u(\tilde T_0) $ be the number of local maxima of $ \tilde T_0 $
over $\cM_\cV$ that lie above the level $ u $ and $\chi_{\tilde T_0}(u)$ be the EC
of the excursion set $\big\{ x \in\cM_\cV~\vert~\tilde T_0 (x) > u \big\}$  we have that
\begin{equation}\label{eq:empGKF}
	\mathbb{P}\left( \max_{x \in \cM_\cV}  \tilde T_0(x) >  u \right)
		\leq \mathbb{E}\big[ M_{u}(\tilde T_0) \big]
		\approx \mathbb{E}\left[ \chi_{\tilde T_0}(u) \right]
		= \sum_{d = 0}^D \mathcal{L}_d \rho_d^{\tilde T_0}(u).
\end{equation}
Here $\rho_d^{\tilde T_0}$ are the EC densities of a (centered) $t$-field given, compare \cite[p.915]{Taylor2007}.
In order to control the FWER in the strong sense to a level $ \alpha \in (0,1) $,
we find the largest $ u_\alpha $ such that
$ \sum_{d = 0}^D \hat{\mathcal{L}}_d \rho_d^T(u_\alpha) = \alpha$ using the $\hat{\mathcal{L}}_d$'s
from Section \ref{scn:SuRFLKCestim}.
At high thresholds $ u_\alpha $ the number of local maxima is either zero or one
and so $ \mathbb{E}\left[ \chi_{\tilde{T}_0}(u_\alpha) \right] $ is an
extremely good approximation to $ \mathbb{E}\big[ M_{u_\alpha}(\tilde{T}_0) \big] $.
Lower values of $ \alpha $ yield higher thresholds $u_\alpha$. At a typical value $ \alpha \leq 0.05 $,
we expect the approximation in \eqref{eq:empGKF} to be accurate.

Traditional RFT inference in neuroimaging \citep{Worsley1992,Worsley1996,Taylor2007} uses the same framework but only evaluates the fields on the lattice $\cV$ and uses the LKC estimators given in \cite{Forman1995}, \cite{Kiebel1999} or \cite{Taylor2007} which are based on discrete derivatives. More precisely, for each $ n \in\{1,\ldots, N\} $ it takes data $ X_n $ on a lattice
$ \cV $ (corresponding to the centers of voxels making up the brain), smoothes it with a kernel $ K $ to obtain
$ \lbrace \tilde X_n(v): v \in \cV\rbrace $ and rejects all $v\in \cV$ such that
$\tilde T(v) > u_\alpha$ where $u_\alpha$ is obtained from the GKF approximation such that
$\mathbb{P}\big( \max_{s \in \cM}\tilde T_0(s) > u_\alpha \big) \approx \alpha$.
Here $\cM\subset\mathbb{R}^3$ represents for example the brain, but
 has never been defined precisely in the literature as it probably
was assumed to be unnecessary by the good lattice assumption.
By construction this leads to valid, but conservative inference since
$
\max_{v \in \cV} \tilde T_0(s) \leq \max_{s \in \cM} \tilde T_0(s)
$
and $u_\alpha$ approximates the tails of the distribution of
$\max_{s \in \cM} \tilde T_0(s)$ at level $\alpha$ and not the tails of
the distribution of $\max_{v \in \cV} \tilde T_0(v)$.
SuRFs allow to remove this conservativeness by specifying $\cM$ to be
the voxel manifold $\cM_\cV$ and testing $\tilde T(x) > u_\alpha$ for all $x\in\cM_\cV$ meaning that
$u_\alpha$ approximates the quantiles of the test statistic.
Consequentially, our SuRF framework has strong FWER control over $\cM_\cV$ at
level $\alpha$ up to the approximation in \eqref{eq:empGKF} and thus has a
higher power than traditional RFT.

\paragraph{Effect Localization}
In the discussed framework as often in applications the discrete data has been smoothed before carrying out statistical inference. Therefore precise localization
of significant effects, i.e., finding $x \in \cM_\cV$ such that $\tilde \mu(x) \neq 0$, is
only possible for the smoothed signal; yet weaker localization results
for the $v\in \cV$ such that $\mu(v) \neq 0$ hold.\footnote{Our arguments carry over to tests controlling the FWER in the strong sense w.r.t.
the smoothed signal, for example, permutation tests in fMRI as they are typically applied to
smoothed data.}
The key observation is that, if $K(x,v) \geq 0$ for all $x\in \cM_\cV$ and all $v\in\cV$, it holds that
\begin{equation}\label{eq:identity_support}
	\tilde\mu(x) > 0
	~~ \Longleftrightarrow ~~
	\exists v\in\cV \cap \supp\big( K(x,\cdot) \big):~ \mu(v) > 0\,.
\end{equation}
Thus, if we reject $\mathbf{H}_{0}^x$ we can conclude by \eqref{eq:identity_support}
that there is at least one $v\in\cV \cap \supp\big( K(x,\cdot) \big)$ such
that $\mu(v) > 0$ and the strong control on the hypotheses \eqref{eq:testetss} yields
\begin{equation*}
\begin{split}
	\mathbb{P}\Big(\, \big\{ x'\in \cM_\cV ~\vert~ \mathbf{H}_{0}^{x'}\text{ is rejected} \big\}
						\subseteq \cM_\cV \setminus \mathcal{H}_0 \,\Big)
	&= 1 - \mathbb{P}\Big(\, \exists x\in \mathcal{H}_0:~
				 \mathbf{H}_{0}^x\text{ is rejected} \,\Big)\\			
	&\geq 1-\alpha\,.
\end{split}
\end{equation*}
Consequentially, $\tilde T(x) > u_\alpha$ implies that the probability of incorrectly
claiming that $\mu(v) > 0$ for a $v \in \cV \cap \supp\big( K(x,\cdot) \big)$ is at most $\alpha$.
This is a weaker form of FWER control than controlling it in the strong sense,
but stronger than controlling it in the weak sense as long as $\supp\big( K(x,\cdot) \big) \neq \cM_\cV$ for at least one $x\in \cM_\cV$.

\begin{remark}
Losing strong control with respect to $ \mu $ due to smoothing the data
is natural in applications such as fMRI. Smoothing is needed to increase the low
signal-to-noise ratio and it is even debatable whether it is plausible to talk about the BOLD
signal at a single voxel in a realistic fMRI experiment as the analyzed BOLD signal at $v$ is
a distorted version of the observed data due to an extensive preprocessing
pipeline which includes, among other aspects, motion correction and warping to a standardized brain.
\end{remark}

\section{Simulations}\label{Scn:Sims}
In this section we compare using simulations the performance of
the SuRF estimator of the LKCs from Section \ref{scn:SuRFLKCestim}
to existing LKC estimators; namely, the Hermite projection estimator (HP) and its bootstrap improvement (bHP) from \citep{Telschow2020} and the LKC estimators developed for stationary processes from \cite{Kiebel1999} and \cite{Forman1995} which are used in established software, e.g., SPM and FSL. We abbreviate the latter two estimators as \textit{Kie} and \textit{For} respectively.
We do not compare to the warping estimator \cite{Taylor2007} since the estimates,
although computed differently, are almost identical to the estimates of the bHP, see \citep{Telschow2020}.

Our second set of simulations show that the FWER control of our SuRF framework is not conservative
for Gaussian data, while the traditional approach implemented in software such as SPM and FSL is conservative.
In part 2 \cite{Davenport2023} we extend our method to non-Gaussian data and demonstrate that it accurately controls the FWER even
on a gold-standard data set consisting of $7.000$ resting state experiments from the UK Biobank.

\subsection{Simulation Setup}\label{scn:setup}
In our simulations we consider the data to be samples of the Gaussian random field $\lbrace X(v): v \in \cV\rbrace$ where $\mathcal{V} \subset \mathbb{R}^D$ and the $X(v)$'s are i.i.d. $\mathcal{N}(0,1)$ distributed.
To transform $X$ into a SuRF we smooth this discrete data by using the isotropic Gaussian kernel
\begin{equation}\label{eq:smooth_kernel}
	K_f(x,v) = e^{ \frac{ 4\log(2) \Vert x - v \Vert^2}{f^2}}\,,~~~
	f \in \{ 1,2,,\ldots,6 \}\,,
\end{equation}
which we parametrize by its full width at half maximum (FWHM) $f$ as it is common in neuroimaging. The performance
of different LKC estimators is compared on a standardized almost stationary SuRF and a standardized non-stationary SuRF, which we define below.

\begin{figure}[ht]\centering
\begin{tabular}{cccc}
	\raisebox{-0.5\height}[0pt][0pt]{\includegraphics[trim=0 0 0 0,clip,width=1.2in]{\figurepath 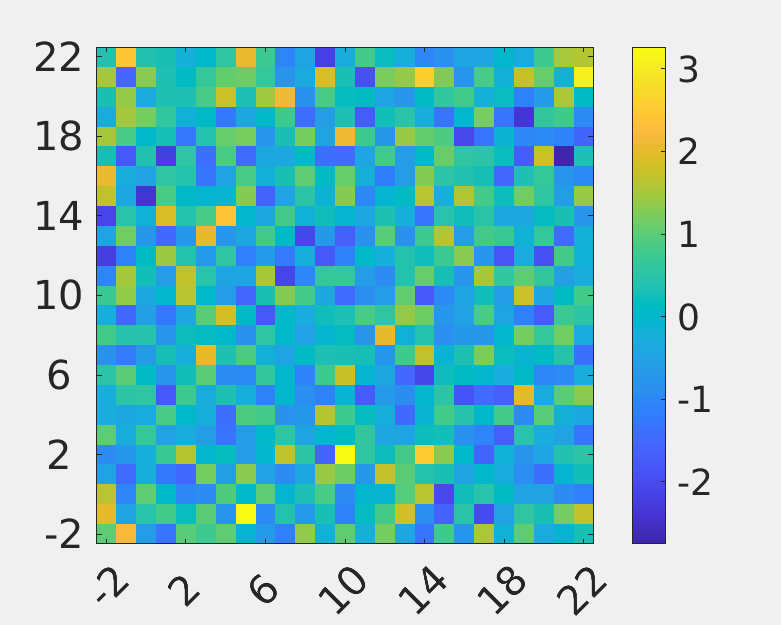}}
	& \raisebox{2.5\height}[0pt][0pt]{$\boldsymbol{\nearrow}$}
	& \includegraphics[trim=0 0 0 0,clip,width=1.2in]{\figurepath 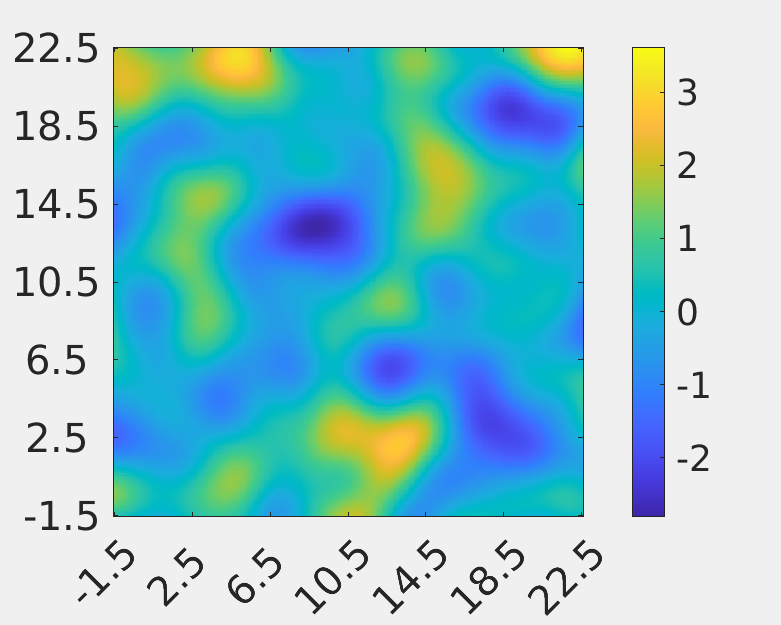}
	& \includegraphics[trim=0 0 0 0,clip,width=1.2in]{\figurepath 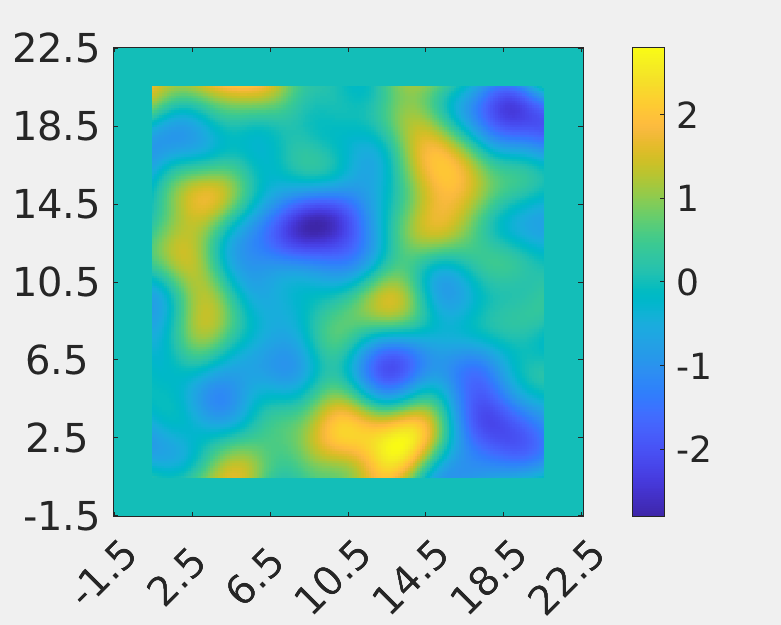}\\
	& \raisebox{6\height}[0pt][0pt]{$\boldsymbol{\searrow}$}
	& \includegraphics[trim=0 0 0 0,clip,width=1.2in]{\figurepath 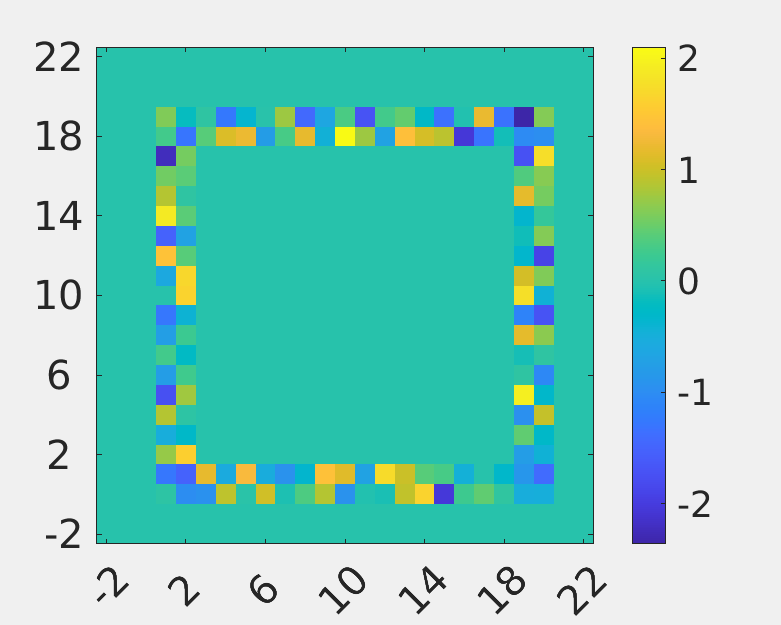}
	& \includegraphics[trim=0 0 0 0,clip,width=1.2in]{\figurepath 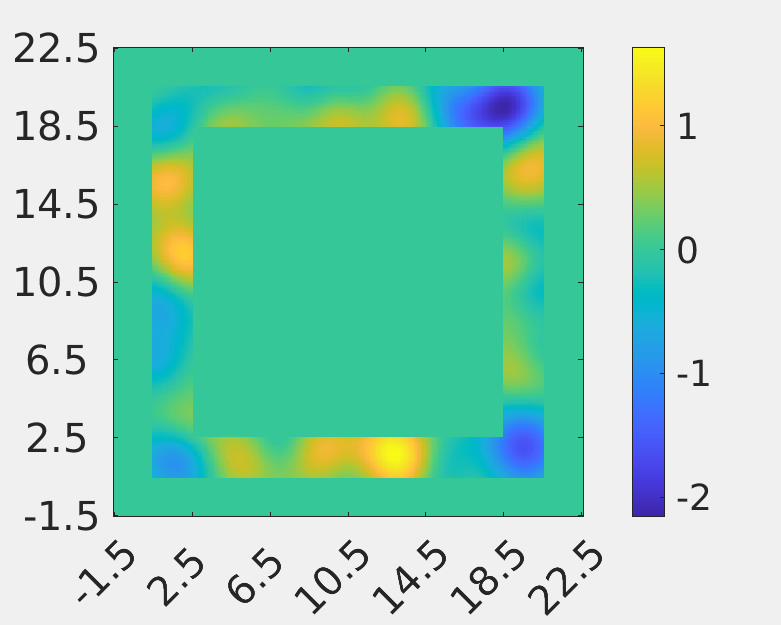}
\end{tabular}
	\caption{Illustration of generating an almost stationary or non-stationary SuRF from
	a white noise field on a grid using $K_3$. The almost stationary field results from a convolution with the kernel $K_3$ before restricting the domain of the field. This circumvents boundary effects	(top row). The  non-stationary field (bottom row) results from first restricting to $\cV_2$ and then convolving with the kernel $K_3$. The edge effects produces a non-stationary field.
	\label{fig:DataGen}}
\end{figure}

The advantage of starting from the field $X$ which is i.i.d. on the lattice is that we can
determine the theoretical LKCs for the normalized almost stationary and the
normalized non-stationary SuRF quickly with high precision on a computer since the
double sum in the Riemannian metric induced by the SuRF reduces to a single sum
due to the lack of correlation, compare Proposition \ref{cor:geometry-prop}
and equation \eqref{eq:InnerProdRepr}.

For the almost stationary simulations, for $ D \in \lbrace 1, 2, 3 \rbrace$, $a \geq 0$, we take $\mathcal{V} = \cV_D^a \subset\mathbb{R}^D$, where 
$
\cV^{a}_1 = [1 - a, 100 + a]   \cap \mathbb{Z}$, 
$\cV^{a}_2 = [1 - a,  20 + a]^2 \cap \mathbb{Z}^2$ and 
$\cV^{a}_3 = [1 - a,  20 + a]^3 \cap \mathbb{Z}^3
$.
The almost stationary SuRF is given by $(\tilde X, X, K_f, \cV^{a}_D )$ with $a = \sqrt{2}f / \sqrt{\log(2)}$ and $\tilde X$ where we choose $\cM_{\cV^{0}_D}$ to be the domain of $ \tilde{X} $. We expand $\cV_D^0$ by $a$ in each direction and restrict $ \tilde{X} $ to the voxel manifold $ \cM_{\cV^{0}_D} $ to remove boundary effects which allows a comparison with LKC estimators for stationary fields. We call this SuRF almost stationary because it is a stationary field on $\cV^{0}_D$, but non-stationary on $\cM_{\cV^{0}_D}$.
However, for $f$ larger than $\approx 2$ the LKCs of this field are almost identical to the
LKCs of a stationary, unit variance Gaussian random field defined on $\cM_{\cV^{0}_D}$ with the covariance function $c(s, s') = K_f(s,s')$ with $s,s' \in \cM_{\cV^{0}_D}$. This can be verified with the RFTtoolbox \cite{RFTtoolbox}
as the theoretical LKCs of such fields can be approximated with high precision, see Appendix \ref{App:LKCapprox}, and
is also supported by our observation in \cite{Davenport2023} that the SuRF LKC estimators yield precise
FWHM estimates even at $f\approx 2.5$ where state-of-the-art methods are biased. The LKCs for stationary,  unit-variance random fields over convex domains are well known \cite{Worsley1996} and smoothing white noise on a grid
with a Gaussian kernel is the typical way that LKC estimators and the FWER have been validated in the past \citep{Nichols2003,Hayasaka2003,Taylor2007b}. As data in practice is non-stationary, we consider also simulations
involving non-stationarity SuRFs. We simulate them by deliberately
not correcting for boundary effects when smoothing the white noise field on the grid, i.e.,
we use $ \mathcal{V}_D = \mathcal{V}_D^0 $ for $ D \in \lbrace1,2,3\rbrace $ with
\begin{equation*}
\begin{split}
	\mathcal{V}_1 = & ([1,100] \cap \mathbb{Z}) \setminus \big\{ 2, 4, 8, 9, 11, 15, 20, 21, 22, 40,\ldots, 45, 60, 62, 64, 65, 98, \ldots, 100\big\}\\	
	\mathcal{V}_2 = &\big\{ x \in \mathbb{R}^2 ~\vert~
							x_1 \in \{ 1,2, 19, 20 \} ~\vee~ x_2 \in \{ 1,2, 19, 20 \}
							 \big\} \cap [1, 20]^2\\
	\mathcal{V}_3 = &\big\{ x \in \mathbb{R}^3 ~\vert~
							x_1 \in \{ 1,2, 19, 20 \} ~\vee~ x_2 \in \{ 1,2, 19, 20 \}~\vee~ x_3 \in \{ 1,2, 19, 20 \}
							 \big\}\cap [1, 20]^3
\end{split}
\end{equation*}
and use $\cM_{\cV_D}$ as the domain of the SuRFs. The boundary effect produces a non-stationary SuRF
as the weighted average is over a variable number of voxels depending on $s\in \cM_{\cV_D}$.
An illustration of the considered sample fields for $ D = 2 $ is illustrated in Figure \ref{fig:DataGen}.

In each considered simulation settings, given a sample size $ N \in \mathbb{N} $, we generate
SuRFs $\tilde X_1,\ldots, \tilde X_N$ of SuRFs from $X_1,\ldots, X_N \sim X$ as described above.

\FloatBarrier
\subsection{Results of the LKC Estimation}\label{scn:simLKC}
\FloatBarrier
For resolutions $r \in \{ 1, 3, 5 \}$ and $D = 2$  we compare the SuRF LKC estimator,
see Section \ref{scn:SuRFLKCestim}, to the HP, bHP, Kiebel and Forman estimator obtained from
samples of SuRFs evaluated on the grids with added resolution $r$.
The results for $ D =1,2 $ are qualitatively similar, see Appendix \ref{appendix:AddSimulations}. In each setting, we run 1000 simulations in which we generate $N\in\{20,50,100\}$ SuRFs, compare Section \ref{scn:setup}, and estimate the LKCs from these fields.
\begin{figure}[h]\centering
\includegraphics[trim=0 0 40 0,clip,width=1.8in]{\figurepathh 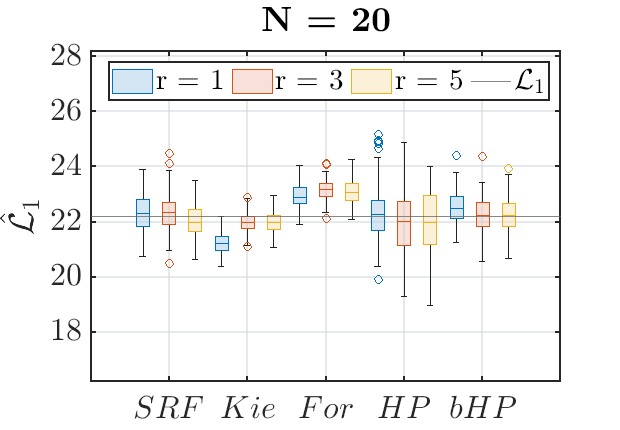}
\includegraphics[trim=0 0 40 0,clip,width=1.8in]{\figurepathh 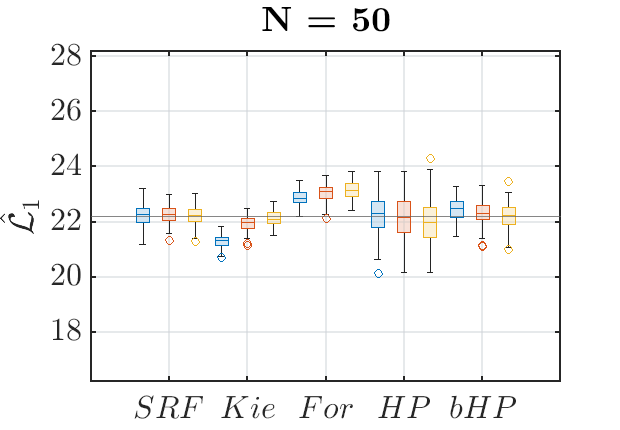}
\includegraphics[trim=0 0 40 0,clip,width=1.8in]{\figurepathh 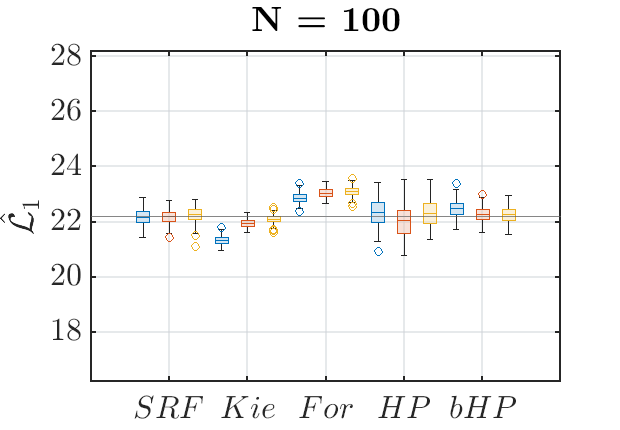}\\
\includegraphics[trim=0 0 40 0,clip,width=1.8in]{\figurepathh 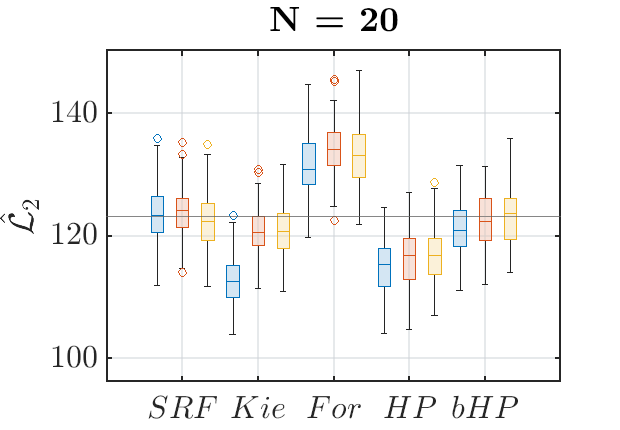}
\includegraphics[trim=0 0 40 0,clip,width=1.8in]{\figurepathh 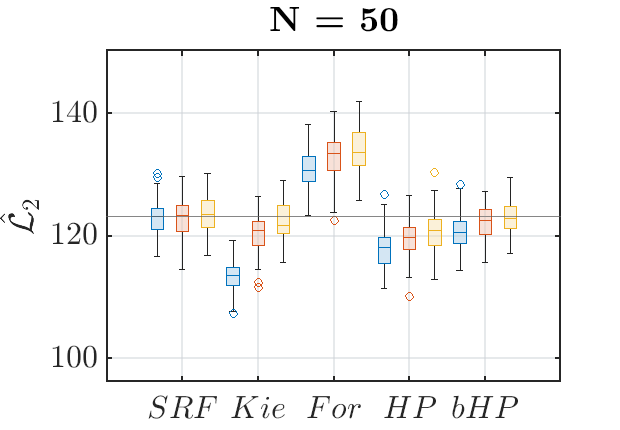}
\includegraphics[trim=0 0 40 0,clip,width=1.8in]{\figurepathh 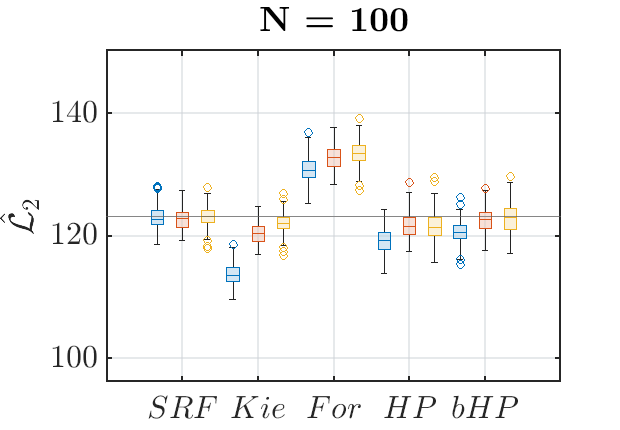}\\
	\caption{ 2D Simulation results for estimation of the LKCs
			  of the almost stationary SuRF described in Section \ref{scn:setup}.
			  The smoothing bandwidth is $f = 3$.
			  \label{fig:D2FixedFWHM_stat}}
\end{figure}

Figure \ref{fig:D2FixedFWHM_stat} shows boxplots of the LKC estimates for $f=3$
and varying sample
sizes $N\in\{20, 50, 100\}$ in the almost stationary setting. Figure \ref{fig:D2FixedFWHM_nonstat} contains the same results for the non-stationary setting. At all resolutions the SuRF estimator seems to be unbiased and has a lower variance than the other estimators. Only the bHP estimator is comparable efficient, however, it has a small bias at resolution $r=1$.
\begin{figure}[ht]\centering
\includegraphics[trim=0 0 40 0,clip,width=1.8in]{\figurepathh 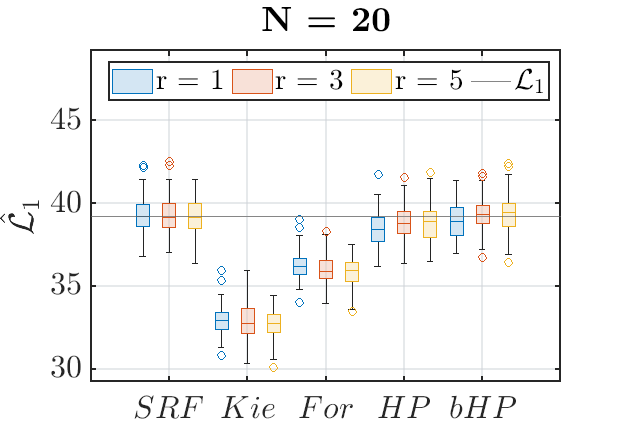}
\includegraphics[trim=0 0 40 0,clip,width=1.8in]{\figurepathh 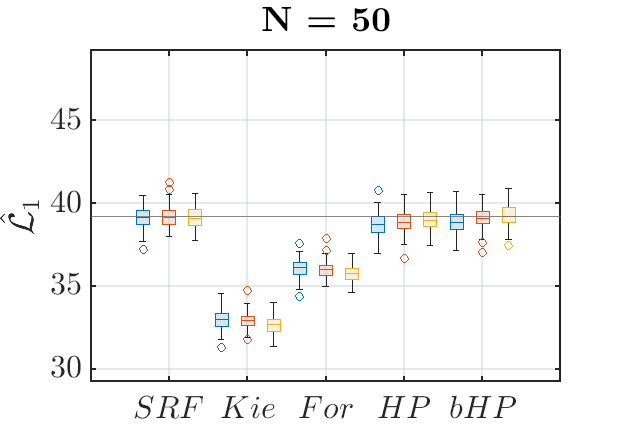}
\includegraphics[trim=0 0 40 0,clip,width=1.8in]{\figurepathh 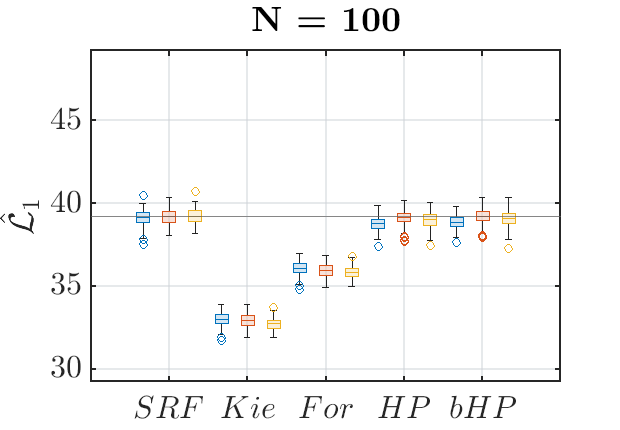}\\
\includegraphics[trim=0 0 40 0,clip,width=1.8in]{\figurepathh 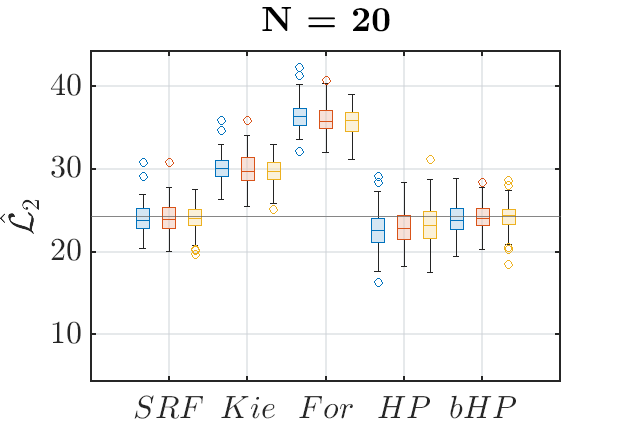}
\includegraphics[trim=0 0 40 0,clip,width=1.8in]{\figurepathh 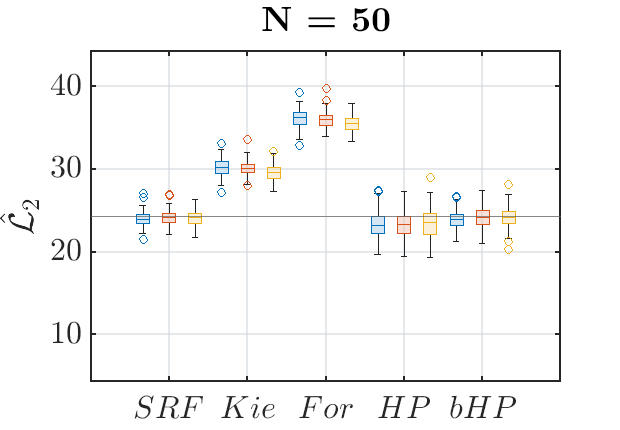}
\includegraphics[trim=0 0 40 0,clip,width=1.8in]{\figurepathh 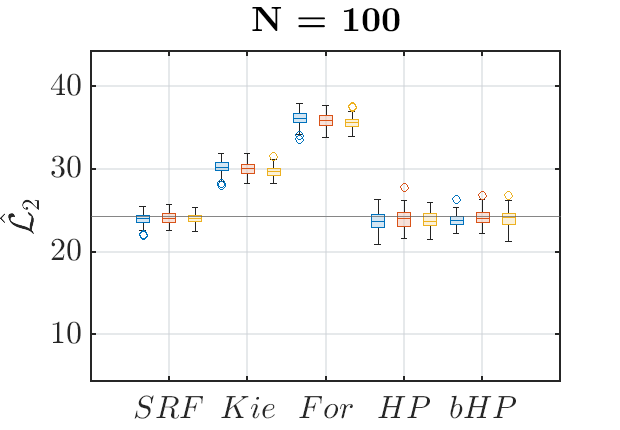}\\
	\caption{ 2D Simulation results for estimation of the LKCs of the
			  non-stationary SuRF described in Section \ref{scn:setup}.
			  The smoothing bandwidth is $f = 3$.
			  \label{fig:D2FixedFWHM_nonstat}}
\end{figure}
The HP estimator has a similar bias and a
much larger variance, compare also \cite{Telschow2020}. The Kiebel and Forman estimators are biased for our non-stationary SuRF example and the almost stationary SuRF for the small value $f=3$ which is
typical in neuroimaging.
The latter is known in the literature, see e.g., \cite{Kiebel1999}.
The dependence of the LKC estimates on the smoothing bandwidth $f$ for $D=2$ can be found in
Figures \ref{fig:D2L1FixedNSUBJ_stat}-\ref{fig:D2L2FixedNSUBJ_nonstat}
in Appendix \ref{appendix:AddSimulations}.
Here only the SuRF estimator correctly estimates the LKCs for all $f\in\{1,\ldots, 6\}$.
Moreover, only in the case $f=1$ a resolution increase of $r\geq 3$ is
necessary to have unbiased estimates. Remarkably, even for FWHM $f = 2$ a resolution increase
of $r=1$ seems to be sufficient for unbiased estimation using the SuRF estimator.
Additionally, the SuRF estimator is
$10$ times faster to compute than its only reliable
competitor the bHP estimator, see Table \ref{Tab:SimTimes}. Here we compared the average computation time
of the SuRF, the bHPE and the Kiebel estimator on the stationary box example with $f = 3$ and $N=100$
at different added resolution.

\begin{table}[b]
\footnotesize
\begin{center}
\begin{tabular}{ | c |c c c | c c c | c c c | }\hline
 & \multicolumn{3}{c}{SuRF} & \multicolumn{3}{|c|}{bHPE} & \multicolumn{3}{c|}{Kiebel}\\
 $r$ & 1 & 3 & 5 & 1 & 3 & 5 & 1 & 3 & 5\\\hline
 $D = 1$ & 0.06  & 0.05  & 0.05   & 7.01  & 13.35 & 16.13 & 4  & 0.03  & 0.04  \\ 
 $D = 2$ & 0.28  & 0.66 & 1.66 & 19.96 & 30.67 & 46.67 & 0.11 & 0.42 & 0.97 \\  
 $D = 3$ & 18.22  &  80.45   &  239.30    & 196.44    & -    & -    & 8.24  & 47.90  & 151.75\\\hline
\end{tabular}
\end{center}
	\caption{Computation time of LKC estimators in the stationary box example ($N=100$, $f= 3$).
	We show averages in seconds of $100$ runs of the estimators.
         The times for resolution increases beyond $r=1$ and $D=3$ of the bHPE is not reported as they are
         very long.}
\label{Tab:SimTimes}
\end{table}

\subsection{Results of the FWER Simulation}\label{scn:simFWER}
We now verify that our SuRF framework yields a non-conservative control of the FWER.
To do so we determine the FWER of the one-sample $ t $-test for a sample of Gaussian
SuRFs generated as in Section \ref{scn:setup} using the methodology
described in Section \ref{scn:FWER}.
In each simulation setting, for $ 1 \leq b \leq B $ and $ N \in \lbrace 20, 50, 100 \rbrace$ we obtain $ t $-fields $ T_{N,b,r} $ of resolution $ r \in \lbrace 0, 1, \infty \rbrace $. Here $ r = 0 $ corresponds to the traditional RFT approach, i.e., $ T_{N,b,0} $ is the test statistic described in Section \ref{scn:FWER} evaluated on the lattice $ \mathcal{V} $ on which the original data is observed. Similarly, the case $ r = 1 $
corresponds to  $ T_{N,b,1} $  being the test statistic evaluated on $ \cM_\cV^{(1)} $, compare \eqref{eq:grid}.
The case $ r = \infty $ means the test statistic $ T_{N,b,\infty} $ is a SuRF defined on $\cM_\cV$.
We evaluate the FWER in this case by using numerical optimization (in particular sequential quadratic programming, \cite{Nocedal2006}),
initialized at the largest peaks of $ T_{N,b,1} $ to find the global maximum of $ T_{N,b,\infty} $ over $\cM_\cV$.
The LKCs are estimated as described in Section \ref{scn:SuRFLKCestim}
using added resolution $r=1$ which is justified by Section \ref{scn:simLKC}.
The threshold $ \hat{u}_{N,b} $ which controls the FWER at a level $\alpha = 0.05$ is obtained
as described in Section \ref{scn:FWER}. For $ r \in \lbrace 0, 1, \infty \rbrace$ and $ \cM_\cV^{(0)}=\mathcal{V} $
and $ \cM_\cV^{(\infty)} = \cM_\cV $, the FWER is then estimated by evaluating 
\begin{equation*}
	\frac{1}{B} \sum_{b = 1}^B 1\left[\sup_{s \in\cM_\cV^{(r)}} T_{N,b,r}(s) > \hat{u}_{N,b} \right]\,,
\end{equation*}
\begin{figure}[t!]
\centering
\begin{turn}{90}
{\fontfamily{lmss}\selectfont
\small{\textbf{\hspace{0.47in}Stationary}}
}
\end{turn}
	\includegraphics[trim=0 0 0 0,clip,width=1.8in]{\figurepathfwer 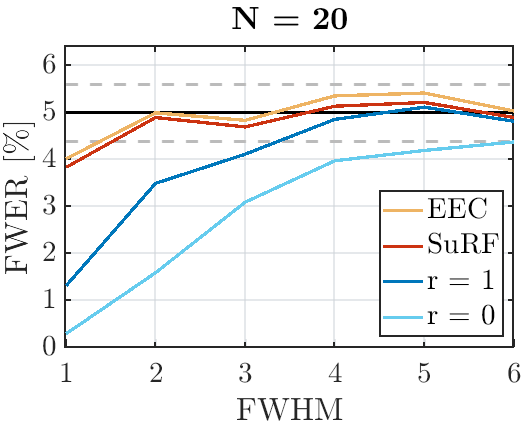}
	\includegraphics[trim=0 0 0 0,clip,width=1.8in]{\figurepathfwer 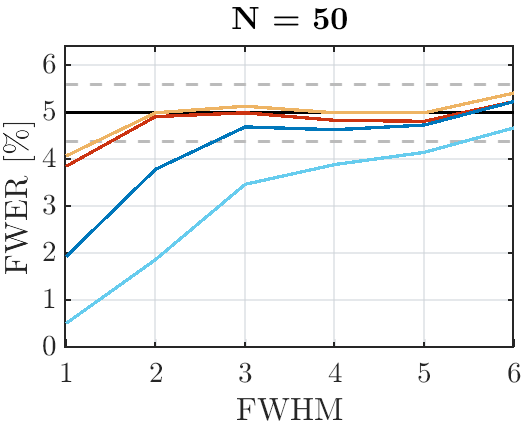}
	\includegraphics[trim=0 0 0 0,clip,width=1.8in]{\figurepathfwer 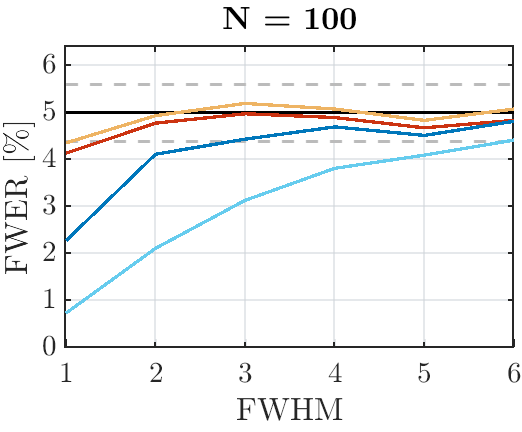}\\
\begin{turn}{90}
{\fontfamily{lmss}\selectfont
\small{\textbf{\hspace{0.3in}Non-Stationary}}
}
\end{turn}
	\includegraphics[trim=0 0 0 0,clip,width=1.8in]{\figurepathfwer 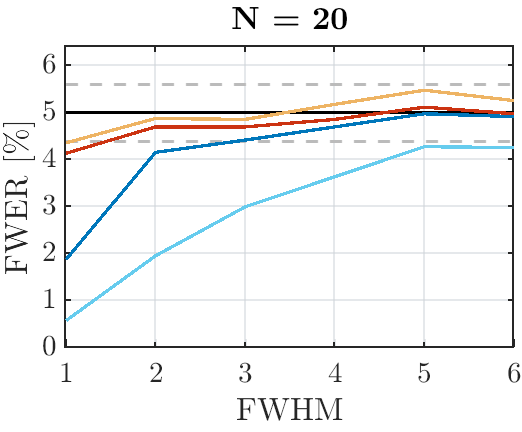}
	\includegraphics[trim=0 0 0 0,clip,width=1.8in]{\figurepathfwer 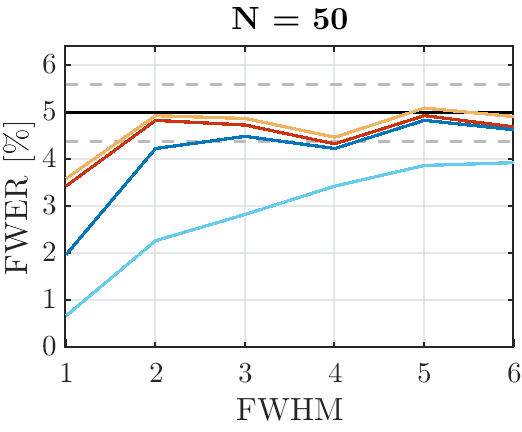}
	\includegraphics[trim=0 0 0 0,clip,width=1.8in]{\figurepathfwer 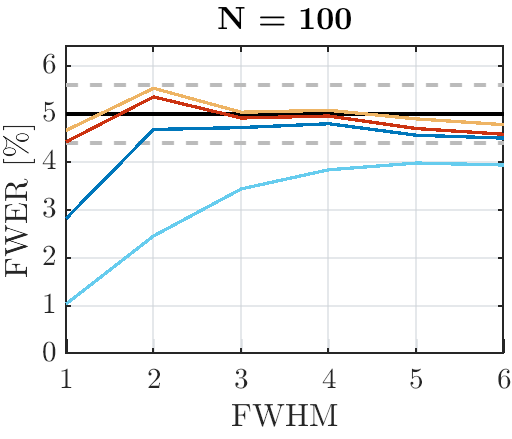}		\caption{FWER results for the almost-stationary (top row) and non-stationary (bottom row) settings for
	$D=2$.
	}\label{fig:FWER2D}
\end{figure}
\begin{figure}[h!]
	\centering
\begin{turn}{90}
{\fontfamily{lmss}\selectfont
\small{\textbf{\hspace{0.47in}Stationary}}
}
\end{turn}
	\includegraphics[trim=0 0 0 0,clip,width=1.8in]{\figurepathfwer 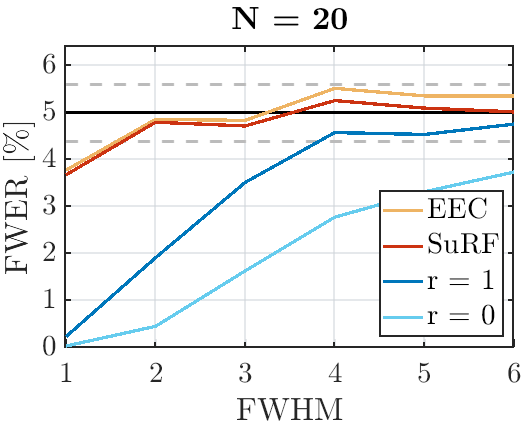}
	\includegraphics[trim=0 0 0 0,clip,width=1.8in]{\figurepathfwer 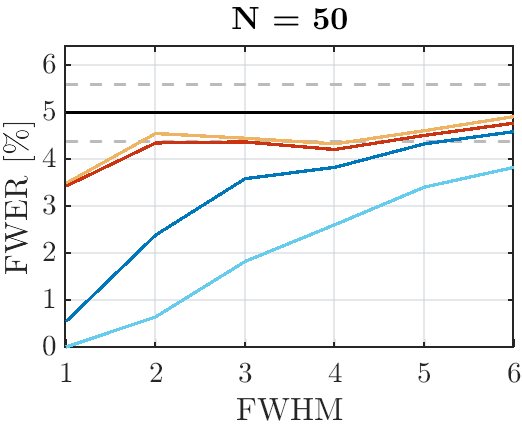}
	\includegraphics[trim=0 0 0 0,clip,width=1.8in]{\figurepathfwer 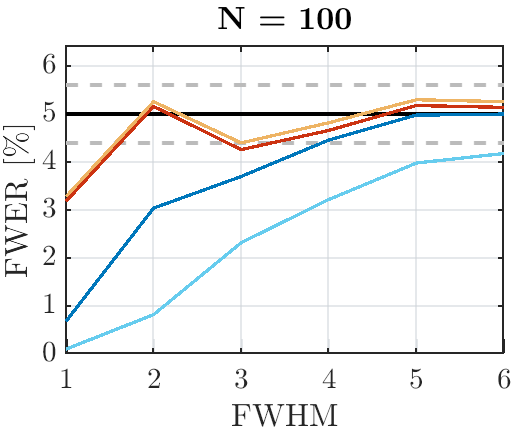}\\
\begin{turn}{90}
{\fontfamily{lmss}\selectfont
\small{\textbf{\hspace{0.3in}Non-Stationary}}
}
\end{turn}
	\includegraphics[trim=0 0 0 0,clip,width=1.8in]{\figurepathfwer 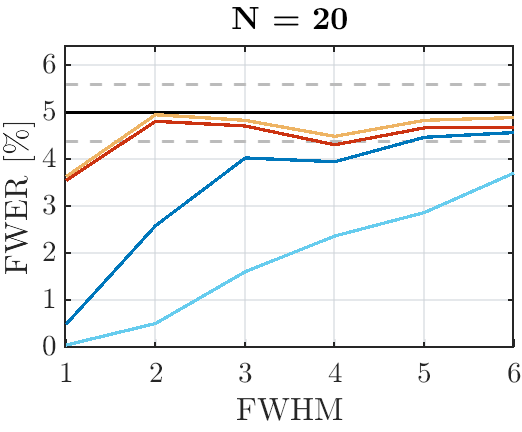}
	\includegraphics[trim=0 0 0 0,clip,width=1.8in]{\figurepathfwer 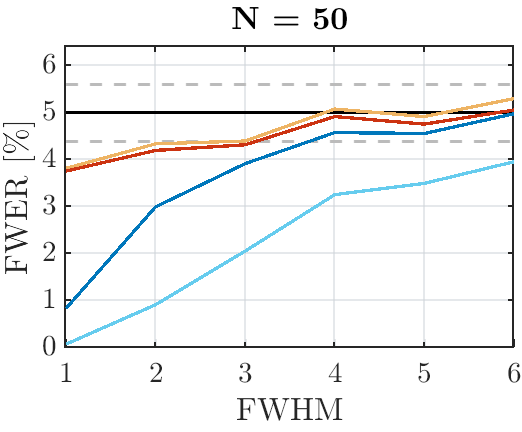}
	\includegraphics[trim=0 0 0 0,clip,width=1.8in]{\figurepathfwer 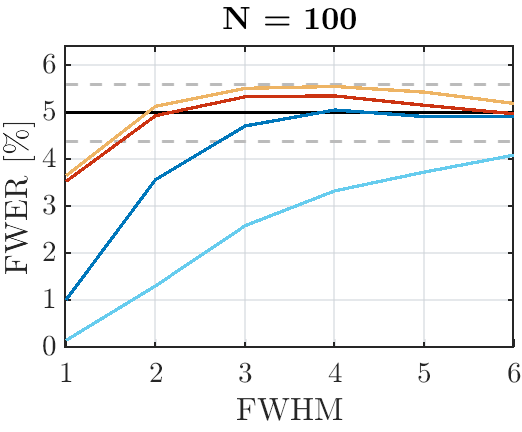}\\
	\caption{FWER results for the almost-stationary (top row) and non-stationary (bottom row) settings
	for $D=3$.}\label{fig:FWER3D}
\end{figure}
In each simulation we also count the number of local maxima $ l_{N,b,r} $ of
$ T_{N,b,r} $ on $ \cM_\cV^{(r)} $ exceeding the level $ \hat{u}_{N,b} $ as it is a good approximation of the EC of the excursion set
for $\alpha\approx 0.05$. The resulting estimate
$
	\widehat{\text{EEC}} = 	\frac{1}{B} \sum_{b = 1}^B l_{N,b,r}
	$
is plotted in yellow in our figures. The fact, that $\widehat{\text{EEC}}$ is close to
the FWER obtained for $r=\infty$, is a good indicator that the EEC heuristic
used in \ref{scn:FWER} to estimate $ \hat{u}_{N,b} $ is valid. 

The results of our simulations for $D=2,3$ are shown in Figures \ref{fig:FWER2D} and \ref{fig:FWER3D} respectively.  They reproduce the well-known observation that the traditional ($ r = 0 $) approach is conservative.
This is more severe at low smoothness levels, but even at FWHM $f = 6$, evaluating the random fields on the original lattice yields a conservative FWER. This effect is slightly more pronounced in 3D.
Moving from the original ($ r = 0 $) lattice to the resolution $ 1 $ lattice already reduces the conservativeness which demonstrates that the main cause of the conservativeness is the mismatch between the discreteness of the data and modeling it as a random field over a manifold. For $r=\infty$ the FWER is close to the nominal level $\alpha =0.05$ and only slightly smaller than the estimated EEC. The less accurate FWER for FWHM $f=1$ could be partially caused
by the biased LKC estimates as we used added resolution $1$, compare Figures \ref{fig:D2L1FixedNSUBJ_stat}-\ref{fig:D2L2FixedNSUBJ_nonstat} in Appendix \ref{appendix:AddSimulations}.

\FloatBarrier
\section{Discussion}\label{scn:Discussion}
In this article we have shown how to resolve the conservativeness of voxelwise inference using RFT
(\cite{Nichols2003,Taylor2007b,Eklund2016}).
Our solution is inspired by the rigorous distinction between data of an experiment and
atoms of a probabilistic model, which helps to identify the main problems of traditional RFT
for Gaussian data:
(i) the test statistic is evaluated only on a discrete set
given by the data although the atoms of RFT are random fields
over a WS  manifold and (ii) continuous
quantities are replaced by discrete approximations.
Here we proposed using kernel smoothers as ferrymen transferring
discrete data into random fields over a WS manifold
(the atoms of RFT) and showed in simulations that this view enables accurate control of the FWER
at a given significance level.
Although our SuRF methodology is tailored to Gaussian related fields,
this assumption can be relaxed. As discussed in \cite{Nardi2008,Telschow:2022SCB} the GKF approximates
the EEC of the excursion sets of the test statistic $\tilde T$ well,
if $\tilde T$ only is asymptotically Gaussian.
Because the SuRF LKC estimates are consistent so long as derivatives of the sample covariance
of the residuals converge uniformly to the derivatives of the covariance function of the limiting
field, compare \cite{Telschow2020}, our SuRF methodology remains asymptotically valid
even under non-Gaussianity.
However, as the sample size required to deal with the non-Gaussianity of fMRI data
is large, we extend in Part 2, i.e., \cite{Davenport2023}, our SuRF based voxelwise inference using RFT to
non-Gaussian data and demonstrate its non-conservativeness using a large resting state validation.

The importance of our work for neuroimaging is twofold.
Firstly, our GKF based method is computationally faster
than resampling based inference such as permutation tests
which control the FWER at level $\alpha$ over the grid  $\cV$
at the cost of a high computational burden.
Secondly, solving the conservativeness that has long caused
power problems for voxelwise inference using RFT is a first step
towards identifying and solving the problems of false positive rates in cluster-size
inference (\cite{Eklund2016}) as it also relies on the GKF
and applies continuous theory to smoothed fields evaluated on the voxel lattice \citep{Friston1994}.

As the estimator for $ \mathcal{L}_1 $ in 3D using Theorem \ref{thm:L1VM}
is difficult to implement, we proposed a local stationary approximation which consists of the first integral in \eqref{eq:L1Surf}.
Even if the fields are highly non-stationary using this approximation does not have a large effect on voxelwise inference as the estimate $\hat u_\alpha$ from the EEC is primarily driven by $\hat{\mathcal{L}}_2$ and $\hat{\mathcal{L}}_3$.
This can be seen for example for real fMRI data using the supplementary
material of \cite{Telschow2020}. Here $u_{0.95} \approx 4.2$ which is typical for neuroimaging data.
Thus,
$
	\rho^{\tilde T}_3(u_\alpha)/\rho^{\tilde T}_1(u_\alpha) \approx 2.6 
$
and
$\rho^{\tilde T}_2(u_\alpha)/\rho^{\tilde T}_1(u_\alpha) \approx 1.6$
for sample size $50$. Moreover, typically $\mathcal{L}_1 \ll \mathcal{L}_3$. To obtain intuition on why this holds, consider the case where the atoms arise as stationary random fields with the square exponential covariance
function with bandwidth parameter $h$ over a convex WS manifold $\cM$. In this case $\cL_3$ is the volume of $\cM$
divided by $h^3$, $\cL_2$ is half the surface area of $\cM$ divided by $h^2$ and $\cL_1$ is twice the diameter of  $\cM$ divided by $h$ \cite[Table 2]{Worsley2004}.\footnote{Note that in Table 2 the FWHM needs to be transformed
into a bandwidth by multiplication with $\sqrt{4\log(2)}/2\pi$
to transform resels --a concept from \cite{Worsley1996} designed
for isotropic, stationary processes-- into LKCs, compare our discussion in the Supplementary of Part 2.}
That the local stationary approximation of $\cL_1$ does not influence the false positive rate
can additionally be seen from Figure \ref{fig:FWER3D} and our simulations in Part 2.

The careful distinction between data and atoms and the practical problems
$(i)$ and $(ii)$ are relevant in a number of further applications. For example, any functional data
method involving smoothing via basis functions or kernel smoothers, and for which
the coverage of simultaneous confidence bands or the family-wise error rate of tests is reported,
can benefit from our SuRF reasoning.
A concrete example is \cite{Davenport2022confidence}.
Here confidence regions for peaks of the signal of random fields over bounded open domains
of $\mR^D$ are developed and convolution fields are used to localize peaks of activation in fMRI and MEG.
In fact, MEG is a natural domain in which our SuRF reasoning applies,
because the power spectrum is a convolution field,
see the Supplementary material of \cite{Davenport2022confidence} for details.
Another potential application of the SuRF reasoning are coverage probability
excursion (CoPE) sets.
CoPE sets provide confidence sets for the excursion above a value
$c\in\mathbb{R}$ of a real-valued target function defined over
a domain in $\mathbb{R}^D$, $D>0$, from noisy data, \citep{Sommerfeld2018,Bowring:2019,Bowring:2021,Maullin:2023spatial}.
These papers found in simulations that the empirical coverage of CoPE sets is larger than the specified coverage, but converges to the correct coverage probability
as the domain is sufficiently densely sampled. Theoretically, this has been explained in Theorem 1.b) from \citep{Sommerfeld2018} and is illustrated in Figure 3 of \cite{Bowring:2019}. SuRFs provide a practical means of
overcoming this overcoverage.

\section*{Acknowledgments}
F.T. is funded by the Deutsche Forschungsgemeinschaft (DFG) under Excellence Strategy The
Berlin Mathematics Research Center MATH+ (EXC-2046/1, project ID:390685689).
F.T. and S.D. were partially supported by NIH grant R01EB026859. S.D. was partially supported by NIH grant R01MH128923. We thank Armin Schwartzman for generous funding and
discussions on this topic -- especially for suggesting the simplified notation \eqref{eq:InnerProdRepr},
Henrik Schumacher for helpful discussions on Riemannian geometry and spotting that the term involving
the Riemannian curvature tensor in \eqref{eq:L1general} is not zero
and Thomas E. Nichols for suggesting the term \textit{Super Resolution field} (SuRF)
which fits well the institution F.T. was employed at when he started this topic.

\bibliographystyle{unsrtnat}
\bibliography{literature}
\newpage

\section*{Appendix}
\appendix
\section{Additional Figures, Simulation Results and Tables}\label{appendix:AddSimulations}
\begin{figure}[ht]
\begin{center}
	\includegraphics[trim=0 0 0 0,clip,width=2.5in]{\figurepath 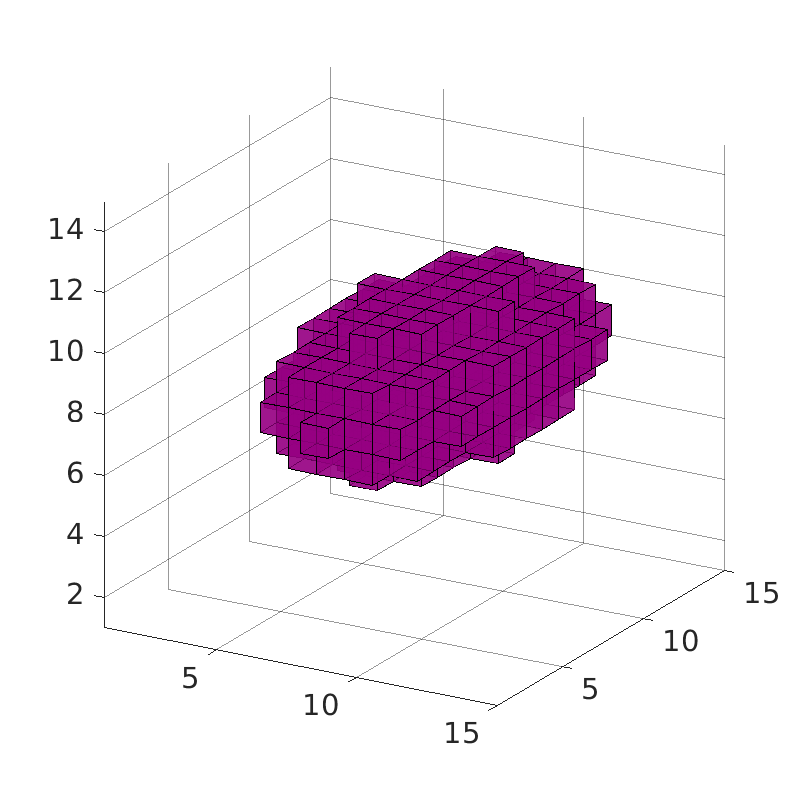}
	\includegraphics[trim=0 0 0 0,clip,width=2.5in]{\figurepath 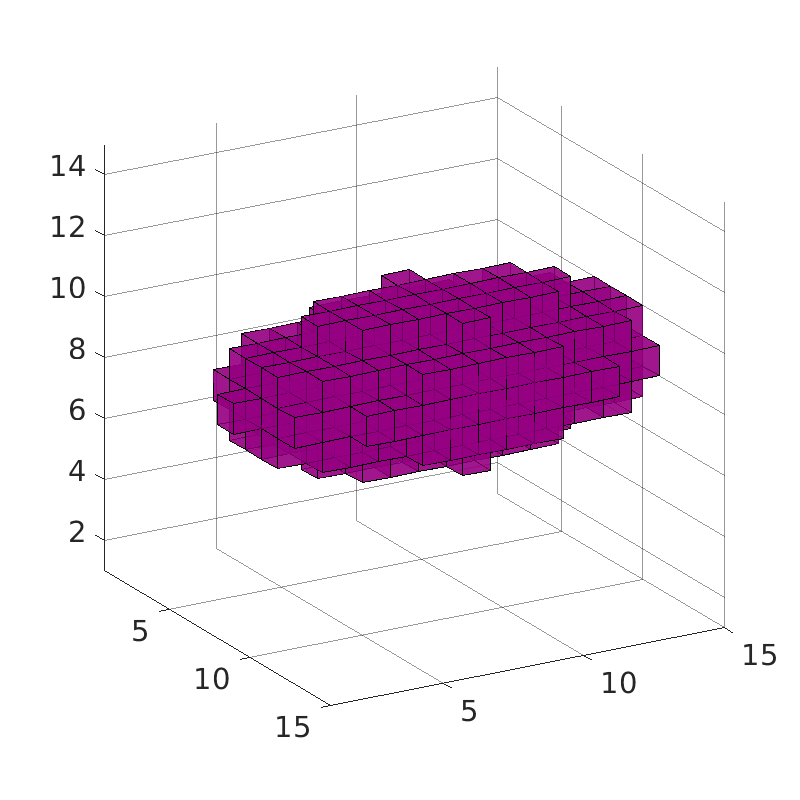}
	\caption{An example of a voxel manifold viewed from different angles.}
	\label{fig:voxmf}
\end{center}
\end{figure}

\FloatBarrier
\subsection{LKC estimation for $D=1$}
\FloatBarrier
\begin{figure}[h!]
\centering
\begin{turn}{90}
{\fontfamily{lmss}\selectfont
\small{\textbf{\hspace{0.4in}Stationary}}
}
\end{turn}
\includegraphics[trim=0 0 40 0,clip,width=1.9in]{\figurepathh 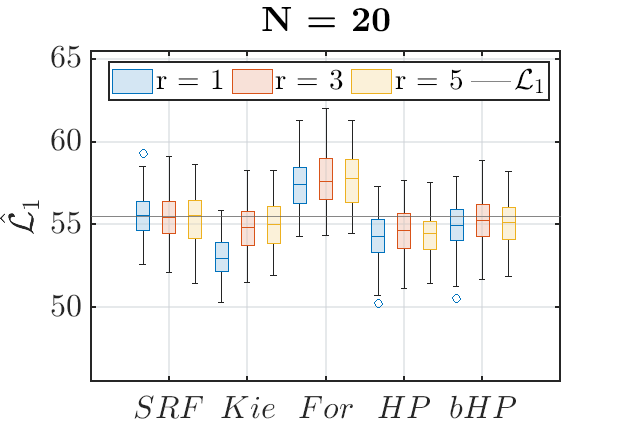}
\includegraphics[trim=0 0 40 0,clip,width=1.9in]{\figurepathh 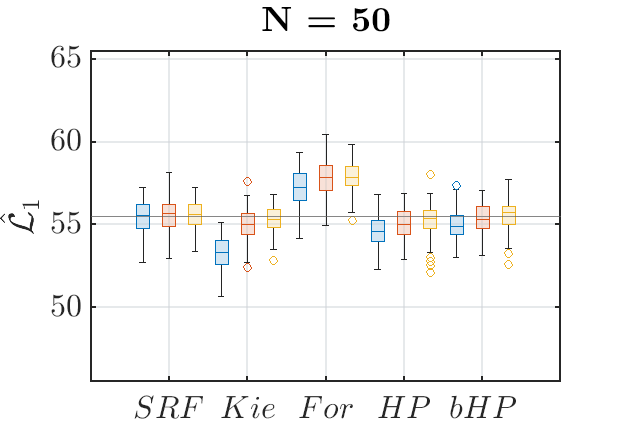}
\includegraphics[trim=0 0 40 0,clip,width=1.9in]{\figurepathh 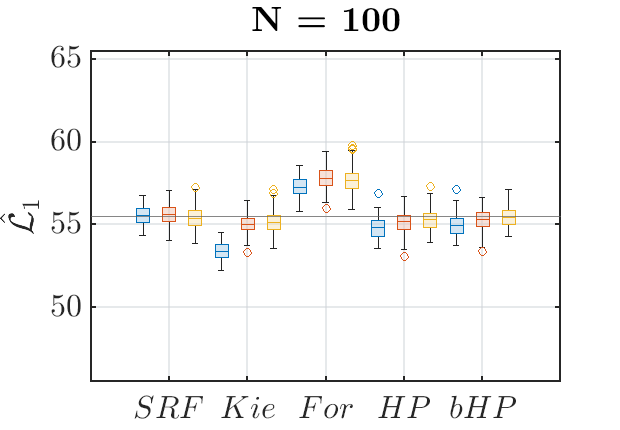}\\
\begin{turn}{90}
{\fontfamily{lmss}\selectfont
\small{\textbf{\hspace{0.2in}Non-Stationary}}
}
\end{turn}
\includegraphics[trim=0 0 40 0,clip,width=1.9in]{\figurepathh 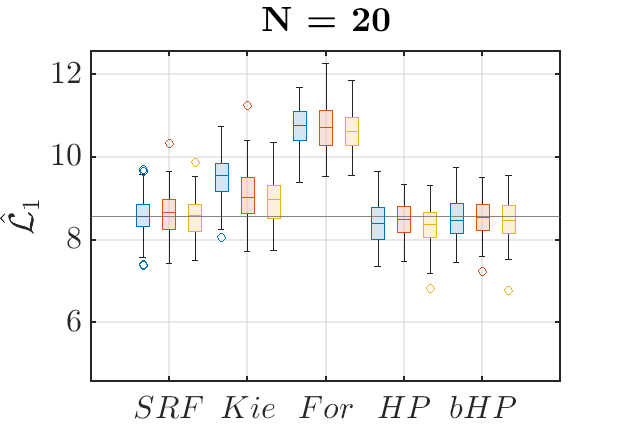}
\includegraphics[trim=0 0 40 0,clip,width=1.9in]{\figurepathh 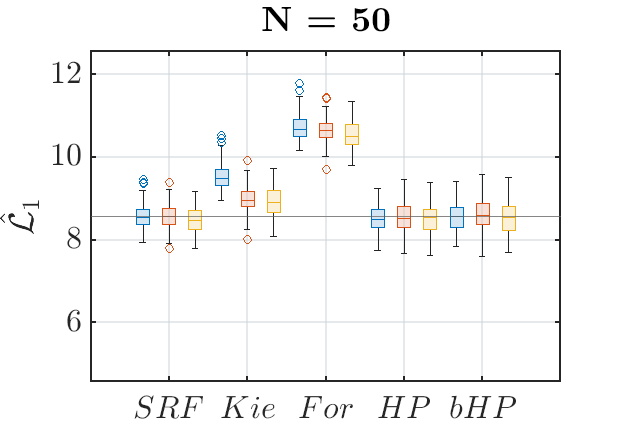}
\includegraphics[trim=0 0 40 0,clip,width=1.9in]{\figurepathh 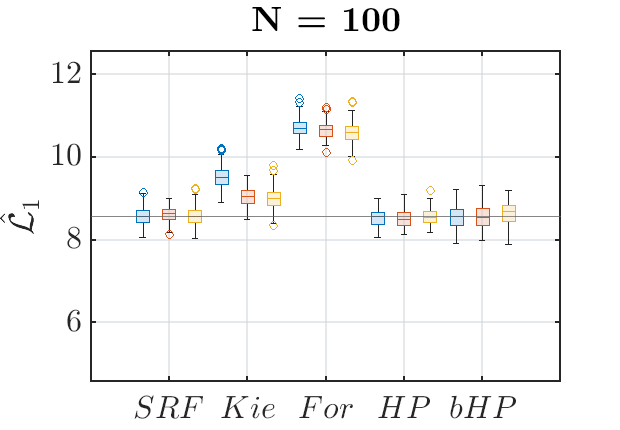}\\
	\caption{1D Simulation results of estimation of the LKCs of the two SuRFs
			 derived from the stationary box example
			 and the non-stationary sphere example. The FWHM is  $f=3$.
			  \label{fig:D1L1FixedFWHM}}
\end{figure}

\begin{figure}[h!]
\centering
\includegraphics[trim=0 0 40 0,clip,width=1.9in]{\figurepathh 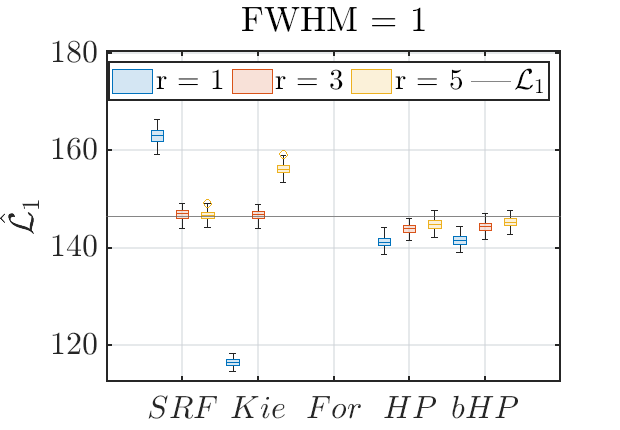}
\includegraphics[trim=0 0 40 0,clip,width=1.9in]{\figurepathh 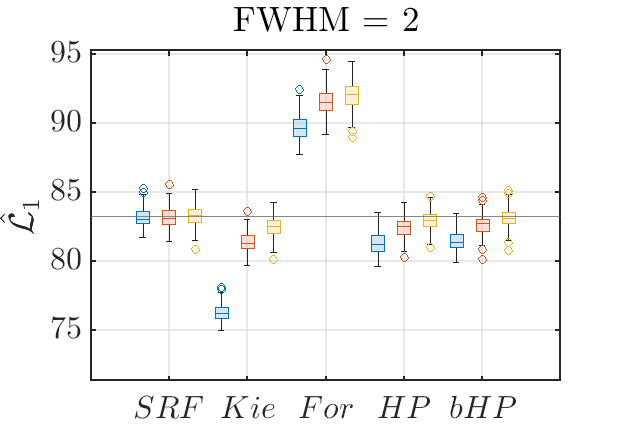}
\includegraphics[trim=0 0 40 0,clip,width=1.9in]{\figurepathh 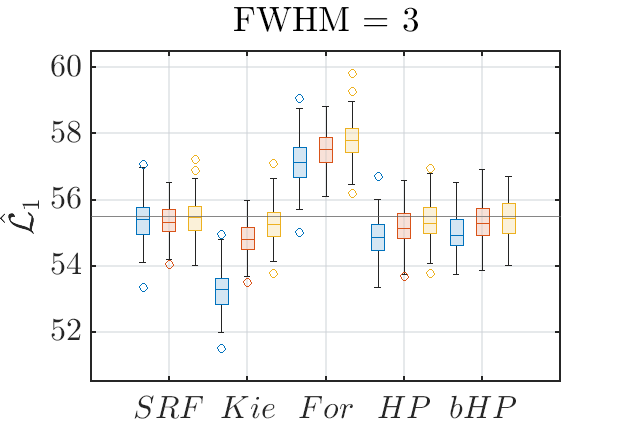}\\
\includegraphics[trim=0 0 40 0,clip,width=1.9in]{\figurepathh 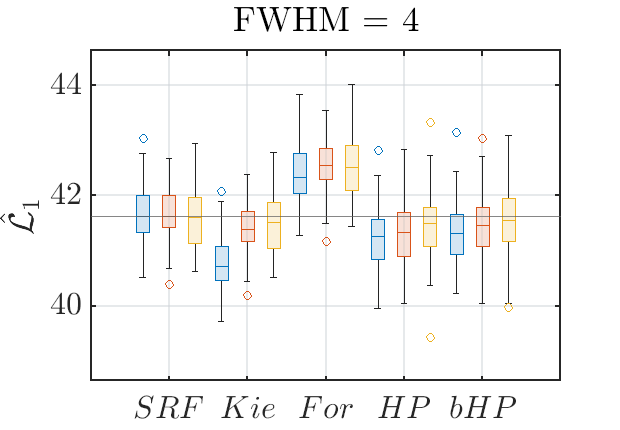}
\includegraphics[trim=0 0 40 0,clip,width=1.9in]{\figurepathh 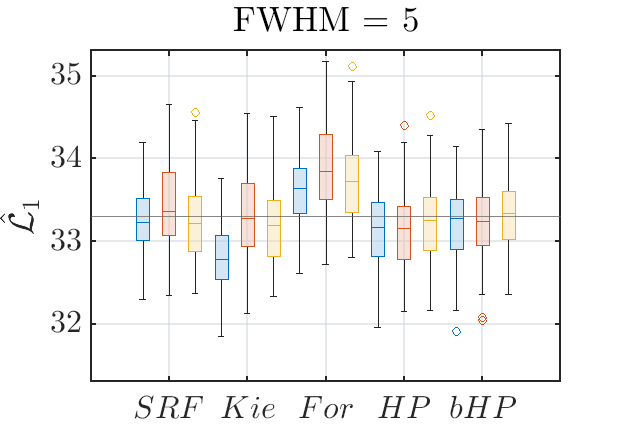}
\includegraphics[trim=0 0 40 0,clip,width=1.9in]{\figurepathh 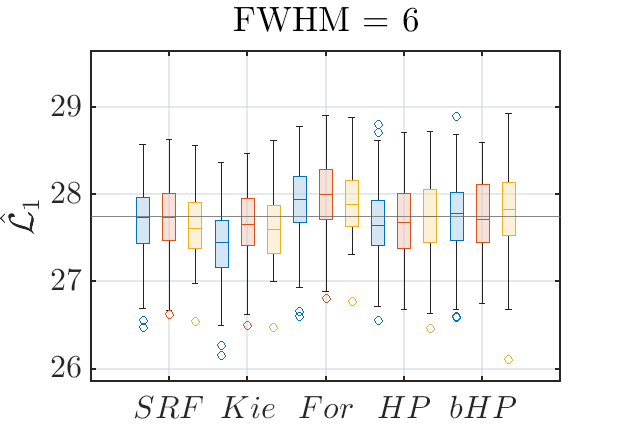}\\
	\caption{ 1D Simulation results of estimation of the LKCs of SuRFs derived from the stationary box example.
			  The results show the dependence of the LKC estimation on the FWHM
			  used in the smoothing kernel for sample size $N=100$.
	          \label{fig:D1L1FixedNSUBJ_stat}}
\end{figure}

\begin{figure}[h!]
\centering
\includegraphics[trim=0 0 40 0,clip,width=1.9in]{\figurepathh 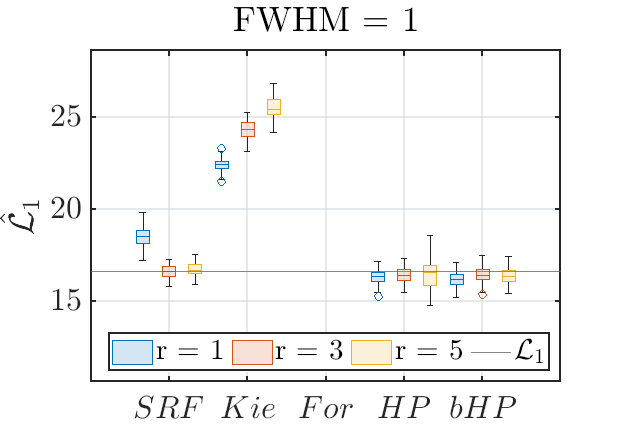}
\includegraphics[trim=0 0 40 0,clip,width=1.9in]{\figurepathh 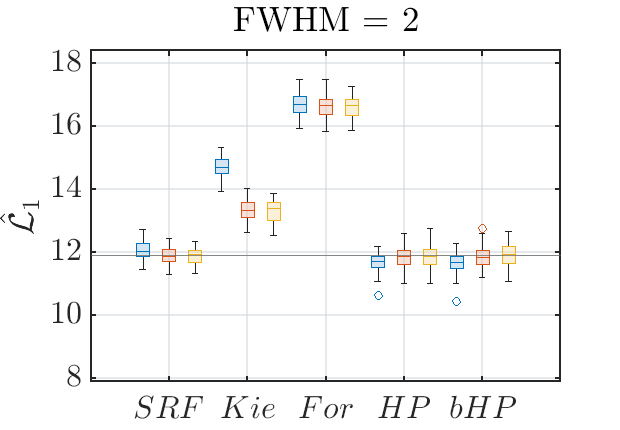}
\includegraphics[trim=0 0 40 0,clip,width=1.9in]{\figurepathh 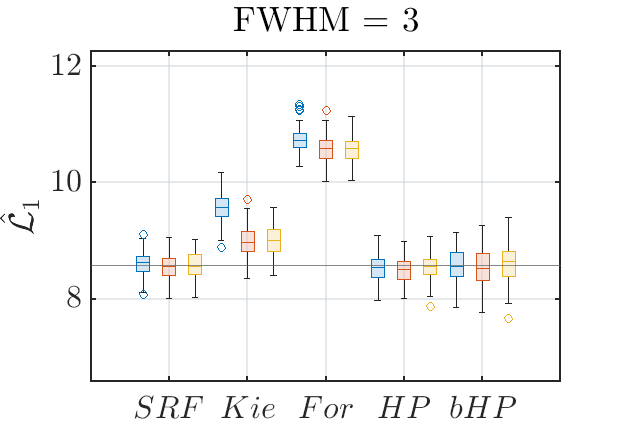}\\
\includegraphics[trim=0 0 40 0,clip,width=1.9in]{\figurepathh 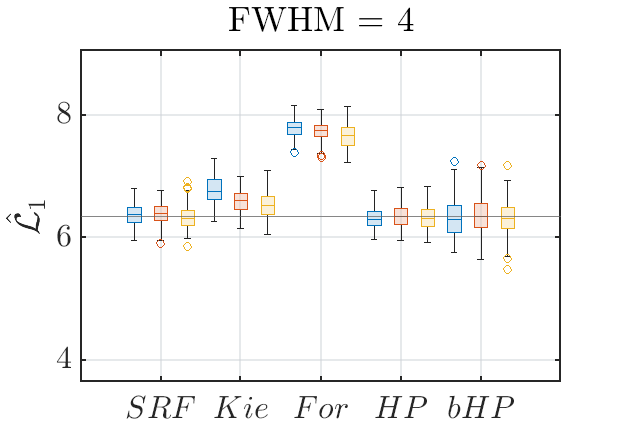}
\includegraphics[trim=0 0 40 0,clip,width=1.9in]{\figurepathh 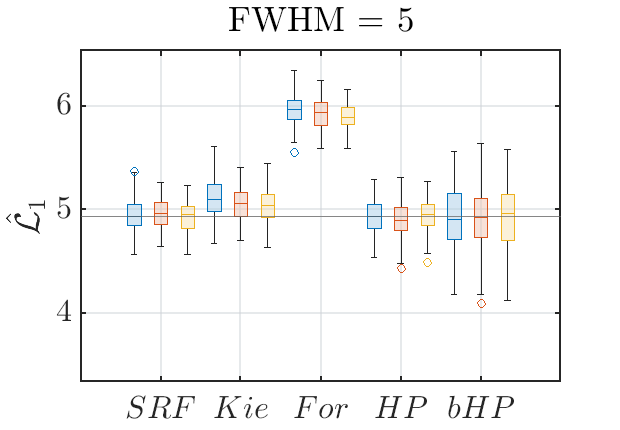}
\includegraphics[trim=0 0 40 0,clip,width=1.9in]{\figurepathh 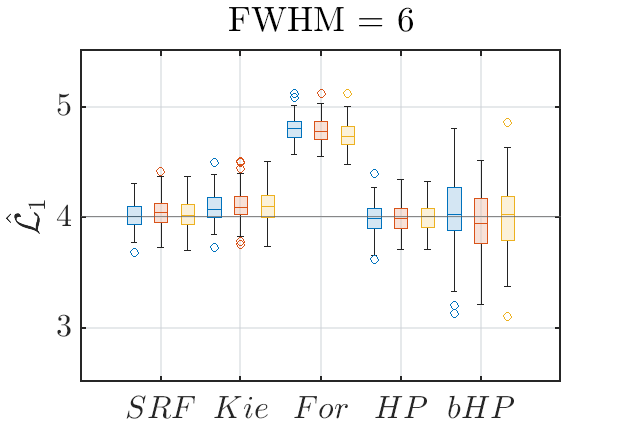}\\
	\caption{ 1D Simulation results of estimation of the LKCs of SuRFs derived from the non-stationary sphere example.
			  The results show the dependence of the LKC estimation on the FWHM
			  used in the smoothing kernel for sample size $N=100$.
	          \label{fig:D1L1FixedNSUBJ_nonstat}}
\end{figure}
\FloatBarrier
\clearpage

\subsection{LKC estimation for $D=2$}
\FloatBarrier

\begin{figure}[h!]
\centering
\includegraphics[trim=0 0 40 0,clip,width=1.9in]{\figurepathh 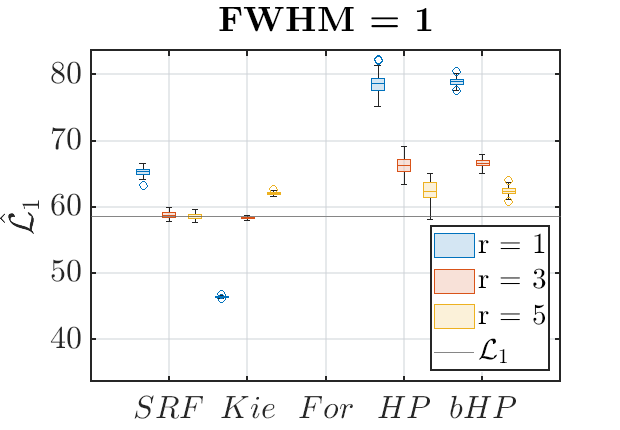}
\includegraphics[trim=0 0 40 0,clip,width=1.9in]{\figurepathh 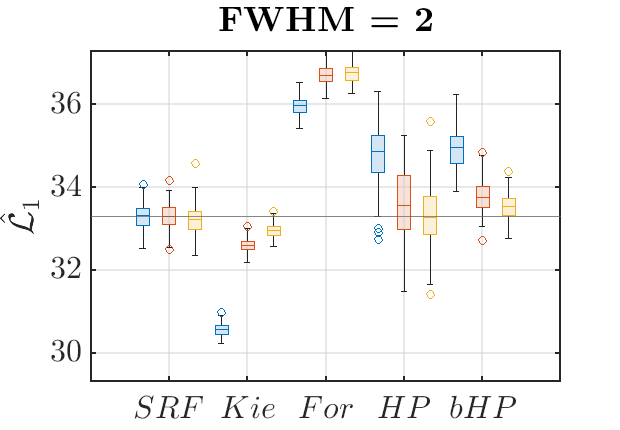}
\includegraphics[trim=0 0 40 0,clip,width=1.9in]{\figurepathh 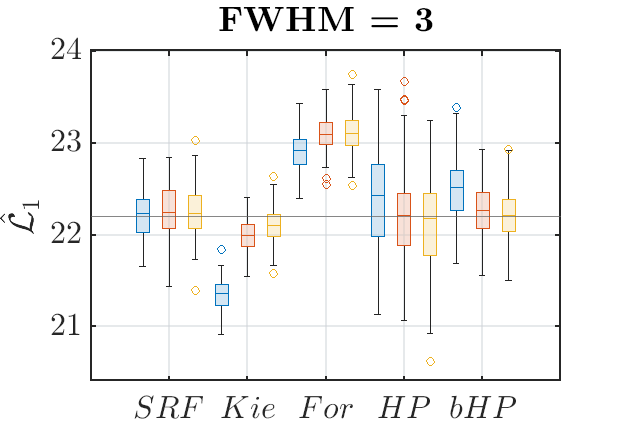}\\
\includegraphics[trim=0 0 40 0,clip,width=1.9in]{\figurepathh 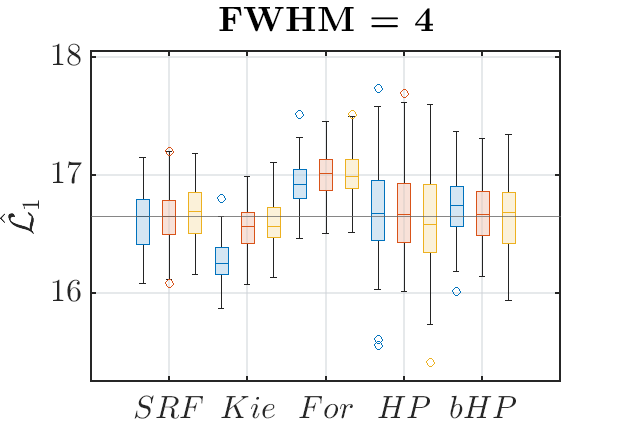}
\includegraphics[trim=0 0 40 0,clip,width=1.9in]{\figurepathh 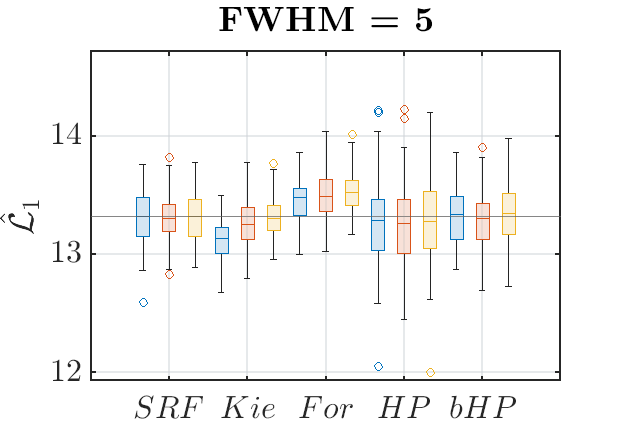}
\includegraphics[trim=0 0 40 0,clip,width=1.9in]{\figurepathh 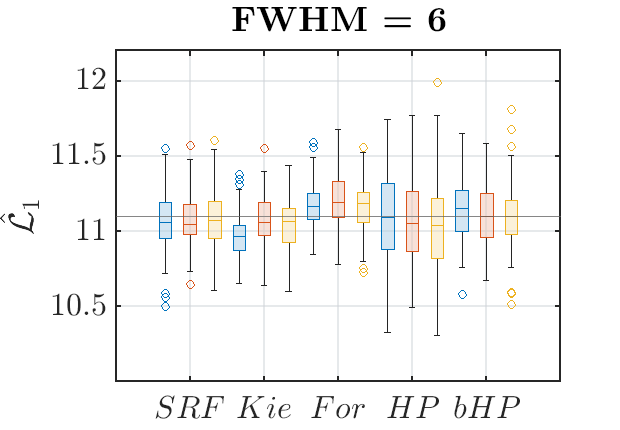}\\
	\caption{ 2D Simulation results for estimation of the LKCs of the
			  almost stationary SuRF described in Section \ref{scn:setup}.
			  The figures show the dependence of the LKC estimation on
			  the smoothing bandwidth $f$ for sample size $N=100$.
	          \label{fig:D2L1FixedNSUBJ_stat}}
\end{figure}

\begin{figure}[h!]
\centering
\includegraphics[trim=0 0 40 0,clip,width=1.9in]{\figurepathh 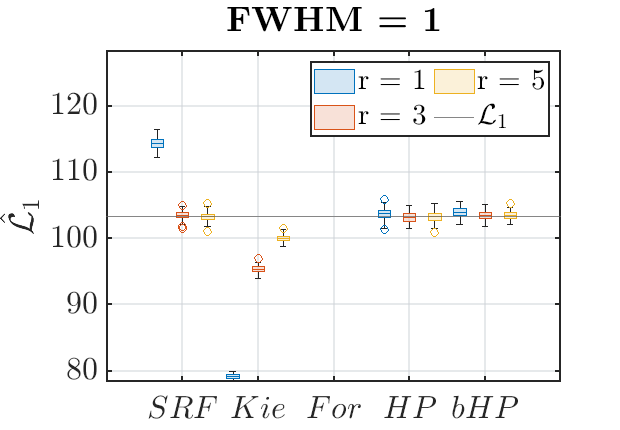}
\includegraphics[trim=0 0 40 0,clip,width=1.9in]{\figurepathh 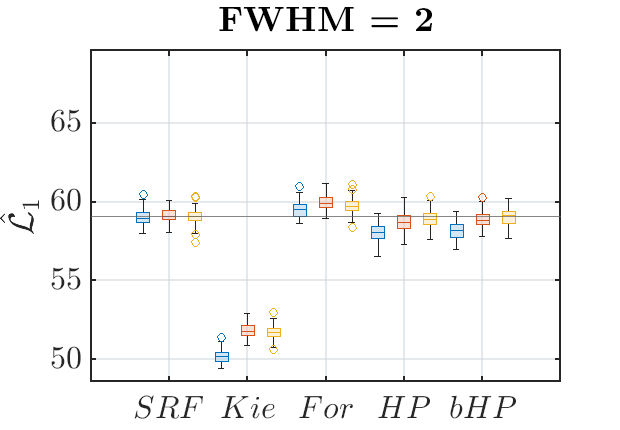}
\includegraphics[trim=0 0 40 0,clip,width=1.9in]{\figurepathh 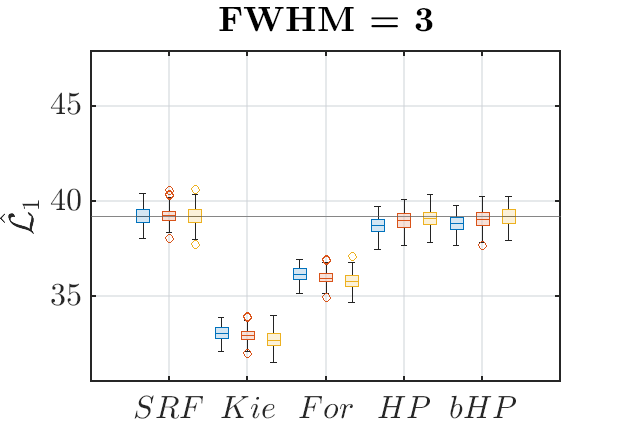}\\
\includegraphics[trim=0 0 40 0,clip,width=1.9in]{\figurepathh 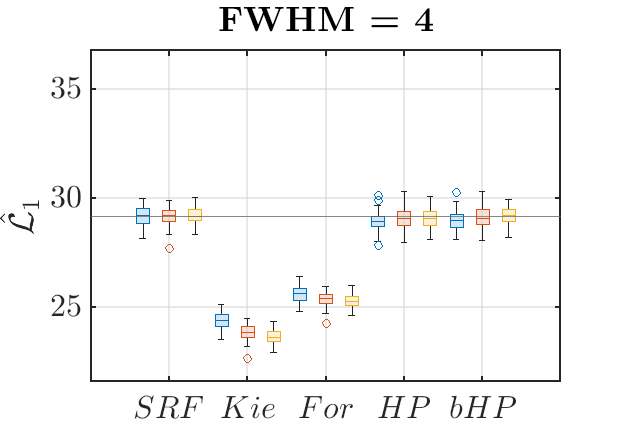}
\includegraphics[trim=0 0 40 0,clip,width=1.9in]{\figurepathh 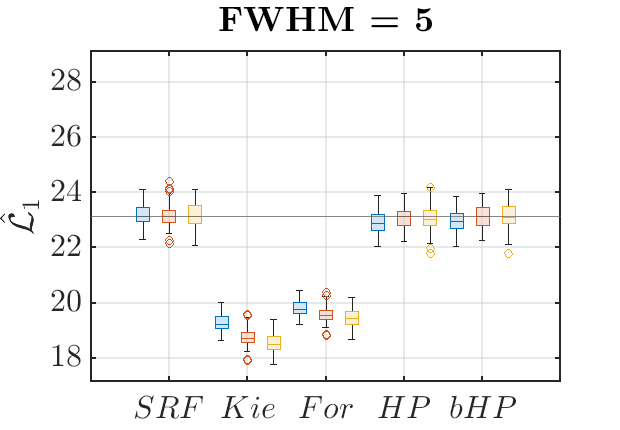}
\includegraphics[trim=0 0 40 0,clip,width=1.9in]{\figurepathh 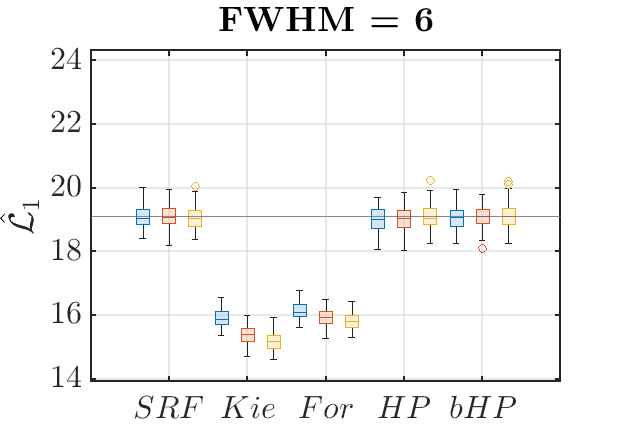}\\
	\caption{ 2D Simulation results of estimation of the LKCs of SuRFs derived from the non-stationary sphere example.
			  The results show the dependence of the LKC estimation on the FWHM
			  used in the smoothing kernel for sample size $N=100$.
	          \label{fig:D2L1FixedNSUBJ_nonstat}}
\end{figure}

\begin{figure}[ht]
\centering
\includegraphics[trim=0 0 40 0,clip,width=1.9in]{\figurepathh 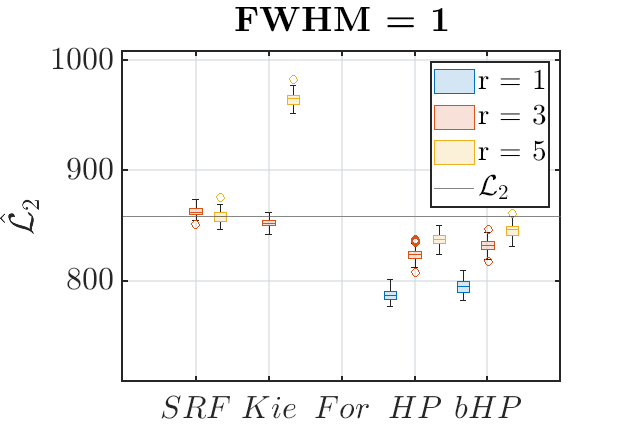}
\includegraphics[trim=0 0 40 0,clip,width=1.9in]{\figurepathh 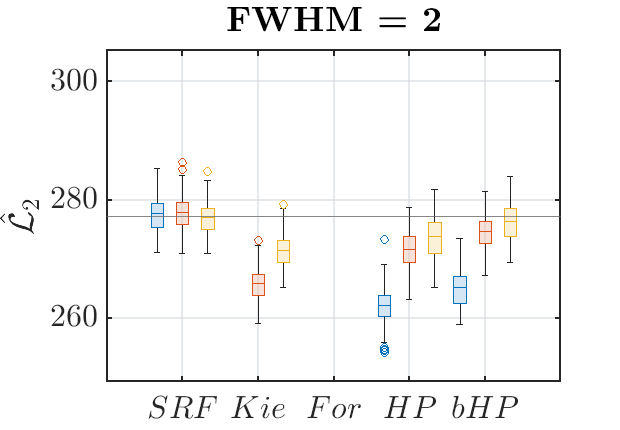}
\includegraphics[trim=0 0 40 0,clip,width=1.9in]{\figurepathh 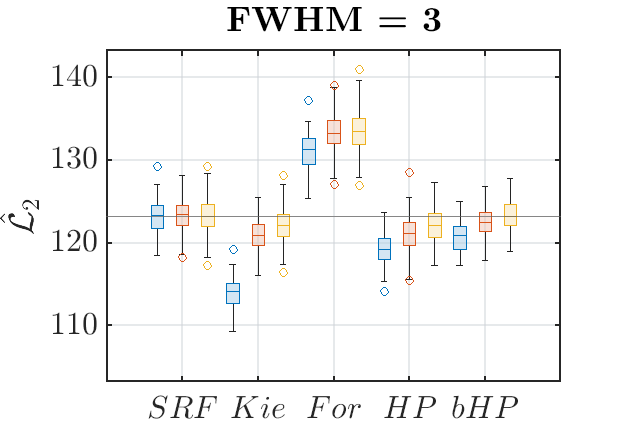}\\
\includegraphics[trim=0 0 40 0,clip,width=1.9in]{\figurepathh 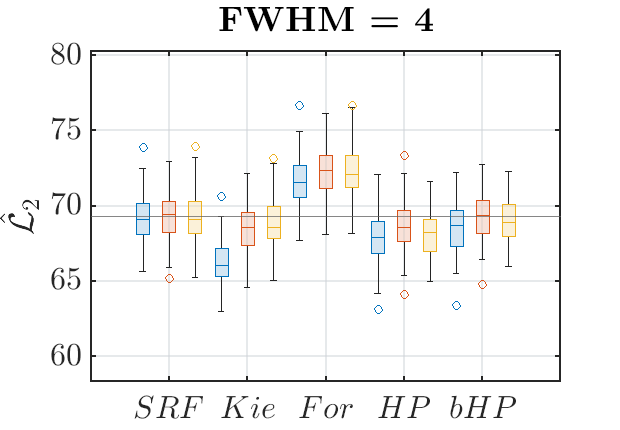}
\includegraphics[trim=0 0 40 0,clip,width=1.9in]{\figurepathh 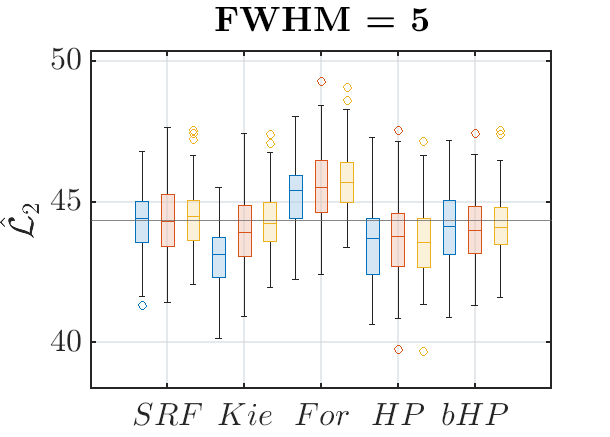}
\includegraphics[trim=0 0 40 0,clip,width=1.9in]{\figurepathh 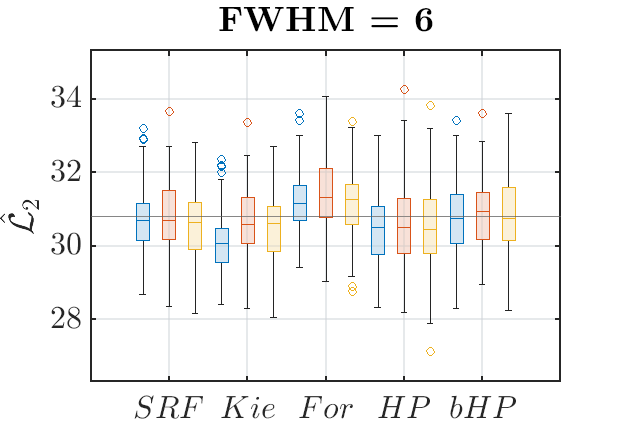}\\
	\caption{ 2D Simulation results for estimation of the LKCs of SuRFs derived from the stationary box example.
			  The results show the dependence of the LKC estimation on the FWHM used in the smoothing kernel for sample size $N=100$.
	          \label{fig:D2L2FixedNSUBJ_stat}}
\end{figure}

\begin{figure}[t!]
\centering
\includegraphics[trim=0 0 40 0,clip,width=1.9in]{\figurepathh 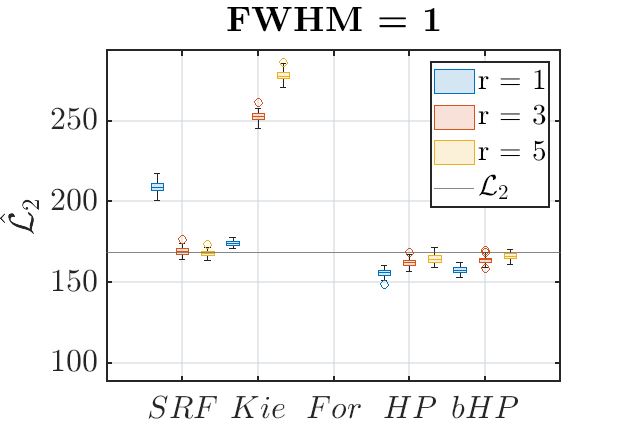}
\includegraphics[trim=0 0 40 0,clip,width=1.9in]{\figurepathh 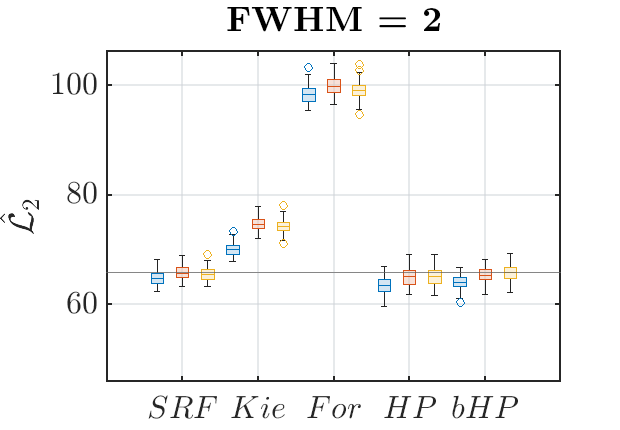}
\includegraphics[trim=0 0 40 0,clip,width=1.9in]{\figurepathh 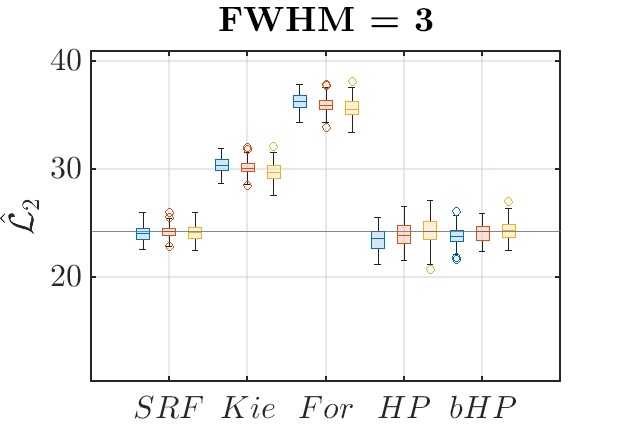}\\
\includegraphics[trim=0 0 40 0,clip,width=1.9in]{\figurepathh 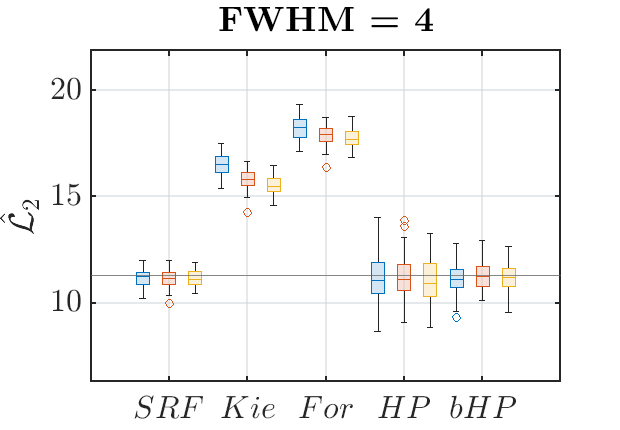}
\includegraphics[trim=0 0 40 0,clip,width=1.9in]{\figurepathh 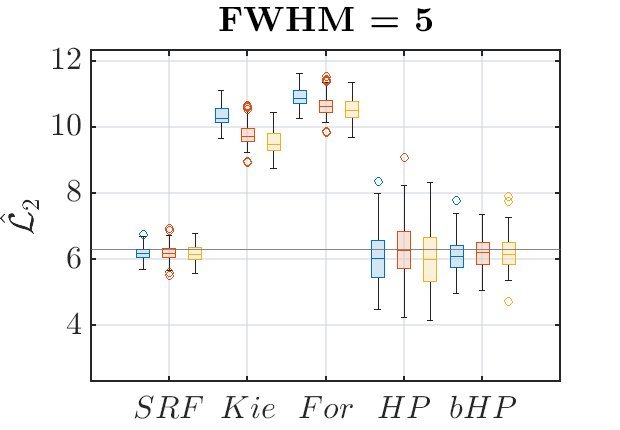}
\includegraphics[trim=0 0 40 0,clip,width=1.9in]{\figurepathh 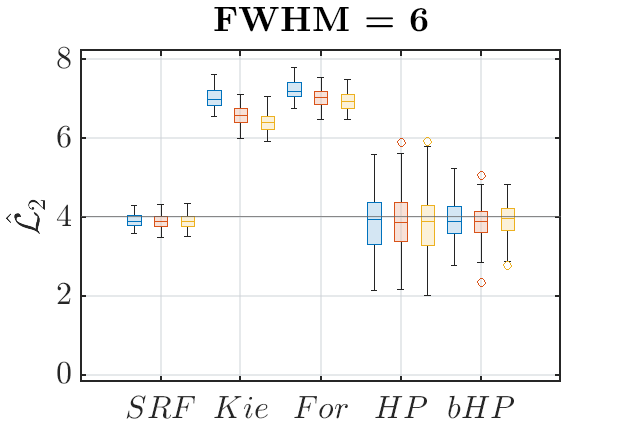}\\
	\caption{ 2D Simulation results for the estimation of the LKCs of SuRFs in the non-stationary setting.
	The results show the dependence of the LKC estimation on the FWHM used in the smoothing kernel
	for sample size $N=100$.
	\label{fig:D2L2FixedNSUBJ_nonstat}}
\end{figure}

\clearpage

\FloatBarrier
\subsection{LKC estimation for $D=3$}
\FloatBarrier

\begin{figure}[h!]
\centering
\includegraphics[trim=0 0 40 0,clip,width=1.6in]{\figurepathh 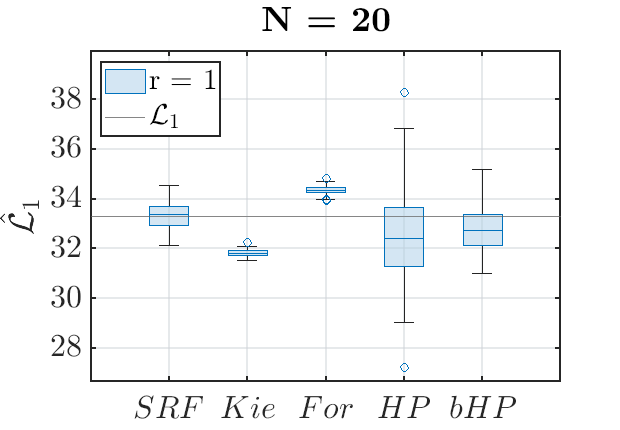}
\includegraphics[trim=0 0 40 0,clip,width=1.6in]{\figurepathh 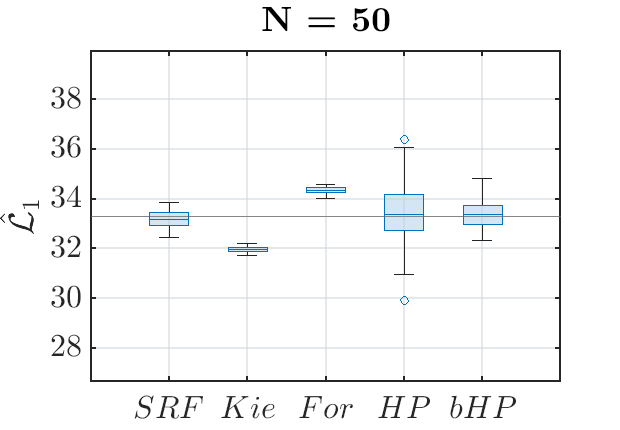}
\includegraphics[trim=0 0 40 0,clip,width=1.6in]{\figurepathh 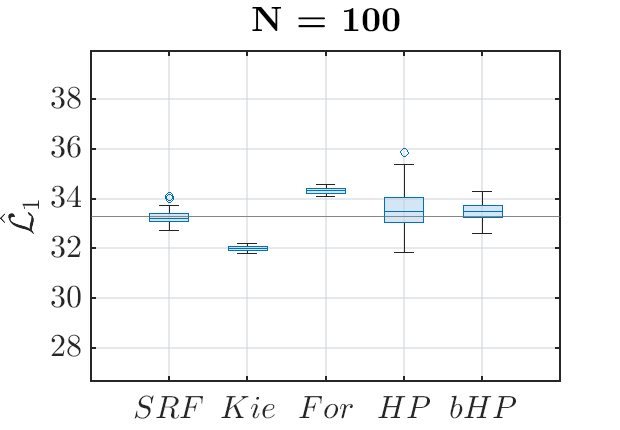}\\
\includegraphics[trim=0 0 40 0,clip,width=1.6in]{\figurepathh 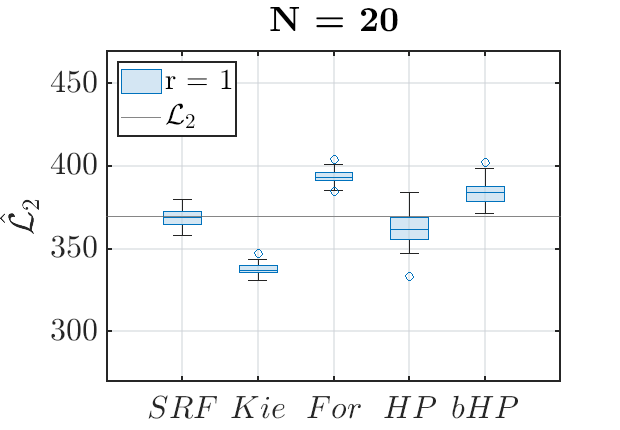}
\includegraphics[trim=0 0 40 0,clip,width=1.6in]{\figurepathh 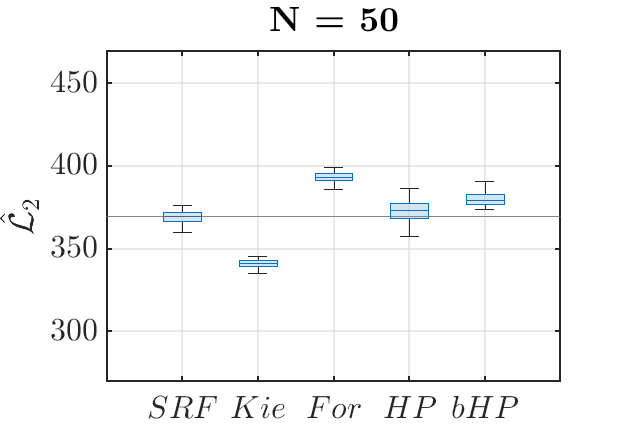}
\includegraphics[trim=0 0 40 0,clip,width=1.6in]{\figurepathh 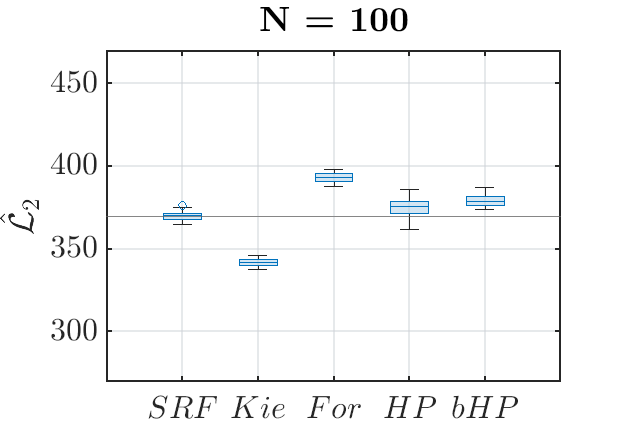}\\
\includegraphics[trim=0 0 40 0,clip,width=1.6in]{\figurepathh 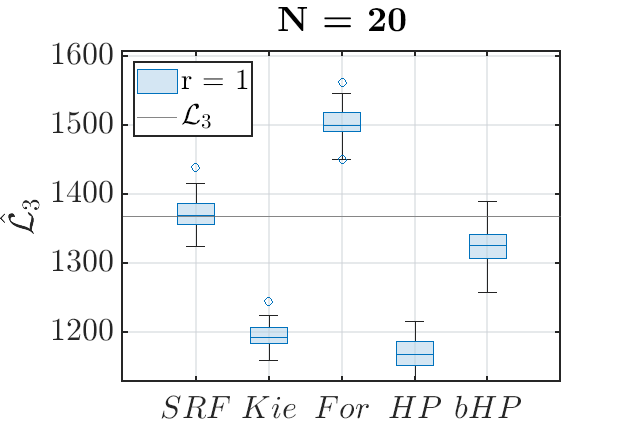}
\includegraphics[trim=0 0 40 0,clip,width=1.6in]{\figurepathh 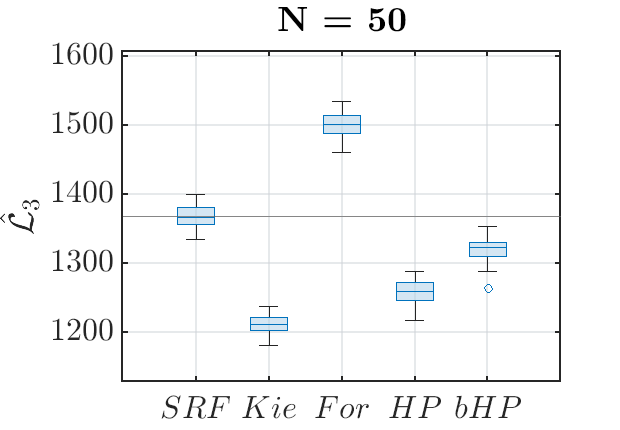}
\includegraphics[trim=0 0 40 0,clip,width=1.6in]{\figurepathh 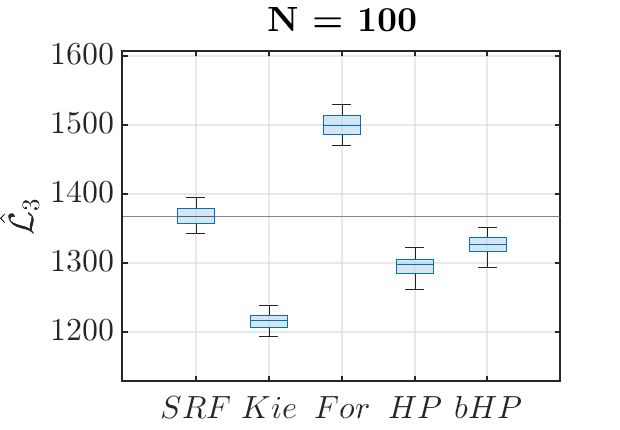}
	\caption{ 3D Simulation results of estimation of the LKCs of the SuRFs
			  derived from the stationary box example. The FWHM is  $f=3$.
			  \label{fig:D3_stationary_FixedFWHM}}
\end{figure}

\begin{figure}[h!]
\centering
\includegraphics[trim=0 0 40 0,clip,width=1.6in]{\figurepathh 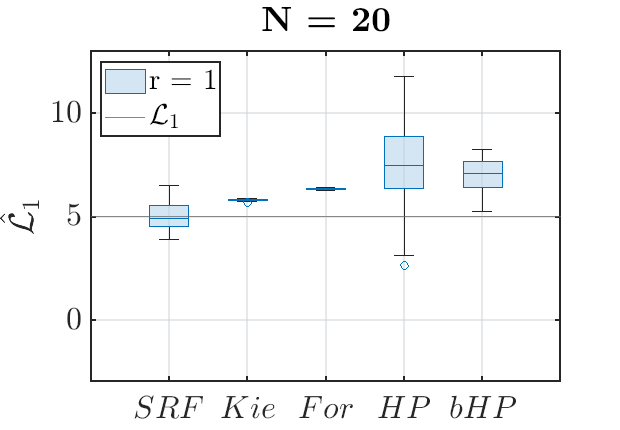}
\includegraphics[trim=0 0 40 0,clip,width=1.6in]{\figurepathh 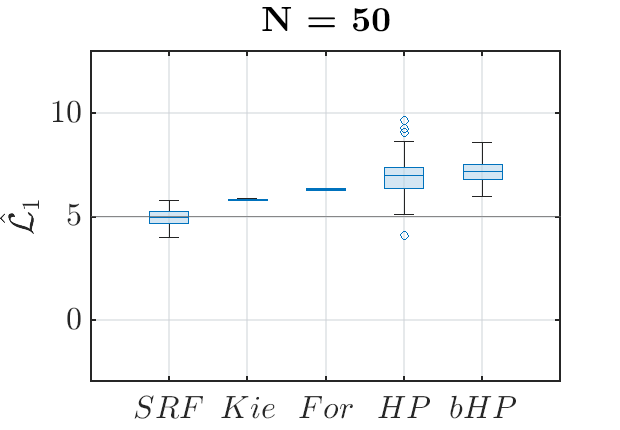}
\includegraphics[trim=0 0 40 0,clip,width=1.6in]{\figurepathh 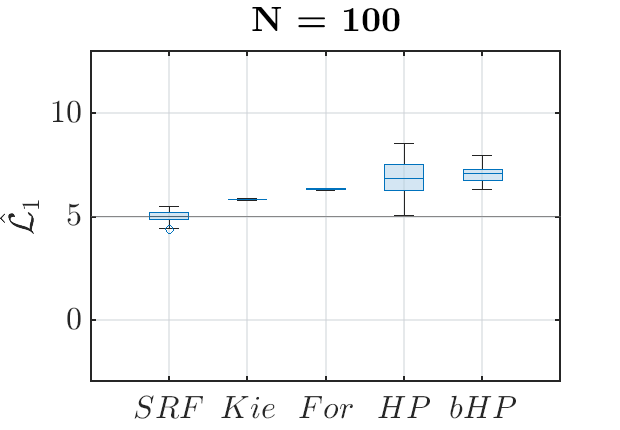}\\
\includegraphics[trim=0 0 40 0,clip,width=1.6in]{\figurepathh 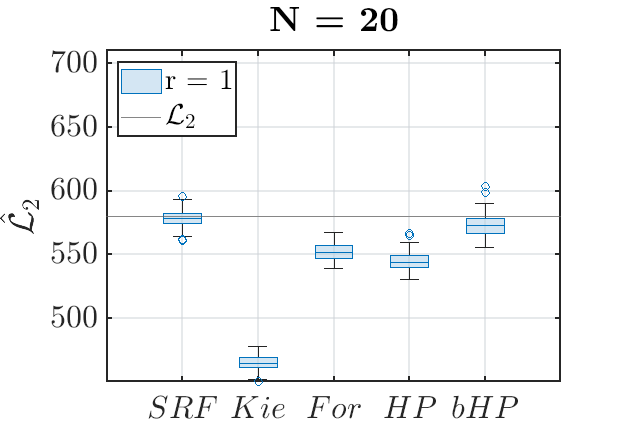}
\includegraphics[trim=0 0 40 0,clip,width=1.6in]{\figurepathh 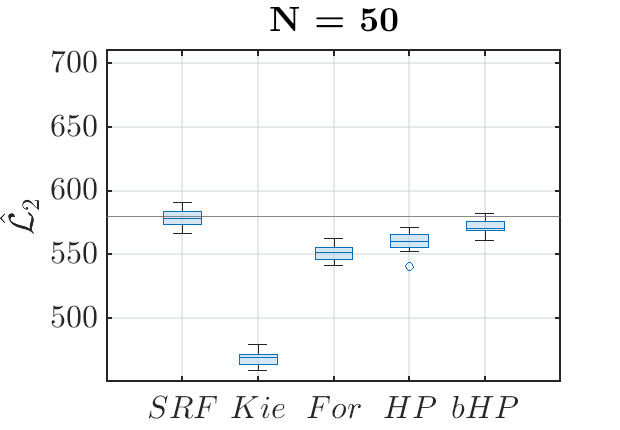}
\includegraphics[trim=0 0 40 0,clip,width=1.6in]{\figurepathh 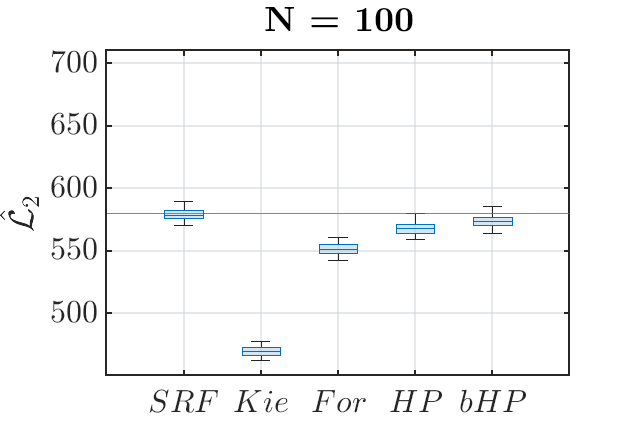}\\
\includegraphics[trim=0 0 40 0,clip,width=1.6in]{\figurepathh 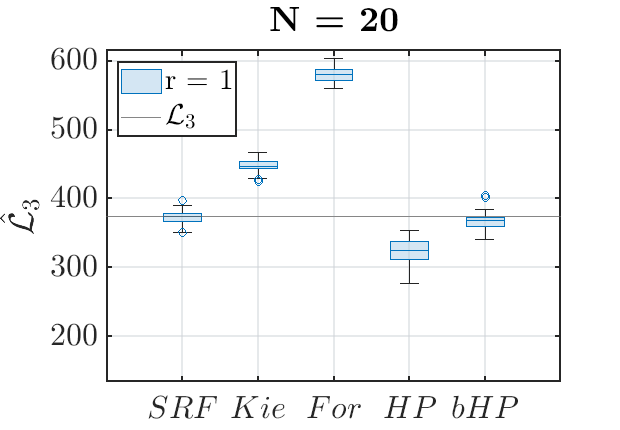}
\includegraphics[trim=0 0 40 0,clip,width=1.6in]{\figurepathh 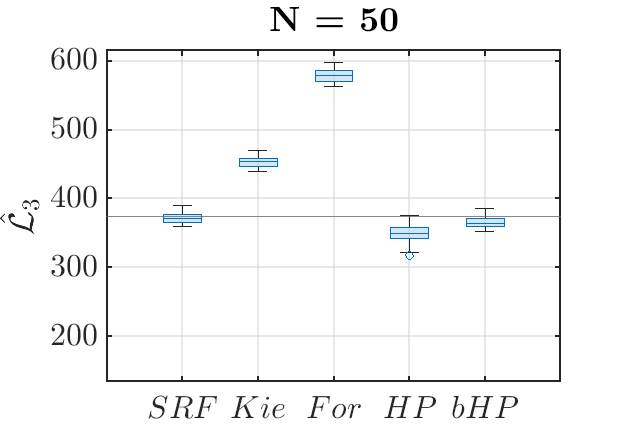}
\includegraphics[trim=0 0 40 0,clip,width=1.6in]{\figurepathh 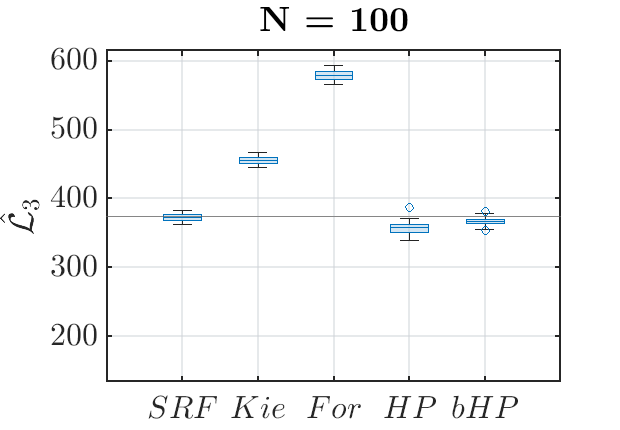}
	\caption{ 3D Simulation results of estimation of
			  the LKCs of the two SuRFs derived from the non-stationary
			  sphere example. The FWHM is  $f=3$. Note that the theoretical
			  value for $\mathcal{L}_1$ is the theoretical value for the locally
			  stationary $\mathcal{L}_1$. The true value for $\mathcal{L}_1$
			  is currently infeasible to obtain.
			  \label{fig:D3_nonstationary_FixedFWHM}}
\end{figure}

\begin{figure}[h!]
\centering
\includegraphics[trim=0 0 40 0,clip,width=1.6in]{\figurepathh 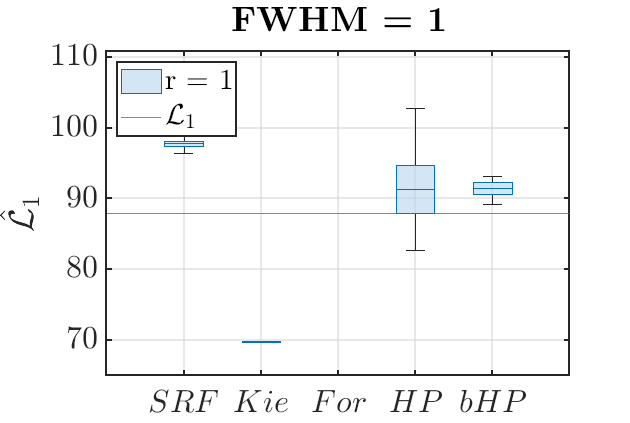}
\includegraphics[trim=0 0 40 0,clip,width=1.6in]{\figurepathh 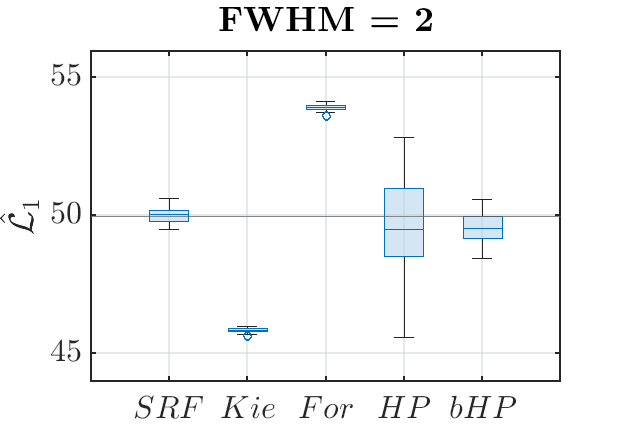}
\includegraphics[trim=0 0 40 0,clip,width=1.6in]{\figurepathh 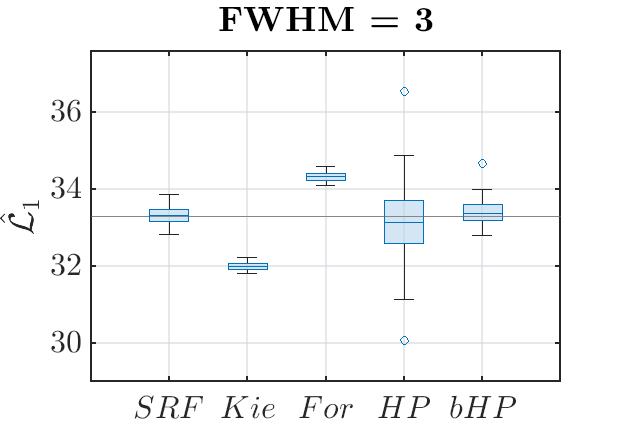}\\
\includegraphics[trim=0 0 40 0,clip,width=1.6in]{\figurepathh 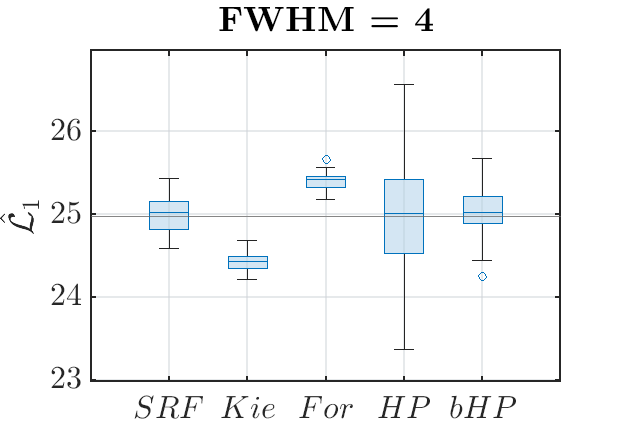}
\includegraphics[trim=0 0 40 0,clip,width=1.6in]{\figurepathh 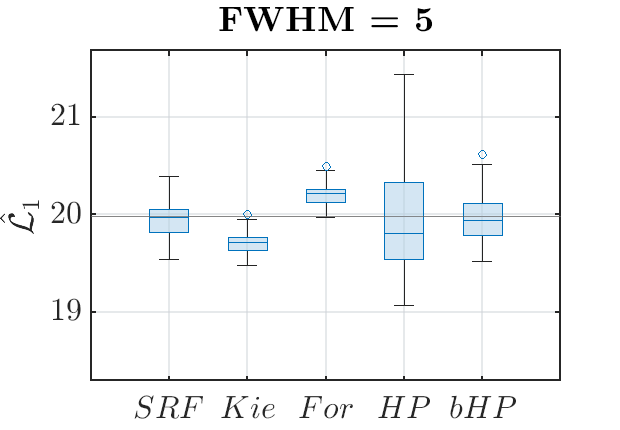}
\includegraphics[trim=0 0 40 0,clip,width=1.6in]{\figurepathh 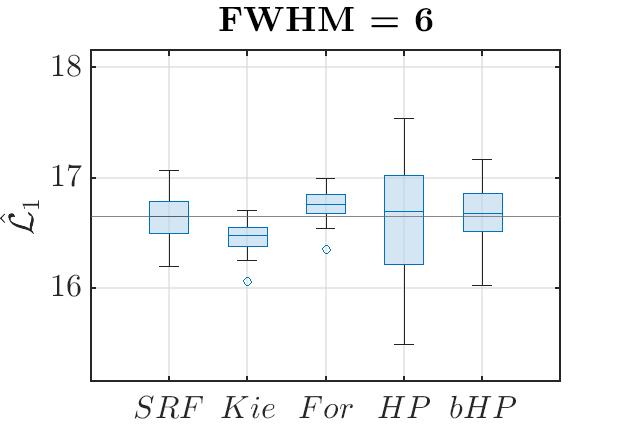}\\
	\caption{ 3D Simulation results of estimation of the LKCs of SuRFs derived from
			  the almost stationary box example.
			  The results show the dependence of the LKC estimation on the FWHM used
			  in the smoothing kernel for sample size $N=100$.
	          \label{fig:D3L1FixedNSUBJ_stat}}
\end{figure}

\begin{figure}[h!]
\centering
\includegraphics[trim=0 0 40 0,clip,width=1.6in]{\figurepathh 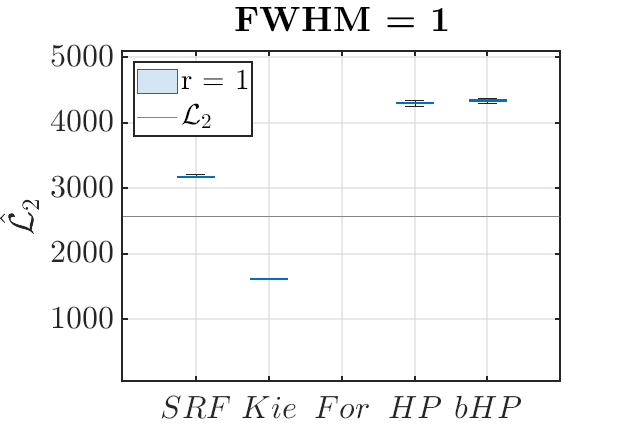}
\includegraphics[trim=0 0 40 0,clip,width=1.6in]{\figurepathh 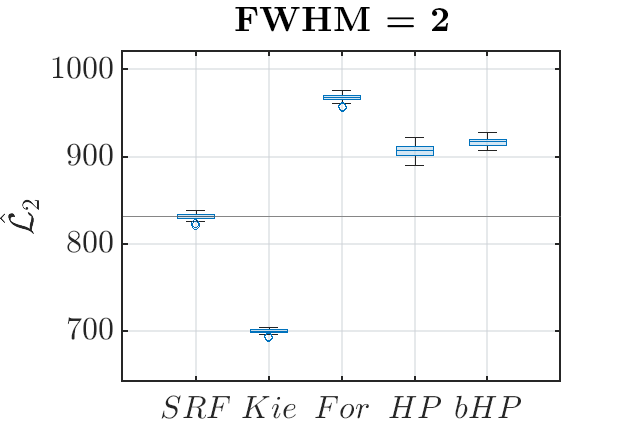}
\includegraphics[trim=0 0 40 0,clip,width=1.6in]{\figurepathh 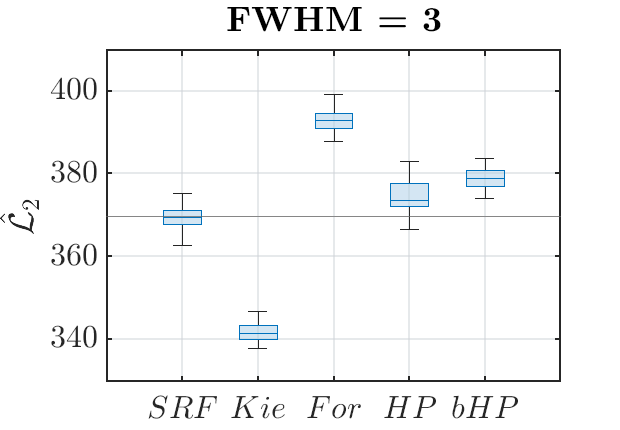}\\
\includegraphics[trim=0 0 40 0,clip,width=1.6in]{\figurepathh 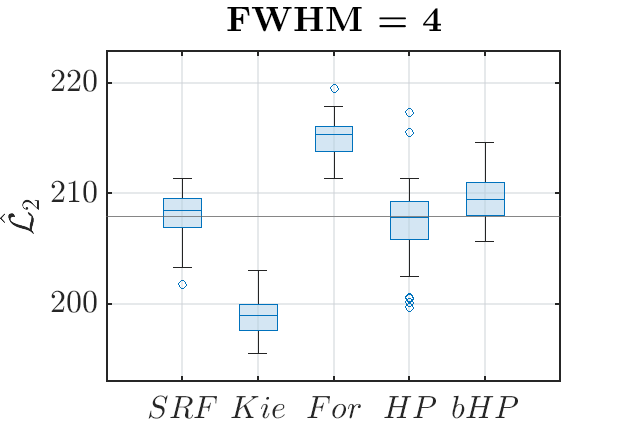}
\includegraphics[trim=0 0 40 0,clip,width=1.6in]{\figurepathh 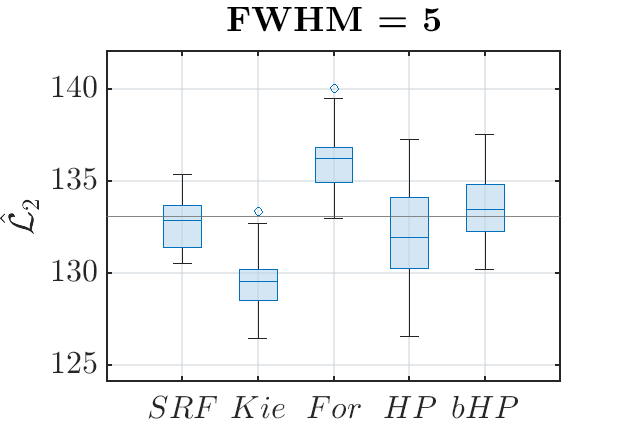}
\includegraphics[trim=0 0 40 0,clip,width=1.6in]{\figurepathh 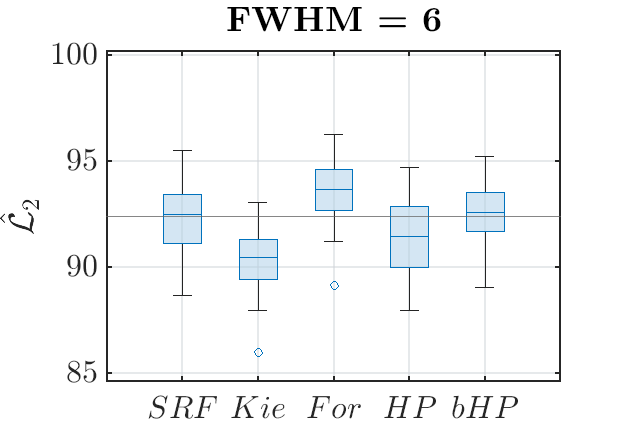}\\
	\caption{ 3D Simulation results of estimation of the LKCs of SuRFs derived from the almost
	stationary box example.
	The results show the dependence of the LKC estimation on the FWHM used in the smoothing kernel
	for sample size $N=100$.
	          \label{fig:D3L2FixedNSUBJ_stat}}
\end{figure}

\begin{figure}[h!]
\centering
\includegraphics[trim=0 0 40 0,clip,width=1.6in]{\figurepathh 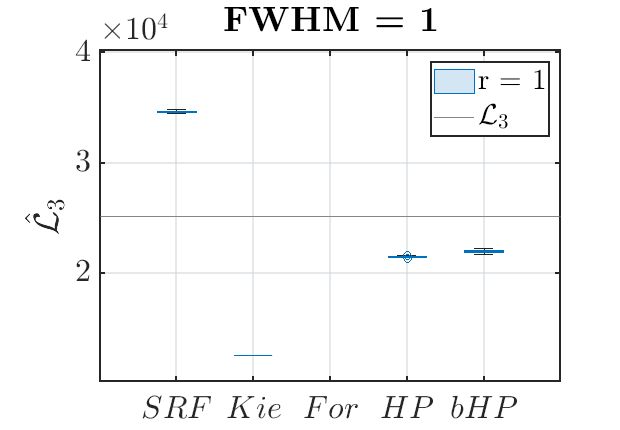}
\includegraphics[trim=0 0 40 0,clip,width=1.6in]{\figurepathh 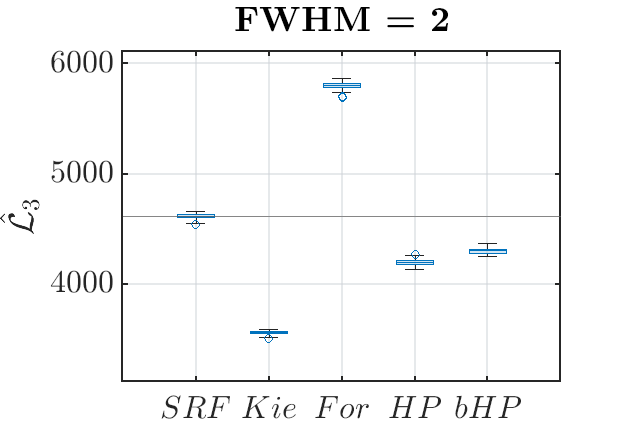}
\includegraphics[trim=0 0 40 0,clip,width=1.6in]{\figurepathh 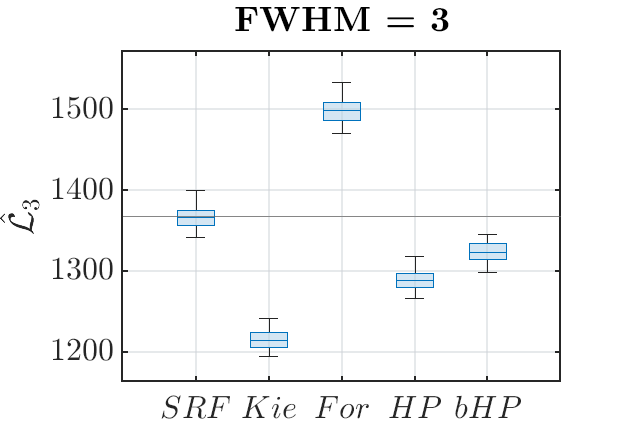}\\
\includegraphics[trim=0 0 40 0,clip,width=1.6in]{\figurepathh 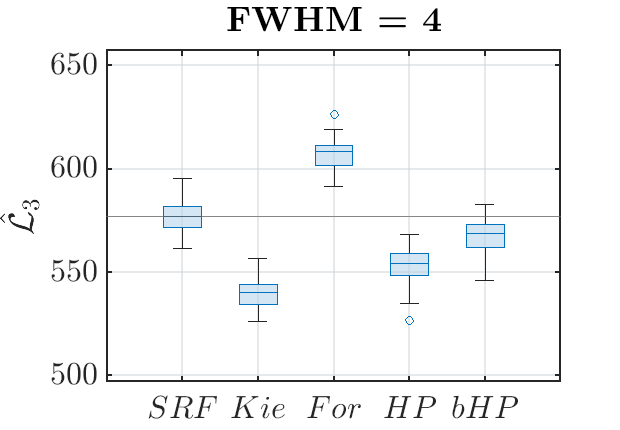}
\includegraphics[trim=0 0 40 0,clip,width=1.6in]{\figurepathh 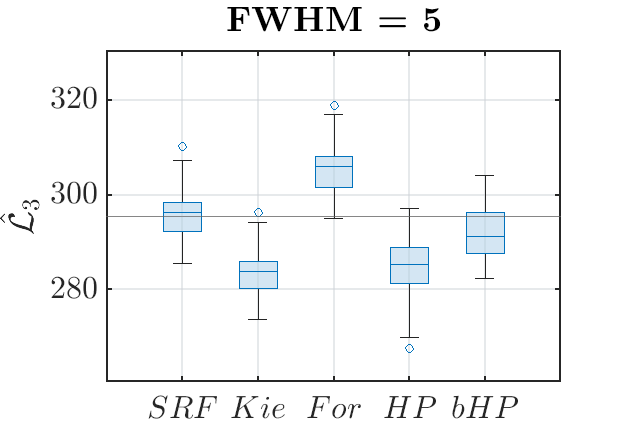}
\includegraphics[trim=0 0 40 0,clip,width=1.6in]{\figurepathh 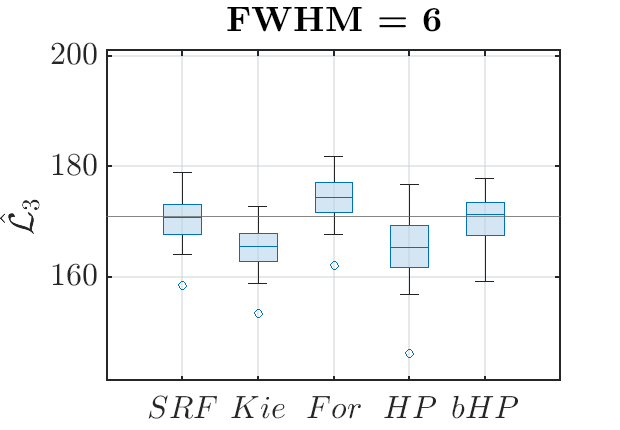}\\
	\caption{ 3D Simulation results of estimation of the LKCs of SuRFs derived
			 from the almost-stationary sphere example.
			  The results show the dependence of the LKC estimation on the
			  FWHM used in the smoothing kernel for sample size $N=100$.
	          \label{fig:D3L3FixedNSUBJ_stat}}
\end{figure}

\begin{figure}[h!]
\centering
\includegraphics[trim=0 0 40 0,clip,width=1.6in]{\figurepathh 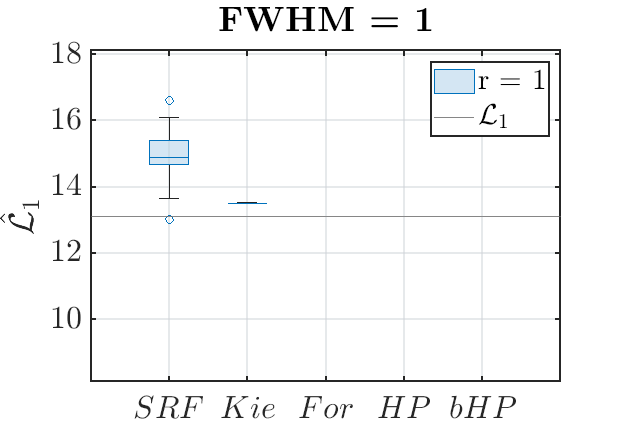}
\includegraphics[trim=0 0 40 0,clip,width=1.6in]{\figurepathh 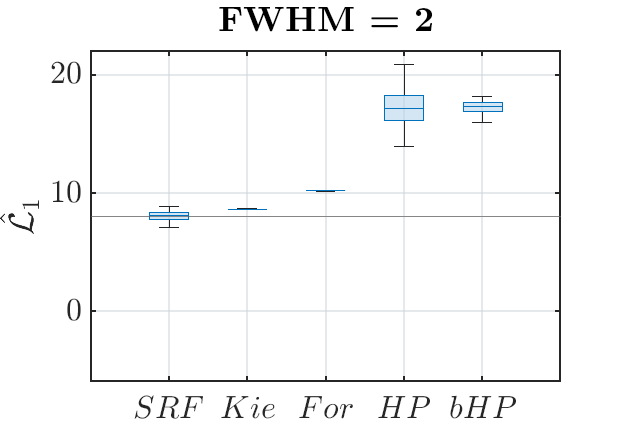}
\includegraphics[trim=0 0 40 0,clip,width=1.6in]{\figurepathh 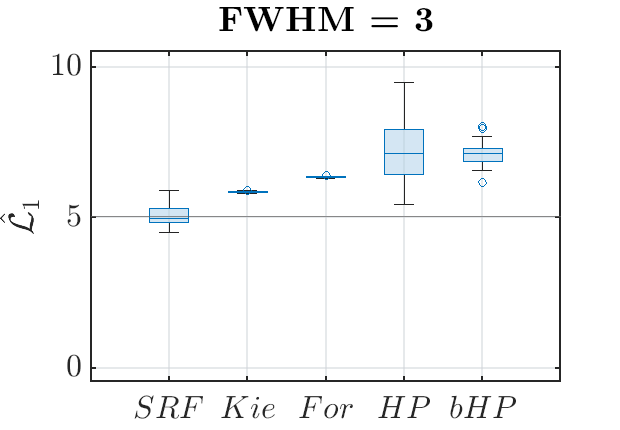}\\
\includegraphics[trim=0 0 40 0,clip,width=1.6in]{\figurepathh 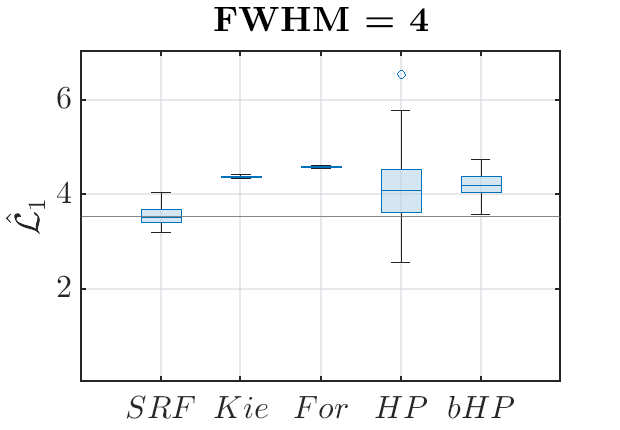}
\includegraphics[trim=0 0 40 0,clip,width=1.6in]{\figurepathh 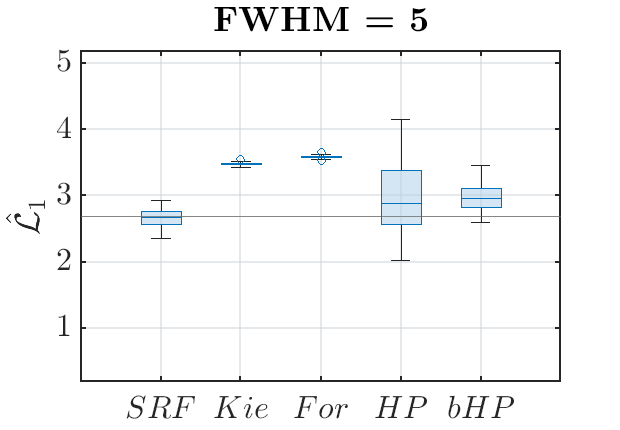}
\includegraphics[trim=0 0 40 0,clip,width=1.6in]{\figurepathh 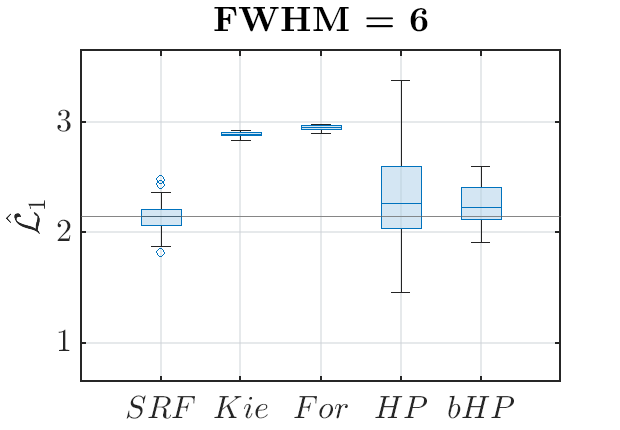}\\
	\caption{ 3D Simulation results of estimation of the LKCs of
			  SuRFs derived from the non-stationary sphere example.
			  The results show the dependence of the LKC estimation
			  on the FWHM used in the smoothing kernel for sample size
			  $N=100$. Note that the theoretical value for $\mathcal{L}_1$ is the
			  theoretical value for the locally stationary $\mathcal{L}_1$. The true value for
			  $\mathcal{L}_1$ is currently infeasible to obtain.
	          \label{fig:D3L1FixedNSUBJ_nonstat}}
\end{figure}

\begin{figure}[h!]
\centering
\includegraphics[trim=0 0 40 0,clip,width=1.6in]{\figurepathh 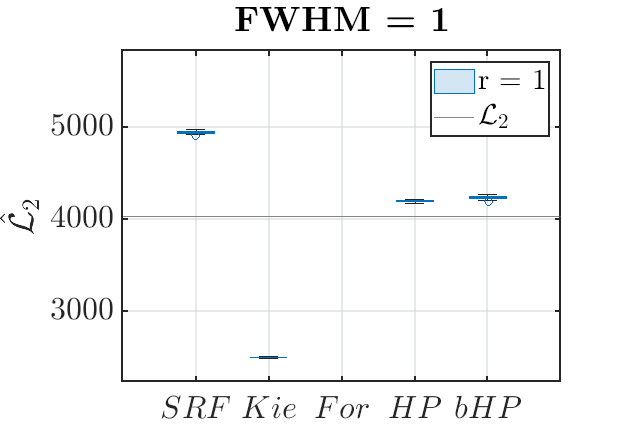}
\includegraphics[trim=0 0 40 0,clip,width=1.6in]{\figurepathh 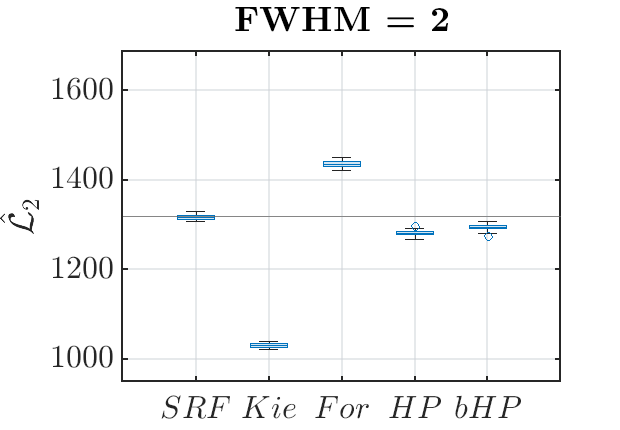}
\includegraphics[trim=0 0 40 0,clip,width=1.6in]{\figurepathh 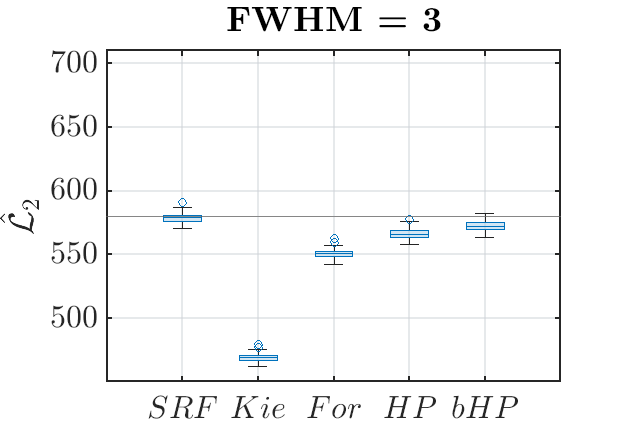}\\
\includegraphics[trim=0 0 40 0,clip,width=1.6in]{\figurepathh 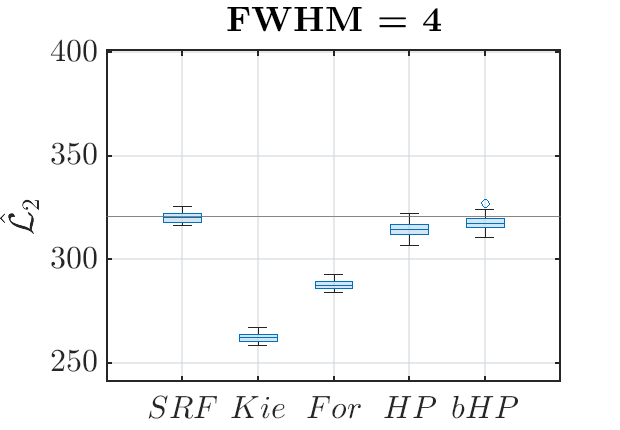}
\includegraphics[trim=0 0 40 0,clip,width=1.6in]{\figurepathh 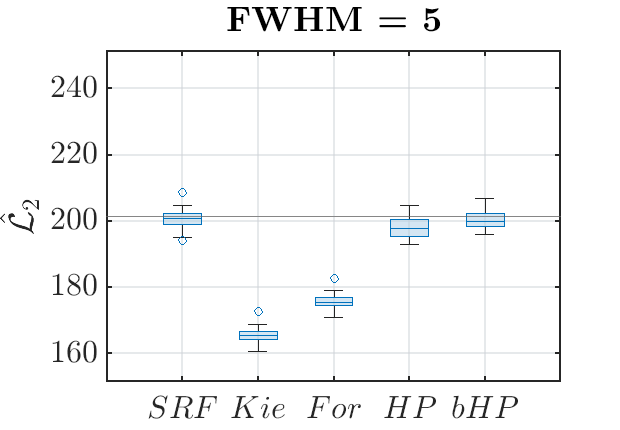}
\includegraphics[trim=0 0 40 0,clip,width=1.6in]{\figurepathh 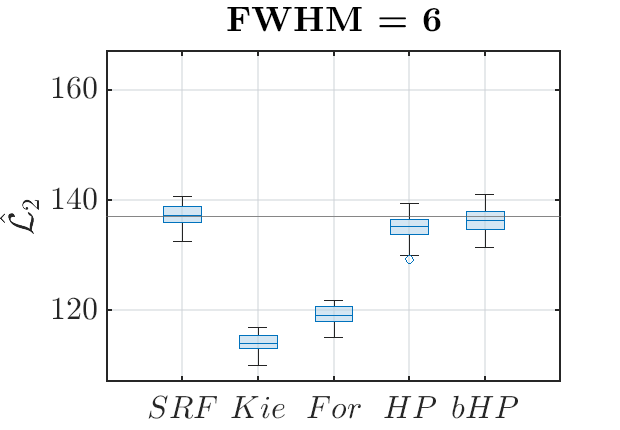}\\
	\caption{ 3D Simulation results of estimation of the LKCs of SuRFs
			  derived from the non-stationary sphere example.
			  The results show the dependence of the LKC estimation on
			  the FWHM used in the smoothing kernel for sample size $N=100$.
	          \label{fig:D3L2FixedNSUBJ_nonstat}}
\end{figure}
\newpage
\FloatBarrier

\begin{figure}[t!]
\begin{center}
\includegraphics[trim=0 0 40 0,clip,width=1.6in]{\figurepathh 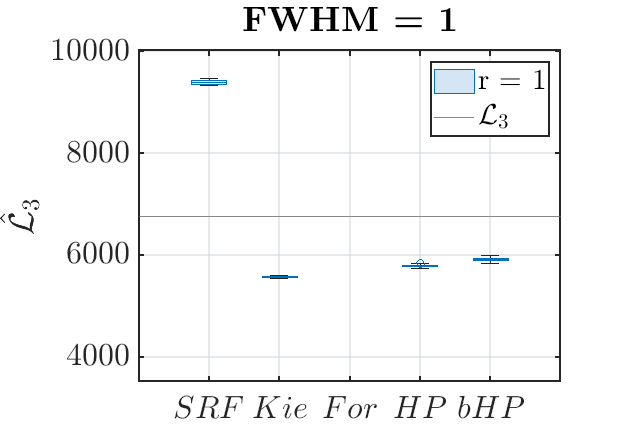}
\includegraphics[trim=0 0 40 0,clip,width=1.6in]{\figurepathh 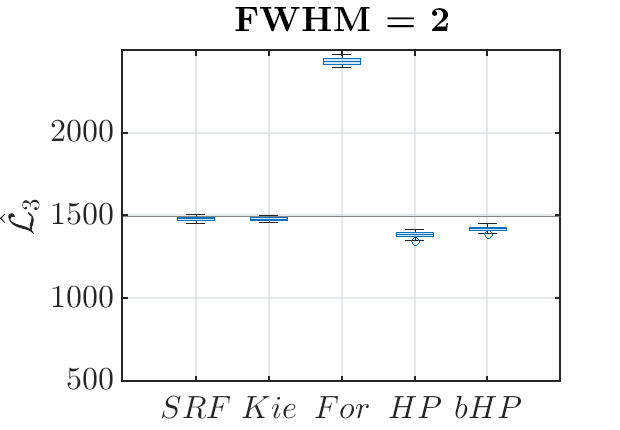}
\includegraphics[trim=0 0 40 0,clip,width=1.6in]{\figurepathh 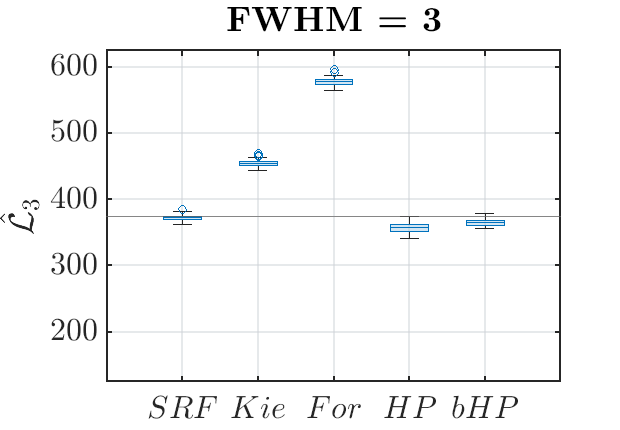}\\
\includegraphics[trim=0 0 40 0,clip,width=1.6in]{\figurepathh 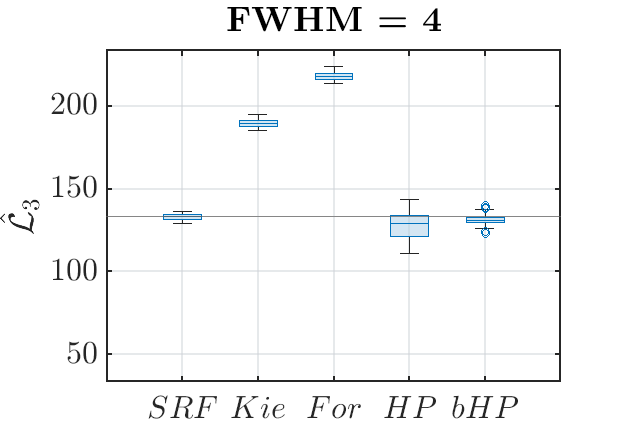}
\includegraphics[trim=0 0 40 0,clip,width=1.6in]{\figurepathh 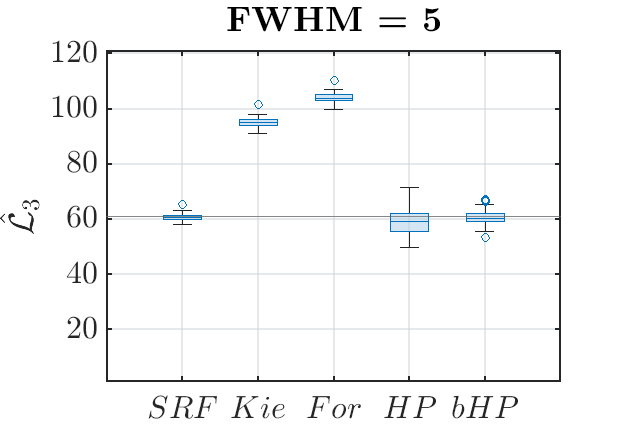}
\includegraphics[trim=0 0 40 0,clip,width=1.6in]{\figurepathh 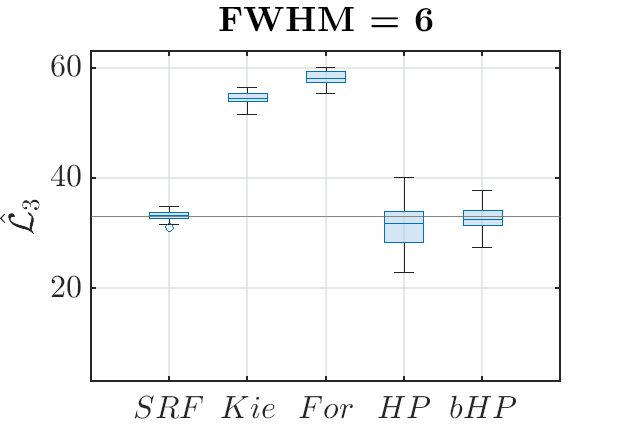}\\
	\caption{ 3D Simulation results of estimation of the LKCs
	          of SuRFs derived from the non-stationary sphere example.
			  The results show the dependence of the LKC estimation on
			  the FWHM used in the smoothing kernel for sample size $N=100$.
	          \label{fig:D3L3FixedNSUBJ_nonstat}}

\end{center}
\end{figure}

\subsection{Comparison of the LKCs of the Almost Stationary Box Example and its Stationary Counterpart}\label{App:LKCapprox}

Here we give an example of approximating the theoretical LKCs
(up to an arbitrary resolution increase to approximate the integrals) for the special case of a normalized SuRF which is derived from a random vector $(X(v):v \in \mathcal{V})$
satisfying $X(v_1)$ being independent of $X(v_2)$ for all $v_1\neq v_2 \in \cV$.
Because of the mutual independence of the entries of $X$ in this case we obtain
\begin{equation*}
        \Cov{ \tilde X(x)}{ \tilde X(x') } = 
    			  \sum_{v\in\cV} K( x, v )K( x', v )\,,~~~x,x' \in \cM_\cV\,.
\end{equation*}
and thus the Riemannian metric \eqref{eq:riem_SuRF} induced by the normalized SuRF of $X$ is
\begin{equation}\label{eq:riem_SuRFest_independent}
\begin{split}
        \Lambda_{dd'}(x) =
        	&\frac{ \sum_{v\in\cV} \partial_d K( x, v ) \partial_{d'} K( x, v ) }
        		 { \sum_{v\in\cV} K( x, v )^2 }\\
       	  &\quad- \frac{ 	\sum_{v\in\cV} K( x, v ) \partial_{d} K( x, v )
       	  		   			\sum_{v\in\cV} K( x, v ) \partial_{d'} K( x, v ) }
       	  		 { \left( \sum_{v\in\cV} K( x, v )^2 \right)^2 }\,.
\end{split}
\end{equation}
Replacing $\hat{\boldsymbol{\Lambda}}$ in \eqref{eq:LKCestim_VM} by $\boldsymbol{\Lambda}(x) \in \mathbb{R}^{D \times D}$
with $(d,d')$th entry given by \eqref{eq:riem_SuRFest_independent}, we can approximate the
LKCs of the normalized SuRF of $X$ up to arbitrary precision on a computer.
We implemented this in the function \emph{LKC\_wncfield\_theory()}
from the RFTtoolbox \cite{RFTtoolbox}.

We can use this function to demonstrate that the LKCs from the almost
stationary box example derived from the Gaussian Kernel \eqref{eq:smooth_kernel} for $f>0$,
are essentially the same as the LKCs of the zero-mean Gaussian field 
with covariance function
$\mathfrak{c}_f(x,y) = \exp\big( -4\log(2)\vert x-y \vert^2 / f^2 \big)$
provided that $f$ is larger than $\approx 2.5$. We denote the correlation function of the
almost stationary box example with $\tilde{\mathfrak{c}}_f(x,y)$.

We compare the resulting LKCs to the LKCs of the random field with covariance function $\mathfrak{c}_f(x,y)$ over a convex
domain $S\subset\mathbb{R}^D$, $D\in\{1,2,3\}$, which are given by
\begin{equation*}
	\begin{split}
	D=1:&\quad \cL_1 = \frac{\vol(S)}{\sqrt{4\log(2)}f}\\
	D=2:&\quad \cL_1 = \frac{\vol(\partial S)}{2\sqrt{4\log(2)}f}\,,\quad
				\cL_2 = \frac{\vol(S)}{4\log(2)f^2}\\
	D=3:&\quad \cL_1 = \frac{2{\rm Diameter}(S)}{\sqrt{4\log(2)}f}\,,\quad
				\cL_2 = \frac{\vol(\partial S)}{2\cdot 4\log(2) f^2}\,,\quad  \cL_3 = \frac{\vol(S)}{f^3 \big(4\log(2)\big)^{3/2}}\,,
	\end{split}
\end{equation*}
compare for example \cite{Telschow2020} and in particular Table 2 from \cite{Worsley2004} for $D=3$.
To approximate the theoretical LKCs of the almost stationary field we use an added resolution of $11$ for $D\in \{1,2\}$
and for the sake of computation time only an added resolution of $7$ for $D=3$. The results
are presented in Tables \ref{Tab:LKCapproxD1}-\ref{Tab:LKCapproxD3}.

\newpage
\begin{table}\setlength{\tabcolsep}{4pt}
\begin{center}
\begin{tabular}{ | c | c | c c c c c c c | }\hline
 $D=1$ & $f$ & 1 & 1.5 & 2 & 2.5 & 3 & 3.5 & 4 \\\hline
 \multirow{ 2}{*}{$\cL_1$} & $\tilde{\mathfrak{c}}_f$ & 146.52 &  110.41 &   83.25 &   66.60 & 55.50 & 47.57 &   41.63  \\ 
  & $\mathfrak{c}_f$ & 166.51 & 111.01 &  83.26 & 66.60 & 55.50 & 47.57 & 41.63  \\ 
 \hline
\end{tabular}
\end{center}
	\caption{Comparison of the theoretical LKCs for different smoothing bandwidths
	between the almost stationary box example for $D=1$, which has
	the covariance function $\tilde{\mathfrak{c}}_f$, and the zero-mean stationary
	Gaussian field	having covariance $\tilde{\mathfrak{c}}_f$ over the domains specified in
	Section \ref{scn:setup} of the main manuscript.}
\label{Tab:LKCapproxD1}
\end{table}

\begin{table}\setlength{\tabcolsep}{4pt}
\begin{center}
\begin{tabular}{ | c | c | c c c c c c c | }\hline
 $D=2$ & $f$ & 1 & 1.5 & 2 & 2.5 & 3 & 3.5 & 4 \\\hline
 \multirow{ 2}{*}{$\cL_1$} & $\tilde{\mathfrak{c}}_f$ & 58.61 &  44.16 & 33.30 & 26.64 & 22.20 & 19.03 & 16.65 \\ 
  & $\mathfrak{c}_f$ & 66.60 & 44.40 & 33.30 & 26.64 & 22.20 & 19.03 & 16.65\\ 
 \hline
 \multirow{ 2}{*}{$\cL_2$} & $\tilde{\mathfrak{c}}_f$ & 858.72 & 487.59 & 277.24 & 177.45 & 123.23 &  90.53 & 69.31\\
  & $\mathfrak{c}_f$ & 1109.00  & 492.90 & 277.26 & 177.45 & 123.23 & 90.53 & 69.31\\  
 \hline
\end{tabular}
\end{center}
	\caption{Comparison of the theoretical LKCs for different smoothing bandwidths
	between the almost stationary box example for $D=2$, which has
	the covariance function $\tilde{\mathfrak{c}}_f$, and the zero-mean stationary
	Gaussian field	having covariance $\tilde{\mathfrak{c}}_f$ over the domains specified in
	Section \ref{scn:setup} of the main manuscript.}
\label{Tab:LKCapproxD2}
\end{table}

\begin{table}\setlength{\tabcolsep}{4pt}
\begin{center}
\begin{tabular}{ | c | c | c c c c c c c | }\hline
 $D=3$ & $f$ & 1 & 1.5 & 2 & 2.5 & 3 & 3.5 & 4 \\\hline
 \multirow{ 2}{*}{$\cL_1$} & $\tilde{\mathfrak{c}}_f$ & 87.91 & 66.24 & 49.95 &  39.96 & 33.30 & 28.54 & 24.98  \\ 
  & $\mathfrak{c}_f$ & 99.91 & 66.60 & 49.95 & 39.96 & 33.30 & 28.54 & 24.98  \\ 
 \hline
 \multirow{ 2}{*}{$\cL_2$} & $\tilde{\mathfrak{c}}_f$ & 2576.13 &  1462.77 & 831.72 & 532.34 & 369.68 & 271.60 & 207.94 \\
  & $\mathfrak{c}_f$ & 3327.11 & 1478.71 & 831.78 & 532.34 & 369.68 & 271.6 & 207.94 \\  
 \hline
 \multirow{ 2}{*}{$\cL_3$} & $\tilde{\mathfrak{c}}_f$ & 25163.37 & 10766.66 & 4616.20 & 2363.73 & 1367.90 & 861.42 & 577.08
\\
  & $\mathfrak{c}_f$ & 36933.30 & 10943.20 & 4616.66 & 2363.73 & 1367.90 & 861.42 & 577.08
\\\hline
\end{tabular}
\end{center}
	\caption{Comparison of the theoretical LKCs for different smoothing bandwidths
	between the almost stationary box example for $D=3$, which has
	the covariance function $\tilde{\mathfrak{c}}_f$, and the zero-mean stationary
	Gaussian field	having covariance $\tilde{\mathfrak{c}}_f$ over the domains specified in
	Section \ref{scn:setup} of the main manuscript.}
\label{Tab:LKCapproxD3}
\end{table}

\newpage
\FloatBarrier

\section{LKCs induced by a normalized field}\label{appendix:LKCS}

\subsection{Induced Riemannian metric of a normalized random field}\label{app:InducedRiemann}

The most important quantity for the Gaussian Kinematic formula is the Riemannian metric induced
by a random field. It is the backbone of the GKF for Gaussian related fields developed in
\cite{Taylor2006}. 

In this section $f$ denotes a zero-mean random field with almost surely continuously,
differentiable sample paths over the domain $\overline{\cM}$
and we call the random field $f / \sqrt{ \Var[f] }$ the normalized field
derived from $f$.
Recall that a  vector field $V \in \cT \overline{\cM}$ can be interpreted as a
first order differential operator, i.e., for all $h\in C^1(\overline{\cM})$ the
expression $Vh: \overline{\cM} \rightarrow \mathbb{R}$,
$s \mapsto (Vh)(s)$ defines a
function in $C(\overline{\cM})$. This can be made precise by taking a local
chart $\varphi: \overline{\cM} \supseteq \mathcal{U} \rightarrow \varphi(\mathcal{U}) = \cW \subset\mR^D$ of
$\overline{\cM}$ with inverse $\psi = \varphi^{-1}$. The vector field $V$
in local coordinates $(\cU, \varphi)$ can be written as
\begin{equation}
	V = \sum_{d=1}^D V_d \partial_d\,,~~V_1,\ldots, V_D\in C( \cW )
\end{equation}
where $\partial_d$ is the vector field on $\cW$ defined
by
\begin{equation}
	\partial_d(h\circ\psi)(x) = \frac{\partial h\circ\psi }{\partial x_d}(x) = \partial_{d}^x h(\psi (x))\,,~~ d=1,...,D\,
\end{equation}
where $h \in C( \mathcal{U} )$ and $x\in\varphi(\mathcal{U})$.
\begin{definition}
    Let $V,W\in \cT \overline{\cM}$ be differentiable vector fields and $f$ unit-variance random field over a manifold $\overline{\cM}$ with almost
    surely differentiable sample paths. Then
	\begin{equation}\label{eq:InducedMetricI}
		\bar \Lambda_s(V,W) = \Cov{ V\!f(s)}{ W\!f(s) }\,,~s\in \overline{\cM}\,,
	\end{equation}
 	is called the \emph{induced Riemannian metric of $f$} on $\overline{\cM}$.
 	In local coordinates at a point $z\in\cW$ it is represented by
	\begin{equation}
 	\begin{split}
 		\bar \Lambda_z(V,W)
 				 &= \sum_{d, d'=1}^D V_d W_{d'} \partial_{d}^x\partial_{d'}^y\Cov{ f\big(\psi(x)\big)}{ f\big(\psi(y)\big) }  \Big\vert_{(x,y)=(z,z)}\\
 		 				 &= \sum_{d, d'=1}^D V_d W_{d'} \Cov{ \partial_{d}^x f\big(\psi(x)\big) \big\vert_{x=z}}{ \partial_{d'}^y f\big(\psi(y)\big) \big\vert_{y=z} }
 		\,.
 	\end{split}
 	\end{equation}
 	Here $V_1,\ldots,V_D, W_1,\ldots, W_D \in C(\cW)$ are the coordinate coefficients representing the vector
 	fields $V$ and $W$.
\end{definition}
\begin{remark}
	Assumption \textbf{(G2)} from the main manuscript ensures that this is a
	Riemannian metric on $\overline{\cM}$.
\end{remark}

Since the vector fields $\partial_{d}$, $d=1,\ldots, D$,  form a basis of $\mathcal{T}\mathcal{U}$ the Riemannian metric induced by
the random field $f$ can be written in local coordinates as the $D\times D$ matrix having components
\begin{equation}
	 \bar \Lambda_{dd'}( z ) = \Cov{ \partial_{d}^x f\big(\psi(x)\big)\big
\vert_{x=z} }{ \partial_{d'}^y f\big(\psi(y)\big)\big\vert_{y=z} }
\end{equation}
For simplicity in what follows, we establish the following alternative notations suppressing the dependencies on $f$:
\begin{equation}\label{eq:gen_inner}
     		\Cov{ \partial_{d}^x\!f(x) }{ f(y) } = \ska{ \partial_{d}^x }{ 1_y }\,,
      ~ ~ ~ \Cov{ \partial_{d}^x\!f(x) }{ \partial_{d'}^y\!f(y) } = \ska{ \partial_{d}^x }{ \partial_{d'}^y }\,.
\end{equation}
Similarly, $\Vert 1_x \Vert^2 = \Var\big[ f(x) \big]$ and $\Vert \partial_{d}^x \Vert^2 = \Var\big[ \partial_{d}^xf(x) \big]$.
 \begin{theorem}\label{thm:RiemannianMetric}
    The Riemannian metric on $\overline{\cM}$ induced by a normalized random field is given in local coordinates by
    \begin{equation}
        \bar \Lambda_{dd'}( x ) = \frac{ \ska{ \partial_{d}^x }{ \partial_{d'}^x } }{ \Vert 1_x \Vert }
       - \frac{  \ska{ \partial_{d}^x }{ 1_x } \ska{ \partial_{d'}^x }{ 1_x } }{ \Vert 1_x \Vert^2 }\,.
    \end{equation}
 \end{theorem}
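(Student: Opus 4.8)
The plan is to compute the induced metric directly from its definition \eqref{eq:InducedMetricI}, turning the problem into an exercise with the quotient rule once we observe that the normalizing factor is deterministic. Write $\sigma(x)=\sqrt{\Var[f(x)]}=\Vert 1_x\Vert$, so the normalized field is $g=f/\sigma$. Since $\sigma$ and all of its partial derivatives are nonrandom functions of the local coordinate $x$, the product rule gives
\begin{equation*}
\partial_d^x g(x)=\frac{\partial_d^x f(x)}{\sigma(x)}-\frac{f(x)\,\partial_d^x\sigma(x)}{\sigma(x)^2}\,,
\end{equation*}
so that $\partial_d^x g(x)$ is an affine combination, with deterministic coefficients, of the two random variables $\partial_d^x f(x)$ and $f(x)$. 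The induced metric of $g$ is then $\bar\Lambda_{dd'}(x)=\Cov{\partial_d^x g(x)}{\partial_{d'}^x g(x)}$, and the whole computation reduces to bilinearity of the covariance.

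The only step that is not purely formal is expressing $\partial_d^x\sigma(x)$ through the inner products of \eqref{eq:gen_inner}. Here I would differentiate the identity $\sigma(x)^2=r(x,x)$, where $r(x,y)=\Cov{f(x)}{f(y)}$, keeping the two arguments separate and then restricting to the diagonal $y=x$. Because $r$ is symmetric in its arguments, the two first-order partial derivatives agree on the diagonal, giving $2\sigma(x)\,\partial_d^x\sigma(x)=2\,(\partial_d^x r)(x,y)\big\vert_{y=x}=2\ska{\partial_d^x}{1_x}$, whence $\partial_d^x\sigma(x)=\ska{\partial_d^x}{1_x}/\sigma(x)$. Substituting this back eliminates all $\sigma$-derivatives in favor of the inner products.

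With both ingredients in hand I would expand $\bar\Lambda_{dd'}(x)$ bilinearly into four terms, pulling the deterministic factors outside each covariance and using $\Cov{\partial_d^x f(x)}{\partial_{d'}^x f(x)}=\ska{\partial_d^x}{\partial_{d'}^x}$, $\Cov{\partial_d^x f(x)}{f(x)}=\ska{\partial_d^x}{1_x}$, and $\Var[f(x)]=\Vert 1_x\Vert^2$; the two cross terms and the variance term then collapse into a single product term, yielding the stated formula, which is the field-level analogue of \eqref{eq:riem_SuRF}. The main obstacle is not algebraic but a matter of rigor: one must justify interchanging the coordinate derivatives with the expectations defining the covariances, i.e. that $g$ (equivalently $f$) is mean-square differentiable and that its mean-square derivative agrees almost surely with the pathwise derivative. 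This is exactly what the assumed almost sure continuous differentiability of the sample paths, together with finite second moments, secures; everything after that interchange is elementary.
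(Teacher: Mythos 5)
Your proof is correct and is essentially the paper's own argument: the paper likewise begins by interchanging derivatives and expectation and then applies the quotient rule, differentiating the normalized covariance $\Cov{f_x}{f_y}/\big(\Vert 1_x \Vert\,\Vert 1_y \Vert\big)$ once in each argument and evaluating at $x=y=z$, which is term-for-term the same four-term expansion you obtain by differentiating $g=f/\sigma$ pathwise and using bilinearity of the covariance (your identity $\partial_d^x\sigma = \ska{\partial_d^x}{1_x}/\sigma$ appears there implicitly when $\Vert 1_x \Vert^{-1}$ is differentiated). One small point: your computation, like the paper's own proof and like \eqref{eq:riem_SuRF}, produces denominators $\Vert 1_x \Vert^2$ and $\Vert 1_x \Vert^4$, so the powers $\Vert 1_x \Vert$ and $\Vert 1_x \Vert^2$ appearing in the printed statement of Theorem \ref{thm:RiemannianMetric} are a typo in the statement rather than a flaw in your argument.
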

 
The computation of Lipshitz-Killing curvatures requires the shape operator and the
Riemannian curvature. Therefore, the next theorem computes the Christoffel symbols
of the first kind, which can be used to express these quantities in local coordinates.
This is because most fundamental geometric quantities such as the
covariant derivative and the Riemannian curvature are functions of the Christoffel
symbols and their derivatives.
\begin{theorem}\label{thm:Christoffel}
    The Christoffel symbols $\bar\Gamma_{kdd'}$, $k,d,d'\in\{1,\ldots,D\}$ of the first kind
    of the induced Riemannian metric by
    a normalized field are given by
    \begin{equation}\label{eq:ChristoffelI}
        \begin{aligned}
                \bar\Gamma_{kdd'}(x)
                &=   \frac{ \ska{ \partial_{k}^x \partial_{d}^x }{ \partial_{d'}^x } }{ \Vert 1_x \Vert  }
                   - \frac{ \ska{ \partial_{k}^x \partial_{d}^x }{ 1_x } \ska{ \partial_{d'}^x }{ 1_x } }{ \Vert 1_x \Vert^2 }
                   - \frac{ \ska{ \partial_{k}^x }{ \partial_{d'}^x } \ska{ \partial_{d}^x }{ 1_x } }{ \Vert 1_x \Vert^2 }
                   \\
                &\quad - \frac{ \ska{ \partial_{k}^x }{ 1_x } \ska{ \partial_{d}^x }{ \partial_{d'}^x } }{ \Vert 1_x \Vert^2 }
                 + 2\frac{ \ska{ \partial_{k}^x }{ 1_x } \ska{ \partial_{d}^x }{ f_x } \ska{ \partial_{d'}^x }{ 1_x } }{ \Vert 1_x \Vert^3 }\,.
              \end{aligned}
    \end{equation}
\end{theorem}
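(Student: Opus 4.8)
The plan is to collapse the three-term definition of the Christoffel symbols of the first kind into a single covariance before touching the explicit normalization. Writing $g = f/\Vert 1_\cdot\Vert$ for the normalized field, its induced metric is $\bar\Lambda_{dd'}(x) = \Cov{\partial_d^x g(x)}{\partial_{d'}^x g(x)}$ by \eqref{eq:InducedMetricI}. Starting from
\[
	\bar\Gamma_{kdd'} = \tfrac12\big( \partial_k\bar\Lambda_{dd'} + \partial_d\bar\Lambda_{kd'} - \partial_{d'}\bar\Lambda_{kd} \big),
\]
I would differentiate each coefficient under the covariance, which is permissible because $g$ has mean-square second derivatives in the GKF setting; for instance $\partial_k\bar\Lambda_{dd'} = \Cov{\partial_k^x\partial_d^x g}{\partial_{d'}^x g} + \Cov{\partial_d^x g}{\partial_k^x\partial_{d'}^x g}$. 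Substituting the three expansions and using $\partial_k\partial_d = \partial_d\partial_k$ together with the symmetry of the covariance, the four mixed terms cancel in two pairs while the two pure second-derivative terms coincide and add, leaving the compact identity
\[
	\bar\Gamma_{kdd'}(x) = \Cov{\partial_k^x\partial_d^x g(x)}{\partial_{d'}^x g(x)}.
\]
Conceptually this is just Gauss's formula: $x\mapsto g(x)$ lands in the unit sphere of $L^2(\Omega)$, the induced metric is the pullback of the flat ambient inner product, and the Levi-Civita connection is the tangential projection of the ambient second derivative $\partial_k^x\partial_d^x g$; pairing against the tangent element $\partial_{d'}^x g$ removes the projection.

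It then remains to re-express this single covariance through derivatives of $f$. I would apply the quotient rule to $g = f/\Vert 1_\cdot\Vert$ to write $\partial_{d'}^x g$ and $\partial_k^x\partial_d^x g$ as combinations of $f$, its first and second partials, and the first and second derivatives of the scalar factor $\Vert 1_x\Vert$. These last are eliminated by differentiating the identity $\Vert 1_x\Vert^2 = \ska{1_x}{1_x}$, which gives $\partial_d^x\Vert 1_x\Vert = \ska{\partial_d^x}{1_x}/\Vert 1_x\Vert$ and, after a second differentiation, an analogous expression for $\partial_k^x\partial_d^x\Vert 1_x\Vert$. Substituting into $\Cov{\partial_k^x\partial_d^x g}{\partial_{d'}^x g}$ and expanding by bilinearity, every resulting product of derivatives of $f$ is rewritten via the abbreviations \eqref{eq:gen_inner} --- namely $\ska{\partial_k^x\partial_d^x}{\partial_{d'}^x}$, $\ska{\partial_k^x\partial_d^x}{1_x}$, $\ska{\partial_d^x}{\partial_{d'}^x}$, $\ska{\partial_d^x}{1_x}$ and their relabelings --- and the terms are then grouped by their power of $\Vert 1_x\Vert$ to produce the five summands of \eqref{eq:ChristoffelI}.

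All the conceptual content sits in the first step; the main labor, and the most error-prone part, is the algebraic bookkeeping of the second. In particular the contribution of $\partial_k^x\partial_d^x\Vert 1_x\Vert$ is what generates the term carrying $\ska{\partial_k^x\partial_d^x}{1_x}$ and the highest-order denominator, and it spawns the largest number of intermediate terms, so one must check carefully that the cross terms collapse to exactly five. Two aids keep this organized: the constraint $\ska{\partial_d^x g}{g} = 0$ (from $\Var[g(x)] = 1$), which forces several combinations to vanish, and the specialization to the SuRF case, where the outcome must match the explicit formula \eqref{eq:Christoffel_lin_smoothers} of Proposition \ref{cor:geometry-prop} --- a convenient end-to-end sanity check. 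The only genuinely analytic point is the justification for differentiating under the covariance, which enters exactly as in the proof of Theorem \ref{thm:RiemannianMetric}.
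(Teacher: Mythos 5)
Your proposal is correct and ends in exactly the computation the paper performs, but it reaches it by a somewhat different and self-contained route, so a comparison is worthwhile. The paper short-circuits your entire first step by citing the identity $\bar\Gamma_{kdd'}(z) = \partial^3 \tilde\fc(x,y)/\partial x_{k}\partial x_{d}\partial y_{d'}\big\vert_{(x,y)=(z,z)}$ for the correlation function $\tilde\fc$ of the normalized field \cite[eq.~12.2.17]{Adler2007}, and then obtains \eqref{eq:ChristoffelI} by differentiating its earlier display \eqref{eq:derivation_riem_metric} once more in $x$, keeping the two arguments $x,y$ distinct until the final evaluation on the diagonal. That cited identity is precisely your compact formula $\bar\Gamma_{kdd'} = \Cov{\partial_k^x\partial_d^x g(x)}{\partial_{d'}^x g(x)}$, since derivatives and covariances interchange; what you add is a derivation of it from the Koszul coordinate formula (your cancellation pattern is right), replacing the paper's citation. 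Your second step is also organized more efficiently: working on the diagonal and invoking $\ska{\partial_{d}^x g}{g} = 0$ kills whole blocks of terms at once, whereas the paper's bivariate computation generates the corresponding terms explicitly and cancels them by hand. One correction to your bookkeeping remarks, though it does not affect the validity of the plan: the term carrying $\ska{\partial_k^x\partial_d^x}{1_x}$ does \emph{not} come from the second derivative $\partial_k^x\partial_d^x \Vert 1_x \Vert$ of the normalization --- that contribution multiplies $\mathbb{E}\big[f(x)\,\partial_{d'}^x g(x)\big] = 0$ and vanishes entirely --- but rather from pairing $\partial_k^x\partial_d^x f$ with the normalization-derivative part of $\partial_{d'}^x g$. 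Carrying out your expansion yields the five terms with denominators $\Vert 1_x\Vert^2$, $\Vert 1_x\Vert^4$, $\Vert 1_x\Vert^6$, in agreement with the paper's proof and with \eqref{eq:Christoffel_lin_smoothers}; the powers $\Vert 1_x\Vert$, $\Vert 1_x\Vert^2$, $\Vert 1_x\Vert^3$ and the stray $f_x$ in the theorem's printed display are typographical slips in the paper, not a discrepancy in your argument.
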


In terms of Christoffel symbols the covariant derivative $\bar \nabla$ on $\overline{\cM}$ is expressed in the
local chart $\varphi$ by
\begin{equation*}
	\bar \nabla_{\partial_d} \partial_{d'} = \sum_{h=1}^D \sum_{h'=1}^D \bar\Lambda^{hh'}\bar\Gamma_{dd'h'} \partial_h = \bar\Lambda^{-1}(\bar\Gamma_{dd'1}, ..., \bar\Gamma_{dd'D})^T\,.
\end{equation*}
Here $\bar\Lambda^{dd'}$ denotes the $(d, d')$-entry of the inverse of $\bar\Lambda$ in the coordinates $(\cU, \varphi)$.
This formula can be used to extend the covariant derivative to any vector field, i.e.,
\begin{equation}\label{eq:covariantDeriv}
	\bar \nabla_{V} W
	= \sum_{d=1}^D\sum_{d'=1}^D V_d \bar\nabla_{\partial_d} W_{d'}\partial_{d'}
	= \sum_{d=1}^D\sum_{d'=1}^D V_d \Big( \partial_d(W_{d'})\partial_{d'}
						+   W_{d'}\bar\nabla_{\partial_d}\partial_{d'} \Big)\,.
\end{equation}
In particular, if all $\Gamma_{dd'd''} = 0$, $d,d',d''\in \{1,\ldots,D\}$, then
\begin{equation}\label{eq:covariantDerivStat}
	\bar \nabla_{V} W
	= \sum_{d=1}^D\sum_{d'=1}^D V_d \partial_d(W_{d'})\partial_{d'}\,.
\end{equation}
This happens, if $\bar\Lambda(s) = \bar\Lambda(s')$ for all $s,s'\in \overline{\cM}$.

The last geometric quantity required to compute LKCs is the Riemannian curvature tensor $\bar R$.
The curvature tensor has in local coordinates the entries
\begin{equation}\label{eq:CurvatureTensor}
	\bar R_{lkdd'}^\varphi = \partial_l \bar\Gamma_{kdd'} - \partial_k \Gamma_{ldd'}
					+ \sum_{m,n = 1}^D \big( \bar\Gamma_{ldm}\bar\Lambda^{mn}\bar\Gamma_{kd'n} - \bar\Gamma_{kdm}\bar\Lambda^{mn}\bar\Gamma_{ld'n}  \big)
\end{equation}
for $i,j,k,l\in\{1,\ldots, D\}$, compare \cite[eq. (7.7.4)]{Adler2007}.
By now, we derived almost all quantities to state the Riemannian curvature tensor in local coordinates.
The missing quantities are the derivatives of the Christoffel symbols which can be found
in the next Lemma.
\begin{lemma}\label{lem:DiffChristoffel}
	The difference of the derivatives of Christoffel symbols in the Riemannian curvature tensor \eqref{eq:CurvatureTensor}
	of a normalized field can be expressed as:
	\begin{equation*}
	\begin{split}
		\partial_l &\bar\Gamma_{kdd'}(x) - \partial_k \bar\Gamma_{ldd'}(x)\\
	&= \frac{ \ska{ \partial_{k}^x \partial_{d}^x }{ \partial_{l}^x \partial_{d'}^x } - \ska{ \partial_{l}^x \partial_{d}^x }{ \partial_{k}^x \partial_{d'}^x } }{\Vert 1_x \Vert}\\
	   &~ ~ ~ ~ ~- \Vert 1_x \Vert^{-2}\Big(~ ~ ~
	   					\ska{ \partial_{l}^x }{ 1_x }\ska{ \partial_{k}^x \partial_{d}^x }{ \partial_{d'}^x }
	                  - \ska{ \partial_{k}^x }{ 1_x }\ska{ \partial_{l}^x \partial_{d}^x }{ \partial_{d'}^x }\\
&~ ~ ~ ~ ~ ~ ~ ~ ~ ~ ~ ~ ~ ~ ~ ~ ~ ~ ~ ~~ ~ + \ska{ \partial_{k}^x \partial_{d}^x }{ \partial_{l}^x }\ska{ \partial_{d'}^x }{ 1_x }
	                  						- \ska{ \partial_{l}^x \partial_{d}^x }{ \partial_{k}^x }\ska{ \partial_{d'}^x }{ 1_x }\\
&~ ~ ~ ~ ~ ~ ~ ~ ~ ~ ~ ~ ~ ~ ~ ~ ~ ~ ~ ~~ ~ + \ska{ \partial_{k}^x \partial_{d}^x }{ 1_x }\ska{ \partial_{l}^x\partial_{d'}^x }{ 1_x }
	                  						- \ska{ \partial_{l}^x \partial_{d}^x }{ 1_x }\ska{ \partial_{k}^x\partial_{d'}^x }{ 1_x }\\
&~ ~ ~ ~ ~ ~ ~ ~ ~ ~ ~ ~ ~ ~ ~ ~ ~ ~ ~ ~~ ~ + \ska{ \partial_{k}^x }{ 1_x }\ska{ \partial_{l}^x\partial_{d'}^x }{ \partial_{d}^x }
	                  						- \ska{ \partial_{l}^x }{ 1_x }\ska{ \partial_{k}^x\partial_{d'}^x }{ \partial_{d}^x  }\\
&~ ~ ~ ~ ~ ~ ~ ~ ~ ~ ~ ~ ~ ~ ~ ~ ~ ~ ~ ~~ ~ + \ska{ \partial_{l}^x }{ \partial_{d}^x }\ska{ \partial_{k}^x }{ \partial_{d'}^x }
	                  						- \ska{ \partial_{k}^x }{ \partial_{d}^x }\ska{ \partial_{l}^x }{ \partial_{d'}^x  }\\
&~ ~ ~ ~ ~ ~ ~ ~ ~ ~ ~ ~ ~ ~ ~ ~ ~ ~ ~ ~~ ~ + \ska{ \partial_{l}^x }{ \partial_{d}^x }\ska{ \partial_{k}^x }{ \partial_{d'}^x }
	                  						- \ska{ \partial_{k}^x }{ \partial_{d}^x }\ska{ \partial_{l}^x }{ \partial_{d'}^x  }\\
&~ ~ ~ ~ ~ ~ ~ ~ ~ ~ ~ ~ ~ ~ ~ ~ ~ ~ ~ ~~ ~ + \ska{ \partial_{k}^x }{ \partial_{l}^x\partial_{d'}^x }\ska{ \partial_{d}^x }{ 1_x }
	                  						- \ska{ \partial_{l}^x }{ \partial_{k}^x\partial_{d'}^x }\ska{ \partial_{d}^x }{ 1_x }
	                     \Big)\\
	   &~ ~ ~ ~ ~+ 2\Vert 1_x \Vert^{-3}\Big(~ ~ ~
	   					\ska{ \partial_{l}^x }{ 1_x }\ska{ \partial_{k}^x\partial_{d}^x }{ 1_x }\ska{ \partial_{d'}^x }{ 1_x }
	                  - \ska{ \partial_{k}^x }{ 1_x }\ska{ \partial_{l}^x\partial_{d}^x }{ 1_x }\ska{ \partial_{d'}^x }{ 1_x }\\
&~ ~ ~ ~ ~ ~ ~ ~ ~ ~ ~ ~ ~ ~ ~ ~ ~ ~ ~ ~~ ~
					  + \ska{ \partial_{k}^x }{ 1_x }\ska{ \partial_{d}^x }{ \partial_{l}^x }\ska{ \partial_{d'}^x }{ 1_x }
	                  - \ska{ \partial_{l}^x }{ 1_x }\ska{ \partial_{d}^x }{ \partial_{k}^x }\ska{ \partial_{d'}^x }{ 1_x }\\
&~ ~ ~ ~ ~ ~ ~ ~ ~ ~ ~ ~ ~ ~ ~ ~ ~ ~ ~ ~~ ~
					  + \ska{ \partial_{k}^x }{ 1_x }\ska{ \partial_{d}^x }{ 1_x }\ska{ \partial_{l}^x\partial_{d'}^x }{ 1_x }
	                  - \ska{ \partial_{l}^x }{ 1_x }\ska{ \partial_{d}^x }{  }\ska{ \partial_{k}^x\partial_{d'}^x }{ 1_x }\\
&~ ~ ~ ~ ~ ~ ~ ~ ~ ~ ~ ~ ~ ~ ~ ~ ~ ~ ~ ~~ ~
					  + \ska{ \partial_{l}^x }{ 1_x }\ska{ \partial_{d}^x }{ 1_x }\ska{ \partial_{k}^x }{ \partial_{d'}^x }
	                  - \ska{ \partial_{k}^x }{ 1_x }\ska{ \partial_{d}^x }{ 1_x }\ska{ \partial_{l}^x }{ \partial_{d'}^x }
	    \Big)
	\end{split}
	\end{equation*}
\end{lemma}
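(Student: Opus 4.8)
The plan is to take the closed form for $\bar\Gamma_{kdd'}$ supplied by Theorem \ref{thm:Christoffel}, differentiate it with respect to $x_l$, and then exploit the antisymmetry of the combination $\partial_l\bar\Gamma_{kdd'} - \partial_k\bar\Gamma_{ldd'}$ to discard everything that is invariant under the interchange $k\leftrightarrow l$. The only analytic input is a single differentiation rule, which I would establish first. Every factor appearing in $\bar\Gamma_{kdd'}$ is the diagonal restriction $g(x,x)$ of a bivariate covariance $g(x,y)$ carrying the relevant derivatives in its two arguments, so the chain rule gives $\partial_{x_l}[g(x,x)] = (\partial_l^x g)(x,x) + (\partial_l^y g)(x,x)$. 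In the inner-product notation this is the Leibniz rule
\begin{equation*}
	\partial_l \ska{ A^x }{ B^x } = \ska{ \partial_l^x A^x }{ B^x } + \ska{ A^x }{ \partial_l^x B^x }\,,
\end{equation*}
together with $\partial_l \Vert 1_x \Vert^2 = 2\ska{ \partial_l^x }{ 1_x }$, which yields $\partial_l \Vert 1_x \Vert^{-p} = -p\,\ska{ \partial_l^x }{ 1_x }\,\Vert 1_x \Vert^{-p-2}$ for the normalization powers.

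Next I would apply these two rules to each of the five summands of $\bar\Gamma_{kdd'}$, using the quotient rule for the powers of $\Vert 1_x \Vert$. This produces a long but entirely mechanical list in which every resulting term is a product of inner products divided by a power of $\Vert 1_x \Vert$; it is convenient to sort the output by that power. I would then write the identical expansion for $\bar\Gamma_{ldd'}$ and subtract, pairing each term immediately with its $k\leftrightarrow l$ counterpart so that the cancellations are visible before the expression grows unwieldy.

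The simplification then rests on two structural observations. First, whenever the differentiating index $l$ lands, via the Leibniz rule, inside the same covariance slot as the index $k$, it produces a third-order factor such as $\ska{ \partial_l^x\partial_k^x\partial_d^x }{ \partial_{d'}^x }$; because mixed partials commute, $\partial_l^x\partial_k^x = \partial_k^x\partial_l^x$, every such term is symmetric in $k,l$ and cancels against its partner from $\partial_k\bar\Gamma_{ldd'}$. This is precisely what prevents a genuinely new highest-order derivative from surviving in the curvature. Second, the top-order contribution coming from differentiating the normalization of the last summand is the fully symmetric product of four gradient--mean covariances $\ska{ \partial_\bullet^x }{ 1_x }$ and likewise cancels, so no additional, higher-order group appears beyond those displayed. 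Every surviving term is then a manifestly antisymmetric difference, and I would group these by the power of $\Vert 1_x \Vert$ in the denominator and read them off line by line against the three groups on the stated right-hand side.

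I expect the main obstacle to be purely organizational rather than conceptual: keeping scrupulous track of which of the two covariance slots each derivative enters, and carrying the correct signs through the quotient rule, so that the surviving antisymmetric terms assemble into exactly the claimed formula. Immediately antisymmetrizing each summand and sorting by the power of $\Vert 1_x \Vert$ is the device that keeps this bookkeeping under control.
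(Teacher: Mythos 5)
Your proposal is correct and takes essentially the same route as the paper, whose entire proof of this lemma reads ``Simple, but lengthy computation'': differentiating the formula for $\bar\Gamma_{kdd'}$ from Theorem \ref{thm:Christoffel} via the Leibniz rule for diagonal covariances together with $\partial_l \Vert 1_x\Vert^{-p} = -p\,\ska{\partial_l^x}{1_x}\Vert 1_x\Vert^{-p-2}$, antisymmetrizing in $k,l$, and noting that the third-order-derivative terms (by commutativity of mixed partials) and the fully symmetric four-fold product of $\ska{\partial_\bullet^x}{1_x}$ factors cancel, is precisely that computation made explicit. One caveat: the printed right-hand side of the lemma contains typographical slips (a duplicated line, an empty inner-product slot $\ska{\partial_d^x}{}$, and powers of $\Vert 1_x\Vert$ inconsistent with those derived in the proof of Theorem \ref{thm:Christoffel}), so in the final term-by-term matching you should trust your own bookkeeping over the displayed formula.
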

\begin{proof}
	Simple, but lengthy computation.
\end{proof}

To compute the first LKC $\cL_1$ for a $3$-dimensional manifold the trace of the Riemannian tensor is needed
which in the coordinates $(\cU,\varphi)$ can be expressed in terms of the entries of the inverse of the square
root of the Riemannian metric and the Riemannian tensor as
	\begin{equation}\label{eq:trace_Riemann}
	\begin{split}
		{\rm tr}\big( \bar R\, \big) &= \sum_{i,j,k,l=1}^3 \bar R^\varphi_{ijkl}
			\Bigg(~
				\frac{ \bar\Lambda_{1i}^{-1/2}\bar\Lambda_{1j}^{-1/2}\bar\Lambda_{1k}^{-1/2}\bar\Lambda_{1l}^{-1/2}}{ 2 }\\
				&~~~~~~~~~~~~~~~~~~~~~~~+ \frac{ \bar\Lambda_{2i}^{-1/2}\bar\Lambda_{2j}^{-1/2}\bar\Lambda_{2k}^{-1/2}\bar\Lambda_{2l}^{-1/2}}{ 2 }\\
			  &~~~~~~~~~~~~~~~~~~~~~~~+ \frac{\bar\Lambda_{3i}^{-1/2}\bar\Lambda_{3j}^{-1/2}\bar\Lambda_{3k}^{-1/2}\bar\Lambda_{3l}^{-1/2}}{ 2 }\\
			  &~~~~~~~~~~~~~~~~~~~~~~~+ \bar\Lambda_{1i}^{-1/2}\bar\Lambda_{2j}^{-1/2}\bar\Lambda_{1k}^{-1/2}\bar\Lambda_{2l}^{-1/2}\\
			  &~~~~~~~~~~~~~~~~~~~~~~~+ \bar\Lambda_{1i}^{-1/2}\bar\Lambda_{3j}^{-1/2}\bar\Lambda_{1k}^{-1/2}\bar\Lambda_{3l}^{-1/2}\\
			  &~~~~~~~~~~~~~~~~~~~~~~~+\bar\Lambda_{2i}^{-1/2}\bar\Lambda_{3j}^{-1/2}\bar\Lambda_{2k}^{-1/2}\bar\Lambda_{3l}^{-1/2}
			~\Bigg)\,.
	\end{split}
	\end{equation}

\subsection{ Definition of LKCs }\label{A:LKCdefn}
Lipshits Killing Curvatures $\cL_1,\ldots,\cL_D$ are the intrinsic volumes of a compact $D$-dimensional
Whitney stratified (WS) manifold $\big( \cM, \Lambda \big)$ isometrically embedded into
$ \big( \overline{\cM}, \bar\Lambda \big)$.
Here $\Lambda$ and $\bar\Lambda$ denote Riemannian metrics of $\cM$  and $\overline{\cM}$. They are related by
$\bar\Lambda\circ\iota = \Lambda$ where $\iota:\,\cM\rightarrow \overline{\cM}$ is the embedding and
hence $\cM$ is isometrically embedded into $\overline{\cM}$.
In this section we make the formula for the LKCs \cite[Definition 10.7.2]{Adler2007}
of an at most $3$-dimensional
WS manifold more explicit. From the definition of the LKCs it is easy to deduce
\begin{equation}
	\cL_{D-1} = \vol_{D-1}\big( \partial_{D-1}\cM \big) ~ ~ ~ \text{ and } ~ ~ ~ \cL_D = \vol_D\big( \partial_D\cM \big)\,.
\end{equation}
Here $\partial_{d}\cM$  denotes the $d$-dimensional stratum of $\cM$ and
the volume is the volume of the Riemannian manifold $(\partial_d\cM, \Lambda\vert_{\partial_d\cM})$
where $\Lambda\vert_{\partial_d\cM}$ is the restriction of $\Lambda$ to $\partial_d\cM$.
 
For a $3$-dimensional manifold it remains to compute $\cL_1$, which we derive
from \cite[Definition 10.7.2]{Adler2007} and some geometric computations in the next proposition. 
\begin{proposition}\label{prop:L1general}
    Let $(\overline{\cM},\bar\Lambda)$ be a closed Riemannian manifold of dimension $3$ and
    $\cM \subset \overline{\cM}$ be a compact WS manifold of dimension $3$
    isometrically embedded into $\overline{\cM}$. Then
    \begin{equation}\label{eq:L1general}
    \begin{split}
            \cL_1 &= \frac{1}{2\pi} \int_{\partial_{1} \cM}\int_{\mathbb{S}\big(\cT_s\partial_{1}\cM^\bot\big)}  \alpha(v)\mathcal{H}_{1}(dv)\mathcal{H}_{1}(ds)\\
            &\quad\quad+
            \frac{1}{2\pi} \int_{\partial_{2} \cM} \bar\Lambda\Big(\bar \nabla_{U_s} U_s+
            \bar \nabla_{V_s}V_s, N_s\Big) \mathcal{H}_{2}(ds)\\
            &\quad\quad-
            \frac{1}{2\pi} \int_{\partial_{3} \cM} {\rm Tr}^{\cT_s\partial_{3}\cM}( \bar R ) \mathcal{H}_{3}(ds)
    \end{split}
    \end{equation}
    Here  $\mathcal{H}_{1}(dv)$ is the volume
    form induced on the sphere
    $$
    \mathbb{S}\big(\cT_s\partial_{1}\cM^\bot\big) = \left\{ v \in \cT_s\overline{\cM} ~\vert~ \bar\Lambda(v,v) = 1
    		~\wedge~ \bar\Lambda(v,w) = 0\text{ for }w\in \cT_s\partial_{1}\cM \right\}
    $$
    by $\bar\Lambda$ and $\mathcal{H}_{d}(ds)$ the volume form of $\partial_d\cM$.
    Moreover, $\alpha(v)$ denotes the normal
    Morse index given in \cite[Scn 9.2.1]{Adler2007} and $U, V, N$ is a piecewise
    differentiable vector field on $\partial_2\cM$ such that
    $U_s,V_s$ form an orthonormal basis for $\cT_s\partial_2\cM$ for all $s\in \partial_2\cM$ and
    $N$ is outward pointing normal vector field.
\end{proposition}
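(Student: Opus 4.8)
The plan is to specialize the general definition of Lipschitz--Killing curvatures for Whitney stratified manifolds, \cite[Definition 10.7.2]{Adler2007}, to the case of $\cL_1$ in dimension $D=3$. In that definition $\cL_1(\cM)$ is a sum of contributions, one from each stratum $\partial_k\cM$ of dimension $k \geq 1$, where the contribution from $\partial_k\cM$ is an integral over the stratum of a curvature polynomial of degree $k-1$ (the second fundamental form of the embedding counting as degree one and the ambient Riemann tensor $\bar R$ as degree two), paired with an integral over the normal sphere $\mathbb{S}(\cT_s\partial_k\cM^\bot)$ of the normal Morse index $\alpha(v)$. For $D=3$ there are exactly three strata contributing, $\partial_1\cM$, $\partial_2\cM$ and $\partial_3\cM$, so the first task is to write out these three contributions explicitly and match the degree $k-1 \in \{0,1,2\}$ of the curvature polynomial to each stratum.

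First I would treat the top stratum $\partial_3\cM$. Here the normal space $\cT_s\partial_3\cM^\bot$ is trivial, so the normal-sphere integral collapses to the interior value of the Morse index, and the degree-two curvature polynomial reduces to a single factor of the ambient Riemann tensor traced over $\cT_s\partial_3\cM$. Tracking the flag coefficient and sign convention in \cite[Definition 10.7.2]{Adler2007} then produces exactly $-\tfrac{1}{2\pi}\int_{\partial_3\cM}{\rm Tr}^{\cT_s\partial_3\cM}(\bar R)\,\mathcal{H}_3(ds)$, the third term. Next I would treat the boundary faces $\partial_2\cM$. The normal space is one-dimensional, so the normal sphere is the two points $\{\pm N_s\}$; since $\cM$ is full-dimensional, i.e. the closure of its open top stratum, the support-cone structure entering $\alpha$ selects only the outward normal $N_s$, collapsing the normal-sphere integral to a single term. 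The degree-one curvature polynomial is then the trace of the scalar second fundamental form with respect to $N_s$. Writing the second fundamental form as $II(X,Y) = (\bar\nabla_X Y)^\bot$ and using the orthonormal basis $U_s, V_s$ of $\cT_s\partial_2\cM$, this trace equals $\bar\Lambda(II(U_s,U_s),N_s) + \bar\Lambda(II(V_s,V_s),N_s) = \bar\Lambda(\bar\nabla_{U_s}U_s + \bar\nabla_{V_s}V_s, N_s)$, the integrand of the second term.

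Finally, for the edge stratum $\partial_1\cM$ the curvature polynomial has degree zero and is therefore constant, while the two-dimensional normal space gives a genuine circle $\mathbb{S}(\cT_s\partial_1\cM^\bot)$ over which the Morse index $\alpha(v)$ must be retained; this reproduces the first term verbatim. Summing the three contributions yields \eqref{eq:L1general}.

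The main obstacle I anticipate is the careful evaluation of the normal Morse index and the associated support-cone integrals, particularly for the boundary faces: one must verify that the full-dimensionality of $\cM$ forces $\alpha$ to vanish on the inward normal and to equal the correct constant on the outward normal, so that the two-point normal sphere collapses to exactly one mean-curvature term with $N_s$ genuinely outward-pointing. A secondary bookkeeping obstacle is matching the normalizing flag coefficients of \cite[Definition 10.7.2]{Adler2007} to the common prefactor $\tfrac{1}{2\pi}$ across all three strata and fixing the sign of the intrinsic curvature term; both follow from the conventions of \cite[Chapters 7 and 10]{Adler2007} but require attention. Once these are settled, the remaining identities are routine tensor manipulations relating $(\bar\nabla_X Y)^\bot$ to the scalar second fundamental form.
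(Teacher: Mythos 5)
Your proposal is correct and takes essentially the same route as the paper's proof: both specialize \cite[Definition 10.7.2]{Adler2007} to $d=1$, $D=3$, evaluate the flag coefficients stratum by stratum, use the convention that $S^j_{\mathbb{O}}$ vanishes unless $j=0$ (Remark 10.5.2 in \cite{Adler2007}) to eliminate the $S^2$ term on $\partial_3\cM$, and identify the degree-one term on $\partial_2\cM$ with the trace of the second fundamental form $\bar\Lambda(\bar\nabla_{U_s}U_s+\bar\nabla_{V_s}V_s,N_s)$. The obstacles you flag (Morse-index bookkeeping on the two-point normal sphere and matching the $C(m,i)$ normalizations to the common $\tfrac{1}{2\pi}$ prefactor) are exactly the computations the paper carries out explicitly.
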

\begin{remark}
	In the special case that the metric $\bar\Lambda$ is constant, it holds that
	\begin{equation*}
	\begin{split}
            \cL_1 &= \frac{1}{2\pi} \int_{\partial_{1} \cM}\int_{\mathbb{S}\big(\cT_s\partial_{1}\cM^\bot\big)}  \alpha(v)\mathcal{H}_{1}(dv)\mathcal{H}_{1}(ds)\\
            &\quad+
            \frac{1}{2\pi} \int_{\partial_{2} \cM} \bar\Lambda\Big(\bar \nabla_{U_s} U_s+
            \bar \nabla_{V_s}V_s, N_s\Big) \mathcal{H}_{2}(ds)\,,
    \end{split}
	\end{equation*}	 
	since the curvature tensor $\bar R$ vanishes. If $\cM = \cM_\cV$ is a voxel manifold than even
	\begin{equation*}
            \cL_1 = \frac{1}{2\pi} \int_{\partial_{1} \cM}\int_{\mathbb{S}\big(\cT_s\partial_{1}\cM^\bot\big)}  \alpha(v)\mathcal{H}_{1}(dv)\mathcal{H}_{1}(ds)\,,
	\end{equation*}
	as $\nabla_{U_s} U_s = \nabla_{V_s} V_s = 0$ on $\partial_{2} \cM$
	by \eqref{eq:covariantDerivStat} and $\bar\Lambda$ being constant.
\end{remark}

Using the above Theorem we can now derive an expression for $\cL_1$ of a $3$-dimensional voxel manifold.
\begin{theorem}\label{thm:L1VM}
	Let $\cM_\cV$ be a $3$-dimensional voxel manifold. Then
	\begin{equation}\label{eq:L1Surf}
	\begin{split}
	\cL_1 &= \frac{1}{2\pi} \sum_{ \vert I \vert = 1 }			
				\int_{\cF_I}\Theta(x)\sqrt{ \det\big( \boldsymbol{\Lambda}^{I}(x) \big) } dx^I\\
	&\quad+
	\frac{1}{2\pi} \sum_{ \vert I \vert = 2 } \int_{ \cF_I } \Bigg[
		\left( U_{I_1}(x)^2 + V_{I_1}(x)^2 \right) N^T(x) \begin{pmatrix} \Gamma_{{I_1}{I_1}1}(x)\\ \Gamma_{{I_1}{I_1}2}(x)\\ \Gamma_{{I_1}{I_1}3}(x)\end{pmatrix}\\
	&\quad\quad\quad\quad\quad+ \sum_{k=1}^2V_{I_k}(x)V_{I_2}(x) N^T(x)
	\begin{pmatrix} \Gamma_{{I_k}{I_2}1}(x)\\ \Gamma_{{I_k}{I_2}2}(x)\\ \Gamma_{{I_k}{I_2}3}(x)\end{pmatrix} \Bigg]\sqrt{  \det\big( \boldsymbol{\Lambda}^{I}(x) \big) }dx^I\\
	&\quad-
	\frac{1}{2\pi} \sum_{ v \in \cV } \int_{ \cB_v( \delta ) }
	{\rm Tr}\big( R(x) \big) \sqrt{ \det\big( \boldsymbol{\Lambda}(x) \big) }dx
	\end{split}
	\end{equation}
	Here ${\rm Tr}\big( R(x) \big)$ is the trace of the Riemannian curvature tensor and
	\begin{equation*}
			\Theta(x) = \begin{cases}
					\pi - \beta(x)\,, & \text{if $x$ belongs to a convex edge}\\
					-2\beta(x)\,, & \text{if $x$ belongs to a double convex edge}\\
					\beta(x) - \pi\,, & \text{if $x$ belongs to a convex edge}
			\end{cases}\,,
	\end{equation*}
	compare Appendix \ref{app:InducedRiemann} and especially \eqref{eq:trace_Riemann}.
	Here $\beta(x)$ is defined using the crossproduct $V_x \times N_x = \big( m_1^I(x), m_2^I(x), m_3^I(x) \big)$
	of the elements of the ONB from \eqref{prop:ONF} with $k = I$ by
	\begin{equation*}
		\beta(x) = \arccos\left( \frac{ m_2(x)m_3(x) }{ \sqrt{ m_2^2(x) + m_1^2(x) }\sqrt{ m_3^2(x) + m_1^2(x) } } \right).
	\end{equation*}
	The different types of edges are visualized in Figure \ref{fig:EdgeVisualization}.
\end{theorem}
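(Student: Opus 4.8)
The plan is to specialize the general formula \eqref{eq:L1general} of Proposition \ref{prop:L1general} to the voxel manifold $\cM_\cV$, handling its three integrals — over the $3$-, $2$-, and $1$-dimensional strata — separately, and to exploit throughout that every stratum of $\cM_\cV$ lies in a hyperplane parallel to the coordinate axes so that the explicit orthonormal frame \eqref{prop:ONF} is available. The strata decompose as $\partial_3\cM_\cV=\bigsqcup_{v\in\cV}\mathrm{int}(\cB_v(\delta))$, $\partial_2\cM_\cV=\bigsqcup_{|I|=2}\cF_I$, and $\partial_1\cM_\cV=\bigsqcup_{|I|=1}\cF_I$, and in each case the Riemannian volume form is the appropriate (sub-)determinant factor $\sqrt{\det(\boldsymbol\Lambda^I(x))}\,dx^I$.

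The $3$-dimensional term is immediate: since $\cT_s\partial_3\cM_\cV$ is the full tangent space, the restricted trace equals the full trace ${\rm Tr}(R(x))$, and the volume form is $\sqrt{\det(\boldsymbol\Lambda(x))}\,dx$ in the Euclidean coordinates supplied by $\iota$. This reproduces the last line of \eqref{eq:L1Surf}, with ${\rm Tr}(R(x))$ assembled from \eqref{eq:trace_Riemann} together with Lemma \ref{lem:DiffChristoffel} and the Christoffel symbols of Theorem \ref{thm:Christoffel}. Note that, unlike the remark following Proposition \ref{prop:L1general}, the metric induced by a SuRF on $\cM_\cV$ is genuinely non-constant, so neither this nor the $2$-dimensional term drops out.

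For the $2$-dimensional term I would expand $\bar\Lambda(\bar\nabla_{U_s}U_s+\bar\nabla_{V_s}V_s,N_s)$ on each face $\cF_I$, $|I|=2$, using the covariant derivative formula \eqref{eq:covariantDeriv} and the frame \eqref{prop:ONF}. The key simplification is that $U$ and $V$ carry nonzero coordinate components only in the two tangential directions $I_1,I_2$, so the terms $\partial_d(W_{d'})\partial_{d'}$ in \eqref{eq:covariantDeriv} are tangent to $\cF_I$ and hence $\bar\Lambda$-orthogonal to $N_s$; only the Christoffel pieces survive. Writing $\bar\Lambda(\bar\nabla_{\partial_d}\partial_{d'},N_s)=N^T(x)(\Gamma_{dd'1},\Gamma_{dd'2},\Gamma_{dd'3})^T$, which follows since $\boldsymbol\Lambda N=e_m/\sqrt{e_m^T\boldsymbol\Lambda^{-1}e_m}$, and collecting the $U$- and $V$-contributions over $d,d'\in\{I_1,I_2\}$ yields the bracketed integrand in the second line of \eqref{eq:L1Surf} against the face volume form $\sqrt{\det(\boldsymbol\Lambda^I(x))}\,dx^I$.

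The $1$-dimensional term is the heart of the argument and the main obstacle. The edge $\cF_I$, $|I|=1$, has no intrinsic curvature, so only the angular factor $\int_{\mathbb{S}(\cT_s\partial_1\cM^\bot)}\alpha(v)\,\mathcal{H}_1(dv)$ contributes, and this inner integral must be evaluated as the $\bar\Lambda$-measure of the sector of normal directions on which the normal Morse index $\alpha(v)$ of \cite[Scn 9.2.1]{Adler2007} is nonzero. I would first identify the tangent cone of $\cM_\cV$ along the edge — determining which of the four surrounding coordinate quadrants are filled by cubes — and classify the edge as convex (one quadrant), double-convex (two opposite quadrants), or reflex (three quadrants); the support-cone description of $\alpha$ then reduces the inner integral to an external-angle computation in the $2$-dimensional normal plane. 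The genuinely delicate point is that this angle must be measured in the induced metric $\bar\Lambda$ rather than the Euclidean one: the Euclidean right angle between the two directions transverse to the edge is distorted into the metric dihedral angle $\beta(x)$, which I would extract from the frame \eqref{prop:ONF} via the cross product $V_x\times N_x$, giving the stated $\arccos$ formula. Carrying the three cases through produces $\Theta(x)=\pi-\beta(x)$, $-2\beta(x)$, and $\beta(x)-\pi$ respectively; multiplying by the edge-length element $\sqrt{\det(\boldsymbol\Lambda^I(x))}\,dx^I$ and the normalizing $1/2\pi$ completes the first line of \eqref{eq:L1Surf}.
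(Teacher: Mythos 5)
Your proposal is correct and takes essentially the same route as the paper's own proof: it specializes Proposition \ref{prop:L1general} stratum by stratum, keeps the full curvature trace on $\partial_3\cM_\cV$, kills the derivative terms of \eqref{eq:covariantDeriv} on the faces by $\bar\Lambda$-orthogonality of tangential vectors to $N_x$ (the paper phrases this as $N_x \propto \boldsymbol{\Lambda}^{-1}e_m$), and evaluates the edge contribution by identifying the normal Morse index integral with the opening angle of the normal cone measured in the induced metric, split over the three edge types. The only differences are cosmetic, e.g.\ your ``reflex'' edge is the paper's ``concave'' edge.
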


\begin{figure}[ht]
\begin{center}
\includegraphics[trim=0 0 0 0,clip,width=2.1in]{\figurepath 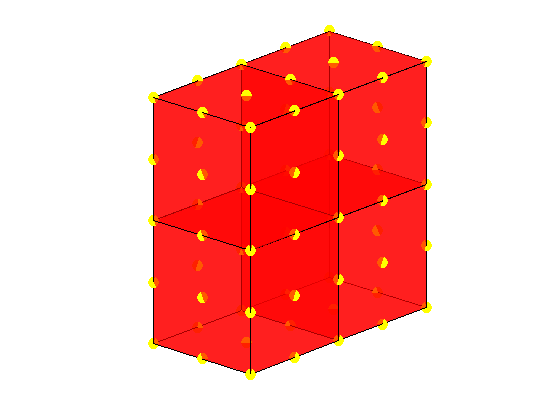}
\includegraphics[trim=0 0 0 0,clip,width=2.1in]{\figurepath 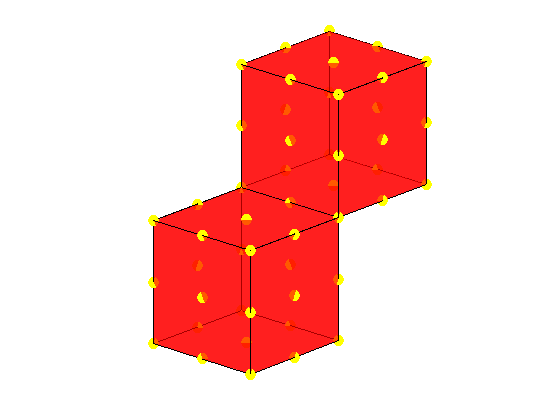}
\includegraphics[trim=0 0 0 0,clip,width=2.1in]{\figurepath 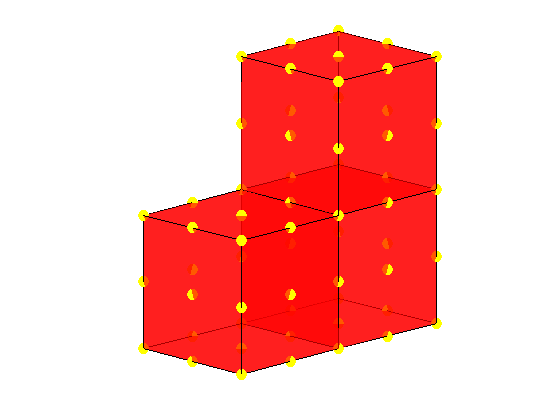}
\end{center}
	\caption{Visualization of the different types of edges appearing in a voxel manifold.
			 In the left voxel manifold all edges are convex. The edge where the
			 two cubes of the voxel manifold in the center are touching is a double
			 convex edge and the same edge in the voxel manifold on the right is a
			 concave edge since a third cube is added.
	\label{fig:EdgeVisualization}}
\end{figure}

\section{Proofs of Results in the Appendix}\label{app:proofsApp}
\subsection{Proof of Theorem \ref{thm:RiemannianMetric}}
\begin{proof}
	Interchanging expectation and derivatives yields
	\begin{equation}\label{eq:derivation_riem_metric}
	\begin{aligned}
	\bar\Lambda_{dd'}(z)
	&= \partial_{d}^x\partial_{d'}^y \frac{ \Cov{ f_x }{ f_y } }{ \Vert 1_x \Vert\Vert 1_y \Vert }\Bigg\vert_{(x,y)=(z,z)}\\
	&=  \frac{ \ska{ \partial_{d}^x }{ \partial_{d'}^y } }{ \Vert 1_x \Vert  \Vert 1_y \Vert}
	- \frac{ \ska{ \partial_{d}^x }{ 1_y } \ska{ 1_y }{ \partial_{d'}^y } }{\Vert 1_x \Vert \Vert 1_y \Vert^3}\\
	&\quad -\frac{ \ska{ \partial_{d}^x }{ 1_x } \ska{ 1_x }{ \partial_{d'}^y }  }{\Vert 1_x \Vert^3 \Vert 1_y \Vert} + \frac{ \ska{ 1_x }{ 1_y } \ska{ \partial_{d}^x }{ 1_x } \ska{ 1_y }{ \partial_{d'}^y }}{\Vert 1_x \Vert^3 \Vert 1_y \Vert^3}\Bigg\vert_{(x,y)=(z,z)}\\
	&= \frac{  \ska{ \partial_{d}^z }{  \partial_{d'}^z } }{ \Vert 1_z \Vert^2 }
	- \frac{ \ska{ \partial_{d}^z }{ 1_z } \ska{ 1_z }{ \partial_{d'}^z } }{\Vert 1_z \Vert^4}
	\end{aligned}
	\end{equation}
\end{proof}

\subsection{Proof of Theorem \ref{thm:Christoffel}}
\begin{proof}
	From the definition of the Christoffel symbols one can derive that
	\begin{equation}
	\Gamma_{kdd'}(z) =
	\frac{\partial^3 \tilde\fc(x,y)}{\partial_{x_{k}}\partial_{x_{d}}\partial_{y_{d'}}}\Bigg\vert_{(x,y)=(z,z)}\,,
	\end{equation}
	see also \cite[eq. 12.2.17]{Adler2007}. Here $\tilde\fc(x,y) =
	\frac{  \Cov{f_x}{f_y}  }{ \sqrt{\Cov{f_x}{f_x}} \sqrt{\Cov{f_y}{f_y}} }$.
	Thus, by simply taking another derivate of \eqref{eq:derivation_riem_metric},
	we obtain
	\begin{align*}
	\Gamma_{kdd'}(z) &=
	\partial_k^x \partial_{d}^x \partial_{d'}^y \frac{ \Cov{ f_x }{ f_y } }{ \sqrt{ \Vert 1_x \Vert\Vert 1_y \Vert} }\Bigg\vert_{(x,y)=(z,z)}\\
	&=	  \frac{ \ska{ \partial_k^x \partial_{d}^x }{ \partial_{d'}^x } }{ \Vert 1_x \Vert^2 }
	- \frac{ \ska{ \partial_{k}^x }{ 1_x } \ska{ \partial_{d}^x }{ \partial_{d'}^x } }{ \Vert 1_x \Vert^4 }
	- \partial_{k}^x \frac{ \ska{  \partial_{d}^x }{ 1_y } \ska{ 1_y }{  \partial_{d'}^y } }{\Vert 1_x \Vert \Vert 1_y \Vert^3}\\
	&\quad -\partial_{k}^x \frac{ \ska{  \partial_{d}^x }{ 1_x } \ska{ 1_x }{  \partial_{d'}^y }  }{\Vert 1_x \Vert^3 \Vert 1_y \Vert}
	+ \partial_{k}^x \frac{ \ska{ 1_x }{ 1_y } \ska{ \partial_{d}^x }{ 1_x } \ska{ 1_y }{ \partial_{d'}^y }}{\Vert 1_x \Vert^3 \Vert 1_y \Vert^3}\Bigg\vert_{(x,y)=(z,z)}\\
	&=    \frac{ \ska{ \partial_k^z \partial_{d}^z }{ \partial_{d'}^z } }{ \Vert 1_z \Vert^2}
	- \frac{ \ska{ \partial_{k}^z }{ 1_z } \ska{ \partial_{d}^z }{ \partial_{d'}^z } }{ \Vert 1_z \Vert^4 }
	- \frac{ \ska{ \partial_{k}^z\partial_{d}^z }{ 1_z } \ska{ 1_z }{ \partial_{d'}^z } }{\Vert 1_z \Vert^4} \\
	&\quad 
	+ \frac{ \ska{ \partial_{k}^z }{ 1_z } \ska{ \partial_{d}^z }{ 1_z } \ska{ 1_z }{ \partial_{d'}^z } }{\Vert 1_z \Vert^6}
	- \frac{ \ska{ \partial_{k}^z \partial_{d}^z }{ 1_z } \ska{ 1_z }{  \partial_{d'}^z } + \ska{  \partial_{d}^z }{ \partial_{k}^z } \ska{ 1_z }{  \partial_{d'}^z } }{\Vert 1_z \Vert^4 }\\
	&\quad - \frac{\ska{  \partial_{d}^z }{ 1_z } \ska{ \partial_{k}^z }{  \partial_{d'}^z } }{\Vert 1_z \Vert^4} + 3\frac{ \ska{  \partial_{k}^z }{ 1_z } \ska{  \partial_{d}^z }{ 1_z } \ska{ 1_z }{  \partial_{d'}^z }  }{\Vert 1_z \Vert^6} \\
	&\quad + \frac{\ska{ \partial_{k}^z }{ 1_z } \ska{  \partial_{d}^z }{ 1_z } \ska{ 1_z }{  \partial_{d'}^z }  }{ \Vert 1_z \Vert^6}
	+ \frac{ \ska{ \partial_{k}^z\partial_{d}^z }{ 1_z } \ska{ 1_z }{ \partial_{d'}^z }}{ \Vert 1_z \Vert^4 }
	+ \frac{ \ska{ \partial_{d}^z }{ \partial_{k}^z } \ska{ 1_z }{ \partial_{d'}^z }}{ \Vert 1_z \Vert^4 }\\
	&\quad -3\frac{\ska{ \partial_{k}^z}{ 1_z } \ska{ \partial_{d}^z}{ 1_z } \ska{ 1_z }{\partial_{d'}^z}}{\Vert 1_z \Vert^6}\\
	&=    \frac{ \ska{ \partial_k^z \partial_{d}^z }{ \partial_{d'}^z } }{ \Vert 1_z \Vert^2}
	- \frac{ \ska{ \partial_{k}^z }{ 1_z } \ska{ \partial_{d}^z }{ \partial_{d'}^z } }{ \Vert 1_z \Vert^4 }
	- \frac{ \ska{ \partial_{k}^z\partial_{d}^z }{ 1_z } \ska{ 1_z }{ \partial_{d'}^z } }{\Vert 1_z \Vert^4} \\
	&\quad - \frac{\ska{  \partial_{d}^z }{ 1_z } \ska{ \partial_{k}^z }{  \partial_{d'}^z } }{\Vert 1_z \Vert^4}
	+ 2\frac{\ska{ \partial_{k}^z}{ 1_z } \ska{ \partial_{d}^z}{ 1_z } \ska{ 1_z }{\partial_{d'}^z}}{\Vert 1_z \Vert^6}\\
	\end{align*}
\end{proof}

\subsection{Proof of Proposition \ref{prop:L1general}}
\begin{proof}
Using $D-d' = D'$ we have that the LKCs of a WS manifold $\cM$ are defined by
\begin{align}\label{eq:LKCgeneral}
\begin{split}
	 \cL_d =& \sum_{d'=d}^D \tfrac{1}{(2\pi)^{\frac{d'-d}{2}}} \sum_{l=0}^{\lfloor \frac{d'-d}{2} \rfloor}  \tfrac{(-1)^lC(D', d-d'-2l)}{l!(d-d'-2l)!}\\
	&\times \int_{\partial_{d'} \cM}\int_{\mathbb{S}(\cT_s\partial_{d'}\cM^\bot)} \!\!\!\!\!\!\!\!\!\!\!\!\!{\rm Tr}^{\cT_s\partial_{d'}\cM}\Big( R^lS^{d'-d-2l}_{\nu_{D'}} \Big) \alpha(\nu_{D'})\mathcal{H}_{-1}(d\nu_{D'})\mathcal{H}_{d'}(ds)\,,
\end{split}
\end{align}
compare \cite[Definition 10.7.2]{Adler2007}.
This formula requires further explanations. The constant $C(m,i)$ is defined in \cite[eq. (10.5.1), p.233]{Adler2007}, i.e.,
\begin{equation}
	C(m,i) = \begin{cases}
				\frac{(2\pi)^{\frac{i}{2}}}{s_{m+i}}, \quad &m+i>0,\\
				1,\quad &m=0
			  \end{cases}\, ~~\text{with }~~ s_m =\frac{ 2\pi^{m/2}}{\Gamma(m/2)}\,,
\end{equation}
which implies $C(m,0) = \Gamma(m/2) / 2 / \pi^{m/2} $.
Moreover, $\mathcal{H}_{D-d-1}$ is the volume form on $\mS(\cT_s\partial_{d}\cM^\bot)$ and $\mathcal{H}_{d}$ the volume form of $\partial_{d}\cM$. $R$ denotes the Riemannian curvature tensor of the different strata on $\cM$ depending
on the strata the integral integrates over. In particular, note that $\bar R = R$ for $\partial_3\cM$.

From this $\cL_1$ simplifies to
\begin{align*}
	 \cL_1 &= C(3, 0) \int_{\partial_{1} \cM}\int_{S(\cT_s\partial_{1}\cM^\bot)} {\rm Tr}^{\cT_s\partial_{1}\cM}\Big( R^0S^{0}_{\nu_{2}} \Big) \alpha(\nu_{2})\mathcal{H}_{1}(d\nu_{2})\mathcal{H}_{1}(ds)\\
	 &~~~~+ \frac{C(1, 1)}{\sqrt{2\pi}}  \int_{\partial_{2} \cM}\int_{S(\cT_s\partial_{2}\cM^\bot)} {\rm Tr}^{\cT_s\partial_{2}\cM}\Big( R^0S^{1}_{\nu_{1}} \Big) \alpha(\nu_{1})\mathcal{H}_{0}(d\nu_{1})\mathcal{H}_{2}(ds)\\
		&~~~~+ \frac{C(0, 2)}{4\pi} \int_{\partial_{3} \cM}\int_{S(\mathbb{O})} {\rm Tr}^{\cT_s\partial_{3}\cM}\Big( R^0S^{2}_{\nu_{0}} \Big) \alpha(\nu_{0})\mathcal{H}_{-1}(d\nu_{0})\mathcal{H}_{3}(ds)\\
		&~~~~- \frac{C(0, 0)}{2\pi} \int_{\partial_{3} \cM}\int_{S(\mathbb{O})} {\rm Tr}^{\cT_s\partial_{3}\cM}\Big( R^1S^{0}_{\nu_{0}} \Big) \alpha(\nu_{0})\mathcal{H}_{-1}(d\nu_{0})\mathcal{H}_{3}(ds)\\
		&= \frac{1}{2\pi} \int_{\partial_{1} \cM}\int_{S(\cT_s\partial_{1}\cM^\bot)} \alpha(\nu_{2})\mathcal{H}_{1}(d\nu_{2})\mathcal{H}_{1}(ds)\\
	 &~~~~+ \frac{1}{2\pi} \int_{\partial_{2} \cM}\int_{S(\cT_s\partial_{2}\cM^\bot)} {\rm Tr}^{\cT_s\partial_{2}\cM}\Big( 1\cdot S^{1}_{\nu_{1}} \Big) \alpha(\nu_{1})\mathcal{H}_{0}(d\nu_{1})\mathcal{H}_{2}(ds)\\
		&~~~~+ \frac{1}{4\pi} \int_{\partial_{3} \cM} {\rm Tr}^{\cT_s\partial_{3}\cM}\Big( 1\cdot S^{2}_{\mathbb{O}} \Big) \mathcal{H}_{3}(ds)\\
		&~~~~- \frac{1}{2\pi} \int_{\partial_{3} \cM} {\rm Tr}^{\cT_s\partial_{3}\cM}\Big( R^1S^{0}_{\mathbb{O}} \Big) \mathcal{H}_{3}(ds)
\end{align*}
This can be further simplified as follows:
\begin{align*}
		\cL_1 &=  \frac{1}{2\pi} \int_{\partial_{1} \cM}\int_{S(\cT_s\partial_{1}\cM^\bot)}  \alpha(\nu_{2})\mathcal{H}_{1}(d\nu_{2})\mathcal{H}_{1}(ds)\\
	 &~~~~+ \frac{1}{2\pi} \int_{\partial_{2} \cM}\int_{S(\cT_s\partial_{2}\cM^\bot)} {\rm Tr}^{\cT_s\partial_{2}\cM}\Big( S^{1}_{\nu_{1}} \Big) \alpha(\nu_{1})\mathcal{H}_{0}(d\nu_{1})\mathcal{H}_{2}(ds)\\
		&~~~~- \frac{1}{2\pi} \int_{\partial_{3} \cM} {\rm Tr}^{\cT_s\partial_{3}\cM}( \bar R ) \mathcal{H}_{3}(ds)\\
	&= \frac{1}{2\pi} \int_{\partial_{1} \cM}\int_{S(\cT_s\partial_{1}\cM^\bot)}  \alpha(\nu_{2})\mathcal{H}_{1}(d\nu_{2})\mathcal{H}_{1}(ds)\\
	 &~~~~+ \frac{1}{2\pi} \int_{\partial_{2} \cM} g\big(\bar \nabla_{e_1(s)} e_1(s), \tilde \nu(s)\big) + g\big(\bar \nabla_{e_2(s)}e_2(s), \tilde \nu(s)\big) \mathcal{H}_{2}(ds)\\
		&~~~~- \frac{1}{2\pi} \int_{\partial_{3} \cM} {\rm Tr}^{\cT_s\partial_{3}\cM}( \bar R ) \mathcal{H}_{3}(ds)\\
	&= \frac{1}{2\pi} \int_{\partial_{1} \cM}\int_{S(\cT_s\partial_{1}\cM^\bot)}  \alpha(\nu_{2})\mathcal{H}_{1}(d\nu_{2})\mathcal{H}_{1}(ds)\\
	 &~~~~+ \frac{1}{2\pi} \int_{\partial_{2} \cM} g\big(\bar \nabla_{e_1(s)} e_1(s) + \bar \nabla_{e_2(s)}e_2(s), \tilde \nu(s)\big) \mathcal{H}_{2}(ds)\\
		&~~~~- \frac{1}{2\pi} \int_{\partial_{3} \cM} {\rm Tr}^{\cT_s\partial_{3}\cM}( \bar R ) \mathcal{H}_{3}(ds)
\end{align*}
Here $\tilde\nu(s)$ is the inward pointing normal at $x$ in $\partial_2\cM$ and $e_1(s), e_2(s)$ an orthonormal basis of $\cT_s\partial_2\cM$ and used Remark (10.5.2) \cite[p.233]{Adler2007}, i.e.,
\begin{equation}
	S^j_\mathbb{O} = \begin{cases}
				1, \quad &j=0,\\
				0,\quad &\text{otherwise}
			  \end{cases}\,.
\end{equation}
\end{proof}

\section{Proofs of the Results in the Main Manuscript}\label{app:ProofsMain}
\subsection{Proof of Proposition \ref{prop:HoelderCont}}
\begin{proof}
	Define $q = p / (p-1)$ if $p>1$ and $q = \infty $ if $p=1$. Using the triangle inequality and
	H\"older's inequality yields for the charts $(\overline{U}_\alpha,\overline{\varphi}_\alpha)$,
	$\alpha \in \{1,\ldots,P\}$, in the atlas of $\overline{\cM}$ covering $\cM$,
	all $x,y \in \overline{\varphi}\big( \overline{U}_\alpha \big) \cap \cM$ that
 	\begin{align*}
    	\vert \tilde X_\alpha( x ) - \tilde X_\alpha( y ) \vert
    		&= \left \vert \sum_{ v \in \cV } \Big( K\big( \overline{\varphi}^{-1}_\alpha(x), v \big) - K\big( \overline{\varphi}^{-1}_\alpha(y), v \big) \Big) X(v) \right\vert \\
    		&\leq \sqrt[q]{ \sum_{ v \in \cV } \Big\vert  K\big( \overline{\varphi}^{-1}_\alpha(x), v \big) - K\big( \overline{\varphi}^{-1}_\alpha(y), v \big) \Big\vert^q } \sqrt[p]{ \sum_{ v \in \cV } X(v)^p }\\
    		&\leq \sqrt[q]{ \sum_{ v \in \cV } A^q } \sqrt[p]{ \sum_{ v \in \cV} X(v)^p } \big\Vert x - y \big\Vert^\gamma\\
    		&\leq L \big\Vert x - y \big\Vert^\gamma\,.
	\end{align*}
	Here $A$ bounds the H\"older constants of $K\big( \overline{\varphi}^{-1}_\alpha(\cdot), v \big)$,
	for all $\alpha \in \{1,\ldots, P\}$ and all $ v \in \cV $ from above,
	and $ L = |\mathcal{V}|A\sqrt[p]{ \sum_{ v \in \cV} X(v)^p } $.
	If $p=1$ then the statement with the $q$-th root is
 	the maximum over $\cV$ instead of the $q$-norm. The result follows as by assumption
 	$\mathbb{E}\left[L^p \right]$ is finite.
\end{proof}

\subsection{Proof of Proposition \ref{prop:GaussG3}}
\begin{proof}
	The functions $K\big(  \overline{\varphi}^{-1}_\alpha(\cdot), v \big)$
	are Lipshitz continuous for each $v\in\cV$ since they are $\cC^1$
	and $\overline{\cM}$ is compact. Because $\cV$ is finite and $\alpha \in \{1,\ldots, P\}$,
	there exists an $M > 0$ that bounds all the Lipschitz constants of the functions
	$K\big(  \overline{\varphi}^{-1}_\alpha(\cdot), v \big)$.
	Thus, applying Proposition \ref{prop:HoelderCont}
	with $ p = 2 $ and $ \gamma = 1 $ yields for the charts
	$(\overline{U}_\alpha,\overline{\varphi}_\alpha)$, $\alpha \in \{1,\ldots,P\}$,
	in the atlas of $\overline{\cM}$ covering $\cM$, all
	$x,y \in \overline{\varphi}\big( \overline{U}_\alpha \big) \cap \cM$ such that
	$ 0 < \Vert x - y \Vert < 1$ that
 \begin{align*}
    \mathbb{E}\left[ \left( \tilde X_\alpha(x) - \tilde X_\alpha(y) \right)^2 \right]
    				\leq \mathbb{E}\left[ L^2 \Vert x - y \Vert^2 \right]
    				=    \mathbb{E}\left[ L^2 \right] \Vert x - y \Vert^2\,.
 \end{align*}
 The claim follows since $ x^2 \leq \left(\log |x|\right)^{-2} $ for $ 0 < x < 1 $.
\end{proof}

\subsection{Proof of Proposition \ref{prop:nondegen}}
\begin{proof}
	As the property is local, we can w.l.o.g.
	assume that $\cM$ is a compact domain in $\mathbb{R}^D$
	and drop the chart notation for simplicity.
	Given $ x \in \cM $, suppose that there exist sets of real
	constants $ a, a_i, a_{jk}, c$ ($ 1 \leq i \leq D, 1 \leq j \leq k \leq D $)
	such that 
	\begin{equation*}
		a\tilde X(s) + \sum_{i = 1}^D a_{i} \tilde X_i(s) + \sum_{ 1 \leq j \leq k \leq D} a_{jk} \tilde X_{jk}(s) = c,
	\end{equation*}
	which implies that 
	\begin{equation*}
	\begin{split}
		a \sum_{v \in \mathcal{V}_s} K(s, v)X(v)
			&+  \sum_{i = 1}^D a_{i} \sum_{v \in \mathcal{V}_s} \partial_{i}^sK(s,v)X(v)\\
			&+ \sum_{ 1 \leq j \leq k \leq D} a_{jk}  \sum_{v \in \mathcal{V}_s} \partial_{jk}^sK(s,v)X(v)= c\,.
	\end{split}
	\end{equation*}
	Non-degeneracy of $(X(v): v \in \mathcal{V}_s) $ then implies that for all $ v \in \mathcal{V}_s $
	\begin{equation*}
		a K(s, v) + \sum_{i = 1}^D a_i \partial_i^sK(s, v)
				  + \sum_{ 1 \leq j \leq k \leq D} a_{jk} \partial_{jk}^sK(s, v) = 0,
	\end{equation*}
	which by the linear independence constraint implies that the constants are all zero.
	This proves non-degeneracy of $ \big(Y(s), \nabla Y(s), \big(\nabla^2 Y(s) \big)\big) $.
	
	For the normalized field $\tilde X / \sqrt{ \Var[\tilde X]} = \tilde X / \sigma$, we note that 
	\begin{equation*}
	\nabla	\frac{X}{\sigma}
			= \frac{\nabla X}{\sigma} - \frac{X \nabla \sigma}{\sigma^2}
			= \frac{\nabla X}{\sigma} - \frac{\nabla \sigma}{\sigma}\left( \frac{X}{\sigma} \right)
	\end{equation*}
	and 
	\begin{equation*}
	\nabla^2 \frac{X}{\sigma}
			= \frac{\nabla^2 X}{\sigma} - \frac{2\left( \nabla X \right)^T(\nabla \sigma)}{\sigma^2}
			- \frac{X\nabla^2 \sigma}{\sigma^2} + \frac{2(\nabla \sigma)^T(\nabla \sigma)X}{\sigma^3}\,.
	\end{equation*}
	Hence we can transform $ \big(X(s), \nabla X(s), \mathbb{V}\big(\nabla^2 X(s)\big)\big) $
	into $ \big(Z(s), \nabla Z(s), \mathbb{V}\big(\nabla^2 Z(s)\big)\big) $ using an invertible matrix. Thus,
	$
	\big(Z(s), \nabla Z(s), \mathbb{V}\big(\nabla^2 Z(s)\big)\big)
	$
	is non-degenerate by Lemma A.2 from \cite{Davenport2022}.
\end{proof}

\subsection{Proof of Proposition \ref{prop:GaussKernelLinearIndependence}}
\subsubsection{Establishing non-degeneracy of the isotropic kernel and its derivatives under linear transformations}

\begin{lemma}\label{lem:nondegeniso}
	Suppose that $ \mathcal{V} $ satisfies the conditions of Proposition \ref{prop:GaussKernelLinearIndependence} and let $ K^* =e^{-\left\lVert s-v \right\rVert^T/ 2}  $ be the $ D $-dimensional isotropic Gaussian kernel. Then given constants $ c, a_i, a_{jk} $ (for $ 1 \leq i \leq D $ and $ 1\leq j \leq k \leq D $), $  s \in \mathbb{R}^D $ and an invertible symmetric matrix $ \Omega' \in \mathbb{R}^{D\times D}$ such that 
	\begin{equation}\label{lem:kernelomega}
	\begin{split}
	c K^*( \Omega's, \Omega'v )  &+ \sum_{j = 1}^D a_j\partial_{j}^xK^*( \Omega's, \Omega'v )\\
	&+ \sum_{ 1 \leq j \leq k \leq D} a_{jk}\partial_{jk}^x K^*( \Omega's, \Omega'v ) = 0,
	\end{split}
	\end{equation}
	for all $ v \in \mathcal{V} $ then $ c = a_i = a_{jk} = 0 $ for $ 1 \leq i \leq D $ and $ 1\leq j \leq k \leq D $.
\end{lemma}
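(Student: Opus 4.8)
The plan is to turn the relation \eqref{lem:kernelomega} into the statement that a polynomial of total degree at most two vanishes on a finite set, and then to exploit the $3^D$-point tensor grid supplied by the hypothesis on $\cV$ to force that polynomial to be identically zero.

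First I would evaluate the derivatives of the isotropic kernel $K^*(x,y)=e^{-\Vert x-y\Vert^2/2}$ explicitly. Setting $z:=\Omega'(s-v)\in\mR^D$, the chain rule gives $\partial_j^x K^*(\Omega's,\Omega'v)=-z_j\,e^{-\Vert z\Vert^2/2}$, $\partial_{jk}^xK^*(\Omega's,\Omega'v)=z_jz_k\,e^{-\Vert z\Vert^2/2}$ for $j<k$, and $\partial_{jj}^xK^*(\Omega's,\Omega'v)=(z_j^2-1)\,e^{-\Vert z\Vert^2/2}$. Because the scalar factor $e^{-\Vert z\Vert^2/2}$ is strictly positive, \eqref{lem:kernelomega} is equivalent to
\[
P(z):=\Big(c-\sum_{j=1}^D a_{jj}\Big)-\sum_{j=1}^D a_j z_j+\sum_{1\le j<k\le D} a_{jk}\,z_jz_k+\sum_{j=1}^D a_{jj}\,z_j^2=0 ,
\]
holding for every $v\in\cV$ with $z=\Omega'(s-v)$. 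Thus $P$ is a polynomial of total degree at most two vanishing on $\{\Omega'(s-v):v\in\cV\}$. Since $\Omega'$ is invertible, the substitution $z=\Omega'y$ yields a polynomial $Q(y):=P(\Omega'y)$, again of total degree at most two, that vanishes on $\{s-v:v\in\cV\}$. By hypothesis there is $v_0\in\cV$ with $v_0+\sum_{d=1}^D\lambda_d\gamma_de_d\in\cV$ for all $\gamma\in\{-1,0,1\}^D$; writing $t_d:=(s-v_0)_d$, the corresponding points $s-v$ form the full tensor grid $\prod_{d=1}^D\{t_d-\lambda_d,\,t_d,\,t_d+\lambda_d\}$, which has three distinct nodes per coordinate because $\lambda_d>0$.

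The main step — and the only nontrivial one — is to show that a polynomial of total degree at most two vanishing on such a grid must vanish identically; I would prove this by induction on $D$. For $D=1$ a univariate polynomial of degree at most two with three roots is zero. For the inductive step, write $Q(y)=Q_0(y')+Q_1(y')y_D+Q_2y_D^2$ with $y'=(y_1,\dots,y_{D-1})$; the total-degree bound forces $Q_2$ to be constant, $\deg Q_1\le1$ and $\deg Q_0\le2$. Fixing $y'$ at any node of the lower-dimensional grid and viewing $Q$ as a polynomial of degree at most two in $y_D$ with three roots, I obtain $Q_0(y')=Q_1(y')=Q_2=0$ at every such $y'$; the induction hypothesis applied to $Q_0$ and $Q_1$ then gives $Q_0\equiv Q_1\equiv0$, so $Q\equiv0$. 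Consequently $P\equiv0$, and comparing monomial coefficients gives $a_{jj}=0$ from the $z_j^2$ terms, $a_{jk}=0$ for $j<k$ from the $z_jz_k$ terms, $a_j=0$ from the $z_j$ terms, and finally $c-\sum_j a_{jj}=0$ from the constant term, whence $c=0$. This is exactly the claim. The potential obstacle is purely the unisolvence of the tensor grid for quadratics, and the clean way around it is the coordinate-splitting induction just described, which bypasses any determinant computation.
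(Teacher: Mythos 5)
Your proof is correct. It runs on the same fuel as the paper's own argument --- divide out the strictly positive factor $e^{-\Vert \Omega'(s-v)\Vert^2/2}$ so that \eqref{lem:kernelomega} becomes a polynomial identity of total degree at most two on the $3^D$ tensor grid supplied by the hypothesis on $\cV$ --- but it is organized genuinely differently at both ends. The paper keeps $v$ as the running variable, so its quadratic carries $\Omega'$-dressed coefficients; it then argues coordinate by coordinate (a quadratic in $v_i$ with three distinct roots, then the $v_i$-coefficient viewed as an affine function of $v_n$) to extract the matrix identity $\Omega' A' \Omega' = 0$ and the vector identity $\Omega' a = 0$, invoking the invertibility of $\Omega'$ twice to conclude $A' = 0$ and $a = 0$ before deducing $c = 0$. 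You instead substitute $z = \Omega'(s-v)$ at the outset, so your polynomial $P$ has the bare coefficients $a_j$, $a_{jk}$ and $c - \sum_j a_{jj}$; you isolate the combinatorial content as a standalone unisolvence lemma (a polynomial of total degree at most two vanishing on a tensor grid with three distinct nodes per coordinate is identically zero, proved by induction on $D$ via the decomposition $Q = Q_0 + Q_1 y_D + Q_2 y_D^2$), and you use the invertibility of $\Omega'$ exactly once, to pull $Q \equiv 0$ back to $P \equiv 0$ along the linear change of variables. Your induction is, in substance, the paper's two-stage coefficient-vanishing argument made abstract, but the modularization buys cleaner bookkeeping --- the endgame is plain monomial comparison (noting correctly that the diagonal terms contribute $z_j^2 - 1$, so the constant term is $c - \sum_j a_{jj}$, which vanishes only after the $a_{jj}$ are already known to be zero) rather than unwinding $\Omega' A' \Omega' = 0$ --- and it yields a reusable lemma that would extend verbatim to higher-order derivative families over $(p+1)^D$ grids. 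The only cosmetic remark: the paper's proof nominally allows all of $\cV$, while both you and the paper in fact only use the $3^D$ subgrid around $v_0$, so nothing is lost.
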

\begin{proof}
	For all $ v\in \mathcal{V} $, letting $ s^* = \Omega's $ and dividing \eqref{lem:kernelomega} by $ e^{-\left\lVert s-v \right\rVert^2/2} $, it follows that 
	\begin{equation*}
	\begin{split}
	c  &+ \sum_{j = 1}^D a_j\left(s^*_j - \sum_{l = 1}^D \Omega'_{jl} v_l\right)\\
	   &+ \sum_{ 1 \leq j \leq k \leq D} a_{jk} \left( \left(s^*_j - \sum_{l = 1}^D \Omega'_{jl} v_l\right)\left(s^*_k - \sum_{l = 1}^D \Omega'_{kl} v_l\right)  - \delta_{jk} \right) = 0.
	\end{split}
	\end{equation*}
	In particular,
	\begin{equation}\label{eq:lincor2}
	\begin{split}
	c  &+ \sum_{j = 1}^D a_j\left(s^*_j - \sum_{l = 1}^D \Omega'_{jl} v_l\right)\\
	   &+ \sum_{ 1 \leq j, k \leq D} a'_{jk} \left( \left(s^*_j - \sum_{l = 1}^D \Omega'_{jl} v_l\right)\left(s^*_k - \sum_{l = 1}^D \Omega'_{kl} v_l\right)  - \delta_{jk} \right) = 0.
	\end{split}
	\end{equation}
	where $ a'_{jk} = a_{jk}/2 $, $ j \not= k $ and $ a'_{jj} = a_{jj} $. 
	
	For $ i \in \{ 1, \dots, D \}$, fixing $ (v_1, \dots, v_{i-1}, v_{i+1}, \dots, v_D) $,
	we can view (\ref{eq:lincor2}) as a quadratic in $ v_i $.
	As such the only way that it can have more than two distinct solutions is
	if where the coefficient of $ v_i^2 $ is zero, i.e., 
	\begin{equation}\label{eq:ii}
		\sum_{1 \leq j,k \leq D} a_{jk}'\Omega'_{ji}\Omega'_{ki} = (\Omega' A' \Omega')_{ii} = 0\,.
	\end{equation}
	Similarly the coefficient of $ v_i$ must be zero, i.e.,
	\begin{equation}\label{eq:vn}
		\sum_{j=1}^D a_j \Omega'_{ji} + \sum_{j,k} \Omega'_{ji}\left(s_k^* -\sum_{m \not= i} \Omega'_{km}v_m\right) + \sum_{j,k} \left(s_k^* -\sum_{l \not= i} \Omega'_{jl}v_l\right)\Omega'_{ki} = 0.
	\end{equation}
	Now allowing $ v_n $ to vary for some $ n \not= i $, by the same logic, the coefficient of $ v_n $ in \eqref{eq:vn} is equal to zero, i.e., 
	\begin{equation*}
		\sum_{j,k} a_{jk}'\Omega'_{ji}\Omega'_{kn} + \sum_{j,k} a_{jk}'\Omega'_{jn}\Omega'_{ki} = (\Omega' A' \Omega')_{in} + (\Omega' A' \Omega')_{ni} = 2(\Omega' A' \Omega')_{in} = 0.
	\end{equation*}
	As such $ (\Omega' A' \Omega')_{in} = (\Omega' A' \Omega')_{ni} = 0 $ for all $ i \not=n $.
	Combining this with \eqref{eq:ii}, it follows that $ \Omega' A' \Omega' = 0$.
	In particular $ A' = 0 $ as $ \Omega' $ is invertible.
	Thus, the remaining linear equation in $v_i$ from \eqref{eq:lincor2} can only have more than
	one solution if  
	\begin{equation*}
		\sum_j a_j \Omega'_{ji} = 0.
	\end{equation*}
	Therefore
	$ (\Omega' a)_i = 0$, where $ a = (a_1, \dots, a_D)^T $. Since this holds for all $ i $ and $ \Omega' $ is invertible, we obtain $ a = 0 $. Finally this implies that $ c = 0. $
\end{proof}

\subsubsection{Establishing Proposition \ref{prop:GaussKernelLinearIndependence}}
\begin{proof}
	We can write $ K(s,v) = K^*(\Sigma^{-1/2}s,\Sigma^{-1/2}v)$ where $ K^* =e^{-\left\lVert s-v \right\rVert^2/2} $ is the isotropic Gaussian kernel. Arguing as in proof of Lemma 1 of \cite{Davenport2022} (taking $ \phi(s) = \Sigma^{-1/2}s $ and $ \varphi $ to be the identity in their notation), for each $ s \in S $ we have 
	\begin{equation*}
	\mathbb{V}\big(\nabla^2 K(s, v) \big) = L\big(\Sigma^{1/2} \otimes \Sigma^{1/2}\big)R \mathbb{V}\Big(\nabla^2 K^*\big( \Sigma^{-1/2}s, \Sigma^{-1/2}v \big) \Big)
	\end{equation*}
	where $\nabla^2$ as usual always denotes the Hessian with respect to the first argument
	and $ L \in \mathbb{R}^{D(D+1)/2 \times D^2}
	$ is the elimination matrix and $ R \in \mathbb{R}^{D^2 \times D(D+1)/2} $ is
	the duplication matrix, the precise definitions of which can be found in
	\cite{Magnus1980}. The matrix $L \big(\,  \Sigma^{1/2} \otimes  \Sigma^{1/2} \,\big)R$
	is invertible by Lemma 4.4.iv of \cite{Magnus1980}, and the fact that $ \Sigma^{1/2}$ is invertible.
	
	Moreover $ \nabla K\big(\Sigma^{-1/2}s, v\big) = \Sigma^{-1/2} \nabla K^*\big(\Sigma^{-1/2}s, \Sigma^{-1/2}v\big) $. As such there is an invertible linear transformation between the vector 
	\begin{equation*}
	\Big( K(s,v) , \nabla K( s, v ), \mathbb{V}\big(\nabla^2 K( s, v) \big) \Big) 
	\end{equation*}
	and the vector
	\begin{equation*}
	\Big( K^*\big(\Sigma^{-1/2}s,\Sigma^{-1/2}v\big), \nabla K^*\big( \Sigma^{-1/2}s, \Sigma^{-1/2}v \big), \mathbb{V}\big(\nabla^2 K^*\big( \Sigma^{-1/2}s, \Sigma^{-1/2}v\big)\big) \Big).
	\end{equation*}
	In particular if there exists constants $ c, a_d, a_{kl} $ such that $ 1 \leq d \leq D $ and $ 1\leq k\leq l \leq D $
	 (with at least one of them being non-zero) such that 
	\begin{equation}
	c K(s,v)  + \sum_{d = 1}^D a_d\partial_{d}^s K(s,v ) + \sum_{ 1 \leq k \leq l \leq D} a_{kl} \partial_{kl}^s K( s,v ) = 0,
	\end{equation}
	then there existing corresponding constants
	$ (c^*, a^*_d, a^*_{kl}) $ (with at least one of them being non-zero) such that 
	\begin{equation}
	\begin{split}
	c^* K^*\big( \Sigma^{-1/2}s, \Sigma^{-1/2}v \big)  &+ \sum_{d = 1}^D a^*_d\partial_{d}^s K^*\big( \Sigma^{-1/2}s, \Sigma^{-1/2}v \big)\\
	&+ \sum_{ 1 \leq  k\leq l \leq D} a^*_{kl} \partial_{kl}^s K^*\big( \Sigma^{-1/2}s, \Sigma^{-1/2}v \big) = 0.
	\end{split}
	\end{equation}
	Applying Lemma \ref{lem:nondegeniso} yields a contradiction and thus establishes the result.
\end{proof}

\subsection{Proof of Proposition \ref{cor:geometry-prop}}
\begin{proof}
	Note that for a SuRF $(\tilde X, X, K, \mathcal{V})$ we obtain the following identity
	\begin{equation}
    	\ska{\partial_d K_s}{ K_{s'} } =  \Cov{ \partial_d^s\tilde X(s)}{ \tilde X(s') } = \ska{\partial_d^s}{1_{s'}}\,.
	\end{equation}
	Hence the result for a normalized SuRF is a Corollary of Theorem \ref{thm:RiemannianMetric}.
\end{proof}

\subsection{Proof of Theorem \ref{thm:GKF}}
\begin{proof}
	In order to apply Theorem 12.4.2 from \cite{Adler2007} we need to prove that
	the assumptions \textbf{(G1)}-\textbf{(G3)} hold.
	The assumption that $\tilde X_\alpha$ is Gaussian with almost surely
	$C^2$-sample paths clearly holds by the assumption that $X$ is a Gaussian field on
	$\cV$ and
	$K(\cdot, v) \in C^3\big( \overline{\cM} \big)$ for all $v\in\cV$.
	The non-degeneracy condition follows from Proposition \ref{prop:nondegen}.
	The last assumption that there is an $\epsilon > 0$ such that
	$$
		\mathbb{E}\!\left[
			\big( \partial_{dd'} \tilde X_\alpha(x) -\partial_{dd'} \tilde X_\alpha(y)\big)^2
			\right]
		\leq K \big\vert \log\Vert x - y \Vert \big\vert^{-(1+\gamma)}
	$$
	for some $K>0$, all $d,d' \in \{1,...,D\}$ and for the charts $(\overline{U}_\alpha,\overline{\varphi}_\alpha)$, $\alpha \in \{1,\ldots,P\}$,	in the atlas of $\overline{\cM}$ covering $\cM$,
	all $x,y \in \overline{\varphi}\big( \overline{U}_\alpha \big) \cap \cM$ such that
	$\vert x - y\vert < \epsilon$ is established in Proposition \ref{prop:GaussG3}.
\end{proof}

\subsection{Proof of Theorem \ref{thm:L1VM}}
\paragraph{ Computation of $ \Theta(x) =  \int_{\mathbb{S}\big(\cT_x\partial_{1}\cM_\cV^\bot\big)}  \alpha(\nu)\mathcal{H}_{1}(d\nu)$ for Voxel Manifolds.}
In order to compute $\Theta(x)$ we need to introduce the normal Morse index $\alpha(\nu)$.
We specialize here to the case of  $\cM_\cV$ being a voxel manifold embedded into $\mR^3$,
yet the exact same concept is defined for any WS manifold, compare  \cite[Scn 9.2.1]{Adler2007}.

For any $x \in \cM_\cV$ and any direction $\nu \in \mathbb{S}\big(\cT_x\cM_\cV\big)$
the normal Morse index  is one minus the local Euler characteristic (EC) of the intersection of $\cM_\cV$,
the $\delta$-ball centered at $x$ and the affine
plane $\{ \lambda\in \mR^3:~\lambda^T \nu + x + \epsilon = 0 \}$ for $\epsilon >0$.
If $\delta$ is sufficiently small, this EC does not depend on $\epsilon$ provided
that $\epsilon$ is small enough. Although this definition sounds complicated at first, it can be easily computed
for all $x$ and $\nu$ for a voxel manifold. Note that $\cM_\cV$ or more precisely an open neighbourhood of it
might be endowed with a different Riemannian metric than
the standard Riemannian metric on $\mR^3$, which
in our case is the induced Riemannian metric $\boldsymbol{\Lambda}$
by a unit-variance random field $f$. In this case
$\nu \in \mathbb{S}\big(\cT_x\mR^3\big)$ are vectors $\nu\in \mR^3$ such that $\nu^T \boldsymbol{\Lambda}(x)\nu=1$.

In the case that $x\in\partial_3\cM_\cV$ and $\nu\in \mathbb{S}\big(\cT_x\mR^3\big)$ it
is obvious that $\alpha(\nu) = 0$ because the intersection of the affine plain defined by $\nu$ and the $\delta$-ball
is always homeomorphic to a filled disk, which has EC $1$. Similarly, if
$x\in\partial_2\cM_\cV$, then $\alpha(\nu) = 0$ for all $\nu\in \mathbb{S}\big(\cT_x\mR^3\big)\setminus\{\nu_{out}\}$.
Here $\nu_{out}$ is the unique outside pointing normal (w.r.t. the $\boldsymbol{\Lambda}$ metric) at $x\in\partial_2\cM_\cV$ and it holds that
$\alpha(\nu_{out})=0$ as  intersection of the affine plain defined by $\nu_{out}$ for a small enough $\delta$-ball
is again homeomorphic to a filled disk.
The interesting cases happen at the edges of the voxel manifold, i.e., for $x\in\partial_1\cM_\cV$.
Here the behavior of $\alpha(\nu)$ can be classified
by the three types of possible edges: the convex, the double convex and the concave edge.
These cases are shown in Figure \ref{fig:EdgeVisualization} in Appendix \ref{A:LKCdefn}.

The behavior of $\alpha$ for directions $\nu \in \mS(\cT_x\partial_{1}\cM_\cV^\bot)$
is demonstrated in Figure \ref{fig:alpha} within the
hyperplane $x + \cT_x\partial_{1}\cM_\cV^\bot$ with $\cM_\cV$. Here we show the two possible
intersection scenarios of the hyperplane (bold green line) orthogonal to the direction
$-\nu\in\mS(\cT_x\partial_{1}\cM_\cV^\bot)$ (green arrow) with
a small $\delta$-ball (dotted black line) and the voxel manifold $\cM_\cV$ from the definition of $\alpha$.
In particular, it can be seen that $\alpha(\nu)$ is constant, if $-\nu$ is inside the normal cone
$\mathcal{N}_x\cM_\cV$ (for a definition see \cite[p.189]{Adler2007}) and constant on
$\mS(\cT_x\partial_{1}\cM_\cV^\bot) \setminus \mathcal{N}_x\cM_\cV$ independent on the type of edge
to which $x$ belongs.
The geometric embedding of the the intersection of the (geometric) normal cone $x+\mathcal{N}_x\cM_\cV$
with the hyperplane $x + \cT_x\partial_{1}\cM_\cV^\bot$ is represented by the red shaded areas.
From this we deduce that on a convex edge the EC of the intersection of the green
hyperplane with the $\delta$-ball and $\cM_\cV$ is $0$, if
$-\nu\in \mS(\cT_x\partial_{1}\cM_\cV^\bot)$, as the intersection
is empty and $1$ otherwise because the intersection is homeomorphic to a disk.
Similar, it holds that the EC of the intersection
is $2$ for $x$ on a double convex or concave edge, if $-\nu\in \mS(\cT_x\partial_{1}\cM_\cV^\bot)$,
as the intersection is homeomorphic to the disjoint union of two disks and $1$ else because the intersection
is homeomorphic to a disk. Therefore we obtain that $\alpha(\nu)$ for $\nu\in\mS(\cT_x\partial_{1}\cM_\cV^\bot)$
is given by
	\begin{equation}
	\alpha( \nu ) = \begin{cases}
	~~1, \text{ if } x \text{ on convex edge and } -\nu\in \left( \mathcal{N}_x\cM_\cV \right)^\circ \\
	-1, \text{ if } x \text{ on a double convex edge and } -\nu\in \left( \mathcal{N}_x\cM_\cV \right)^\circ \\
	-1, \text{ if } x \text{ on concave edge and } -\nu\in \left( \mathcal{N}_x\cM_\cV \right)^\circ \\
	~~0,  \text{ else }
	\end{cases}\,.
	\end{equation}
Using this we can compute the function $\Theta(x)$.

	\begin{figure}[t]
		\begin{center}
			\hspace{1cm}\textbf{Convex Edge}	\hspace{4cm}		\textbf{Double Convex Edge}\\
			\includegraphics[width=0.4\textwidth]{\figurepath 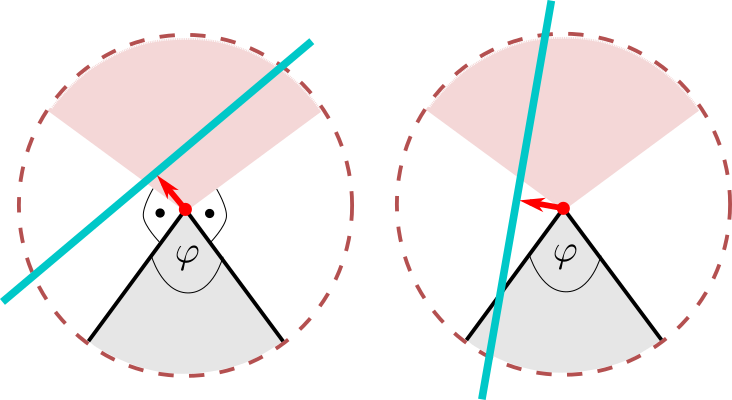}\quad\quad\quad\quad
			\includegraphics[width=0.4\textwidth]{\figurepath 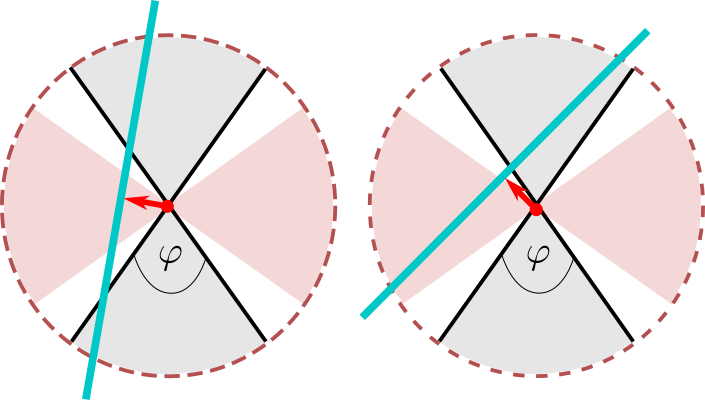}\\
			\textbf{Concave Edge}\\
			\includegraphics[width=0.4\textwidth]{\figurepath 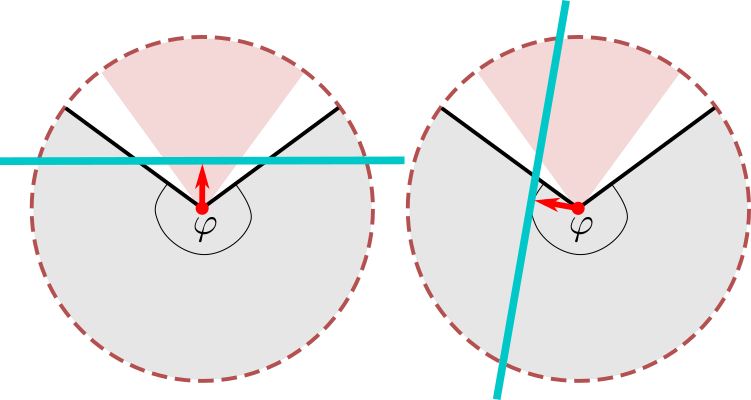}
		\end{center}
		\caption{The two different scenarios of intersections of the hyperplane at the
				 three different types of edges appearing in a voxel manifold $\cM_\cV$ illustrated
				 within the hyperplane $x + \cT_x\partial_{1}\cM_\cV^\bot$. The
				 dashed red line is the boundary of an $\delta$-ball centered at $x$.
				 The grey shaded area
				 belongs to $\cM_\cV$ and the red shaded area is the (geometric) normal cone
				 $x+\mathcal{N}_x\cM_\cV$ which needs to be orthogonal to the boundary of $\cM_\cV$
				 in $x + \cT_x\partial_{1}\cM_\cV^\bot$. The fact that the opening angles $\varphi$ are not necessarily
				 in $\{ \pi/2, 3\pi/2\}$ appear as $x + \cT_x\partial_{1}\cM_\cV^\bot$ is not
				 orthogonal to the edge $x$ belongs to.
				 The red arrow is the unit direction
				 $-\nu\in \mS(\cT_x\partial_{1}\cM_\cV^\bot)$ scaled by $\epsilon$ and the turquoise line is the hyperplane defined by $-\nu$.}\label{fig:alpha}
	\end{figure}
\FloatBarrier

\begin{lemma}\label{thm:theta}
	Let $\cM_\cV$ be a voxel manifold, $x\in\partial_1\cM_\cV$ and $U_x,V_x,W_x$ denote an ON frame for $\cT_x \overline{\cM_\cV}$. Define $M = V_x\times W_x = \big( m_1(x), m_2(x), m_3(x) \big)^T$. Define
	\begin{equation*}
	\beta(x) = \arccos\left( \frac{ m_2(x)m_3(x) }{ \sqrt{ m_2^2(x) + m_1^2(x) }\sqrt{ m_3^2(x) + m_1^2(x) } } \right).
	\end{equation*}
	Then we obtain
	\begin{equation*}
			\Theta(x) = \begin{cases}
					\pi - \beta(x)\,, & \text{if $x$ belongs to a convex edge}\\
					-2\beta(x)\,, & \text{if $x$ belongs to a double convex edge}\\
					\beta(x) - \pi\,, & \text{if $x$ belongs to a convex edge}
			\end{cases}\,.
	\end{equation*}
\end{lemma}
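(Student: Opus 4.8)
The plan is to substitute the explicit, piecewise-constant description of the normal Morse index $\alpha(\nu)$ established in the paragraphs preceding the lemma directly into $\Theta(x)=\int_{\mathbb{S}(\cT_x\partial_1\cM_\cV^\bot)}\alpha(\nu)\,\mathcal{H}_1(d\nu)$. Since $\alpha(\nu)$ vanishes unless $-\nu$ lies in the open normal cone $(\mathcal{N}_x\cM_\cV)^\circ$, and there equals $+1$ on a convex edge and $-1$ on a double-convex or concave edge, the integral collapses to $\pm$ the $\mathcal{H}_1$-measure of the arc $\{\nu\in\mathbb{S}(\cT_x\partial_1\cM_\cV^\bot):-\nu\in(\mathcal{N}_x\cM_\cV)^\circ\}$. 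Because $\mathbb{S}(\cT_x\partial_1\cM_\cV^\bot)$ is the $\boldsymbol{\Lambda}$-unit circle in the two-dimensional plane $P:=\cT_x\partial_1\cM_\cV^\bot$, parametrising it by $\boldsymbol{\Lambda}$-arc length shows that $\mathcal{H}_1$ restricted to it is exactly the subtended angle; hence $\Theta(x)$ is a signed angular width of the normal cone inside $P$, and the whole problem reduces to computing that width.

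Next I would pin down the geometry in $P$. The edge $\cF_I$ has $|I|=1$, say $I=\{k\}$, so the stratum runs along the coordinate axis $e_k$ and $P$ is the $\boldsymbol{\Lambda}$-orthogonal complement of $e_k$, spanned by the orthonormal frame vectors $V_x$ and $W_x=N_x$ from \eqref{prop:ONF} taken with first index $k$. The cross product $M=V_x\times W_x$ is Euclidean-normal to $P$, i.e. parallel to $\boldsymbol{\Lambda}e_k$, and this is the vector $(m_1,m_2,m_3)$ appearing in the statement. The two voxel faces abutting the edge lie in coordinate hyperplanes orthogonal to $e_l$ and $e_m$ (with $\{k,l,m\}=\{1,2,3\}$); their traces in $P$ are spanned by $e_m\times M$ and $M\times e_l$, and a short cross-product computation identifies the Euclidean angle between these two trace directions with $\beta(x)$ as defined through the $m_i$. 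The normal cone is the wedge in $P$ bounded by the outward conormals of the two faces.

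With the trace angle $\beta(x)$ in hand, the final step is the case analysis over the three edge types, read off from Figures \ref{fig:EdgeVisualization} and \ref{fig:alpha}. On a convex edge the region $\{-\nu\in(\mathcal{N}_x\cM_\cV)^\circ\}$ is the single exterior wedge of opening $\pi-\beta(x)$, so $\Theta(x)=\pi-\beta(x)$; on a concave edge the same wedge carries the opposite normal Morse index, giving $\Theta(x)=\beta(x)-\pi$; and on a double-convex edge the set $\{-\nu\in(\mathcal{N}_x\cM_\cV)^\circ\}$ splits into two antipodal wedges, each of opening $\beta(x)$ and each contributing $\alpha=-1$, so that $\Theta(x)=-2\beta(x)$. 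Assembling these three evaluations yields the claimed piecewise expression.

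The main obstacle is the angular computation in the middle step: because $P$ is $\boldsymbol{\Lambda}$-orthogonal — not Euclidean-orthogonal — to the edge, the plane is effectively tilted, so the two faces, although Euclidean-perpendicular in $\mathbb{R}^3$, meet $P$ in lines enclosing the non-right angle $\beta(x)$ (this is exactly the phenomenon that the opening angles in Figure \ref{fig:alpha} need not be $\pi/2$). Correctly expressing this angle through the cross product $M$, and then matching the opening of the normal-cone wedge — a single wedge of width $\pi-\beta$ for the convex and concave cases versus two antipodal wedges of width $\beta$ in the double-convex case — together with the sign of $\alpha$, is the delicate bookkeeping that the proof must carry out carefully.
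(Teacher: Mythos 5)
Your proposal is correct and follows essentially the same route as the paper: it reduces $\Theta(x)$ to the signed $\boldsymbol{\Lambda}$-angular width of the normal cone in the plane $\cT_x\partial_1\cM_\cV^\bot$ using the piecewise-constant values of $\alpha(\nu)$, identifies $\beta(x)$ as the angle between the trace lines of the two abutting faces via the Euclidean normal $M=V_x\times W_x$ (your directions $e_m\times M$ and $M\times e_l$ coincide with the paper's $(m_2/m_1,-1,0)$ and $(m_3/m_1,0,-1)$ up to scaling), and closes with the same three-case wedge count. The only minor omission relative to the paper is the degenerate case $m_1(x)=0$, which the paper handles with a one-line remark.
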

\begin{proof}
	Let $\beta(x)$ be the minimum of $\varphi$ and $2\pi-\varphi$ where $\phi$ is the opening angle
	within $\cM_\cV$ from Figure \ref{fig:alpha}.
	This yields
	\begin{equation*}
	\begin{split}
	\int_{\mS(\cT_x\partial_{1}\cM_\cV^\bot)}  \alpha(\nu_{2})\mathcal{H}_{1}(d\nu_{2})
	&=  \int_{0}^{2\pi}  \1_{ -\sin(t) E_1 - \cos(t)E_2 \in \left( \mathcal{N}_x\cM_\cV \right)^\circ  } dt\\
	&= \begin{cases}
					\pi - \beta(x)\,, & \text{if $x$ on a convex edge}\\
					-2\beta(x)\,, & \text{if $x$ on a double convex edge}\\
					\beta(x) - \pi\,, & \text{if $x$ on a convex edge}
			\end{cases}\,,
	\end{split}
	\end{equation*}
	because the characteristic function is only $1$,
	if $-\nu$ belongs to the normal cone and hence the integral is equal to the opening
	angle of the normal cone (red shaded area in Fig. \ref{fig:alpha}).
	
	It remains to compute the angle $\beta(x)$ which is obtained by computing the angle between the intersection of
	the affine plane $x + \cT_x\partial_{1}\cM_\cV^\bot$ and the boundary of $\cM_\cV$
	at $x$. We only treat the case of $x \in \cM_\cV$ lying on a convex edge.
	Double convex and concave edges follow analogously. Since the tangent space and $\mR^3$ can be identified
	we assume w.l.o.g. that $x=0$ and the voxel is given by the set
	$
	\{ y\in\mR^3:~y_3 \leq 0, y_2 \leq 0 \}
	$.
	(note we extend the edge infinitely, which does not make a difference in this argument)
	Its boundary is given by the set
	$
	A \cup B = \{ y \in \mR^3:~y_3 = 0, y_2 \leq 0 \} \cup \{ y\in\mR^3:~y_3 \leq 0, y_2 = 0 \}
	$.
	The edge to which $x$ belongs is given by
	$
	E = \{ y \in \mR^3:~y_3 = 0, y_2 = 0 \}
	$.
	An orthonormal basis at $x$ with $U_x$ spanning $\cT_x E$ is given in
	Proposition \ref{prop:ONF}. The plane in which the unit circle
	$\mS^1\big( \cT_xE^\bot \big)$ lies is given by the linear span of $V_x,W_x$,
	which we denote by
	$
	\cF = \big\{\, y\in\mR^3:~m_1(x) y_1 + m_2(x) y_2 + m_3(x) y_3 = 0 \,\big\}
	$
	for some $m_1(x),m_2(x),m_3(x)\in\mR$. The intersection $\cF\cap A$ and $\cF\cap B$ are given by
	\begin{equation}
	\begin{split}
	A \cap \cF &= \big\{\, y\in \mR^3:~ m_1(x)y_1 + m_2(x)y_2 = 0 ~\wedge~ y_2\leq0  ~\wedge~ y_3=0 \big\}\\
	B \cap \cF &= \big\{\, y\in \mR^3:~ m_1(x)y_1 + m_3(x)y_3 = 0 ~\wedge~ y_3\leq0  ~\wedge~ y_2=0 \big\}
	\end{split}
	\end{equation}
	By construction $m_1(x) \neq 0$, since otherwise $\cF\cap A = \cF\cap B = E$
	and hence $V,W$ cannot be both orthogonal to $E$ which contradicts the
	assumption that $U_x,V_x,W_x$ form an orthonormal basis for $\cT_x\overline{\cM_\cV}$.
	Thus, if $m_1(x) \neq 0$ we have that $( m_2(x)/m_1(x), -1, 0 )$ and $( m_3(x)/m_1(x), 0, -1 )$
	are vectors in the intersection, which we can identify with tangent directions along $E$. Thus,
	\begin{equation}
	\begin{split}
	\varphi &= \arccos\left( \frac{ \frac{m_2(x)m_3(x)}{m_1^2(x)} }{ \sqrt{ \frac{m_2^2(x)}{m_1^2(x)} + 1 } \sqrt{ \frac{m_3^2(x)}{m_1^2(x)} + 1 }} \right)\\
	 &= \arccos\left( \frac{ m_2(x)m_3(x) }{ \sqrt{ m_2^2(x) + m_1^2(x) } \sqrt{ m_2^2(x) + m_1^2(x) } } \right)\,.
	\end{split}
	\end{equation}
	The same formula holds true if $m_1(x)=0$ and $m_2(x)\neq 0$ and $m_3(x)\neq 0$.
\end{proof}

\paragraph{ Computation of $\int_{\partial_{2} M} \boldsymbol{\Lambda}_x\Big(\overline{\nabla}_{U_x} U_x+\overline{ \nabla}_{V_x}V_x, N_x\Big) \mathcal{H}_{2}(dx)$  for Voxel Manifolds }
Assume w.l.o.g. that $\cT_x\cM_\cV$ for $x\in\partial_2 \cM_\cV$ is contained in the $x_1$-$x_2-$plane.
An orthonormal frame is given by $U_x,V_x,W_x$ from \eqref{prop:ONF} 
and by construction $N_x=\pm W_x$, where the sign depends on whether $W_x$
is inward or outward pointing.
Using the coordinate representations $U_x = \sum_{d=1}^3 U_d(x)\partial_d$ and
$V_x = \sum_{d=1}^3 V_d(x)\partial_d$, linearity and product rule for the covariant
derivative and formula \eqref{eq:covariantDeriv}, we obtain
\begin{equation*}
\begin{aligned}
 \boldsymbol{\Lambda}_x\big( \overline{\nabla}_{U_x} U_x, N_x \big) 
&=  N^T_x  \boldsymbol{\Lambda}(x) U_1(x) \begin{pmatrix} \partial_1 U_{1}(x)\\0\\0 \end{pmatrix} + 
U_1^2(x) N^T_x \begin{pmatrix} \Gamma_{111}(x)\\ \Gamma_{112}(x)\\ \Gamma_{113}(x)\end{pmatrix}\\
&=  U_1^2(x) N^T_x \begin{pmatrix} \Gamma_{111}(x)\\ \Gamma_{112}(x)\\ \Gamma_{113}(x)\end{pmatrix}\,.
\end{aligned}
\end{equation*}
Here the second equality is due to the fact that $N_x \propto \boldsymbol{\Lambda}^{-1}E_3$, if represented as a vector. Similarly it holds that
\begin{equation*}
\begin{split}
 \boldsymbol{\Lambda}_x\big( \overline{\nabla}_{V_x} V_x, N_x ) 
&=
V_1^2(x) N^T_x \begin{pmatrix} \Gamma_{111}(x)\\ \Gamma_{112}(x)\\ \Gamma_{113}(x)\end{pmatrix}  + 
V_2^2(x) N^T_x \begin{pmatrix} \Gamma_{221}(x)\\ \Gamma_{222}(x)\\ \Gamma_{223}(x)\end{pmatrix}\\
&\quad\quad+ V_1(x)V_2(x) N^T_x \begin{pmatrix} \Gamma_{121}(x)\\ \Gamma_{122}(x)\\ \Gamma_{123}(x)\end{pmatrix}
\end{split}
\end{equation*}

Summarizing this yields the following proposition about the trace of the shape operator along
$\partial_2\cM_\cV$, i.e.,
$\overline{g}\Big(\overline{\nabla}_{U_x} U_x+\overline{\nabla}_{V_x}V_x, N_x\Big)$.
\begin{proposition}
	Let $\cM_\cV$ be a voxel manifold and assume that $x \in \partial_2 \cM_\cV$ such that
	$\cT_x\partial_2 \cM_\cV$ is spanned by $E_k, E_l$. Then
	\begin{equation*}
	\begin{aligned}
	 \boldsymbol{\Lambda}_x\Big(\overline{\nabla}_{U_x} U_x + &\overline{\nabla}_{V_x}V_x, N_x\Big)\\
	&=
	\left( U_k^2(x) + V_k^2 \right) N_x^T \begin{pmatrix} \Gamma_{kk1}(x)\\ \Gamma_{kk2}(x)\\ \Gamma_{kk3}(x)\end{pmatrix}
	+  V_l^2(x) N_x^T \begin{pmatrix} \Gamma_{ll1}(x)\\ \Gamma_{ll2}(x)\\ \Gamma_{ll3}(x)\end{pmatrix}\\
	&\quad\quad+ V_k(x)V_l(x) N_x^T \begin{pmatrix} \Gamma_{kl1}(x)\\ \Gamma_{kl2}(x)\\ \Gamma_{kl3}(x)\end{pmatrix}
	\end{aligned}
	\end{equation*}
\end{proposition}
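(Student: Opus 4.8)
The plan is to prove the general statement by reducing it to the explicit computation carried out immediately before the proposition, which treats the case $\cT_x\partial_2\cM_\cV \subset \{x_3 = \text{const}\}$, i.e.\ $(k,l,m) = (1,2,3)$. For a voxel manifold every two-dimensional stratum is a piece of an axis-aligned coordinate hyperplane $\{x_m = \text{const}\}$, so its tangent space $\cT_x\partial_2\cM_\cV$ is spanned by the two remaining coordinate vector fields $\partial_k, \partial_l$ with $\{k,l,m\} = \{1,2,3\}$, exactly as the hypothesis asserts. Relabelling the coordinate axes, the argument is then verbatim the one given above with the indices $1,2,3$ replaced by $k,l,m$, so it suffices to record which terms survive.

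First I would read off the coordinate components of the frame from \eqref{prop:ONF}. The field $U_x = \Lambda_{kk}^{-1/2}(x)e_k$ has only a nonzero $k$-component, and $V_x$ lies in the plane spanned by $e_k$ and $e_l$, so that $U_d = 0$ unless $d = k$ and $V_d = 0$ unless $d \in \{k,l\}$. The crucial structural fact is that the unit normal satisfies $N_x \propto \boldsymbol{\Lambda}^{-1}(x)e_m$, whence for any index $j$,
\begin{equation*}
	\boldsymbol{\Lambda}_x(\partial_j, N_x) = e_j^T\boldsymbol{\Lambda}(x)N_x \propto e_j^Te_m = \delta_{jm},
\end{equation*}
which vanishes for $j \in \{k,l\}$ because $m \notin \{k,l\}$.

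Next I would expand $\overline{\nabla}_{U_x}U_x$ and $\overline{\nabla}_{V_x}V_x$ using \eqref{eq:covariantDeriv}. Each covariant derivative splits into coefficient-derivative terms of the form $V_d\,\partial_d(V_{d'})\,\partial_{d'}$ and Christoffel terms $V_d V_{d'}\overline{\nabla}_{\partial_d}\partial_{d'}$, and analogously for $U_x$. Pairing with $N_x$ and invoking the displayed orthogonality, every coefficient-derivative term dies, since its surviving free index $d'$ lies in $\{k,l\}$ and $\boldsymbol{\Lambda}_x(\partial_{d'}, N_x) = 0$. For the Christoffel terms I would use $\boldsymbol{\Lambda}(x)\overline{\nabla}_{\partial_d}\partial_{d'} = (\Gamma_{dd'1}(x), \dots, \Gamma_{dd'D}(x))^T$, which turns $\boldsymbol{\Lambda}_x(\overline{\nabla}_{\partial_d}\partial_{d'}, N_x)$ into $N_x^T(\Gamma_{dd'1}, \Gamma_{dd'2}, \Gamma_{dd'3})^T$. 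Restricting the index pairs $(d,d')$ to the nonzero coefficients then leaves $U_k^2\,N_x^T(\Gamma_{kk\bullet})$ from the $U$-term together with $V_k^2\,N_x^T(\Gamma_{kk\bullet}) + V_l^2\,N_x^T(\Gamma_{ll\bullet})$ and the off-diagonal $(k,l)$ and $(l,k)$ contributions from the $V$-term; merging the latter via the symmetry $\Gamma_{kl\bullet} = \Gamma_{lk\bullet}$ of the Christoffel symbols of the first kind and summing the two covariant derivatives gives exactly the three displayed terms.

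I expect no analytic obstacle here: the whole statement is a pointwise linear-algebra identity valid at each $x \in \partial_2\cM_\cV$. The only point requiring care is the bookkeeping — correctly extracting the nonzero components of $U_x$ and $V_x$ from the orthonormal frame \eqref{prop:ONF} and verifying the $\boldsymbol{\Lambda}$-orthogonality $\boldsymbol{\Lambda}_x(\partial_k, N_x) = \boldsymbol{\Lambda}_x(\partial_l, N_x) = 0$, which is the single mechanism responsible for discarding all non-Christoffel contributions and collapsing the trace of the shape operator to the claimed Christoffel-symbol expression.
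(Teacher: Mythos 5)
Your method is exactly the paper's: reduce to an axis-aligned coordinate plane (w.l.o.g. $(k,l,m)=(1,2,3)$ after relabelling), read off from \eqref{prop:ONF} that $U_x$ has only an $e_k$-component and $V_x \in \mathrm{span}(e_k,e_l)$, expand $\overline{\nabla}_{U_x}U_x$ and $\overline{\nabla}_{V_x}V_x$ via \eqref{eq:covariantDeriv}, annihilate all coefficient-derivative terms through $N_x \propto \boldsymbol{\Lambda}^{-1}(x)e_m$, i.e. $\boldsymbol{\Lambda}_x(\partial_j, N_x) \propto \delta_{jm}$, and rewrite the Christoffel contributions as $N_x^T\big(\Gamma_{dd'1}(x),\Gamma_{dd'2}(x),\Gamma_{dd'3}(x)\big)^T$. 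These are precisely the steps of the paper's displayed computation preceding the proposition, including the observation that $N_x\propto \boldsymbol{\Lambda}^{-1}E_3$ is the single mechanism that kills the non-Christoffel terms.

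However, one step as you state it does not close. The expansion of $\overline{\nabla}_{V_x}V_x$ contains both index pairs $(d,d')=(k,l)$ and $(d,d')=(l,k)$; merging them via the symmetry $\Gamma_{kld''}=\Gamma_{lkd''}$, as you propose, produces the cross term with coefficient $2V_k(x)V_l(x)$, not $V_k(x)V_l(x)$ as in the displayed identity, so your claim that the merged sum gives ``exactly the three displayed terms'' contradicts your own (correct) bookkeeping by a factor of $2$ on the off-diagonal contribution. Notably, the paper's own proof has the same feature: its display for $\boldsymbol{\Lambda}_x\big(\overline{\nabla}_{V_x}V_x, N_x\big)$ lists the cross term only once, with coefficient $V_1(x)V_2(x)$, silently dropping the symmetric twin $(d,d')=(2,1)$, and the statement (as well as the corresponding term in Theorem \ref{thm:L1VM}) inherits this. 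So your careful index accounting actually exposes an apparent missing factor of $2$ in the stated formula rather than validating it; as written, your proof asserts an equality your derivation does not deliver. You should either carry the factor $2$ through and flag the mismatch with the statement, or identify a convention that absorbs it --- none is given in Appendix \ref{app:InducedRiemann}, where $\Gamma_{dd'h'}$ enters the covariant derivative in the standard, symmetric way.
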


\subsection{Proof of Theorem \ref{thm:unbiased}}
We prove this result in more generality as the restriction to SuRFs is not necessary. The proof extends that of \cite{Taylor2007} which established the result for the trivial linear model.
To do so assume that $Y_1, \ldots, Y_N\sim Y$ is an iid sample of Gaussian random fields
on a voxel manifold $\cM_\cV$. We restrict here to voxel manifolds to avoid working in a
chart, however, the proof easily generalizes. This sample we represent as the vector
$\boldsymbol{Y} = \big( Y_1, \ldots, Y_N \big)^T$ and therefore the corresponding estimator generalizing
\eqref{eq:riem_SuRFest} is, for $x\in\cM_\cV$,
\begin{equation}\label{eq:GenLambdaEst}
\begin{split}
        \hat{\Lambda}_{dd'}(x) &=
        	\frac{ \Cov{ \partial_{d}\boldsymbol{Y}(x)}{ \partial_{d'}\boldsymbol{Y}(x) } }
        		 { \Var\left[ \boldsymbol{Y}(x) \right] }\\
       	  	&=
        	\Cov{ \partial_{d} \left(\frac{\boldsymbol{Y}(x)}{\sqrt{\Var\left[ \boldsymbol{Y}(x) \right]}} \right)}
        		{ \partial_{d'} \left(\frac{\boldsymbol{Y}(x)}{\sqrt{\Var\left[ \boldsymbol{Y}(x) \right]}} \right)}\\
       	  &\quad\quad\quad\quad\quad- \frac{ \Cov{ \partial_{d}\boldsymbol{Y}(x)}{ \boldsymbol{Y}(x) }
       	  		   \Cov{ \boldsymbol{Y}(x)}{ \partial_{d'}\boldsymbol{Y}(x) }}
       	  		 { \Var\left[ \boldsymbol{Y}(x) \right]^2 } \,.
\end{split}
\end{equation}
Here as in the main manuscript the operation $\Var\big[\cdot\big]$ and $\Cov{\cdot}{\cdot }$ denote the sample variance and sample covariance respectively.
We define the vector of normalized residuals to be $\mathbf{R}(x)= \frac{\boldsymbol{\rm H}\boldsymbol{Y}(x)}{\Vert \boldsymbol{\rm H}\boldsymbol{Y}(x) \Vert}$,
 $x\in\cM_\cV$, where $\Vert \cdot \Vert$ denotes the Euclidean norm and
 $\boldsymbol{\rm H} = I_{N\times N} - \mathds{1}\mathds{1}^T$ with $\mathds{1}^T = (1,\ldots,1) \in \mathbb{R}^N$,
 is the centering matrix. Note that $\mathbf{R}$ does not depend on the (unknown) variance $\Var[Y]$
 as for the sample $\tilde{\boldsymbol{Y}} = \boldsymbol{Y} / \sqrt{\Var[Y]}$ we have that
 $\tilde{\mathbf{R}}(x)=\mathbf{R}(x)$ for all $x\in\cM_\cV$.
 Thus, using
 \begin{equation*}
 	\nabla \mathbf{X}(x) = \begin{pmatrix}
 		\frac{\partial X_1}{\partial x_1}(x) & \ldots & \frac{\partial X_1}{\partial x_D}(x)\\
 		\vdots & \ddots & \vdots \\
 		\frac{\partial X_D}{\partial x_1}(x) & \ldots & \frac{\partial X_D}{\partial x_D}(x)\\
\end{pmatrix} 	 
\end{equation*}
for $ \mathbf{X} = (X_1,\ldots, X_D)^T \in C\big( \cM_\cV, \mathbb{R}^D \big)$,
 we can rewrite \eqref{eq:GenLambdaEst} in matrix terms as
 \begin{equation*}
 	\hat{\boldsymbol{\Lambda}} = \big(\nabla \mathbf{R}\big)^T \nabla \mathbf{R}\,.
 \end{equation*}
 Recall that the true underlying $\boldsymbol{\Lambda}$ is given by
 \begin{equation*}
 	\boldsymbol{\Lambda} = \mathbb{E}\left[
 							\left(\nabla\left( \frac{Y - \mathbb{E}[Y]}{\sqrt{\Var[Y]}} \right)\right)^T
 							\nabla\left( \frac{Y - \mathbb{E}[Y]}{\sqrt{\Var[Y]}} \right) 							
 							\right]
 					= \mathbb{E}\left[ \big(\nabla\tilde{Y}^T \big) \nabla\tilde{Y} \right]\,,
 \end{equation*} 
 where $\tilde{Y} = \frac{Y}{\sqrt{\Var[Y]}}$.
 Using this we obtain the following lemma which the proof of which follows the corresponding proof in \cite{Taylor2007}.
 \begin{lemma}\label{lem:DetExpect}
 Under the assumption and notation described above, we have that 
 	\begin{equation}
 		\mathbb{E}\left[ \sqrt{\det\left( \hat{\boldsymbol{\Lambda}}(x) \right)} \right]
 			=   \sqrt{\det\left( \boldsymbol{\Lambda}(x) \right)}
 	\end{equation}
 \end{lemma}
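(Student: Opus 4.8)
The plan is to reduce the statement to a finite-dimensional computation at the fixed point $x$, exploiting that $\hat{\boldsymbol{\Lambda}}(x)$ and $\boldsymbol{\Lambda}(x)$ are unchanged when $\boldsymbol{Y}$ is replaced by the normalized field $\tilde{\boldsymbol{Y}}=\boldsymbol{Y}/\sqrt{\Var[Y]}$, so I may assume throughout that $Y$ is zero-mean and unit-variance (following \cite{Taylor2007}). The crucial gain from unit variance is that $\partial_d\Var[Y(x)]=0$ forces $\Cov{Y(x)}{\partial_d Y(x)}=0$, so by joint Gaussianity the value vector $\mathbf{y}=\boldsymbol{Y}(x)$ (i.i.d.\ $N(0,1)$ entries) is independent of the gradient matrix $\mathbf{G}=\nabla\boldsymbol{Y}(x)\in\mathbb{R}^{N\times D}$, whose rows are i.i.d.\ $N(0,\boldsymbol{\Lambda}(x))$; moreover the correction term in \eqref{eq:GenLambdaEst} then equals the induced-metric correction of Theorem \ref{thm:RiemannianMetric}. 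Writing $\mathbf{e}=\boldsymbol{\rm H}\mathbf{y}$ and $\mathbf{r}=\mathbf{R}(x)=\mathbf{e}/\|\mathbf{e}\|$, differentiating $\mathbf{R}$ gives $\nabla\mathbf{R}=\|\mathbf{e}\|^{-1}(I-\mathbf{r}\mathbf{r}^T)\boldsymbol{\rm H}\mathbf{G}$, and since $\boldsymbol{\rm H}\mathbf{r}=\mathbf{r}$ I would first record the algebraic identity $\hat{\boldsymbol{\Lambda}}(x)=\|\mathbf{e}\|^{-2}\,\mathbf{G}^T P\,\mathbf{G}$, where $P=\boldsymbol{\rm H}-\mathbf{r}\mathbf{r}^T$ is the orthogonal projection onto $\{\mathds{1},\mathbf{r}\}^\perp$, of rank $N-2$ (this matches the sample-covariance form of $\hat{\boldsymbol{\Lambda}}$ after using $P^TP=P$).

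Next I would condition on $\mathbf{y}$, which fixes $\mathbf{e}$, $\mathbf{r}$ and $P$ while the independent $\mathbf{G}$ retains its i.i.d.\ rows $N(0,\boldsymbol{\Lambda})$. Taking any orthonormal basis $\mathbf{u}_1,\dots,\mathbf{u}_{N-2}$ of $\mathrm{range}(P)$, the vectors $\mathbf{w}_k=\mathbf{G}^T\mathbf{u}_k$ satisfy $\Cov{\mathbf{w}_k}{\mathbf{w}_l}=(\mathbf{u}_k^T\mathbf{u}_l)\boldsymbol{\Lambda}=\delta_{kl}\boldsymbol{\Lambda}$, so $\{\mathbf{w}_k\}$ are i.i.d.\ $N(0,\boldsymbol{\Lambda})$ and $\mathbf{G}^TP\mathbf{G}=\sum_k\mathbf{w}_k\mathbf{w}_k^T$ is $\mathrm{Wishart}(\boldsymbol{\Lambda},N-2)$; critically this conditional law is free of $\mathbf{y}$. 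Hence $\sqrt{\det(\mathbf{G}^TP\mathbf{G})}$ decouples from the scalar $\|\mathbf{e}\|^{-D}$ and the expectation factorizes as $\mathbb{E}[\sqrt{\det\hat{\boldsymbol{\Lambda}}(x)}]=\mathbb{E}[\|\mathbf{e}\|^{-D}]\cdot\mathbb{E}[\sqrt{\det W}]$ with $W\sim\mathrm{Wishart}(\boldsymbol{\Lambda},N-2)$. Writing $W=\boldsymbol{\Lambda}^{1/2}(\mathbf{Z}^T\mathbf{Z})\boldsymbol{\Lambda}^{1/2}$ for a standard Gaussian $\mathbf{Z}\in\mathbb{R}^{(N-2)\times D}$ extracts the factor $\sqrt{\det\boldsymbol{\Lambda}}$, reducing the claim to the scalar identity $\mathbb{E}[\|\mathbf{e}\|^{-D}]\,\mathbb{E}[\sqrt{\det(\mathbf{Z}^T\mathbf{Z})}]=1$.

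Finally I would close with standard $\chi$-moments. As $\boldsymbol{\rm H}$ is a rank $N-1$ projection onto $\mathds{1}^\perp$, $\|\mathbf{e}\|^2\sim\chi^2_{N-1}$ and $\mathbb{E}[\|\mathbf{e}\|^{-D}]=2^{-D/2}\Gamma(\tfrac{N-1-D}{2})/\Gamma(\tfrac{N-1}{2})$; by the Bartlett decomposition $\sqrt{\det(\mathbf{Z}^T\mathbf{Z})}=\prod_{d=1}^D\chi_{N-1-d}$ in distribution, whose expectation telescopes to $2^{D/2}\Gamma(\tfrac{N-1}{2})/\Gamma(\tfrac{N-1-D}{2})$, and the two factors multiply to $1$. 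I expect the main obstacle to be the middle step: recognizing that projecting the independent-row Gaussian matrix $\mathbf{G}$ through the random but rank-$(N-2)$ projection $P$ yields a Wishart law whose distribution does not depend on $\mathbf{y}$ — this rotation invariance is precisely what lets the two random normalizations decouple and is the heart of the argument; the subsequent $\chi$-moment bookkeeping (valid for $N\ge D+2$, ensuring both the negative moment and the determinant exist) is routine once the telescoping is spotted.
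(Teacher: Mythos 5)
Your proof is correct and follows essentially the same route as the paper: both rewrite $\hat{\boldsymbol{\Lambda}}(x)=\Vert \mathbf{e}\Vert^{-2}\,\mathbf{G}^T P\,\mathbf{G}$ with the rank-$(N-2)$ projection $P=\boldsymbol{\rm H}-\mathbf{r}\mathbf{r}^T$, show the conditional law of $\mathbf{G}^T P\,\mathbf{G}$ is ${\rm Wish}_D(\boldsymbol{\Lambda}(x),N-2)$ and free of the conditioning values (hence independent of $\Vert\mathbf{e}\Vert^2\sim\chi^2_{N-1}$), and then cancel the resulting $\chi$-moments. The only differences are presentational: you verify the Wishart step by hand via an orthonormal basis of the range of $P$ and evaluate the determinant moment through the Bartlett decomposition with explicit Gamma-function telescoping (also making explicit the unit-variance independence of $\boldsymbol{Y}(x)$ and $\nabla\boldsymbol{Y}(x)$ and the requirement $N\geq D+2$), where the paper instead cites Cochran's theorem and Mardia's moment formulas.
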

\begin{proof}
	$\boldsymbol{\rm H}$ is a projection matrix and so $\boldsymbol{\rm H}\boldsymbol{\rm H}=\boldsymbol{\rm H}$ and $\boldsymbol{\rm H}^T=\boldsymbol{\rm H}$. We can expand $\nabla \boldsymbol{R}$ in terms of $\tilde{\boldsymbol{Y}}$ as follows:
	\begin{equation*}
		\nabla\boldsymbol{R}
		= \frac{\nabla \boldsymbol{\rm H}\tilde{\boldsymbol{Y}}}{ \Vert \boldsymbol{\rm H}\tilde{\boldsymbol{Y}} \Vert}
			- \frac{\boldsymbol{\rm H}\tilde{\boldsymbol{Y}}\nabla\big( (\boldsymbol{\rm H}\tilde{\boldsymbol{Y}})^T\boldsymbol{\rm H}\tilde{\boldsymbol{Y}}\big)}{ 2\Vert \boldsymbol{\rm H}\tilde{\boldsymbol{Y}} \Vert^3}
		= \Bigg( I_{N\times N} - \frac{ \boldsymbol{\rm H} \tilde{\boldsymbol{Y}} (\boldsymbol{\rm H}\tilde{\boldsymbol{Y}})^T}{ \Vert \boldsymbol{\rm H}\tilde{\boldsymbol{Y}} \Vert^2} \Bigg)\frac{  \boldsymbol{\rm H}\nabla \tilde{\boldsymbol{Y}} }{\Vert \boldsymbol{\rm H}\tilde{\boldsymbol{Y}} \Vert}\,.
	\end{equation*}
	Now note that
	\begin{equation*}
		\boldsymbol{\rm H}\Bigg( I_{N\times N} - \frac{ \boldsymbol{\rm H} \tilde{\boldsymbol{Y}} (\boldsymbol{\rm H}\tilde{\boldsymbol{Y}})^T}{ \Vert \boldsymbol{\rm H}\tilde{\boldsymbol{Y}} \Vert^2} \Bigg) \boldsymbol{\rm H}
		= \boldsymbol{\rm H} - \frac{ \boldsymbol{\rm H} \tilde{\boldsymbol{Y}} (\boldsymbol{\rm H}\tilde{\boldsymbol{Y}})^T}{ \Vert \boldsymbol{\rm H}\tilde{\boldsymbol{Y}} \Vert^2}
	\end{equation*}
	and therefore it is idempotent with
	\begin{equation*}
		\tr\Bigg( \boldsymbol{\rm H} - \frac{ \boldsymbol{\rm H} \tilde{\boldsymbol{Y}} (\boldsymbol{\rm H}\tilde{\boldsymbol{Y}})^T}{ \Vert \boldsymbol{\rm H}\tilde{\boldsymbol{Y}} \Vert^2}  \Bigg)
		=  \tr\big( \boldsymbol{\rm H} \big) - 1
	\end{equation*}
	Thus, applying Cochran's Theorem \cite[Theorem 3.4.4]{Mardia}, it follows that for all $x \in S$,
	\begin{equation*}
		\big(\nabla\boldsymbol{R}(x)\big)^T \nabla\boldsymbol{R}(x) ~\vert~\tilde{\boldsymbol{Y}}(x) ~\sim~ {\rm Wish}_D\big(\boldsymbol{\Lambda}\Vert \boldsymbol{\rm H}\tilde{\boldsymbol{Y}} \Vert^{-2}, \tr\big( \boldsymbol{\rm H} \big) - 1 \big).
	\end{equation*}
	As such by \cite[Corollary 3.4.1.2]{Mardia}
	\begin{equation*}
	\begin{split}
	 \boldsymbol{W}(x)
	 &\sim \Vert \boldsymbol{\rm H}\tilde{\boldsymbol{Y}} \Vert^2 \boldsymbol{\Lambda}(x)^{-1/2} \big(\nabla \boldsymbol{R}(x)\big)^T\nabla \boldsymbol{R}(x) \boldsymbol{\Lambda}(x)^{-1/2} ~\vert~\tilde{\boldsymbol{Y}}(x)\\
	 &\sim  {\rm Wish}_D\big(I_{N\times N}, \tr( \boldsymbol{\rm H} ) - 1 \big)
	\end{split}
	\end{equation*}
	Taking determinants and rearranging it follows that, unconditionally,
	\begin{equation*}
		\sqrt{\det\Big( \big(\nabla \boldsymbol{R}(x)\big)^T \nabla \boldsymbol{R}(x)  \Big)}
		\sim \sqrt{\det\big( \boldsymbol{\Lambda}(x) \big)\det\big(\boldsymbol{W}(x)\big) V(x)^{-D}}\,,
	\end{equation*}
	where $\boldsymbol{W}(x)$ is independent of $V (x) = \Vert \boldsymbol{\rm H}\tilde{\boldsymbol{Y}} \Vert^2 \sim \chi^2_{\tr( \boldsymbol{\rm H} )}$.
	From the independence, the formula for the moments of $\chi^2$-distributions and Theorem 3.4.8  from
	\cite{Mardia} we obtain
	\begin{equation*}
	\begin{split}
		\mathbb{E}\Bigg[ \sqrt{\det\Big(  \big(\nabla \boldsymbol{R}(x)\big)^T \nabla \boldsymbol{R}(x) \Big)} \Bigg]
		&= \sqrt{\det\big( \boldsymbol{\Lambda}(x) \big)}\mathbb{E}\Big[  \sqrt{\det\big(\boldsymbol{W}(x)\big)} \Big]\\ \mathbb{E}\Big[V(x)^{-\frac{D}{2}}\Big]
		&= \det\big( \boldsymbol{\Lambda}(x) \big)^{1/2}\,.
	\end{split}
	\end{equation*}
\end{proof}
\begin{remark}
	Lemma \ref{lem:DetExpect} also holds for estimates of $ \mathbf{\Lambda} $ in a linear model, compare \cite{Taylor2007}. This follows by a similar proof as, in that setting, the residuals $\mathbf{R}$ which are used to calculate $\hat{\boldsymbol{\Lambda}}$ are obtained by $\mathbf{R} = \mathbf{P}\mathbf{Y}$, where $\mathbf{P}$ is idempotent.
\end{remark}

\begin{proof}[Proof of Theorem \ref{thm:unbiased}]
In what follows we establish the results for $\hat{\cL}_D^{(r)}$. The proof for $\hat{\cL}_{D-1}^{(r)}$
is identical as each $\boldsymbol{\Lambda}^I$ is a $(D-1)\times(D-1)$-submatrix of $\boldsymbol{\Lambda}$.

Using Lemma \ref{lem:DetExpect} and the approximation of integrals of continuous functions by Riemann sums yields
\begin{align*}
	\lim_{r\rightarrow\infty}
			\mathbb{E}\left[\hat{\cL}_D^{(r)} \right]
		&= \lim_{r\rightarrow\infty}\mathbb{E}\left[ \sum_{ x \in \cM_\cV^{(r)} }
				\sqrt{ \det\left( \hat{\mathbf{\Lambda}}(x) \right) } \prod_{d=1}^D \frac{\delta_d}{r+1} \right]\\
		&= \lim_{r\rightarrow\infty}\sum_{ x \in \cM_\cV^{(r)} }
				\mathbb{E}\left[ \sqrt{ \det\left( \hat{\mathbf{\Lambda}}(x) \right) } \right]
									\prod_{d=1}^D \frac{\delta_d}{r+1}\\
		&= \lim_{r\rightarrow\infty}
			\sum_{ x \in \cM_\cV^{(r)} }
				\sqrt{ \det\left( \mathbf{\Lambda}(x) \right)} \prod_{d=1}^D \frac{\delta_d}{r+1}\\
		&= \int_{ \cM_\cV } \sqrt{ \det\left( \mathbf{\Lambda}(x) \right)} dx 
		= \cL_D
\end{align*}
Similarly, using Fubini's theorem which is applicable as $\cM_\cV$ is
compact, we obtain that
\begin{align*}
	\mathbb{E}\left[\lim_{r\rightarrow\infty}\hat{\cL}_D^{(r)} \right]
		&= \mathbb{E}\left[\int_{\cM_\cV} \sqrt{ \det\left( \hat{\mathbf{\Lambda}}(x) \right) } dx \right]\\
		&= \int_{\cM_\cV} \mathbb{E}\left[\sqrt{ \det\left( \hat{\mathbf{\Lambda}}(x) \right) } \right]
									dx\\
		&= \int_{\cM_\cV} \sqrt{ \det\left( {\mathbf{\Lambda}}(x) \right) } dx
		 = \cL_D
\end{align*}
\end{proof}

\subsection{Proof of Theorem \ref{thm:consistency}}
\begin{proof}
	By the assumptions and Proposition \ref{prop:HoelderCont} the SuRF has almost surely $L^2$-H\"older continuous paths and
	therefore the assumptions of Lemma 11 from \cite{Telschow:2022SCB} are satisfied.
	This means that $\hat{\boldsymbol{\Lambda}}$ converges uniformly almost surely to $\boldsymbol{\Lambda}$ over all
	$x\in \overline{\cM_\cV}$. Since the Riemann sum converges to the integral the
	$\lim_{r\rightarrow\infty}\lim_{N\rightarrow\infty}$ statement follows immediately.
	On the other hand the $\lim_{N\rightarrow\infty}\lim_{r\rightarrow\infty}$ statement is a special case of
	Theorem $3$ from \cite{Telschow2020} since for $\cL_2$ and $\cL_3$ from their condition
	\textbf{(R)} only the uniform almost sure convergence
	of $\hat{\boldsymbol{\Lambda}}$ to ${\boldsymbol{\Lambda}}$ is required.
\end{proof}

\end{document}